\newcommand{\norm}[1]{\left\lVert#1\right\rVert}
\newtheorem{theorem}{Theorem}
\crefname{figure}{Fig.}{Figs.}
\crefname{theorem}{Theorem}{Theorems}
\crefname{corollary}{Corollary}{Corollaries}
\crefname{observation}{Observation}{Observations}
\newcommand*{\algrule}[1][\algorithmicindent]{\makebox[#1][l]{\hspace*{.5em}\vrule height .75\baselineskip depth .25\baselineskip}}%
\def\ALG@printindent{%
    \ifnum \theALG@nested>0
        \ifx\ALG@text\ALG@x@notext
            \addvspace{-3pt}
        \else
            \unskip
            \ALG@printindent@tempcnta=1
            \loop
                \algrule[\csname ALG@ind@\the\ALG@printindent@tempcnta\endcsname]%
                \advance \ALG@printindent@tempcnta 1
            \ifnum \ALG@printindent@tempcnta<\numexpr\theALG@nested+1\relax
            \repeat
        \fi
    \fi
    }%
\patchcmd{\ALG@doentity}{\noindent\hskip\ALG@tlm}{\ALG@printindent}{}{\errmessage{failed to patch}}
\newsavebox{\ieeealgbox}
\begin{document}
\author{\vspace{-7mm}Francesco Bernardini$^{1}$, Daniel Biediger$^{1}$, Ileana Pineda$^{1}$, Linda Kleist$^{2}$
and Aaron T. Becker$^{1,3}$\vspace{-10mm}}\vspace{-15mm}%
\title{\LARGE\bf
 Using Mobile Relays to Strongly Connect a Minimum-Power Network between Terminals Complying with No-Transmission Zones
\thanks{\copyright 2025 IEEE. Personal use of this material is permitted. Permission 
from IEEE must be obtained for all other uses, in any current or future media, including reprinting/republishing this material for advertising or promotional purposes, creating new collective works, for resale or redistribution to servers or lists, or reuse of any copyrighted component of this work in other works.}\thanks{
This work was supported by the National Science Foundation under  \href{http://nsf.gov/awardsearch/showAward?AWD_ID=1553063}{[IIS-1553063},
\href{https://www.nsf.gov/awardsearch/showAward?AWD_ID=1932572}{1932572},
\href{https://nsf.gov/awardsearch/showAward?AWD_ID=2130793}{2130793]}, the Alexander von Humboldt Foundation, and the Army Research Laboratory under Cooperative Agreement Number W911NF-23-2-0014. The views and conclusions contained in this document are those of the authors and should not be interpreted as representing the official policies, either expressed or implied, of the Army Research Laboratory or the U.S. Government. The U.S. Government is authorized to reproduce and distribute reprints for Government purposes notwithstanding any copyright notation herein.
}
\thanks{
$^{1}$ University of Houston, TX USA {\tt\small\{fbernardini,  debiedig, idpineda, atbecker\}@uh.edu}}
\thanks{
$^{2}$ Universtität Potsdam, Germany {\tt\small kleist@cs.uni-potsdam.de}}
\thanks{
$^{3}$ TU Braunschweig, Germany }
}%
\vspace{-15mm}
\maketitle
\vspace{-15mm}
\begin{abstract}
We present strategies for placing a swarm of mobile relays to provide a bi-directional wireless network that connects fixed (immobile) terminals. 
 Neither terminals nor relays are permitted to transmit into disk-shaped no-transmission zones. 
We assume a planar environment and that each transmission area is a disk centered at the transmitter. 
We seek a strongly connected network between all terminals with minimal total cost, where the cost is the sum area of the transmission disks. 

Results for networks with increasing levels of complexity are provided. The solutions for local networks containing low numbers of relays and terminals are applied to larger networks. 
For more complex networks, algorithms for a minimum-spanning tree (MST) based procedure are implemented to reduce the solution cost.  A procedure to  characterize and determine the possible homotopies of a system of terminals and obstacles is described, and used to initialize the evolution of the network under the presented algorithms.



\end{abstract}

\textbf{Note to Practitioners: } This paper focuses on the optimal construction of networks in presence of impairments to transmission. The use case in mind is a scenario where point-to-point transmission between drones is exploited to relay signals between distant fixed stations (ground, or air) that are not in line of sight. The impairments can be represented by buildings, or by sensing areas of potential eavesdroppers. The presented method builds upon previous works, shifting the focus from a randomization of the initial conditions, to the creation of a list of pre-defined prototypes to initialize the network optimizers. These prototypes are related to the different homotopies that the network can represent, given a distribution of obstacles. Once the list of prototypes is available, it enhances the control over the final result of a simulation, by enabling the estimation of the its outcome, on the basis of the homotopy chosen. Furthermore, this enables faster network optimizations, thus allowing the application of the method to dynamic scenarios, where obstacles (and potentially, also terminals) may move from their initial position. A current limitation of the method is the rate at which prototypes are determined: an alternate approach that focuses on the reduction of the complexity of the graph associated to the problem (but keeping the essential features of homotopy), is currently under investigation.

\vspace{-2mm}
\section{Introduction}\label{sec:Intro}
The paper investigates strategies to establish a low-cost network between immobile terminals by placing relays. The network must not transmit into no-transmission regions defined as disks. A sample solution is shown in Fig.~\ref{fig:CirclePacking}. 
The problem of providing a connected network, subject to constraints, is closely related to the \emph{minimum range assignment problem for radio networks}. This is a non-deterministic polynomial-time  (NP) complete problem~\cite{clementi2004power} where the 2D positions of $n$ terminals are given, and the goal is to assign a transmission radius (correlating to a transmission power) to each terminal such that the resulting network is strongly connected, while minimizing the sum of the squared radii. When applied to aerial relays, the problem is called ``constructing a Flying Ad-hoc Network (FANET) with minimum transmission power''~\cite{10139317}. 




Other related efforts include the \emph{relay placement problem} which seeks to place the minimum number of relays to connect a set of stationary terminals. Each terminal is  assumed to have a transmission radius of 1 while each relay has a radius of~$r$.  
For this problem a 3.11-approximation algorithm is developed along with proof showing that no polynomial-time approximation scheme exists~\cite{efrat2008improved,efrat2015improved}.

 A necessary network condition is that the union of transmission disks must contain all terminals, similar to \emph{minimum-cost coverage of point sets by disks}~\cite{alt2006minimum}, but this is insufficient to generate a \emph{connected} network.

\begin{figure}[t]
\centering
\includegraphics[width=0.73\columnwidth]{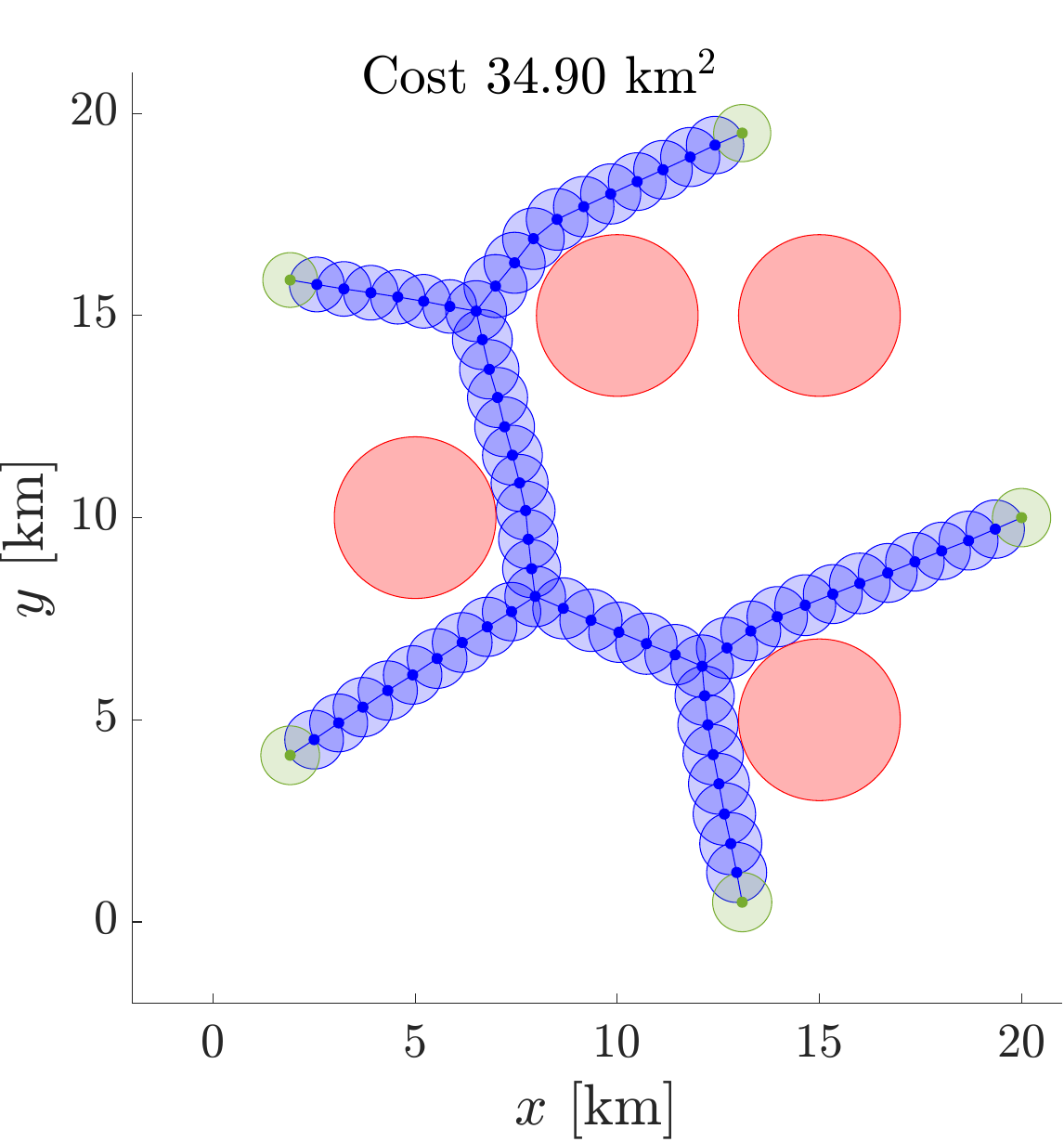}
\vspace{-0.3cm}
\caption{A network that strongly connects $m=5$ fixed terminals (in green) by placing $n=60$ mobile relays (in blue) to establish a strongly connected mesh network that avoids broadcasting into $\phi=4$ no-transmission disks (in red). The solution shown minimizes the sum of the terminal and relay broadcast areas (the green and blue disks).
\label{fig:CirclePacking}}
\vspace{-0.3cm}
\end{figure}  

A network is implied by minimum spanning tree (MST) and Euclidean Steiner tree problems.
The Steiner tree is an undirected graph that connects a set of terminal nodes and minimizes the total cost of its edges. It does this by introducing additional internal nodes called \emph{Steiner points}, if this helps reduce the total cost of the edges.  In general, these Steiner points have three incident edges, arranged at \SI{120}{\degree} angles. The cost metric for the Steiner tree and the MST is the Euclidean distance of the length of all links in the network. The Steiner tree problem typically generates a network with the minimum length, but for our broadcast model, we use a cost function that increases with the square of the link length. Up to a constant factor, the square link metric is also considered in the \emph{Minimum Area Spanning Tree (MAST)} problem~\cite{GUIMARAES2021101771}, where the area of a tree is given by $\pi/4$
the sum of the squared edge lengths.
However, a MAST does not guarantee a strongly connected network and further development is required to obtain the actual range assignment. 

This paper extends our conference paper~\cite{Bernardini2024CASE}.
In Sec.~\ref{sec:problem} we mathematically formulate the problem and describe its computational complexity. In Sec.~\ref{sec:optimalstrat} we provide optimal strategies for two simple scenarios.
These solutions inspire the strategies developed in later sections. Section~\ref{Sec:AlgsForMultTerminalsAndObstacles} defines two heuristic algorithms and applies them to the general problem without and with obstacles, and presents simulation results.
Then Sec.~\ref{sec:PreScan} presents a method to generate candidate networks with different homotopies and evaluate them. 
Finally, Sec.~\ref{sec:concl} summarizes the paper and outlines possible paths forward for this research.





\vspace{-3mm}
\section{Problem Definition\label{sec:problem}}


A directed graph is \emph{strongly connected} if every node is reachable from every other node in the graph. In this paper, we assume a node 
at position $A$ with transmission radius $r_A$ can communicate with a node at position $B$ if the Euclidean distance between $A$ and $B$ is not greater than $r_A$. 
A problem instance has a  
set $T$ of $m$ fixed terminals in $\mathbb R^{2\times m}$, a 
set of $\phi$ obstacles specified by disks with radii $R$ in $\mathbb R^{\phi}$ and centers $O$ in $\mathbb R^{2\times \phi}$ and a set $D$ of $n$ mobile relays in $\mathbb R^{2\times n}$. 
We search for a placement of the relays $D$  and an assignment of transmission radii in $\mathbb R^{m+n}$ under the constraints that the directed communication graph is strongly connected and no transmission disk overlaps the obstacles.
 Each transmission radius defines a transmission disk centered at a terminal or relay.
The cost $C(T,O,R,D)$ for a network is 
\begin{align}
        C(T,O,R,D) &= \sum_{j=1}^{m} r_{T,j}^2 +\sum_{j=1}^{n} r_{D,j}^2, \label{eqn:TransmissionCostForNetwork}\\
    \textrm{if ~~~~~~} \qquad |O_i - T_j | &\ge R_i +r_{T,j} \quad \forall i\in[1,\phi],j\in[0,m] \nonumber \\
    \textbf{ and }~ 
    |O_i - D_j | &\ge R_i +r_{D,j} \quad \forall i\in[1,\phi],j\in[0,n] \nonumber \\
    \textrm{else ~~~}  C(T,O,R,D) &= \infty.\nonumber
\end{align}
where $r_{T,j}$ is the transmission radius of terminal $j$, and $r_{D,j}$ is the transmission radius of relay $j$.

We seek to minimize Eq.~\eqref{eqn:TransmissionCostForNetwork}. 
We start by describing solutions to problems with small numbers of terminals, where it is possible to generate optimal solutions.
For more complicated instances we start by computing the MST of the network with squared Euclidean distances as weights
 to determine the graph topology.
To optimize the position of relay $i$ locally, we determine the network neighbors and then move $D_i$ to minimize the required transmission power. 
\vspace{-4mm}
\subsection{Problem Difficulty and Approximation}

The range assignment problem of setting transmission powers with fixed transmitters and no obstacles to provide strong connectivity in $\mathbb R^2$
is NP-hard
\cite[Thm 10]{fuchs2008hardness},
and approximating the range assignments in
$\mathbb R^3$ better than $1 + 1/
50$ is NP-hard \cite[Thm 13]{fuchs2008hardness}.

In contrast there exists a 2-approximation~\cite{kirousis2000power} which also holds in 3D.  First, compute an MST $\mathcal{T}$ where the weight of each edge is given by the squared Euclidean distances. 
 The cost of $\mathcal{T}$ is a lower bound of the optimal solution for the range assignment problem. 
Second, each terminal is assigned a transmission range equal to the Euclidean length of the longest incident edge of $\mathcal{T}$. This range assignment ensures strong connectivity of the communication graph between the terminals. As every edge of $\mathcal{T}$ is changed at most once for each incident terminal, a 2-approximation is realized.

Moreover, the analysis is tight, as instances of the type shown in Fig.~\ref{fig:areaMSTapprox} illustrate; the MST 
 yields a range assignment where each terminal has radius $1$ and, therefore, a total cost of $n$.  The optimal solution has cost $(n/2+1)+(n/2-1)\varepsilon^2$. For $n\to \infty$ and $\varepsilon\to 0$, the ratio $\to$ 2.

\begin{figure}[tb]
\centering
\includegraphics[width=1.0\columnwidth]{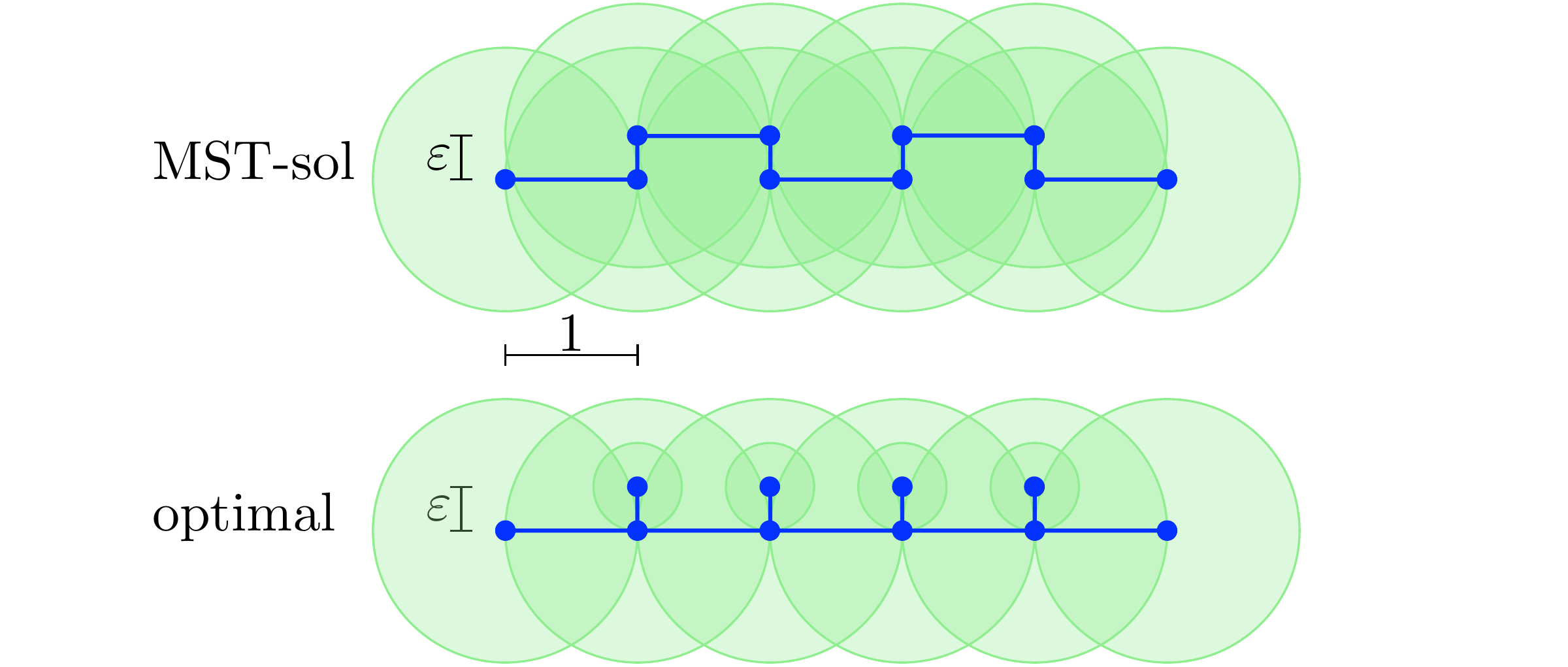}
\vspace{-6mm}
\caption{
A 2-approximation using a MST for the range assignment problem. The optimal assignment requires lower transmission power for the offset nodes.
\label{fig:areaMSTapprox}}
\vspace{-0.6cm}
\end{figure}  

\vspace{-3mm}
\subsection{Optimal Strategies and Heuristics}
By restricting the number of terminals, relays and obstacles we construct problem classes with optimal solutions. We describe two optimal placement cases in the following section.  Adding additional terminals and placing obstacles render the problem harder to compute; Section \ref{Sec:AlgsForMultTerminalsAndObstacles} describes heuristic solvers for this problem.

\section{Optimal Strategies\label{sec:optimalstrat}}


\subsection[Two terminals separated by unit radius obstacle]{Two terminals separated by unit radius obstacle}

We begin with two terminals, $t_1$ at $[-d,0]$ and $t_2$ at $[d,0]$, separated by a unit radius obstacle disk centered at $[0,0]$. Given $n$ mobile relays, what is the lowest cost network according to \eqref{eqn:TransmissionCostForNetwork} Several sample solutions, solved numerically, are shown in Fig.~\ref{fig:varyDvalue2TermPlots}.

\begin{figure}[bht]
\centering
\begin{overpic}[width=1.0\columnwidth]{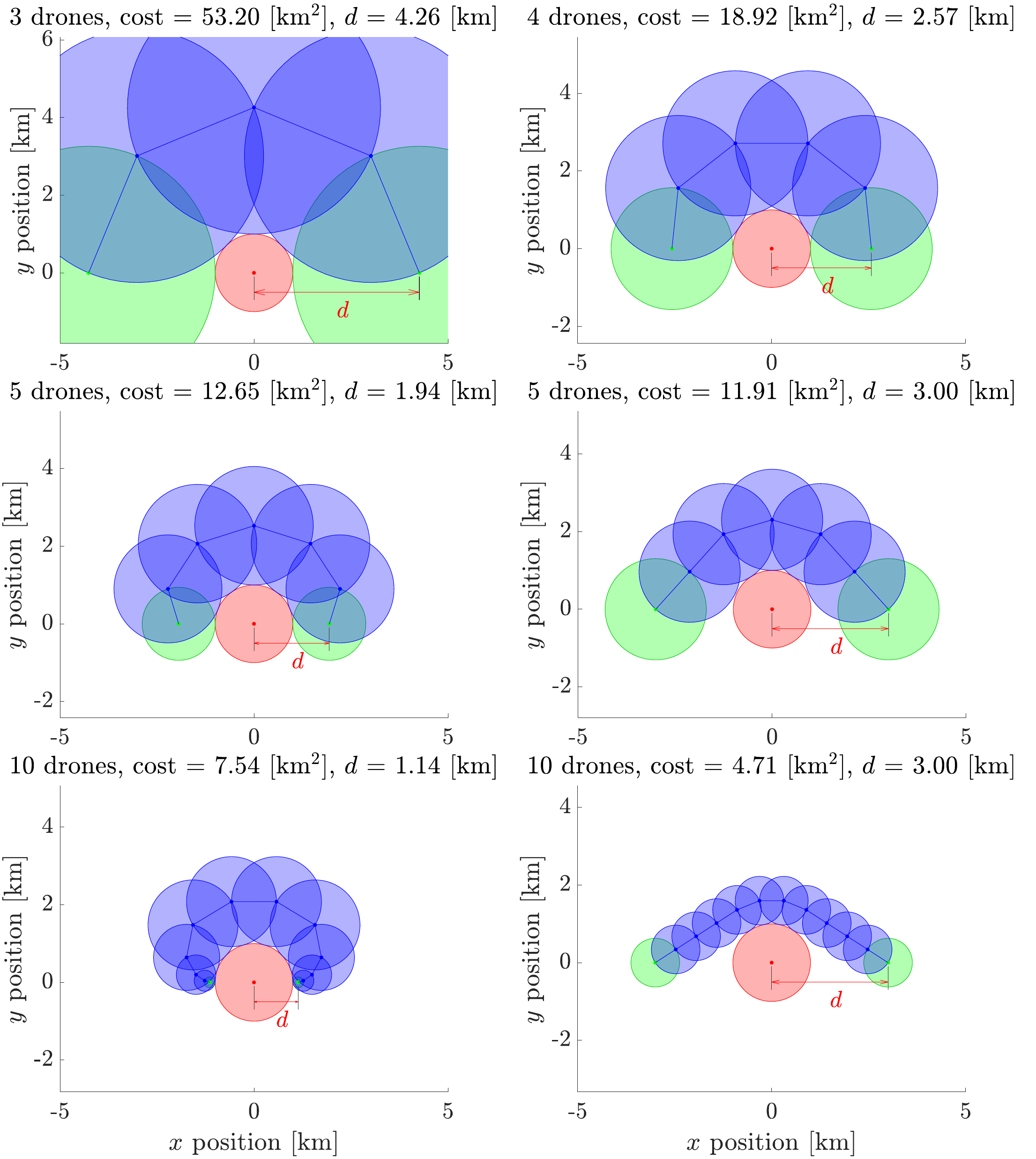}
\put(0,94){\textcolor{black}{a)}}
\put(0,61){\textcolor{black}{c)}}
\put(0,29){\textcolor{black}{e)}}
\put(44,94){\textcolor{black}{b)}}
\put(44,61){\textcolor{black}{d)}}
\put(44,29){\textcolor{black}{f)}}
 \end{overpic}
 \vspace{-0.6cm}
\caption{Building the lowest cost network between two terminals at $[\pm d,0]$ (in green) separated by a unit radius obstacle (in red) using $n$ relays (in blue). Each of these subplots is shown by a marker in Fig.~\ref{fig:varyDvalue2Term}.
}
\label{fig:varyDvalue2TermPlots}
\vspace{-0.4cm}
\end{figure}  

The numerical solver adjusts the positions of the $n$ relays and the transmission radii of the relays and the terminals.
The transmission radii are constrained to be positive and the distance from any transmitter to the obstacle must be no less than $1$ $+$ the transmission radius. 
Clearly, an optimal network  connecting two terminals forms a single chain. 

For a given $n$ there is a $d$ that minimizes Eq.~\eqref{eqn:TransmissionCostForNetwork}. This occurs if all the transmission radii are equal sized and the relays evenly distributed on a semicircle of radius 
\begin{align}
    d_\textrm{min}(n) =  \frac{1}{1-2 \sin\left( \frac{\pi}{2+2n}\right)}\, \label{eq:OPTIMALdFORn}.
\end{align}

\begin{figure}[tb]
\centering
\begin{overpic}
    [width=1.0\columnwidth]{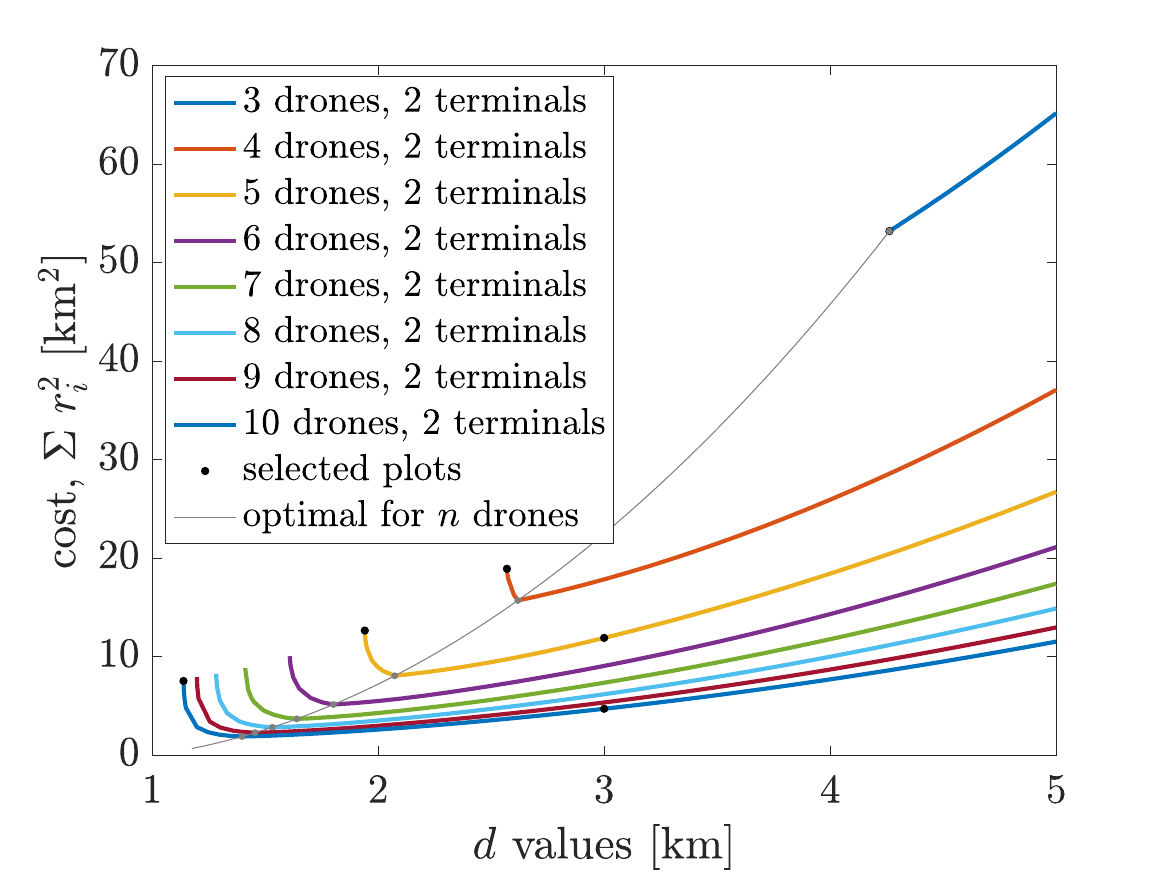}
    \put(72.5,54.5){\textcolor{black}{a)}}
    \put(39.8,25){\textcolor{black}{b)}}
    \put(30.2,22){\textcolor{black}{c)}}
    \put(50.5,21.8){\textcolor{black}{d)}}
    \put(14.5,18){\textcolor{black}{e)}}
    \put(54,11){\textcolor{black}{f)}}
    \end{overpic}
\vspace{-0.7cm}
\caption{Building the lowest cost network between two terminals  at $[\pm d,0]$ separated by a unit radius obstacle, using $n$ relays as shown in Fig.~\ref{fig:threeTerminalsOneDrone}. Increasing the number of relays $n$ always decreases the cost and decreases the minimum $d$ that can be covered.
\label{fig:varyDvalue2Term}}
\end{figure}  

The relays have angular spacing $\frac{\pi}{n+1}$. 
As shown in Fig.~\ref{fig:varyDvalue2Term} the plots of cost as a function of $d$ have a minimum at the optimal solution of Eq.~\eqref{eq:OPTIMALdFORn} (gray line). For smaller $d$ values the terminal transmission ranges must be less than the optimal value and the relays' ranges should be correspondingly larger.  For larger $d$ values all the transmission ranges are identical and the path of the relays forms two straight lines that bend in a circular arc about the obstacle. Three or more relays are required for a solution to exist (see Fig.~\ref{fig:NeedMoreThan2Drones}).

\begin{figure}[b]
\centering
\begin{overpic}[width=1.0\columnwidth]{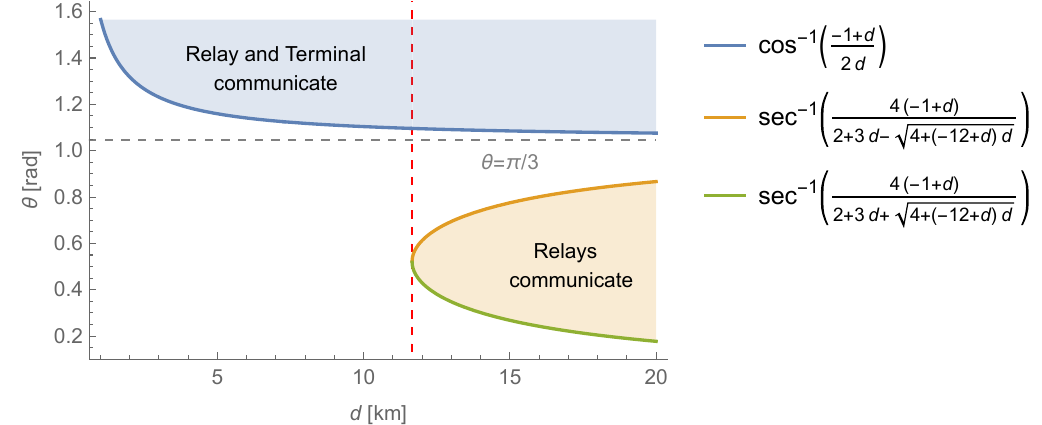}
\put (65,-5){\includegraphics[height=2cm]{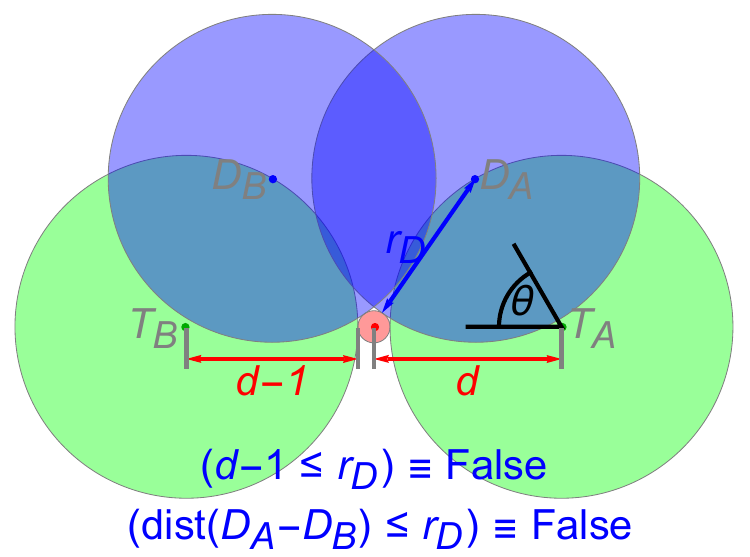}}
\put(0,37){\textcolor{black}{a)}}
\put(96,12){\textcolor{black}{b)}}
\end{overpic}
\caption{We require at least three relays to build networks around a unit-radius obstacle with terminals $d$ units on either side of the obstacle as shown in Fig.~\ref{fig:varyDvalue2TermPlots}. With two relays placed symmetrically at $[d,0] + (d-1) [-\cos{\theta},\sin{\theta}]$ and  $[-d,0] + (d-1) [\cos{\theta},\sin{\theta}]$ the transmission disk must not overlap the obstacle disk (b).  (a) plots $\theta$ values such that the relays communicate in light orange and $\theta$ values such that the relays communicate with their nearest terminal in light blue. These regions have no union and converge as $d\rightarrow \infty$ to $\theta = \pi/3$ (gray dashed line).
\label{fig:NeedMoreThan2Drones}}
\end{figure}  



These optimization problems are relatively simple since we know the communication graph topology \emph{a priori}. When we do not know the communication graph topology before the optimization, then the solution must include it as part of the optimization. Even without obstacles the problem of assigning the minimum area ranges to each terminal is NP-complete~\cite{clementi2004power}. For the simple case of three terminals and one relay we can find the optimal solution to the problem.
\vspace{-1mm}
\subsection{Optimal solution: three terminals,  one relay\label{sec:1ter1rel}}
Given a triangle with vertices $A$, $B$ \& $C$, the relay location $D$ that minimizes the cost for a strongly connected network has multiple candidate solutions as shown in Figs.~\ref{fig:OptimizeDronePlacement3TermCB} and \ref{fig:threeTerminalsOneDrone}.

\begin{figure}[b]
\centering
\begin{overpic}[width=0.45\columnwidth]{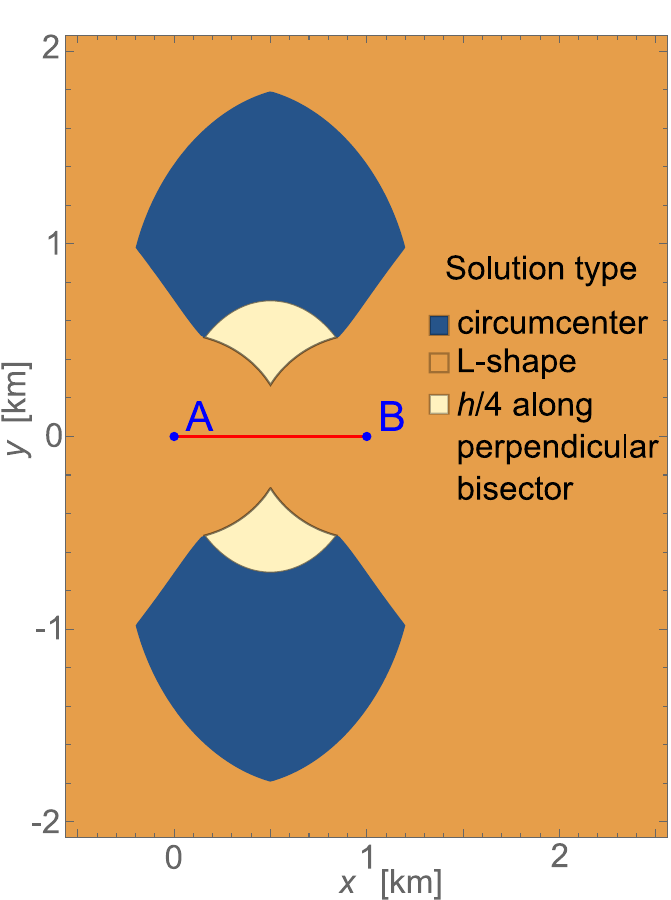}\put(0,86){\textcolor{black}{a)}}\end{overpic}
\begin{overpic}[width=0.53\columnwidth]{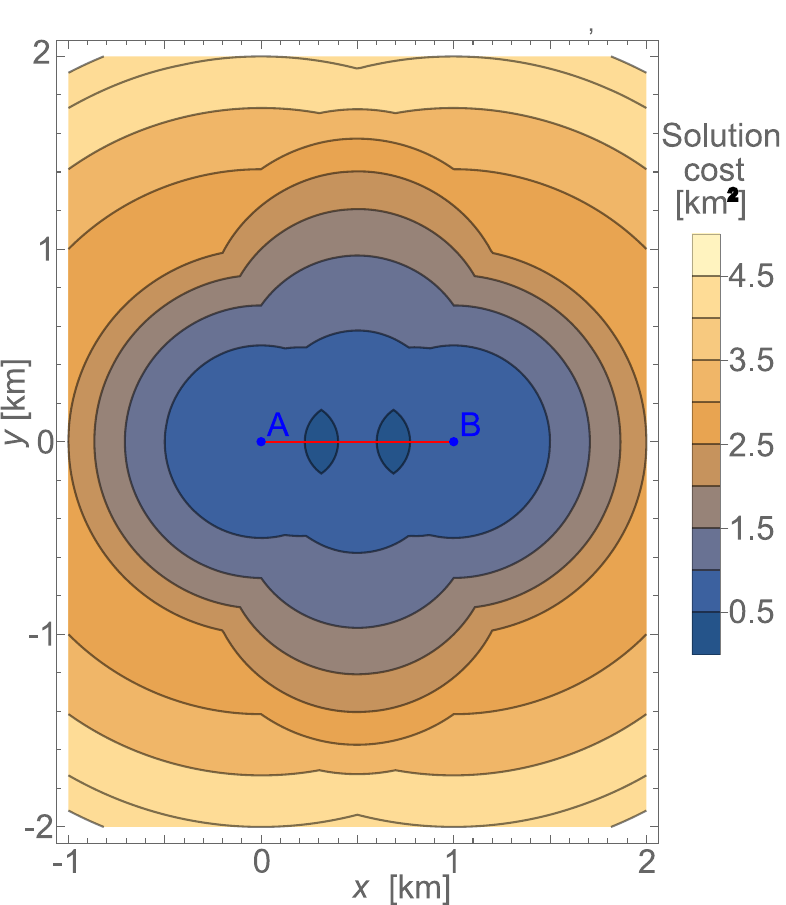}\put(0,86){\textcolor{black}{b)}}\end{overpic}
\vspace{-0.2cm}
\caption{Given three terminals at $A$, $B$ \& $C$ there are three candidate solutions for the optimal relay placement to minimize the cost of the network. The solution depends on the shape of the triangle.  
In the above plots $A=[0,0]$, $B=[0,1]$ and $C=[x,y]$.
\label{fig:OptimizeDronePlacement3TermCB}}
\end{figure}  
\begin{figure*}[bht]
\centering
\includegraphics[height=0.22\textwidth,trim={0.8cm 1.2cm 1.1cm  1.6cm},clip]{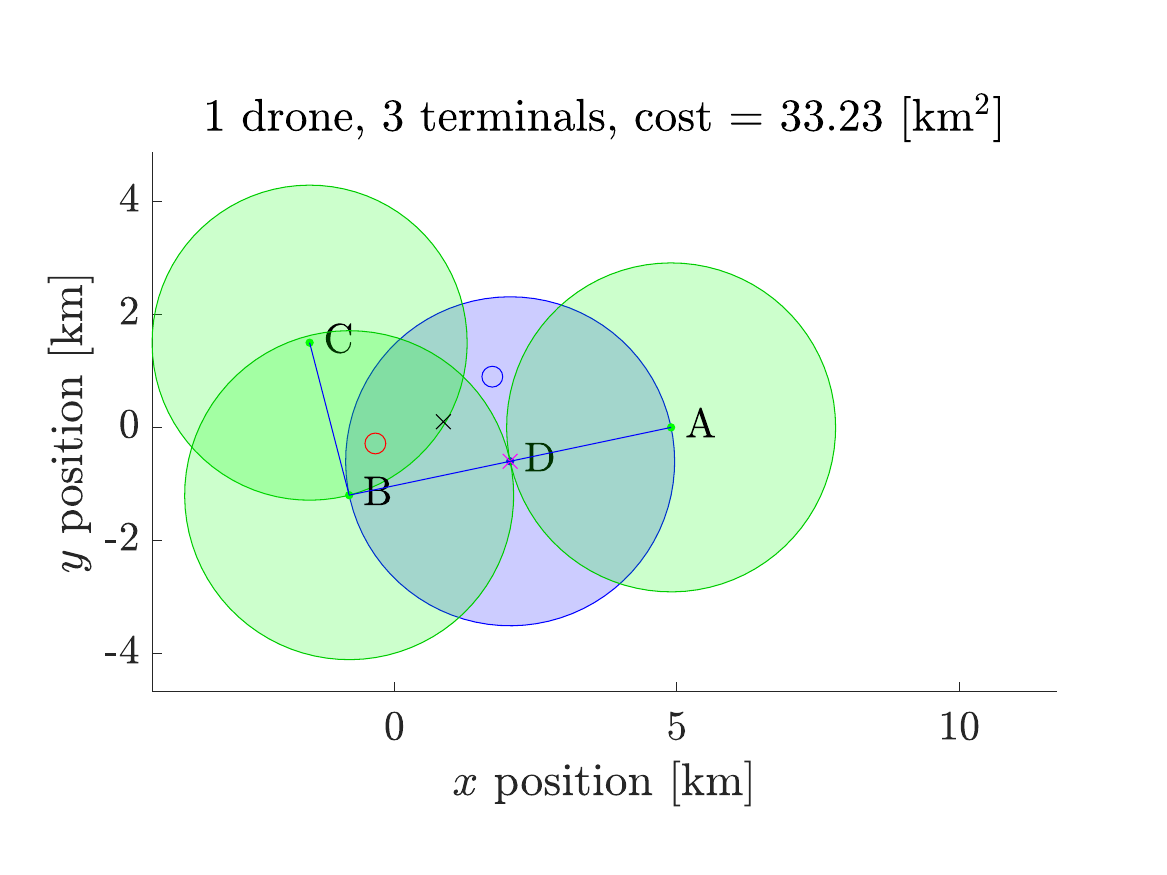}
\includegraphics[height=0.22\textwidth,trim={1.5cm 1.2cm 1.1cm  1.6cm},clip]{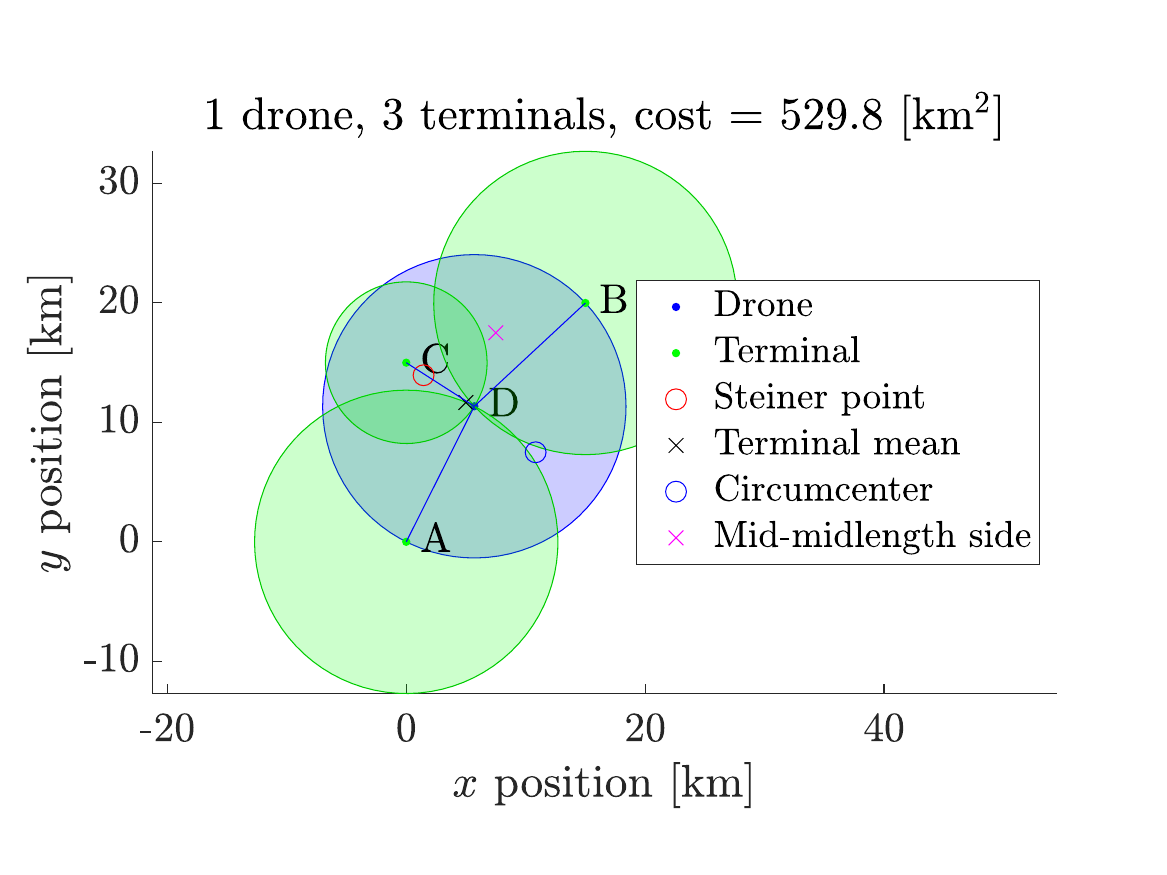}
\includegraphics[height=0.22\textwidth,trim={1.5cm 1.2cm 1.2cm 1.6cm},clip]{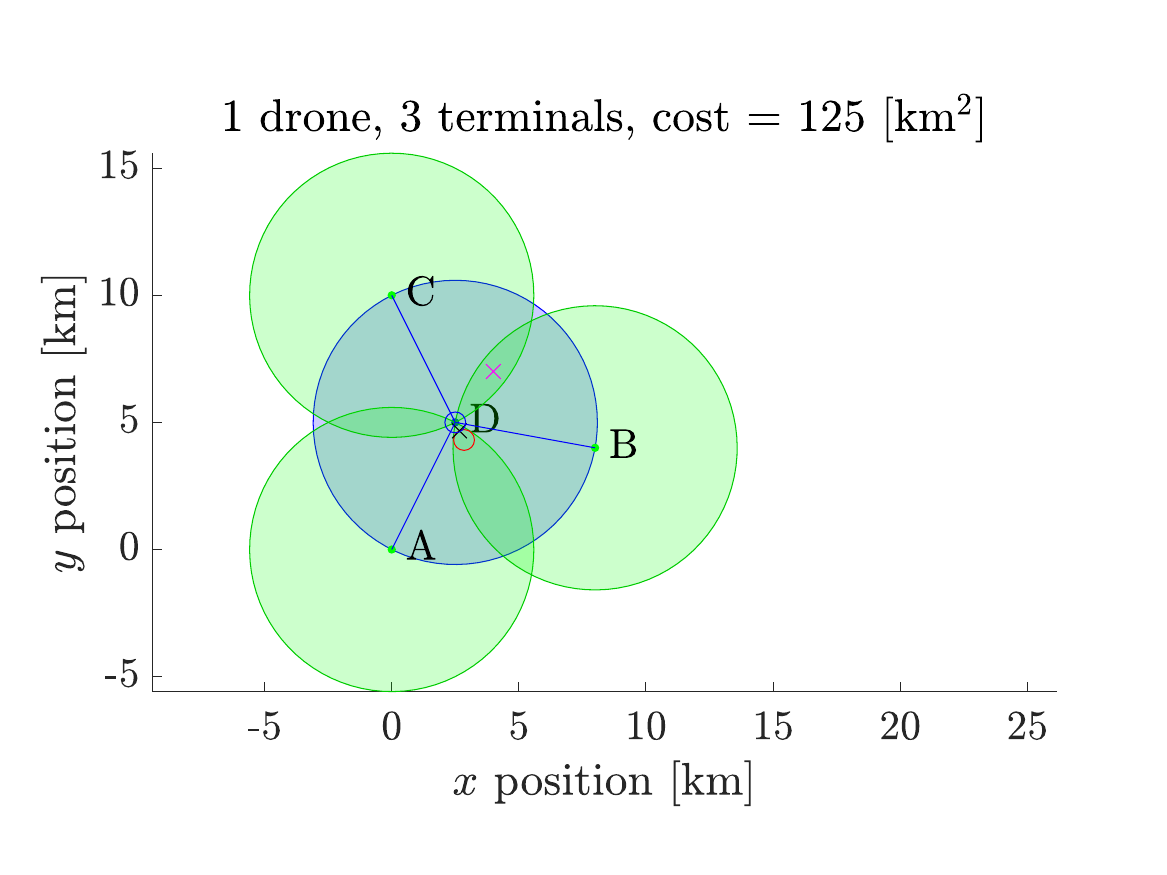}
\vspace{-2mm}
\caption{With three terminals and one relay there are three types of solutions for the relay position: a) on the midpoint of the second shortest side, b) on the perpendicular bisector of the longest side 1/4 the height of the triangle or c) at the circumcenter of the triangle. 
\label{fig:threeTerminalsOneDrone}}
\vspace{-0.7cm}
\end{figure*}  

\begin{theorem}
For three terminals at points $A,B,C$ forming the triangle $\triangle(ABC)$ a relay is  placed optimally on one of the following three locations:
\begin{enumerate}
\item the midpoint of the second largest edge of $\triangle(ABC)$,
\item 25\% of the perpendicular bisector of the longest edge of $\triangle(ABC)$,
\item the circumcenter of $\triangle(ABC)$.
\end{enumerate}
\end{theorem}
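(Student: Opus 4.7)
The plan is to reduce the problem to a finite enumeration of spanning-tree topologies on the four-node communication graph $\{A,B,C,D\}$ and, for each topology, to optimize the relay's position. First, I would argue that any optimal radii assignment realizes a spanning tree: extra edges never reduce cost, and each node's optimal radius equals the longest incident tree-edge. Thus the total cost becomes $\sum_{X}(\max_{Y\sim X}|XY|)^{2}$. Up to relabeling, two shapes of spanning tree are relevant: the \emph{star}, with $D$ at the hub and cost
\begin{equation*}
C_{\star}(D)=\max(d_{A},d_{B},d_{C})^{2}+d_{A}^{2}+d_{B}^{2}+d_{C}^{2},\qquad d_{X}:=|DX|,
\end{equation*}
and the \emph{path} $B{-}A{-}D{-}C$, in which $D$ bridges two terminals and the third is a leaf. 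Any topology with $D$ as a leaf may be reduced by placing $D$ at the attached terminal with $r_{D}=0$; the resulting three-terminal optimum is strictly dominated by candidate~(1) via a short direct calculation.

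For the star topology, I partition the plane into three regions by which of $d_{A},d_{B},d_{C}$ is the maximum. In each region, $C_{\star}$ is a smooth convex quadratic with unconstrained minimum $D^{\star}=(2X+Y+Z)/4$, where $X$ is the alleged maximum vertex. A direct computation yields
\begin{equation*}
|D^{\star}-X|^{2}-|D^{\star}-Y|^{2}=-\tfrac{1}{2}(X-Y)\cdot(Z-Y),
\end{equation*}
which is negative whenever the angle at $Y$ is acute. Since every nondegenerate triangle has at least two acute angles, $D^{\star}$ always lies outside its own region, and the per-region minimum is attained on a perpendicular bisector. On the perpendicular bisector of edge $YZ$, placing coordinates so the midpoint of $YZ$ is the origin and the third vertex is at horizontal offset $x_{0}$ and height $h$, the cost becomes $3(|YZ|^{2}/4+y^{2})+x_{0}^{2}+(h-y)^{2}$, minimized at $y=h/4$; this yields candidate~(2). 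The sub-case feasibility $d_{Y}\ge d_{X}$ at $y=h/4$ reduces to $|YZ|^{2}+h^{2}\ge|XY|^{2}+|XZ|^{2}$; for a generic triangle this holds only on the longest edge's bisector, while for near-equilateral triangles it fails on all three bisectors and the optimum migrates to the triple intersection $d_{A}=d_{B}=d_{C}$, the circumcenter, giving candidate~(3).

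For the path topology $B{-}A{-}D{-}C$, I split into sub-cases by which of $|AB|,|AD|,|DC|$ dominates at $A$ and at $D$. In the key sub-case $|AB|\le|AD|$ and $|AD|\ge|DC|$, the cost $2|AB|^{2}+|AD|^{2}+|DC|^{2}$ has gradient $2(D-A)+2(D-C)$, yielding $D=(A+C)/2$, the midpoint of $AC$; the sub-case conditions $|AD|=|DC|=|AC|/2\le|AB|$ are consistent precisely when $AB$ is the shortest edge and $AC$ the second-largest. The remaining sub-cases collapse onto the same midpoint (on the bisector boundary) or are strictly worse. Comparing the three symmetric bridging choices then shows that the best path option bridges the second-largest edge with the shortest as its leaf, yielding candidate~(1). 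The theorem follows by taking the minimum of $C_{\star}$ and $C_{\mathrm{p}}$ at their respective optima and observing that exactly one of the three candidates is globally optimal for each triangle shape, giving the tripartition illustrated in Fig.~\ref{fig:OptimizeDronePlacement3TermCB}(b).

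The principal difficulty lies in the star-topology analysis: the obtuse-angle identity above cleanly rules out interior critical points of the max-regions, but establishing that the sub-case feasibility at $y=h/4$ typically selects precisely the longest edge's bisector requires verifying the geometric inequality $|YZ|^{2}+h^{2}\ge|XY|^{2}+|XZ|^{2}$ across triangle shapes. The path-topology sub-case enumeration, while conceptually straightforward, requires careful bookkeeping to ensure all orderings of the edge lengths are covered.
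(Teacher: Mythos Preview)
Your decomposition into star and path spanning-tree topologies is exactly the paper's split into ``$D$ has three neighbors'' versus ``$D$ has two neighbors.'' For the star case your route is different and arguably cleaner: the paper assumes the maximum in $\mathcal C_3$ is attained uniquely (say by $|AD|$), takes $\partial/\partial x$, and pushes through a coordinate inequality chain to a contradiction, whereas your identity $|D^{\star}-X|^{2}-|D^{\star}-Y|^{2}=-\tfrac12(X-Y)\cdot(Z-Y)$ together with ``every triangle has at least two acute angles'' immediately forces each region's unconstrained minimizer onto a bisector. Both then finish with the same $y=q/4$ (your $h/4$) calculation and the circumcenter as the triple-tie point.

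The path case has a real slip. For the path $B{-}A{-}D{-}C$ the cost $2|AB|^{2}+|AD|^{2}+|DC|^{2}$ requires $r_A=|AB|$ and $r_D=|AD|$, i.e.\ $|AB|\ge|AD|$ and $|AD|\ge|DC|$; you wrote $|AB|\le|AD|$. With the sign fixed, the unconstrained minimizer $D=(A+C)/2$ gives $|AD|=|AC|/2$, but the feasibility $|AB|\ge|AC|/2$ does \emph{not} follow from ``$AB$ shortest, $AC$ second-largest'' (take $|AB|=1$, $|AC|=3$, $|BC|=3.5$). The paper avoids this by arguing that in \emph{every} sub-case of its $\mathcal C_2$ one can improve until the two relay-incident distances are equal, so the midpoint conclusion is sub-case-independent. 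Your single sub-case does not cover all triangles; you need either the analogous all-sub-case argument (the other sub-cases also drive $D$ to the midpoint, just with a different cost expression), or a separate check that when $|AB|<|AC|/2$ the path optimum is dominated by a star candidate.
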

\begin{proof}
Without loss of generality, we assume that  $A = [0,0]$, $B = [1,0]$ \& $C = [p,q]$; otherwise we rotate and scale. Let $D=[x,y]$ denote the location of the relay. 
In an optimal network the relay is transmitting to two or all three terminals; otherwise it has no added value.

If $D$ has two neighbors then it is beneficial to transmit via the shortest side (wlog $AC$) and place $D$ on the second shortest side (wlog $AB$). The resulting costs are
\begin{align}
   \mathcal C_2 =& \phantom{+~\,} \underbrace{|AC|^2}_{{r_C}^2} +\underbrace{\max( \{ |AD|^2, |AC|^2\})}_{{r_A}^2} \label{eq:LineCost} \\
   & + \underbrace{|BD|^2}_{{r_B}^2} + \underbrace{\max( \{ |AD|^2 , |BD|^2\})}_{{r_D}^2}.\nonumber
\end{align}
If $|AD|\geq |AC|$ then $|AD|^2+|BD|^2$ as well as $\max( \{ |AD|^2, |BD|^2\}) $ is minimized for $|AD|=|BD|$. 
Otherwise we have $|AD|< |AC|$. If $|BD|>|AD|$ then decreasing $|BD|$ and increasing $|AD|$ improves; here we either end in the situation that $|AD|=|BD|$ (claim) or $|AD|= |AC|$ (case 1). Hence we have $|BD|<|AD|<|AC|$. In this case, $|BD|^2+\max( \{ |AD|^2, |BD|^2\})=x^2+(1-x)^2$ is minimized for $x=1/2$ (i.e., $|AD|=|BD|$).

Now we consider the case of $D$ having three neighbors. The optimal solution $D$ lies within $\triangle(ABC)$. The resulting costs are 
\begin{align}
   \mathcal C_3 =&  \underbrace{|AD|^2}_{{r_A}^2} + \underbrace{|BD|^2}_{{r_B}^2}+ \underbrace{|CD|^2}_{{r_C}^2} \label{eq:Cost3Star} \\
   &+ \underbrace{\max( \{ |AD|^2 , |BD|^2, |CD|^2\})}_{{r_D}^2}. \nonumber
\end{align}
 We argue that the maximum is attained by two distances; otherwise we may optimize.  Suppose the maximum is attained only for $|AD|$; this  implies that $x> 1/2$ by $|AD|>|BD|$. 
 We consider the derivative with respect to $x$, namely $4x+2(x-1)+2(x-p)=8x-2(1+p)$. If it does not vanish then we may slightly change $x$ and hence improve the cost. If it vanishes then $x=(1+p)/4>1/2$ implying that $p>1$ and $y\leq q/4$ (as $D$ lies in $ABC$). This yields a contradiction to the fact that $|AD|>|CD|$:
 \begin{align*}
 16 |AD|^2&=16(x^2+y^2)\leq (1+p)^2+q^4\\
 &\leq (1+p)^2+q^4+8p(p-1)
 \leq(1-3p)^2+(3q)^2\\
 &<16(x-p)^2+16(y-q)^2=16 |CD|^2.
 \end{align*}
 Thus, we conclude that the maximum in Eq.~\eqref{eq:Cost3Star} is attained by at least two distances.
If it is attained by exactly two distances then we may assume without loss of generality, that $|AD|=|BD|$ (i.e., $D$ is placed on the perpendicular bisector). Then the cost function simplifies to 
$$3|AD|+|CD|=3(x^2+y^2)+(x-p)^2+(y-q)^2 \, ,$$
and its derivative with respect to $y$ reads as $6y+2(y-q)$ which vanishes exactly if $y=q/4$.
If the maximum is attained for all three distances then $|AD|=|BD|=|CD|$ and $D$ is the circumcenter of the triangle.
\end{proof}
%


\section{Algorithms for multiple terminals and multiple obstacles}\label{Sec:AlgsForMultTerminalsAndObstacles}
Solutions to the problem of placing movable relays to enable communications between fixed terminals are explored. We begin by adding obstacles between two terminals and finding a solution strategy. The problem's complexity is increased by adding additional terminals and obstacles.

\subsection[Solving for m=2 terminals, n relays with obstacles]{Solving for $m=2$ terminals, $n$ relays with $\phi$ obstacles\label{sec:2term}}
Given two points $A$ and $B$ on a plane the shortest path that connects them is a straight segment. 
If the plane contains obstacles then a shortest path that avoids the obstacles may have a different shape.

\subsubsection{Bitangents method\label{sec:bitangents}}
For simplicity we assume the obstacles are $\phi$ circles with centers $\vec O_i$ and radii $r_{O,i}$, $i \in \{1,...,\phi\}$, whose area is excluded from the set of possible coordinates for the points of a path. In this case, the shortest path between two points is given by an alternating sequence of straight-line
segments and arcs along the circumference of  obstacles~\cite{chang2005shortest}. We begin by finding the  \emph{bitangents} for each pair of circles, and the tangents from the terminals $A$ and $B$ to each individual circle. If two circles do not overlap four bitangents exist. These bitangent lines are tangent to both circles.
If two circles partially overlap then only bitangents that touch the circles externally exist. 
However  if one circle is contained inside the other then no bitangents of any kind exist. 

Among all the bitangents and tangents determined through this procedure we keep only those whose line of sight (LoS) is not obstructed by (i.e. do not cross) other obstacles. 
Next, we add the circular arcs that connect bitangent points on each circle. A pair of points on a circle is always connected by two arcs, which will be different for non antipodal points. If we add to the set of segments the set of all the shortest arcs connecting pairs of points on the circles then we can cast the present system as a graph $G = (V,E,w)$:
\begin{itemize}
    \item The terminals, the tangent points, and the bitangent points are the nodes $V$ of the graph;
    \item The tangents, the bitangent segments, and the circle arcs are the edges $E$;
    \item The lengths of each arc or segment are the weights $w$.
\end{itemize} Once the graph $G$ is constructed then select a graph search algorithm to determine the shortest path between $A$ and $B$. Repeating this procedure for every pair of $A$ and $B$ in the accessible domain defines the generalized distance. Figure \ref{fig:yen}a shows, in green, the shortest path among $A$ and $B$ determined by the bitangent method.

\subsubsection[k-optimal paths and Yen's algorithm]{$k$-optimal paths and Yen's algorithm\label{sec:yen}}
Introducing obstacles into a simply connected region of the plane generates a connected region where nontrivial \emph{homotopy} classes might exist. Pairs of paths having endpoints in common are said to be ``homotopic''. If there exists a \emph{continuous transformation} that could bring one to the other or vice versa. Intuitively, if the two paths enclose an obstacle, no such continuous function can be defined and the two paths are not homotopic.


\begin{figure}[tb]
\centering
\begin{overpic}[width=.23\textwidth]{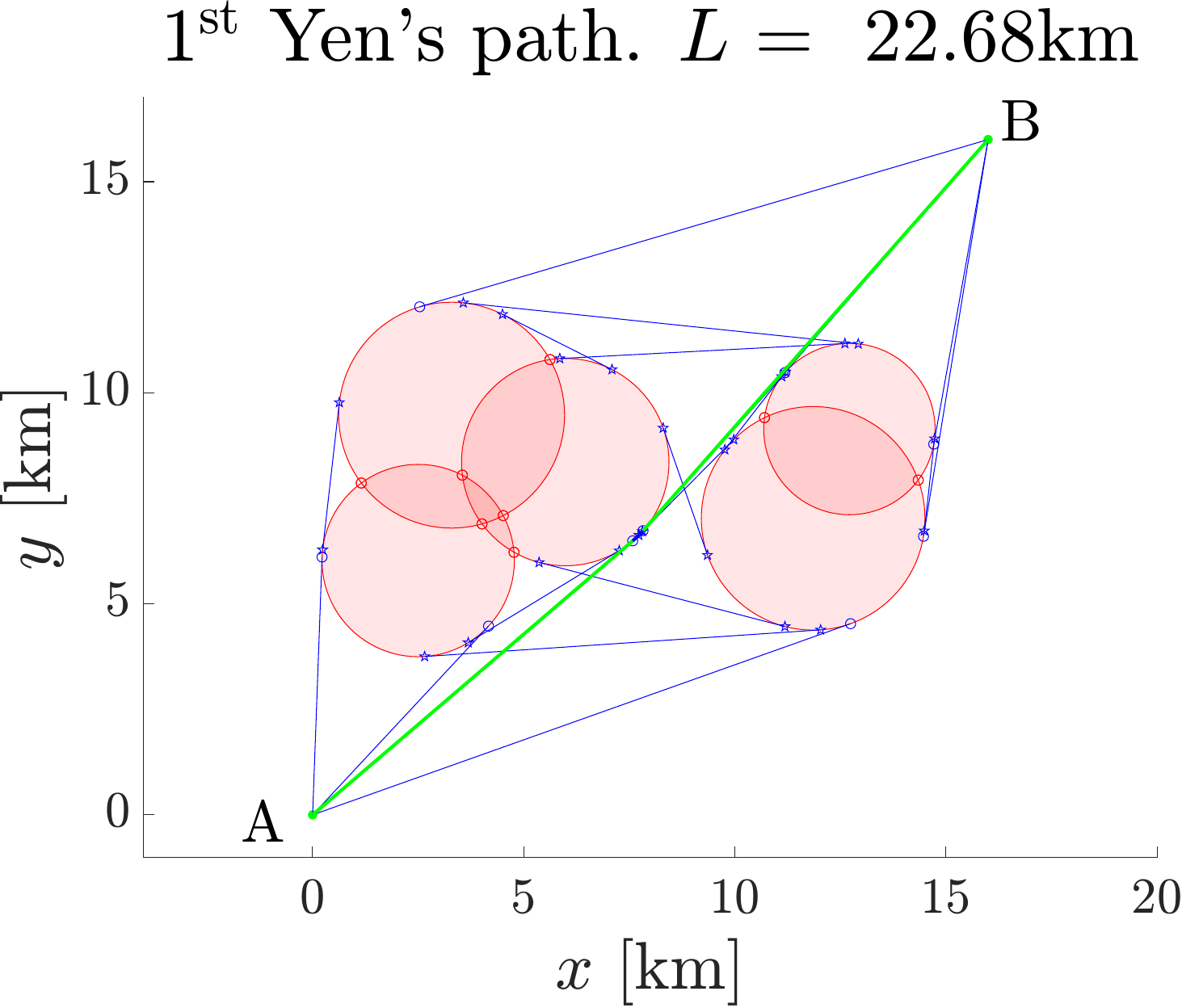}\put(20,65){\textcolor{black}{a)}}\end{overpic}
\begin{overpic}[width=.23\textwidth]{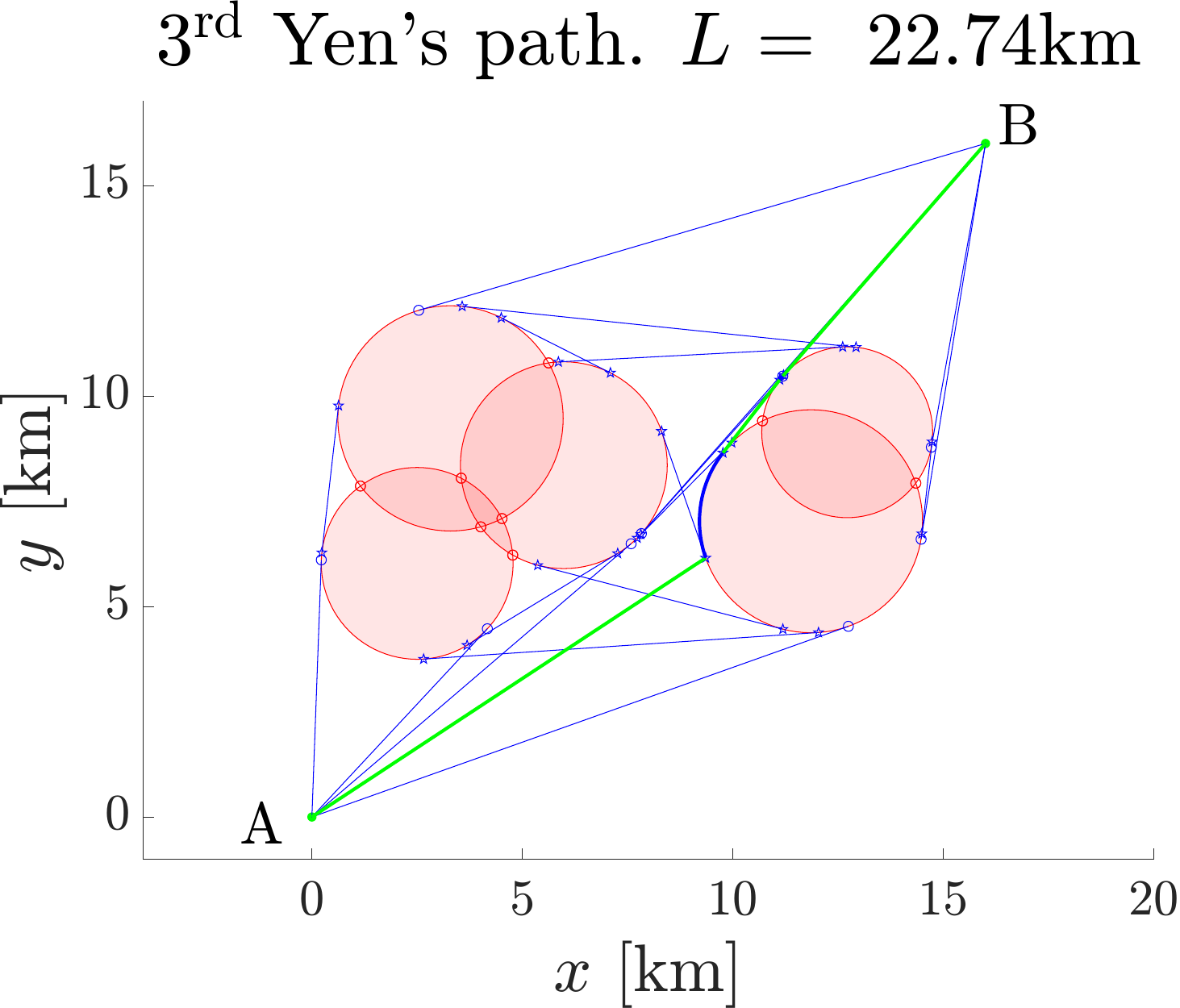}\put(20,65){\textcolor{black}{b)}}\end{overpic}\\
\vspace{1mm}
\begin{overpic}[width=.23\textwidth]{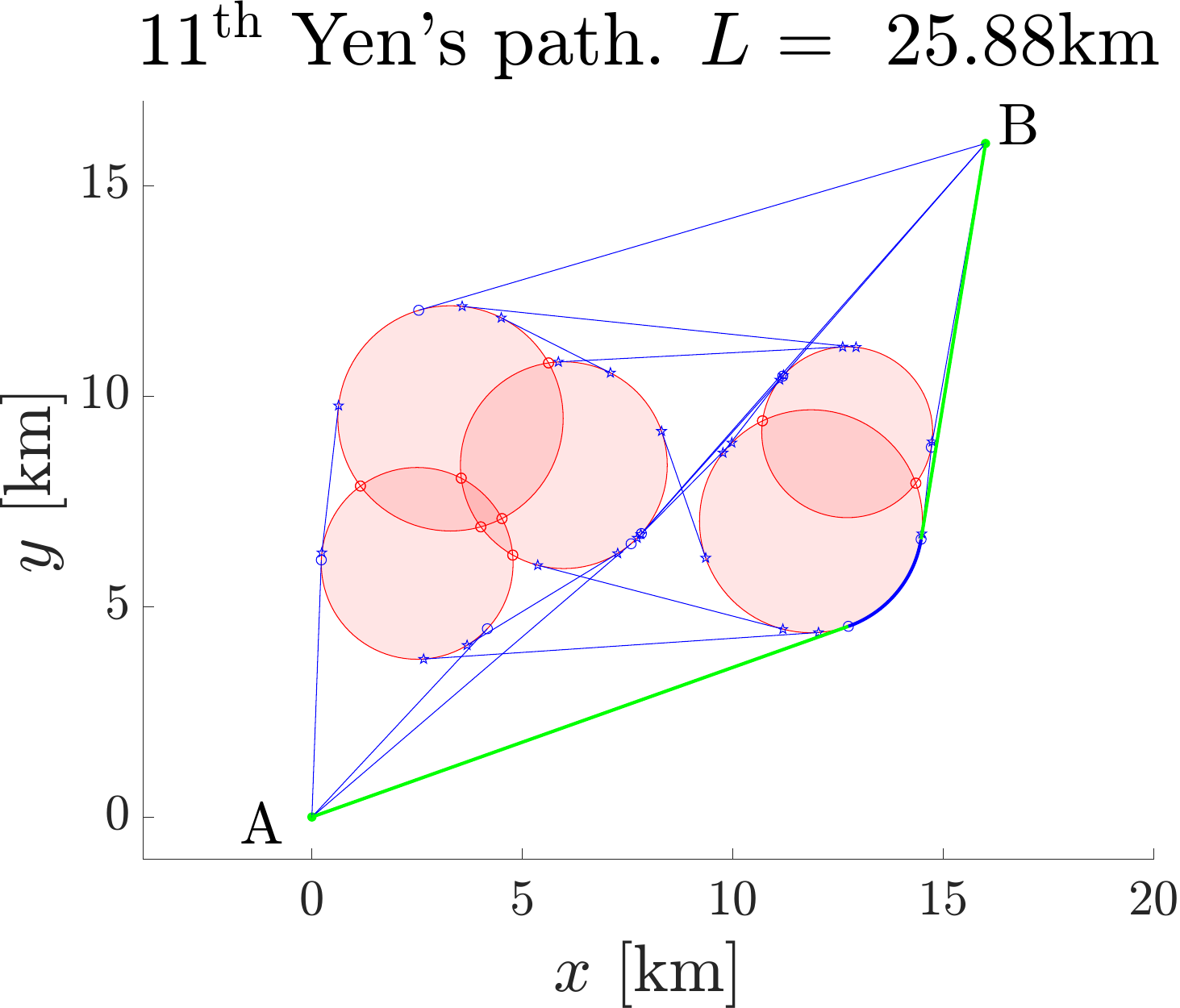}\put(20,65){\textcolor{black}{c)}}\end{overpic}\begin{overpic}[width=.23\textwidth]{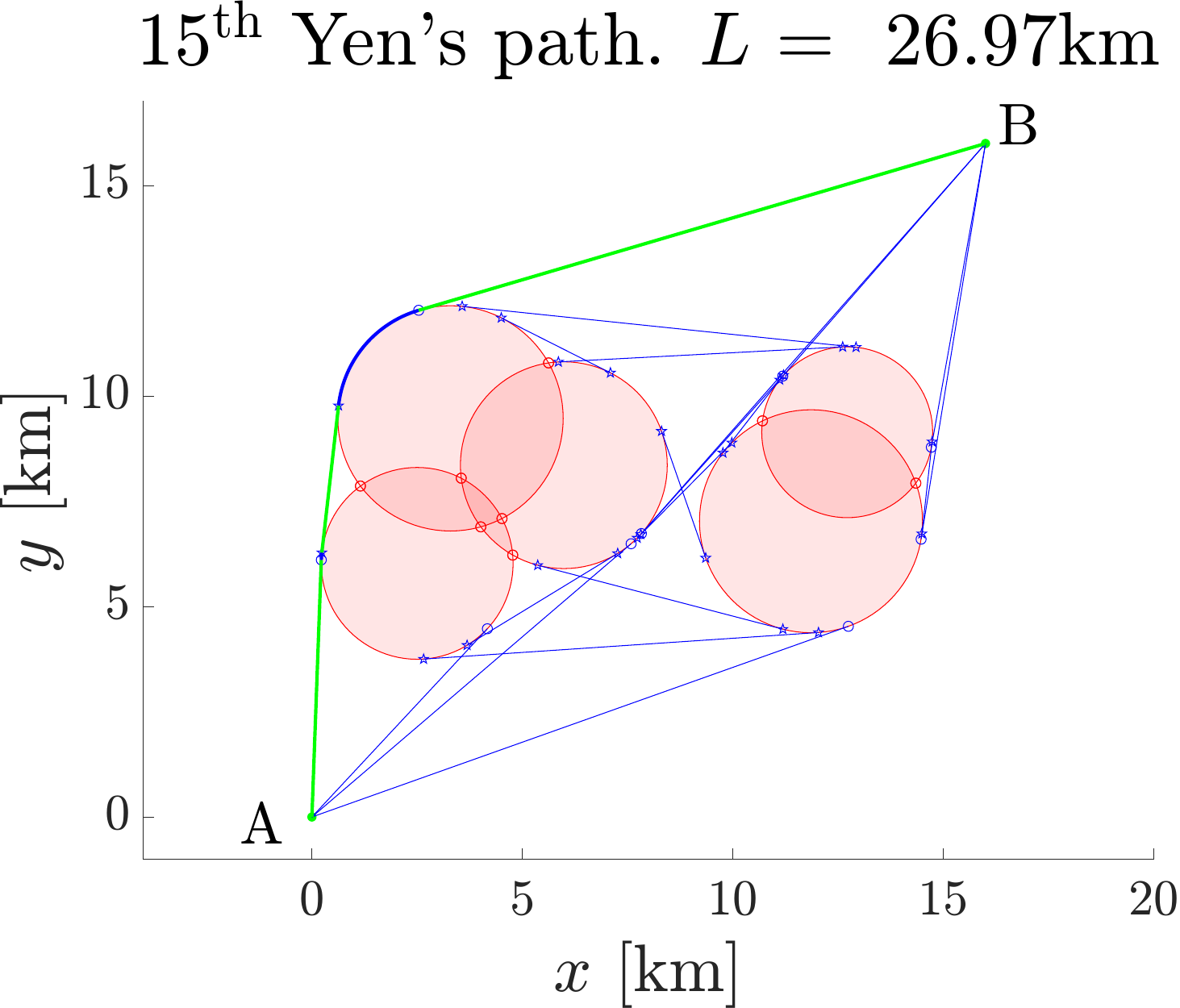}\put(20,65){\textcolor{black}{d)}}\end{overpic}
\caption{Construction of Yen's $k$-optimal paths (green) for a certain configuration of terminals ($A$ and $B$) and obstacles (red) with the bitangents method. 
The $1^{\textrm{st}}$ Yen's path (a) corresponds to the absolute shortest path, particularly the shortest in its homotopy class.  (b) is in the same homotopy class of  (a) but it is not the shortest of its class. (c\&d) belong to different homotopy classes and are also not the shortest paths, rather they are the shortest paths within their respective classes.
\label{fig:yen}
}\vspace{-3mm}
\end{figure} 
In our example of Fig.~\ref{fig:yen}, the green paths in  Fig.~\ref{fig:yen}(a\&b) are homotopic as they do not enclose any obstacle. Conversely, the green paths in  Fig.~\ref{fig:yen}(b\&c) are not homotopic to each other or to those in  Fig.~\ref{fig:yen}(a\&b). 
Paths of different homotopy are found by considering the \emph{$k-$optimal} paths between $A$ and $B$ (i.e. a set of paths ordered by their length). Using Yen's algorithm~\cite{yen1970algorithm} applied to the edges of the graph $G$ defined in Sec.~\ref{sec:bitangents} such a set can be built. If $k$ is sufficiently large then all the possible homotopy classes will be visited. 

\subsubsection{2-dimensional links\label{sec:fmincon}}
The paths found in Sec.~\ref{sec:yen} are 1-dimensional. Given two nodes they are linked by either segments or arcs. Now consider nodes that have circular shape and have a variable transmission radius $r_{D,i}$. Such a radius defines the \emph{coverage} region of the relay. Given $n$ relays, the problem is their optimal placement to minimize the transmission cost between two points $A$ and $B$ . 


The system is defined by Eq.~\eqref{eqn:TransmissionCostForNetwork} with $m=2$.
This restricts relays from transmitting into obstacle regions. 
We consider the inter-node distances as a measure of the cost of transmission for each pair and pursue the goal of minimizing this cost. This is a nonlinear multi-objective optimization problem with constraints. We consider a scalarizing function\cite{chugh2020scalarizing} as in Eq.~(\ref{eqn:TransmissionCostForNetwork}).

Given an initial number of relays and an initial guess on their positions the MATLAB function \texttt{fmincon} attempts solving the optimization problem while respecting the constraints. Fig.~\ref{fig:homotopy} shows the results of the optimization:
\begin{itemize}
    \item Fig.~\ref{fig:homotopy}a  shows the shortest paths of each homotopy class found in Sec.~\ref{sec:yen} where 20 relays have been placed evenly along each class. This is not an acceptable solution of the new problem since the coverage regions of the relays overlap the obstacles (violating the constraints).
    
    \item Fig.~\ref{fig:homotopy}(b-d) represent the solutions found by \texttt{fmincon} by moving the relays with overlapping discs away from the obstacles and adjusting the coverage radii to establish the optimal link between each pair of relays. 
\end{itemize}
The existence of a solution  depends on the initial number of mobile relays $n$. If $n$ is too small then the relays will not be able to link the two terminals without overlapping the obstacles. Let $n_0$ be the initial number of available relays, and $L$ the length of one of the paths in Fig.~\ref{fig:homotopy}a. Then an average radius $R_{\textrm{avg}}$ can be defined as
\begin{equation}
    R_{\textrm{avg}}=\frac{L}{n_0+1}\label{eq:avgrad}\,\,.
\end{equation}If the solution passes through obstacles separated by a distance larger than $R_{\textrm{avg}}$ then Eq.~\eqref{eq:avgrad} is a close estimate of the actual radii of the relays (Fig.~\ref{fig:homotopy}(c\&d)). Conversely corridors narrower than $R_{\textrm{avg}}$ result in regions of larger or smaller local densities of relays (Fig.~\ref{fig:homotopy}b).
A new path of length $L'$ may then be covered if and only if the sum of  relays required to cover the different regions is still smaller or equal to $n_0$.

\begin{figure}[h]
\centering
\begin{overpic}[width=.22\textwidth]{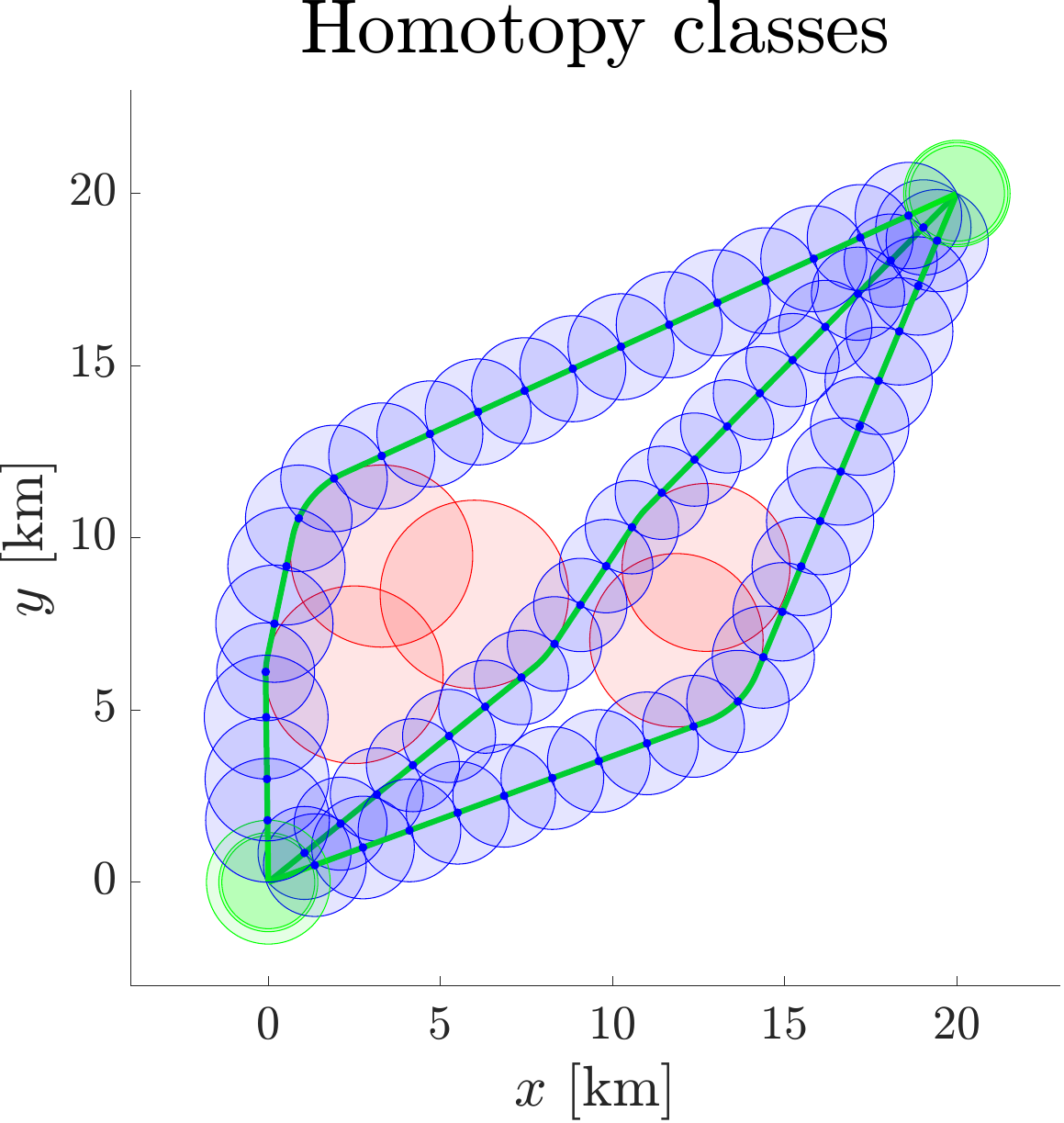}\put(15,90){a)}
\end{overpic}~\begin{overpic}[width=.22\textwidth]{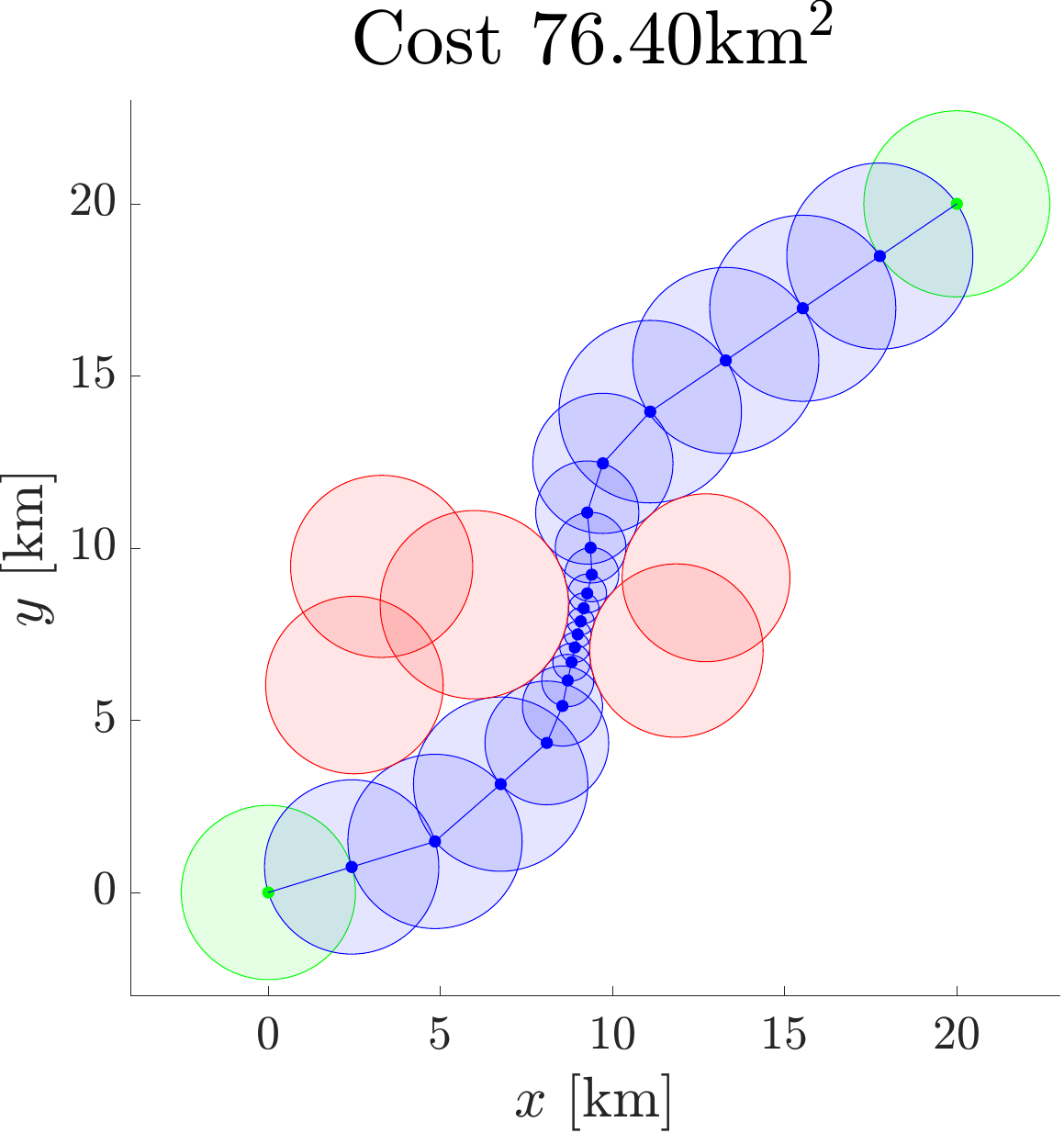}\put(15,90){b)}
\end{overpic}\\
\vspace{1mm}
\begin{overpic}[width=.22\textwidth]{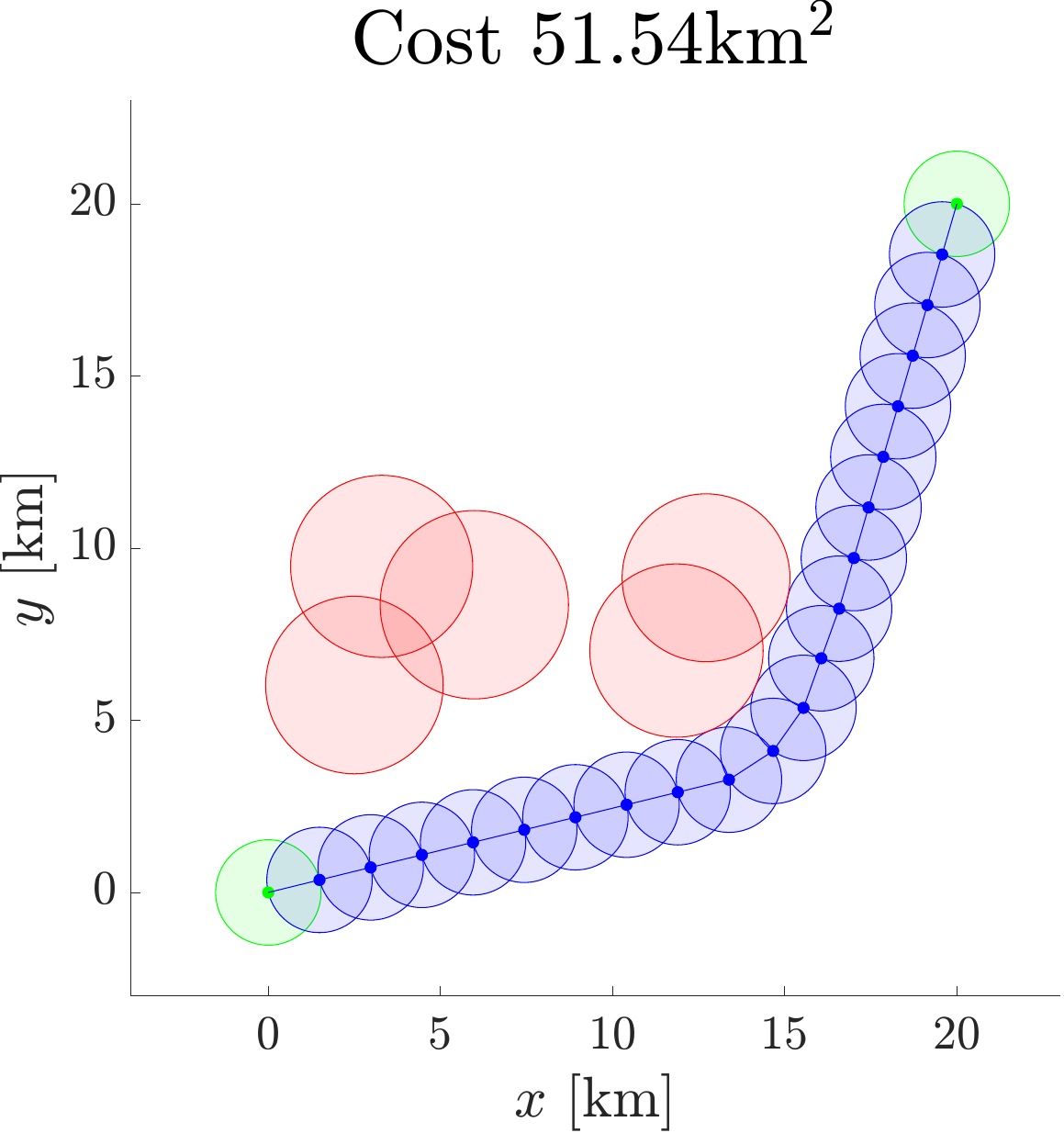}\put(15,90){c)}
\end{overpic}~\begin{overpic}[width=.22\textwidth]{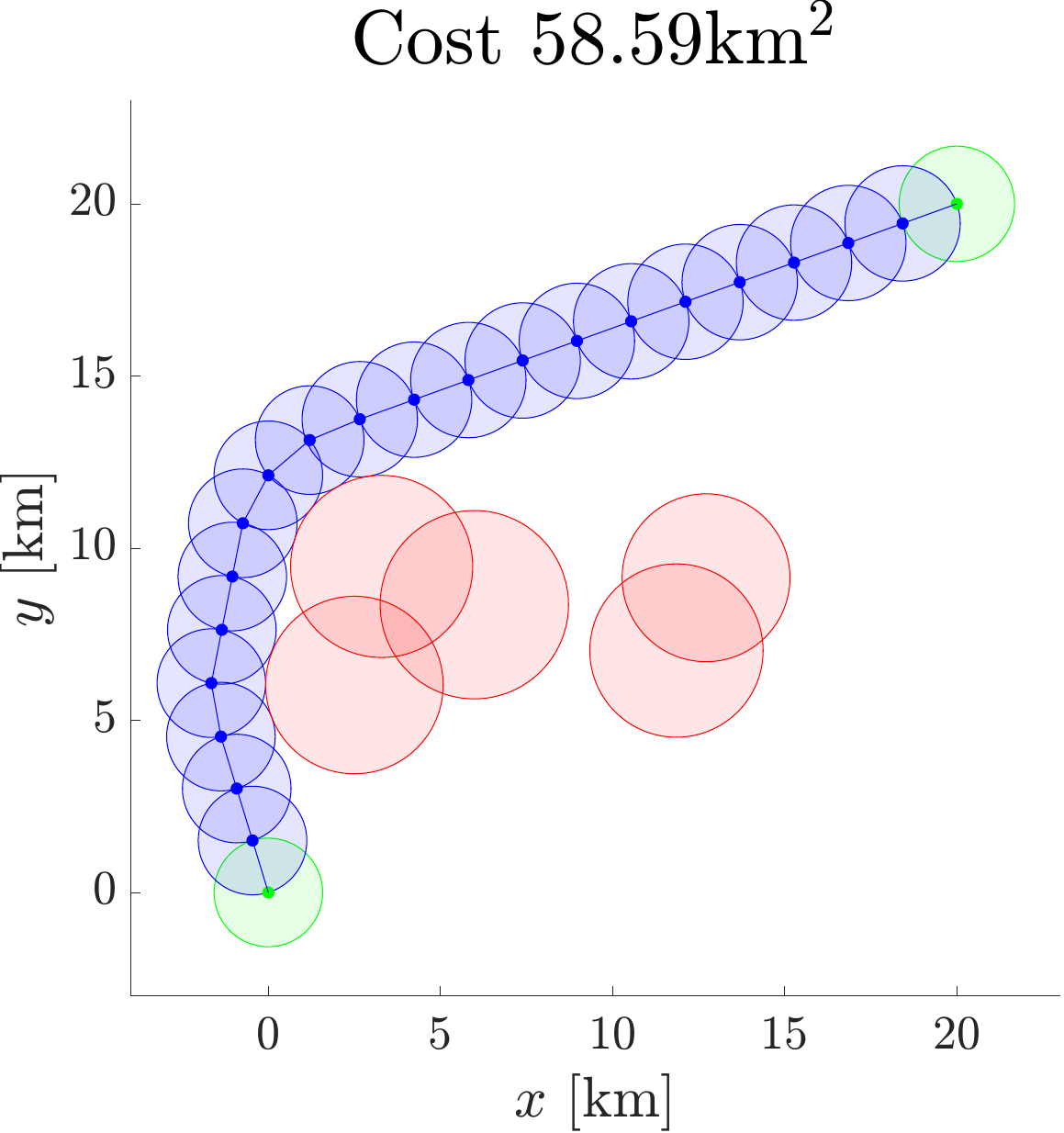}\put(15,90){d)}
\end{overpic}

\caption{Homotopy classes (a) for a given configuration of terminals (green) and obstacles (red). Homotopy classes are found by applying Yen's algorithm. A uniform coverage is provided as initial condition using $n=20$ nodes (blue). A multi-objective optimization (b-d) removes the overlaps between nodes and obstacles seen in a).} 
\label{fig:homotopy}
\end{figure} 
The minimum cost path in Fig.~\ref{fig:homotopy}c
does not belong to the same homotopy class as the shortest path in Fig.~\ref{fig:yen}. This is due to the choice of Eq.~(\ref{eqn:TransmissionCostForNetwork}) as the scalarization function and the additional no-overlap constraint of Eq.~\ref{eqn:TransmissionCostForNetwork}, so it depends on both the mathematical model chosen, the number of available relays $n_0$, \emph{and} the specific obstacle distribution considered: if $n_0$ is sufficiently large for $R_{\textrm{avg}}$ to be smaller than the narrowest passage between obstacles, then the homotopy of the 1D and 2D solutions coincide. This result is expected to extend also to the general case of more than 2 terminals, as we will show in Sec.~\ref{sec:withobs}. 






\subsubsection{Estimating homotopies\label{sec:estHomotopy}}
The dependence of the homotopy class of the optimal solution on the optimization problem raises the question of how many inequivalent homotopy classes $N_{\textrm{hom}}^{\textrm{max}}$ can exist, given two terminals and a set of obstacles. 

In general, the number of homotopy classes is infinite, as a path can loop around an obstacle an arbitrary number of times. However, 
solutions involving loops are always suboptimal and can be excluded from the computation. 

Ruling out loops 
$N_{\textrm{hom}}^{\textrm{max}}$ is interpreted as the number of \emph{reasonable} homotopy classes for an optimization problem, and it is expected to be finite. To obtain an upper bound on $N_{\textrm{hom}}^{\textrm{max}}$, we look for a map of the set of the reasonable homotopies onto some other known finite set. 

Assuming an obstacle is placed close enough to the terminals to obstruct their line of sight, a path connecting the two terminals can have the obstacle on one side or the other. The same reasoning extends to the case of a number $\phi>1$ of obstacles, that can individually and independently be placed on either side of the path. This poses the problem as the distribution of $\phi$ distinguishable objects in two bins. The number of ways this can be done is given by\begin{equation}
    N_{\textrm{hom}}^{\textrm{max}}(\phi)=2^\phi\,\,.\label{eq:Nmax}
\end{equation} Loops 
can affect the length of the path, but will not change the number of ways the distribution can be performed.

\begin{figure}[tb]
\vspace{-3mm}
\begin{overpic}[width=\columnwidth]{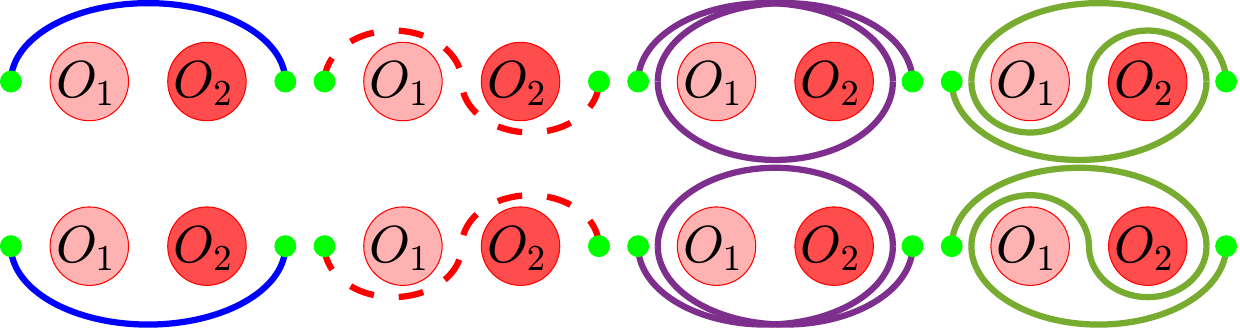}
\put(0,13){\textcolor{black}{($a_1$)}}
\put(25,13){\textcolor{black}{($a_2$)}}
\put(0,-2){\textcolor{black}{($a_3$)}}
\put(25,-2){\textcolor{black}{($a_4$)}}
\put(49,13){\textcolor{black}{($b_1$)}}
\put(74,13){\textcolor{black}{($b_2$)}}
\put(49,-2){\textcolor{black}{($b_3$)}}\put(74,-2){\textcolor{black}{($b_4$)}}
\end{overpic}
\caption{Map between the set of homotopies and the distributions of $\phi$ objects in 2 bins $[B_1;B_2]$:  ($a_1$)--($a_4$) map to $[O_1,O_2;\emptyset]$, $[O_2;O_1]$, $[\emptyset;O_1,O_2]$, $[O_1;O_2]$ respectively. The mapping is not injective, as ($b_1$)--($b_4$) realize the same distributions (albeit at the cost of using loops and expensive routes). However, if we limit ourselves to the homotopies that realize a distribution optimally (i.e. those of group $a$), the mapping is bijective.
\label{fig:allhomotopies}}
\vspace{-6mm}
\end{figure}

To clarify the concept, we solve explicitly the problem with $m=\phi=2$: the goal is to determine the homotopies that map onto the $N_{\textrm{hom}}^{\textrm{max}}=2^\phi = 4$ possible distributions of two objects $O_1,O_2$ in two bins $\{B_1;B_2\}$, identified with the two sides of the path. Figure \ref{fig:allhomotopies} describes the process: homotopies ($a_1$)--($a_4$) map to $[O_1,O_2;\emptyset]$, $[O_2;O_1]$, $[\emptyset;O_1,O_2]$, $[O_1;O_2]$ respectively, so the mapping is realizable. Also ($b_1$)--($b_4$) realize the same mapping, so the mapping would not be injective. However, ($b_1$)--($b_4$) use loops, which lead to non-optimal paths. This suggests to identify a reasonable homotopy with a distribution realized optimally. So, the mapping becomes bijective once limited to candidate optimal realizations.

This concept can be extended to trees, i.e. when the number $m$ of terminals is larger than 2. Consider for simplicity a convex distribution of $m$ terminals, and any spanning tree with $n_s\leq m-2$ Steiner points. If we augment the tree by adding $m$ convex edges among pairs of adjacent terminals, we obtain a connected graph with $v =m+n_s$ vertices and $e=2m+n_s-1$ edges. By Euler's formula, the graph has a number of faces equal to $f=2-v+e=m+1$. 

Excluding the exterior face of the graph where obstacles do not interfere with the network to lie, the remaining $m$ faces represent the bins where to distribute the obstacles. Therefore \eqref{eq:Nmax} generalizes to
\vspace{-2mm}\begin{equation}
    N_{\textrm{hom}}^{\textrm{max}}(f,\phi)=f^\phi\,\,,\label{eq:Nmaxgen}
\end{equation}where $f\leq m$, with the equal sign only in case of a convex distribution of terminals (the convex hull has exactly $m$ vertices). The procedure is shown in Fig.~\ref{fig:treefaces}.

For what concerns the lower bound, the minimum number of different homotopies corresponds to the partitions of a set of $\phi$ objects, provided by the Bell numbers\cite{bell1938iterated}:
\begin{equation}
    N_{\textrm{hom}}^{\textrm{min}}=B_\phi\,\,.
\end{equation}For example, for $\phi = 4$ we expect at least $N_{\textrm{hom}}^{\textrm{min}}=B_4=15$ different partitions, and corresponding homotopies.
\begin{figure}[tb]
\centering
\begin{overpic}[width=.22\columnwidth]{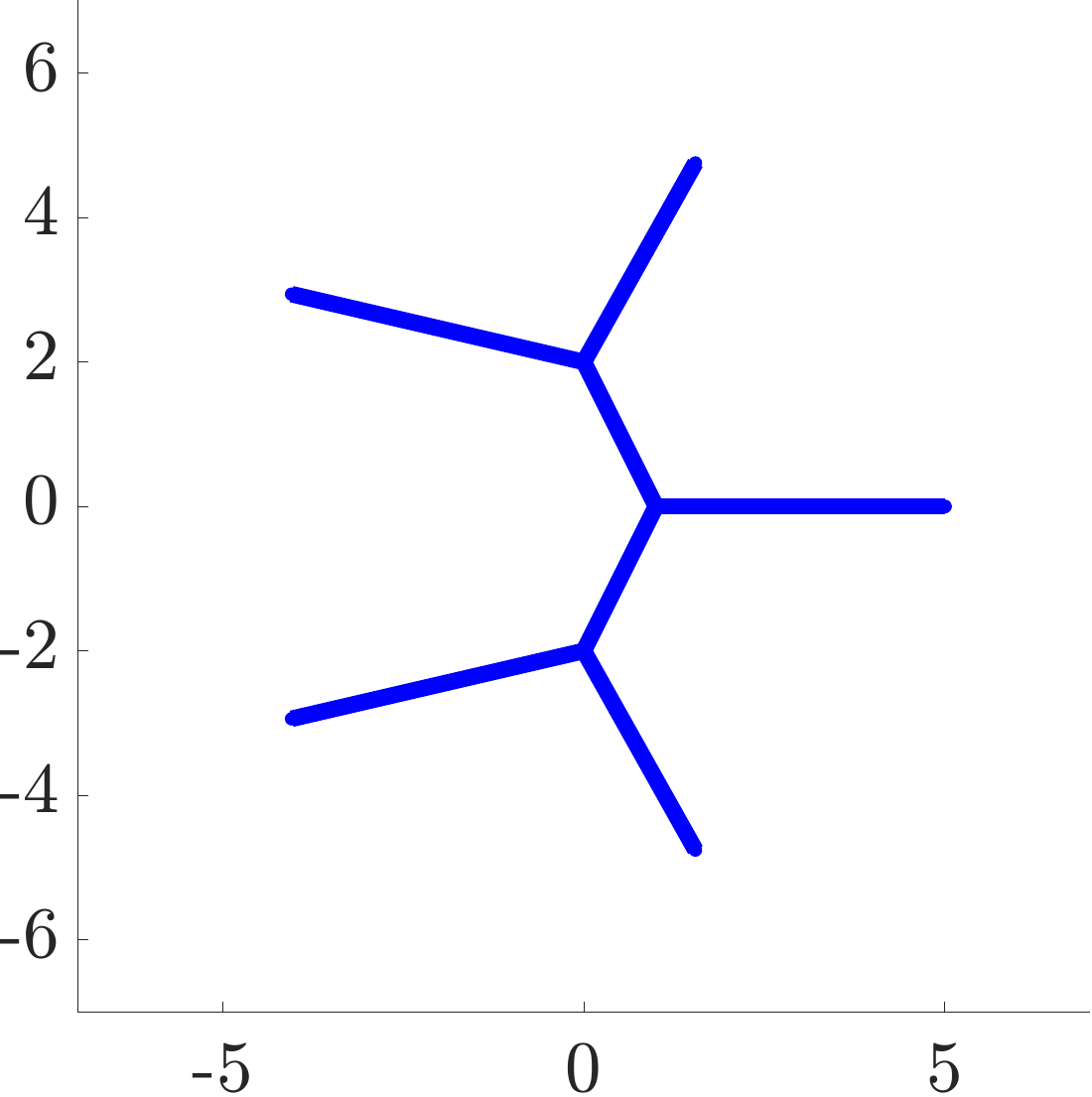}\put(0,0){($a_1$)}
\end{overpic}\begin{overpic}[width=.22\columnwidth]{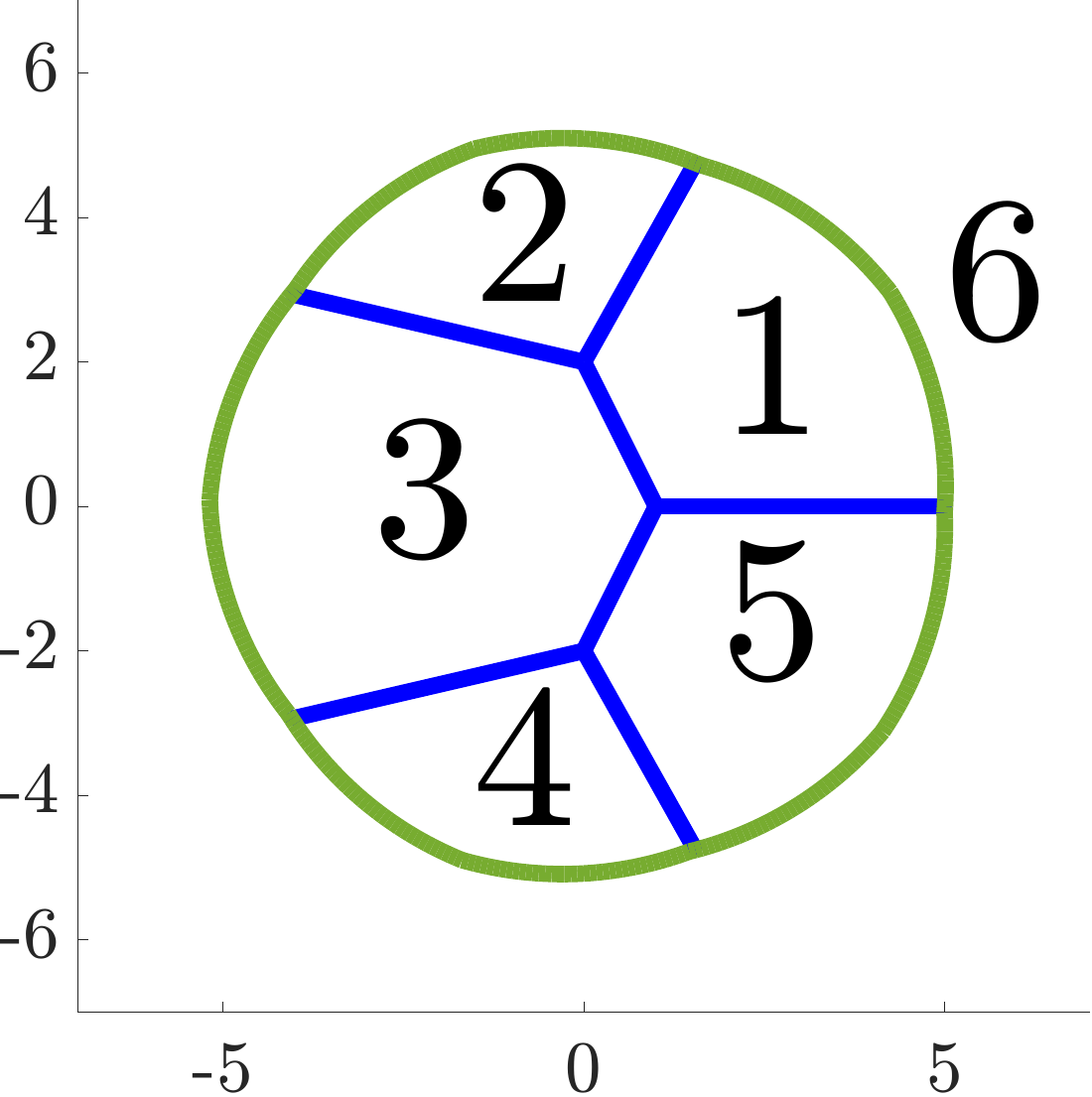}\put(0,0){($a_2$)}
\end{overpic}\begin{overpic}[width=.22\columnwidth]{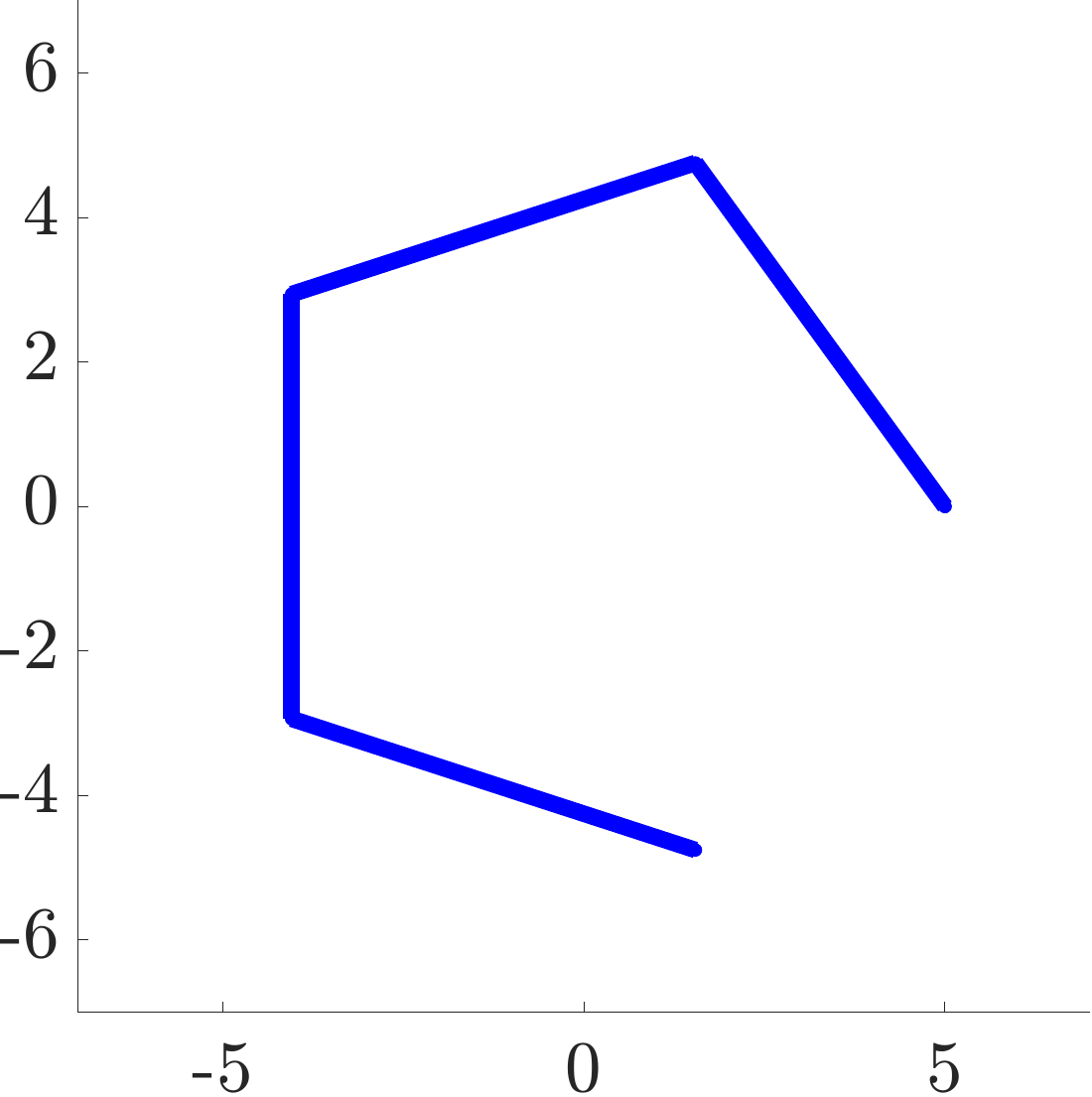}\put(0,0){($b_1$)}
\end{overpic}\begin{overpic}[width=.22\columnwidth]{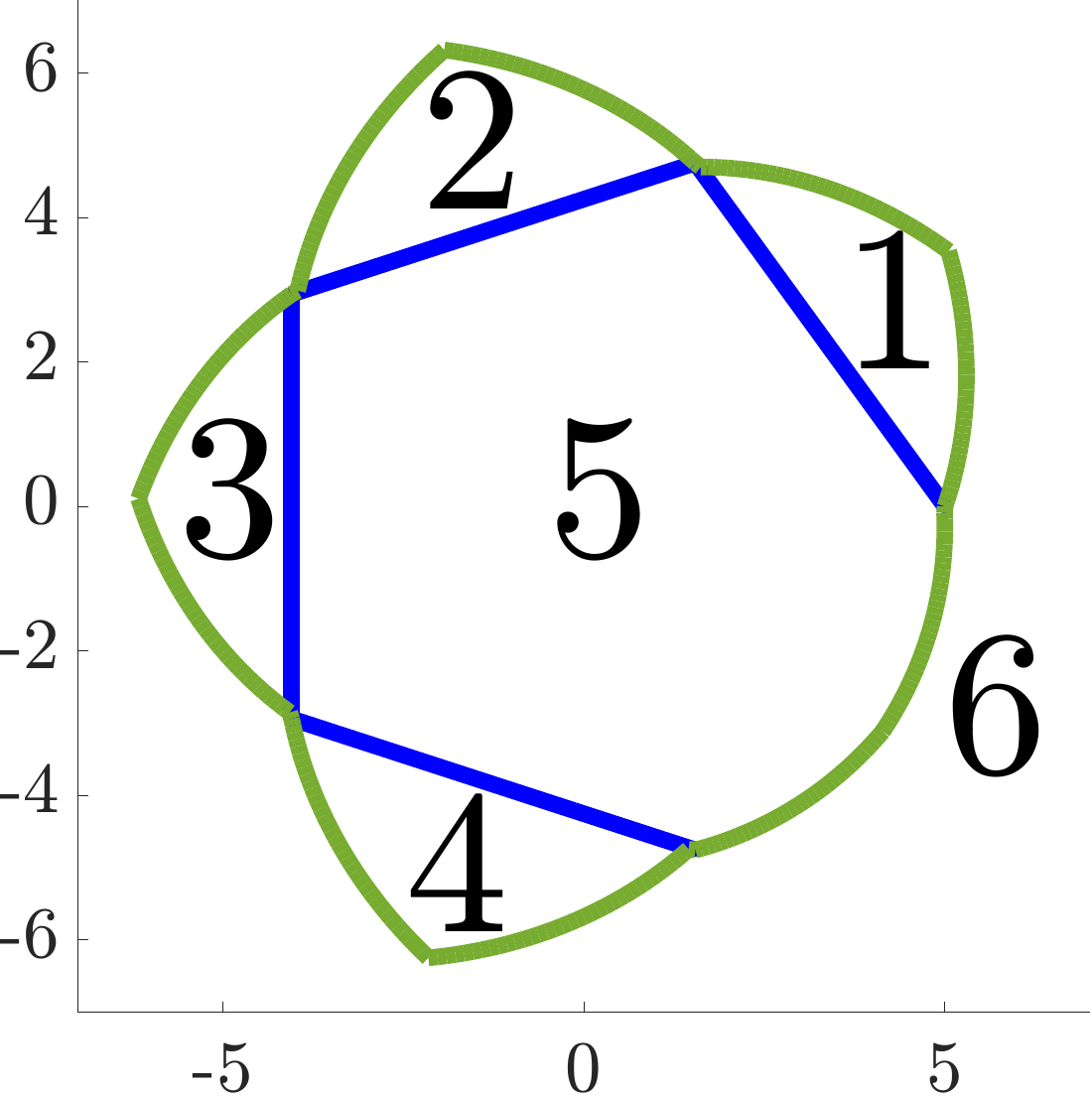}\put(-5,0){($b_2$)}
\end{overpic}

\caption{Definition of  bins for obstacle distribution for cases with $m>2$. Starting from a convex distribution of $m=5$ terminals and any tree interconnecting them ($a_1$,$b_1$), we add $m$ convex edges and create a graph with exactly $f = m + 1$ faces ($a_2$,$b_2$). One of these faces represents the region where obstacles do not obstruct the network, so it can be discarded. The $m$ faces left represent the $m$ bins for the calculation of the distributions.} 
\label{fig:treefaces}
\vspace{-5mm}
\end{figure} 
\subsubsection{Homotopy classification\label{sec:homotopyclassif}}
Homotopy is a well established concept in topology, and formal methods exist to determine if two paths belong to the same homotopy class. 
Implementations of these methods can be found in \cite{bhattacharya2010search} and \cite{jenkins1991shortest}. Relying on exact results such as line integrals, these methods often prove tedious to implement in code. Furthermore, the ultimate focus of this work is \textbf{trees}: the extension of the examined references to more complex structures like trees has not been carried out yet, but can be prevented if the underlying assumptions rely on features of paths that are not shared by trees. For example, the algorithm of \cite{jenkins1991shortest} classifies the homotopy of a path by defining a suitably defined set of segments related to the obstacles, and checking if the path crosses them. The underlying assumption is that crossing a segment \emph{an even number of times} is equivalent to not crossing it at all (see Fig.~\ref{fig:homotopyrule}a). While this rule is perfectly reasonable for paths, it cannot be directly extended to trees: as shown in Fig.~\ref{fig:homotopyrule}b, in case of a node\footnote{A node of degree $n$ is a node where $n$ edges converge.} of degree 3 the departing branches\footnote{In this work, a branch of a tree is defined as a path whose ends are either terminals or nodes of degree $>2$.} may cross the segment two (blue, solid), one (red, dashed) or zero (black, dotted) times. If the red circles represent obstacles in the environment that cannot be continuously overcome, the black structure is not homotopically equivalent to the other two, in particular to the one that crosses the segment twice. Individual branches of a tree, however, are paths so the rule applies to them.

\begin{figure}[tb]
\centering
\begin{overpic}[width=0.40\columnwidth]{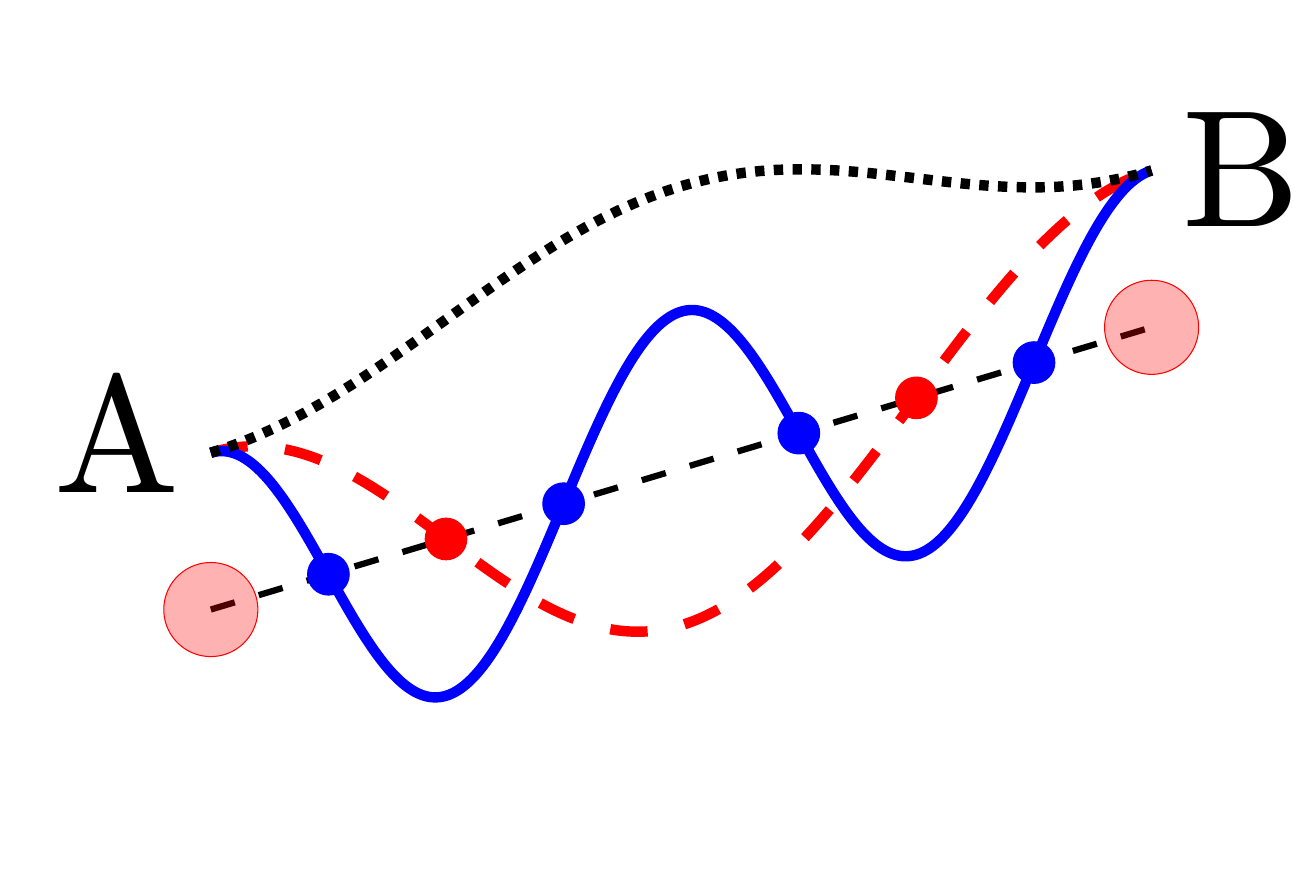}\put(40,0){\textcolor{black}{(a)}}
\end{overpic}\begin{overpic}[width=0.40\columnwidth]{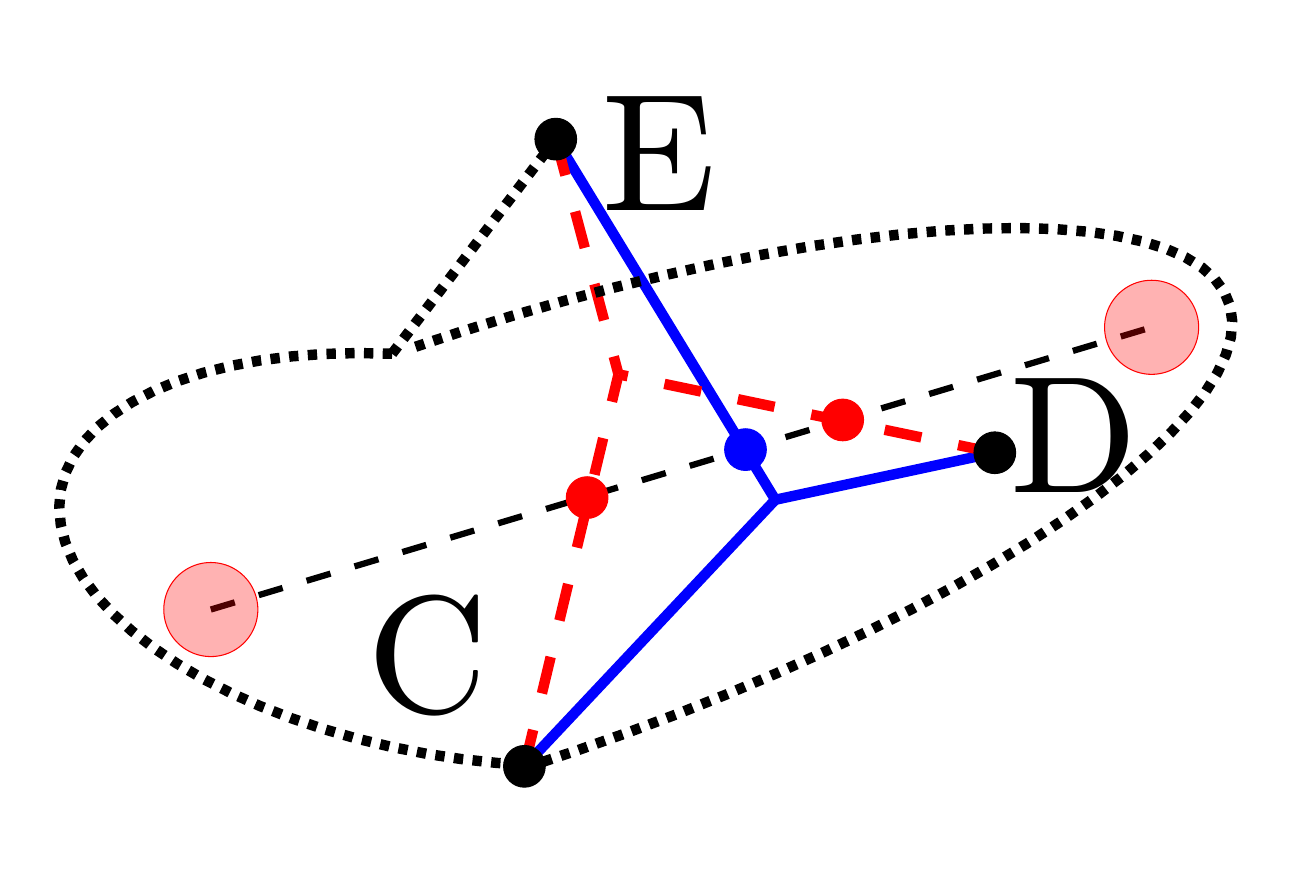}\put(40,0){\textcolor{black}{(b)}}
\end{overpic}
\caption{Different interpretation of number of crossings for path and trees: for paths (a) all occurrences with an even number of crossings correspond to the same homotopy class: zero (blue, solid), two (red, dashed) and four (black, dotted). For trees (b) in general this does not hold: in the case of a node of degree 3 we may have one (red, dashed), two (blue, solid), or none (black, dotted) of the three branches cross, and the latter will not be homotopically equivalent to the other two if there are obstacles (red circles) in the environment.
\label{fig:homotopyrule}
}
\vspace{-5mm}
\end{figure} 
In the following paragraph we will develop a classification procedure for paths/trees (\emph{networks}, in the following) that approximates the classification by actual homotopy. This requires abstracting general characteristics of homotopy:

\begin{enumerate}
    \item A network may or may not traverse the segment $d_{ij}\equiv|O_i-O_j|$ that joins the $i$\textsuperscript{th} and $j$\textsuperscript{th} obstacles' centers. These segments are $\phi(\phi-1)/2$ 
    , the number of non-ordered pairs between $\phi$ obstacles. We can define a $\phi(\phi-1)/2$-dimensional vector 
    \begin{equation}
        \vec h^{(1)}\equiv\{n_{12},...,n_{1\phi},...,n_{\phi-1\,\phi}\}
    \end{equation}where, by the considerations of Fig.~\ref{fig:homotopyrule}, $n_{ij}$ are either 1 if the segment is crossed by at least one branch, an odd number of time, and 0 otherwise.
    \item As per Sec.~\ref{sec:estHomotopy}, obstacles can be on either side of a network's branch. For each obstacle we define four orthogonal segments $d_{iK}\equiv\{d_{iN},d_{iE},d_{iS},d_{iW}\}$, obtained by joining the center of the $i$-th obstacle with its cardinal points $N$, $E$, $S$, $W$ placed ``at infinite\footnote{For our purposes, ``infinite'' is defined as outside of the convex hull of obstacles and the path/tree's endpoints.}''. For $\phi$ obstacles, there are $4\phi$ of such segments, that define a $4\phi$-dimensional vector 
    \begin{equation}
        \vec h^{(2)}\equiv\{NESW_1;...;NESW_\phi\}\,\,,
    \end{equation}where $N$, $E$, $S$, $W$ are 1 or 0 if the corresponding segment is crossed or not.
\end{enumerate}
Let us test the procedure against the homotopies of Fig.~\ref{fig:allhomotopies}, by computing $\vec h=\{n_{12};NESW_1;NESW_2\}$ for the eigth paths shown:
\begin{align}
    \vec h_{a1}&=\color{blue}{\{0;1101;1101\}}\,,\,
    &\vec h_{a3}&=\color{purple}{\{0;0111;0111\}}\,,\,\nonumber\\
    \vec h_{a2}&=\color{red}{\{1;1101;1101\}}\,,\,&\vec h_{a4}&=\color{black}{\{1;0111;0111\}}\,,\,\nonumber\\
        \vec h_{b1}&=\color{brown}{\{0;1111;1111\}}\,,\,
    &\vec h_{b2}&=\color{gray}{\{1;1111;1111\}}\,,\,\nonumber\\
    \vec h_{b3}&=\color{brown}{\{0;1111;1111\}}\,,\,
    &\vec h_{b4}&=\color{gray}{\{1;1111;1111\}}\,,\,\label{eq:exampleclass}\end{align}This example shows that the classification procedure performs well with the optimal homotopies of group a), but due to the booleanization of $\vec h$, it fails at recognizing those of group b). 
    
    This is not a huge limitation, as trajectories with loops are highly suboptimal and unlikely to be found by a solver in an optimization problem. Moreover, as we will show in Sec.~\ref{sec:withobs}, our algorithms automatically removes loops from the network, so the classification procedure can be confidently applied to the final results where no loops can be found.  

Similarly, we test the procedure on the trees of Fig.\ref{fig:ambiguity}, with $m=5$ terminals and $\phi=4$ obstacles:
\begin{align}
    \vec h \equiv&\{n_{12},n_{13},...,n_{34};NWSE_1;;...;NWSE_4\}\,\,,\nonumber\\
    \vec h_{a}=&\{\mathbf{111011};0011;1110;1110;1110\}\nonumber\\
        \vec h_{b}=&\{\mathbf{111011};0011;0111;0111;1001\}\nonumber\,\,,\\
        \vec h_{c}=&\{001011;\mathbf{0011;1011;1111;1011}\}\nonumber\\
        \vec h_{d}=&\{011110;\mathbf{0011;1011;1111;1011}\}\,\,,\label{eq:treeclass}
\end{align} Both $\vec h^{(1)}$ and $\vec h^{(2)}$ are necessary for the classification: indeed, the trees of Fig.\ref{fig:ambiguity}a-b have same $\vec h^{(1)}$, while those of Fig.\ref{fig:ambiguity}c-d share the same $\vec h^{(2)}$.
\begin{figure}[tb]
\centering
\begin{overpic}[width=0.24\columnwidth]{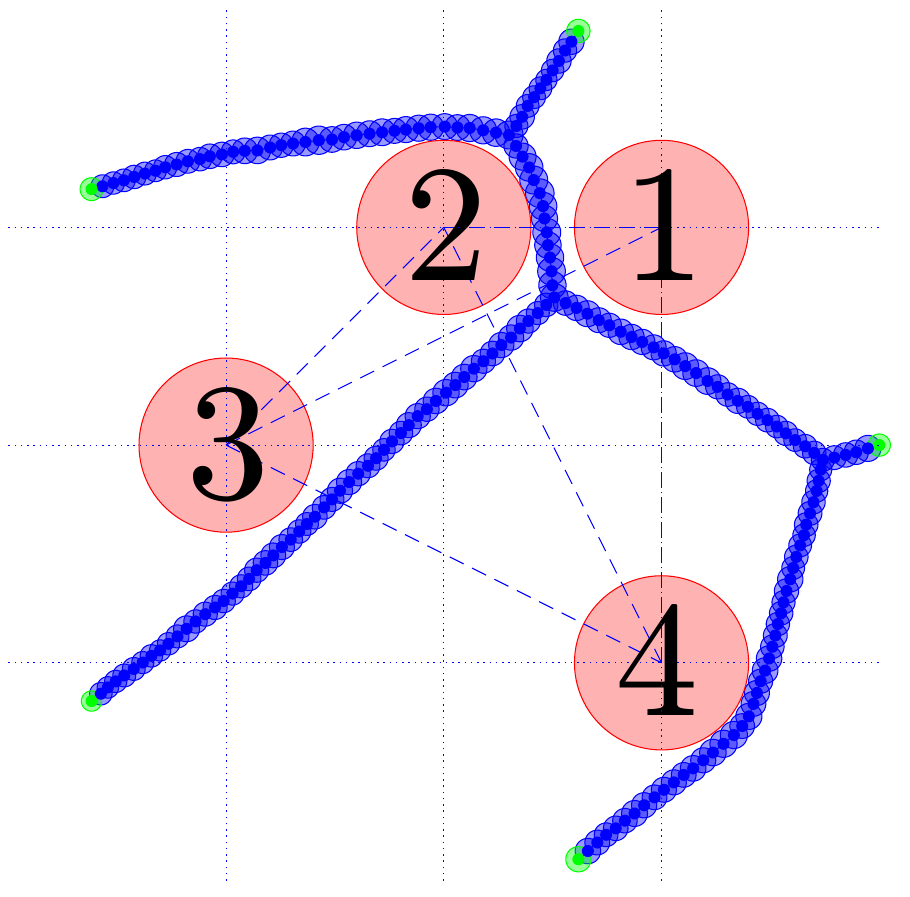}\put(40,-10){\textcolor{black}{a)}}
\end{overpic}\begin{overpic}[width=0.24\columnwidth]{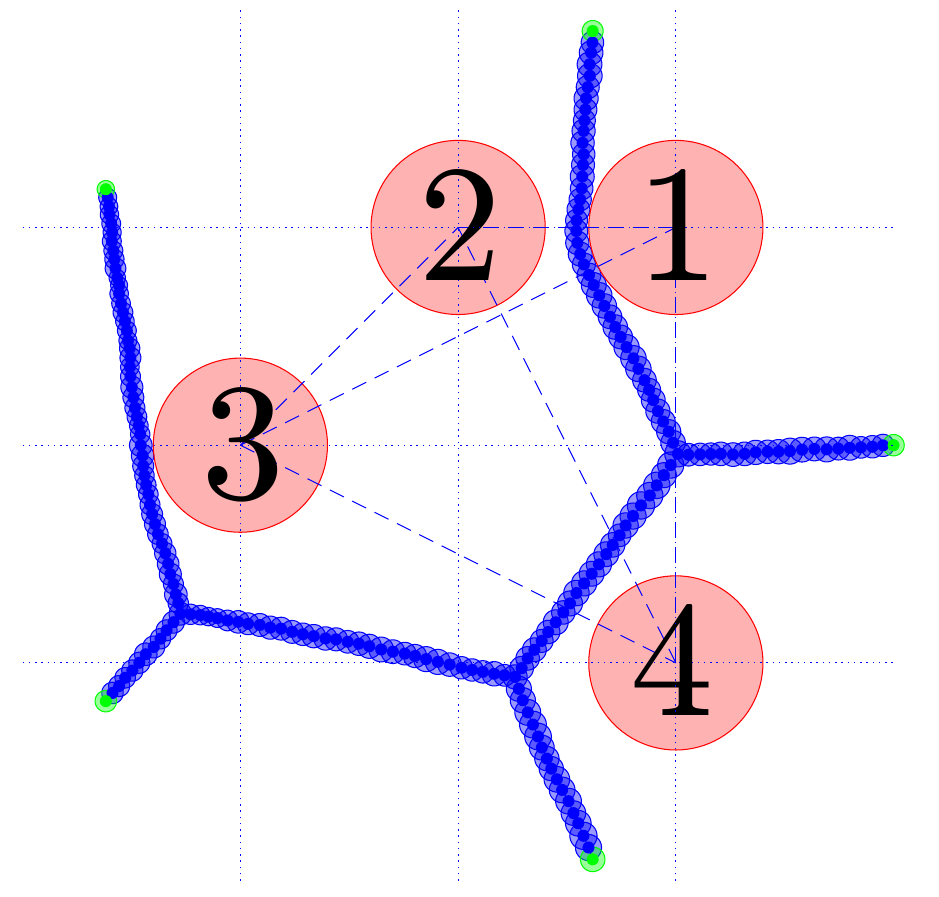}\put(40,-10){\textcolor{black}{b)}}
\end{overpic}\begin{overpic}[width=0.24\columnwidth]{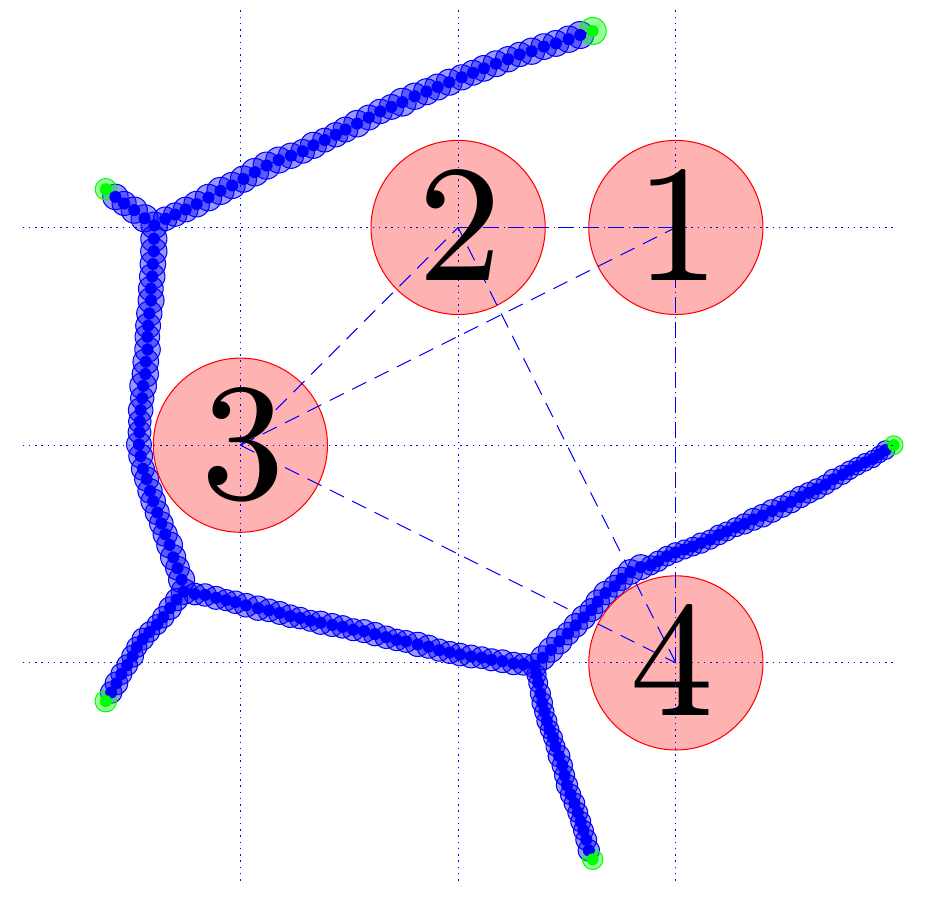}\put(40,-10){\textcolor{black}{c)}}
\end{overpic}\begin{overpic}[width=0.24\columnwidth]{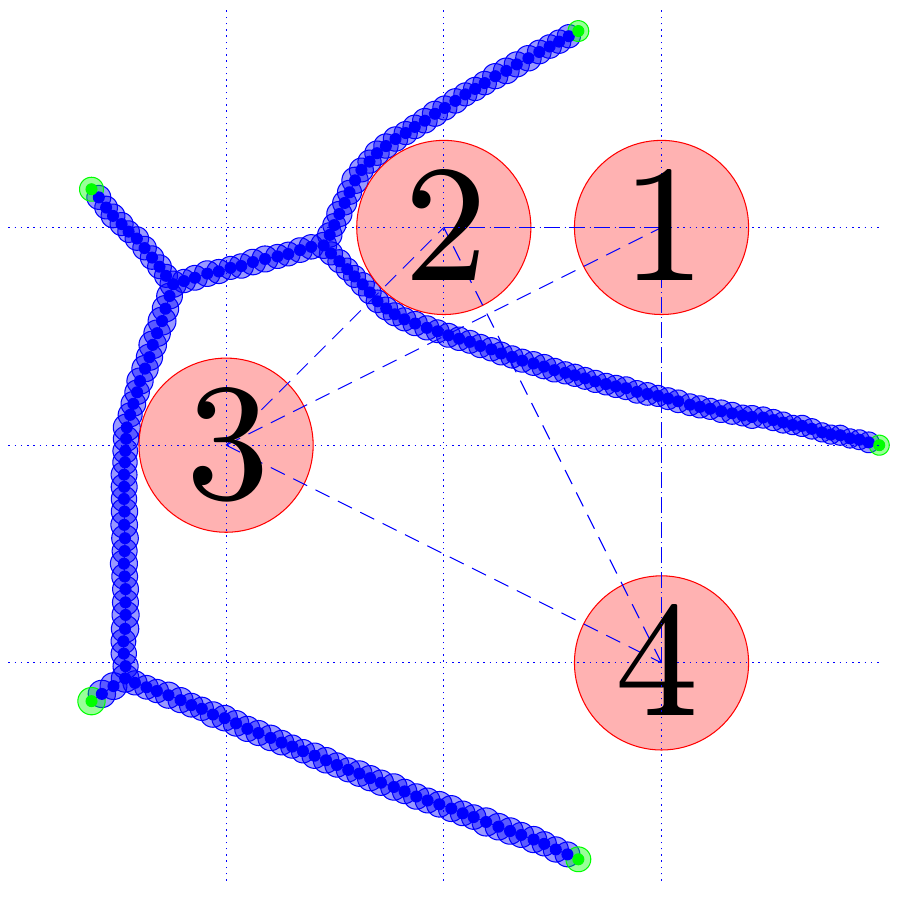}\put(40,-10){\textcolor{black}{d)}}
\end{overpic}
\caption{Classification procedure applied to trees with $m=5$ terminals and $\phi=4$ obstacles. The results of \eqref{eq:treeclass} show that both $\vec h^{(1)}$ and $\vec h^{(2)}$ are necessary for the classification.
\label{fig:ambiguity}}
\vspace{-5mm}
\end{figure}

\subsubsection{Redundancy of $\vec h$}
The combination of $\vec h^{(1)}$ and $\vec h^{(2)}$ is a $\phi(\phi+7)/2$-dimensional boolean vector, which allows for $2^{\phi(\phi+7)/2}$ possible different classes. 

However, not all the components of $\vec h$ are independent:
\begin{itemize}
    \item Crossing of a $d_{ij}$ segment may imply the crossing of one or more other $d_{km}$ segments. 
    \item Unless an obstacle is far away (i.e. it can be ignored), the network will cross at least one of its cardinal directions will be crossed. This rules out $\{NESW\}=\{0000\}$.
    \item The crossing of a $d_{ij}$ segment may imply the crossing of one or more cardinal segments of the $i$\textsuperscript{th} and $j$\textsuperscript{th} obstacles.
    \item We consider continuous and connected networks, so if the pairs $N/S$ or $W/E$ of an obstacle are crossed, then at least another cardinal direction is crossed. This rules out $\{NESW\}=\{1010\}$ and $\{0101\}$. 
\end{itemize}These facts imply that only a subset of the $\phi(\phi+7)/2$-dimensional space corresponds to actual classes. 

Even though the exact number depends on the specific distribution of obstacles and terminals considered, we will assume the method to be able to classify at least $N_{\textrm{hom}}^{\textrm{max}} = 2^\phi$ homotopies, including the optimal ones of the type of Fig.~\ref{fig:allhomotopies}a. For trees, $N_{\textrm{hom}}^{\textrm{max}}=f^\phi\leq 2^{\phi(\phi+7)/2}$ so we expect the procedure to break down when the number of faces (at most, the number of terminals), is comparable to $2^{\phi/2}$, so that $f^\phi\sim 2^{\phi^2/2}$.

\subsection[Solving for m terminals, n relays, no obstacles]{Solving for $m$ terminals, $n$ relays, no obstacles\label{sec:mtermnoobs}}
When using more terminals the primary goal remains to connect them through a network at a minimum cost. As in the 1D case the addition of more relays $n$ affects the nature of the problem: the case $n=0$ and $n>0$ generalize the MST and the Steiner tree problem but with the non-Euclidean cost function of Eq.~(\ref{eqn:TransmissionCostForNetwork}). This also explains why the solutions need not share the properties of the 1D counterparts (e.g. the \SI{120}{\degree} rule as shown in Sec.~\ref{sec:1ter1rel}). 

 With $m$ terminals, the network is a collection of branches $\{\mathcal B\}$ interconnecting nodes: the iterative algorithm described in Alg.~\ref{alg:MSTPlusLeaf} attempts to optimize the cost of this network.
\begin{figure}[tb]
\vspace*{-4mm}
\begin{algorithm}[H]
\small
\caption{\small\sc MST-Based Network Optimization}\label{alg:MSTPlusLeaf}
\begin{algorithmic}[1] 
\While{{\sc True}}
\State Compute MST on set of terminals and relays.
\State Call Advanced Network Optimization\algorithmiccomment{Alg.~\ref{alg:advOpt}}
\For{relay  \textbf{in} relays}
\If{relay in a leaf branch of MST} 
\State Steer relay to parent node with degree > 2.

\Else
\State Compute movement to average position of neighbors
\If{obstacles}
\State Call No Overlap Constraint\algorithmiccomment{Alg.~\ref{alg:nooverlap}}
\EndIf
\State Perform movement
\EndIf 
\EndFor
\If{Cost stable \textbf{or} reached max steps}
\State Return network \& exit
\EndIf
\EndWhile
\end{algorithmic}
\end{algorithm}
\vspace*{-6mm}
\end{figure}
%

This procedure involves local optimization of the cost and therefore is prone to produce local minima: the local rules of the algorithm produce branches with locally uniform densities, but as they do not allow relays to move from a denser branch to a less dense one densities may not be uniform globally.

To mitigate this issue assume the system is in a local minimum and that the radii of the nodes involved in each branch are uniform and equal to $R = L/(N+1)$, where $L$ is the length of the branch and $N$ the number of its nodes \emph{excluding the endpoints}. Let $\mathcal B_{\textrm{sml}}$ and $\mathcal B_{\textrm{lrg}}$ be two neighboring branches (i.e. with one endpoint in common) with higher and lower density of relays and thus \textbf{smaller} and \textbf{larger} radii. 

Then we move one relay from the branch with the smaller $R$ ($R_{\textrm{sml}}$) to the branch with the larger $R$ ($R_{\textrm{lrg}}$), halfway between the branches' common endpoint and the first relay.
\begin{figure}[tb]
\vspace{-4mm}
\begin{algorithm}[H]
\small
\caption{\small\sc Advanced Network Optimization}\label{alg:advOpt}
\begin{algorithmic}[1] 
\For{node  \textbf{in} network}
\State Find degree of node.
\If {node == terminal \textbf{and} degree == 2}
    \If {star improves cost}
        \State Create star.
    \Else
        \State Equilibrate radii.
    \EndIf
\ElsIf {degree == 3}
\State Equilibrate radii. 
\EndIf 
\EndFor
\State Return updated node positions
\end{algorithmic}
\end{algorithm}
\vspace{-7mm}
\end{figure}

\begin{figure}[tb]
\vspace{-4mm}
\begin{algorithm}[H]
\small
\caption{\small\sc ``No-overlap'' constraint}\label{alg:nooverlap}
\begin{algorithmic}[1] 
\If {node \textbf{overlaps} obstacles}
   \State Ignore movement from Alg.~\ref{alg:MSTPlusLeaf}
   \State Compute radial movement away from obstacles
   \If{radial movement increases overlaps}
        \If{No moves for $5$ consecutive steps in the past}
            \State Perform radial movement
        \Else
            \State Perform no movement
        \EndIf
    \EndIf
\Else
    \If{Movement from Alg.~\ref{alg:MSTPlusLeaf} causes overlaps}
        \State Cancel movement from Alg.~\ref{alg:MSTPlusLeaf}
        \Else
        \State Confirm movement from Alg.~\ref{alg:MSTPlusLeaf}
    \EndIf
\EndIf 
\State Return updated movement
\end{algorithmic}
\end{algorithm}
\vspace{-8mm}
\end{figure}
On the former denser branch the common endpoint is now  distant $R'=2R_{\textrm{sml}}$  from its nearest neighbor while on the former less dense side it is distant $R'=R_{\textrm{lrg}}/2$, both different from their branch average. After the algorithm reaches a new equilibrium  the branches will have lengths $L'$ and one more or one less relay:
\begin{equation}
 R_{\textrm{sml}}'=\frac{L_{\textrm{sml}}'}{(N_{\textrm{Rsml}}+1)-1},\,\,R_{\textrm{lrg}}^{'}=\frac{L_{\textrm{lrg}}'}{(N_{\textrm{Rlrg}}+1)+1}\;.\label{eq:avgradii}
 \end{equation}
     The cost of each branch is given by $C=(N+1)R^2=L^2/(N+1)$ therefore the cost change $ \Delta C= C'-C $ is:
\begin{equation}
   \Delta C=\frac{L_{\textrm{sml}}^{'2}}{N_{\textrm{Rsml}}}+\frac{L_{\textrm{lrg}}^{'2}}{N_{\textrm{lrg}}+2}-\frac{L_{\textrm{sml}}^2}{N_{\textrm{Rsml}}+1}-\frac{L_{\textrm{lrg}}^2}{N_{\textrm{Rlrg}}+1}\;,
\end{equation}Assuming $L_i\sim L_i'$ (reasonable for large $N_i$) and $L_1=L_2$ then we have
\begin{align}
    \Delta C &\equiv\alpha (1+N_{\textrm{Rlrg}}-N_{\textrm{Rsml}})\leq0\;,
\end{align}where $\alpha$ is a positive quantity and the inequality holds for $N_{\textrm{Rsml}}\geq N_{\textrm{Rlrg}}+1$. This occurs when the difference between $R_{\textrm{sml}}$ and $R_{\textrm{lrg}}$ is such that $\left\lceil L_{\textrm{lrg}}/R_{\textrm{lrg}}\right\rceil<\left\lceil L_{\textrm{sml}}/R_{\textrm{sml}}\right\rceil$. Therefore locally equilibrating radii may result in a cost improvement as long as the density imbalance between two branches is significant and they have comparable length.

Fig.~\ref{fig:MSTnoobs} presents the outcome of the optimization under the combined action of Algs.~\ref{alg:MSTPlusLeaf}--\ref{alg:advOpt} at various steps. In particular Fig.~\ref{fig:MSTnoobs}d represents the optimal configuration and resembles a full Steiner topology with $m=5$ terminals and $n=3$ Steiner points. 
Due to the high likelihood of local minima, employing only Alg.~\ref{alg:MSTPlusLeaf} does not guarantee the result of Fig.~\ref{fig:MSTnoobs}d. Several simulations were performed employing only Alg.~\ref{alg:MSTPlusLeaf} with $m=5$ terminals and $N=100$ different initial placements of $n=40$ relays: although 66\% of the cases had the right topology, none of them was optimal. The remaining 44\% presented one to three terminals of degree 2, not optimal for this instance. 

\begin{figure}[tb]
\centering
\begin{overpic}[width=.22\columnwidth]{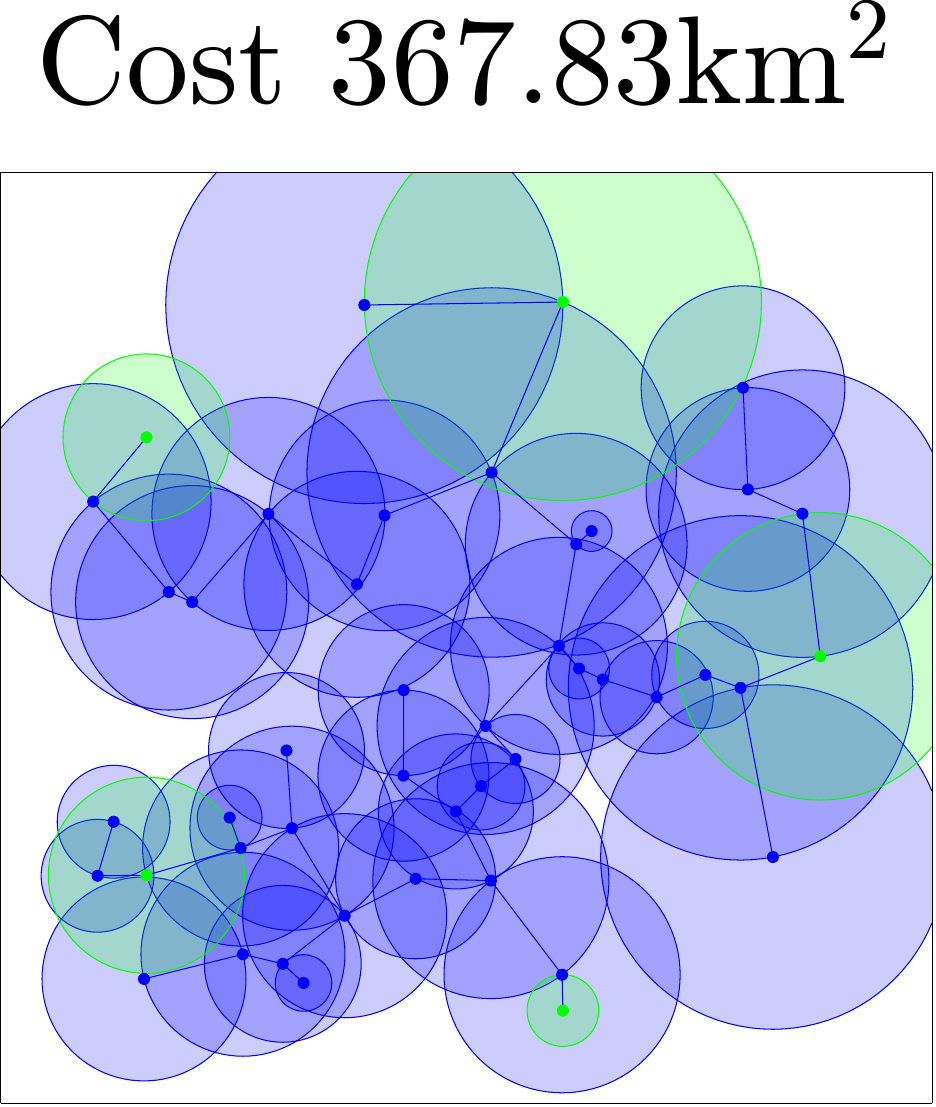}\put(2,72){a)}
\end{overpic}~\begin{overpic}[width=.22\columnwidth]{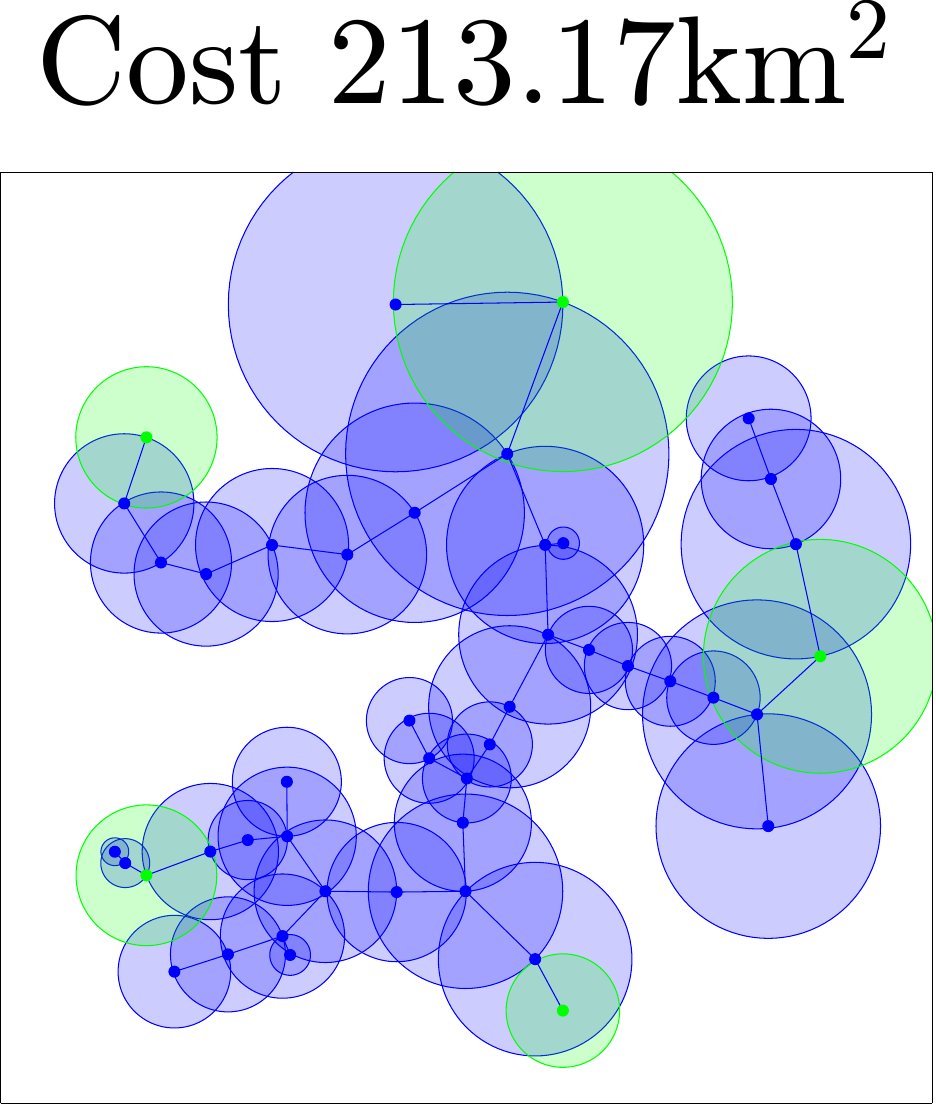}\put(2,72){b)}
\end{overpic}
\begin{overpic}[width=.22\columnwidth]{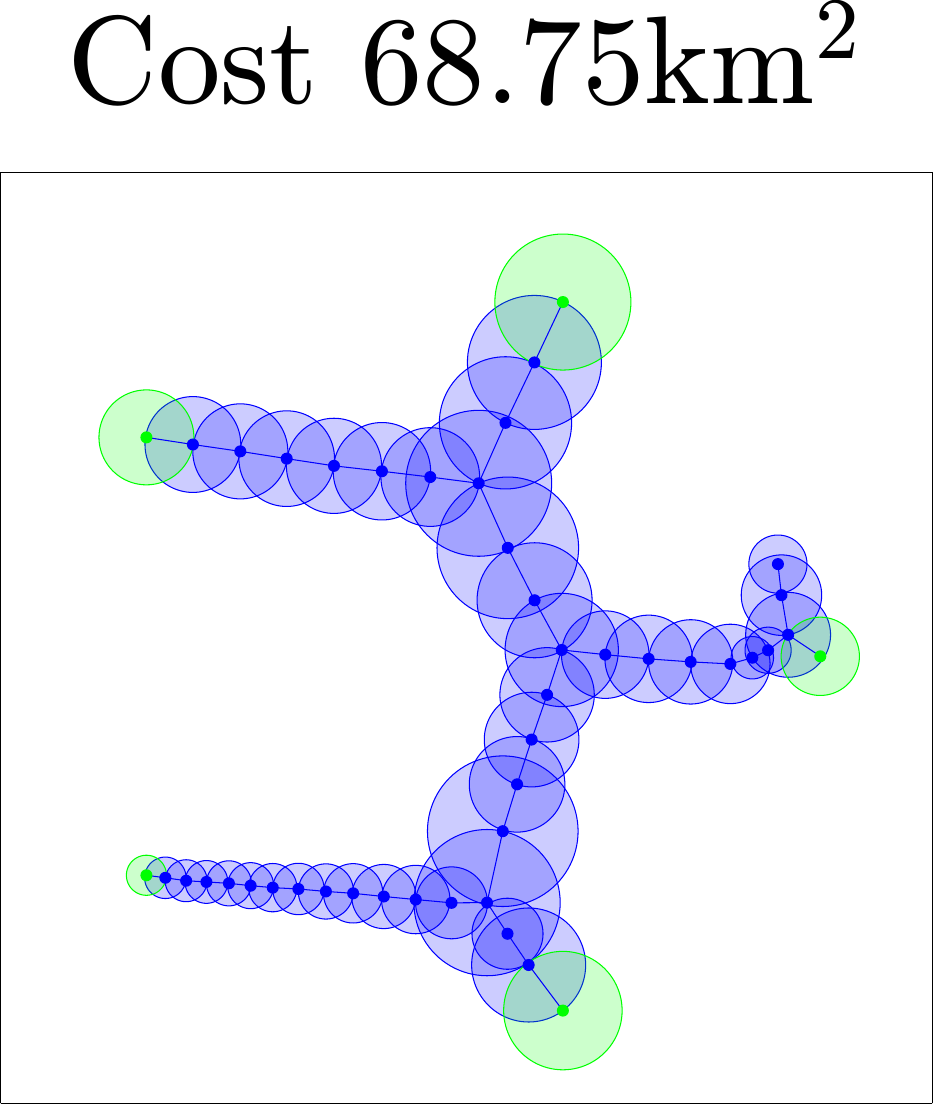}\put(2,72){c)}
\end{overpic}~\begin{overpic}[width=.22\columnwidth]{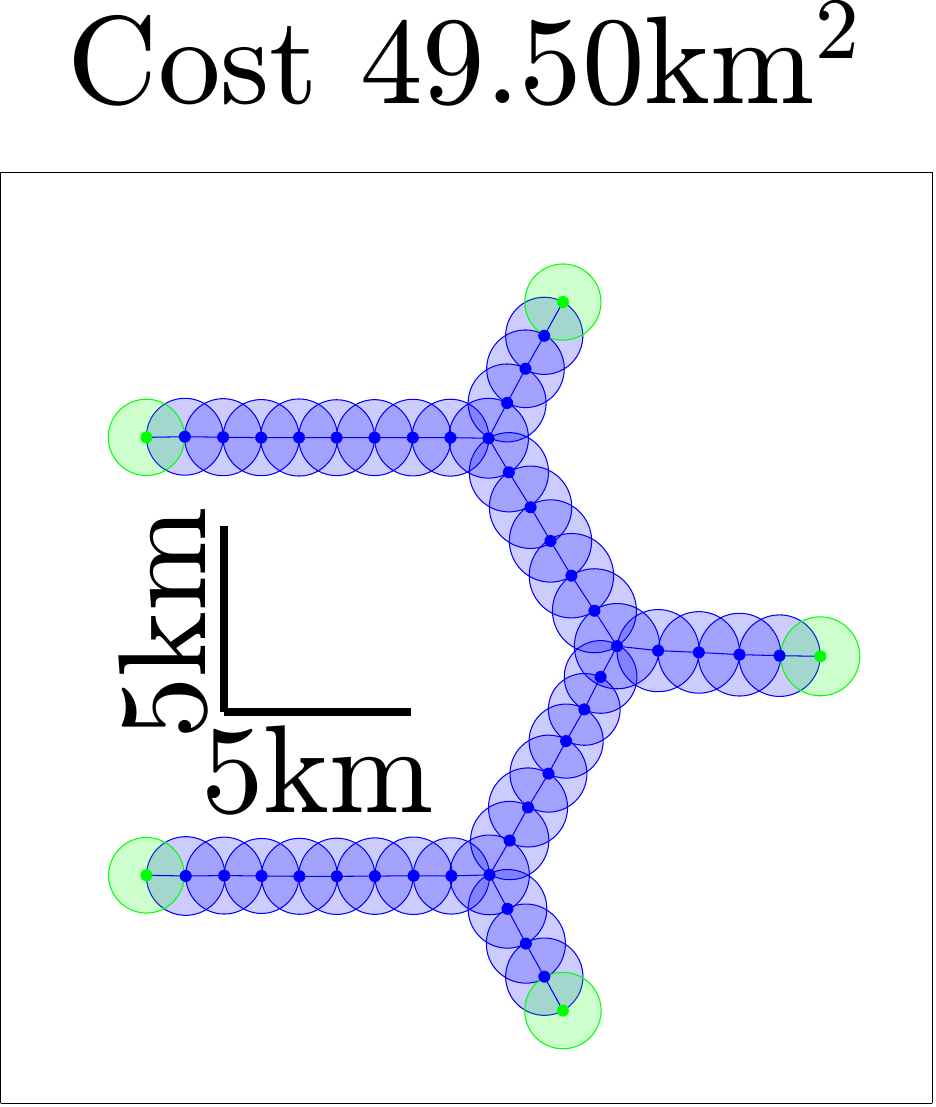}\put(2,72){d)}
\end{overpic}

\caption{
From an initial random placement of $n=40$ relays (a), Algs.~\ref{alg:MSTPlusLeaf}-\ref{alg:advOpt} develop a network interconnecting $m=5$ terminals. Alg.~\ref{alg:advOpt} equilibrates radii across branches (b-c) . The final structure shows pseudo-Steiner points (d) with angles $\sim$\SI{120}{\degree}.} 
\label{fig:MSTnoobs}
\end{figure} 

A second round of simulations, that also utilized Alg.~\ref{alg:advOpt}, was undertaken and resulted and it resulted in 100\% of optimal results (up to a rotation of \SI{72}{\degree},  the terminals are the vertices of a regular pentagon).
Figure~\ref{fig:MSTnoobs}d shows one of the possible solutions while Fig.~\ref{fig:CostTrends}a presents trends and cost
distributions.

\begin{figure}[t]
\centering
\begin{overpic}[width=0.49\columnwidth]{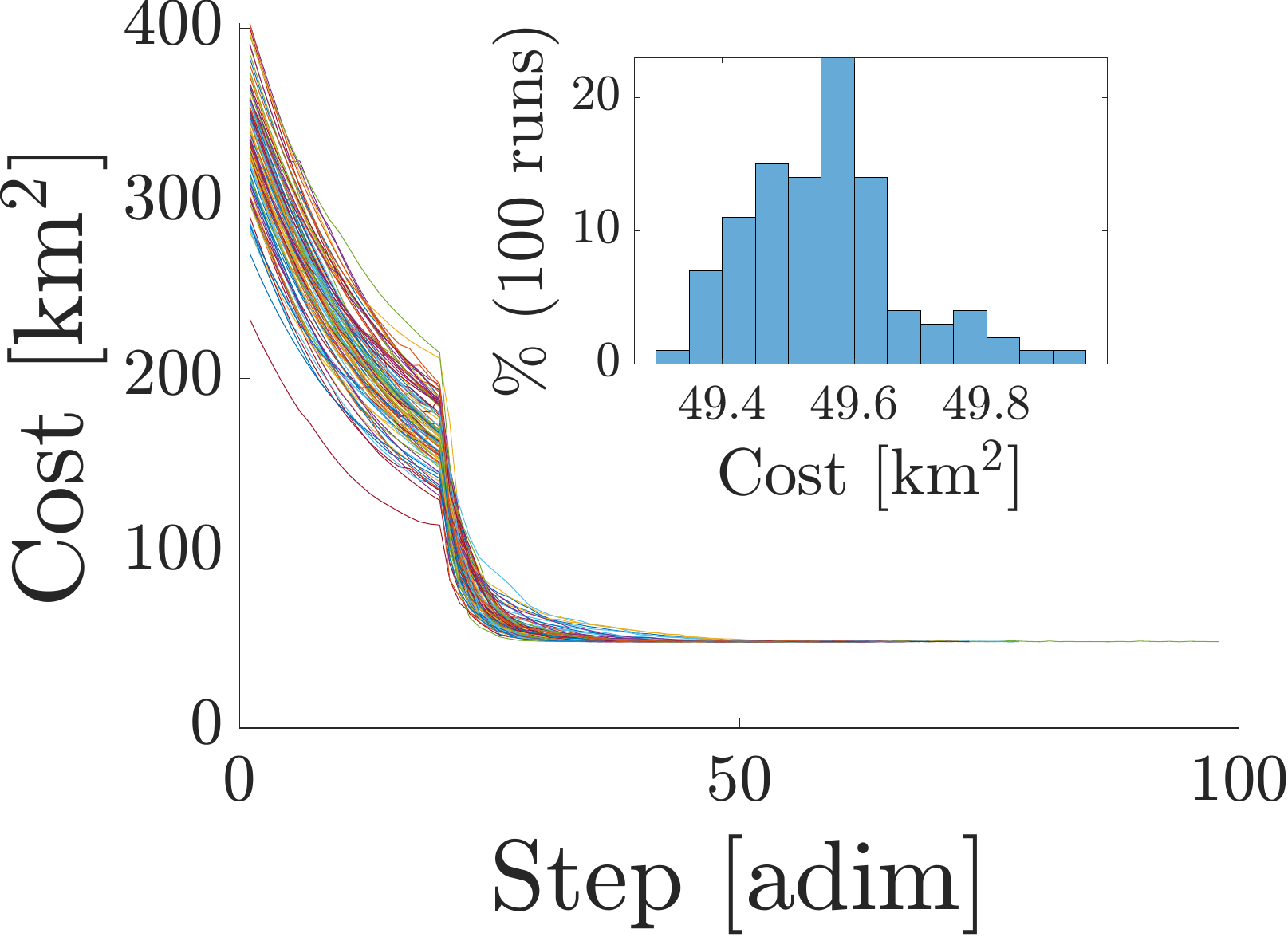}\put(90,70){a)}\end{overpic}\begin{overpic}[width=0.49\columnwidth]{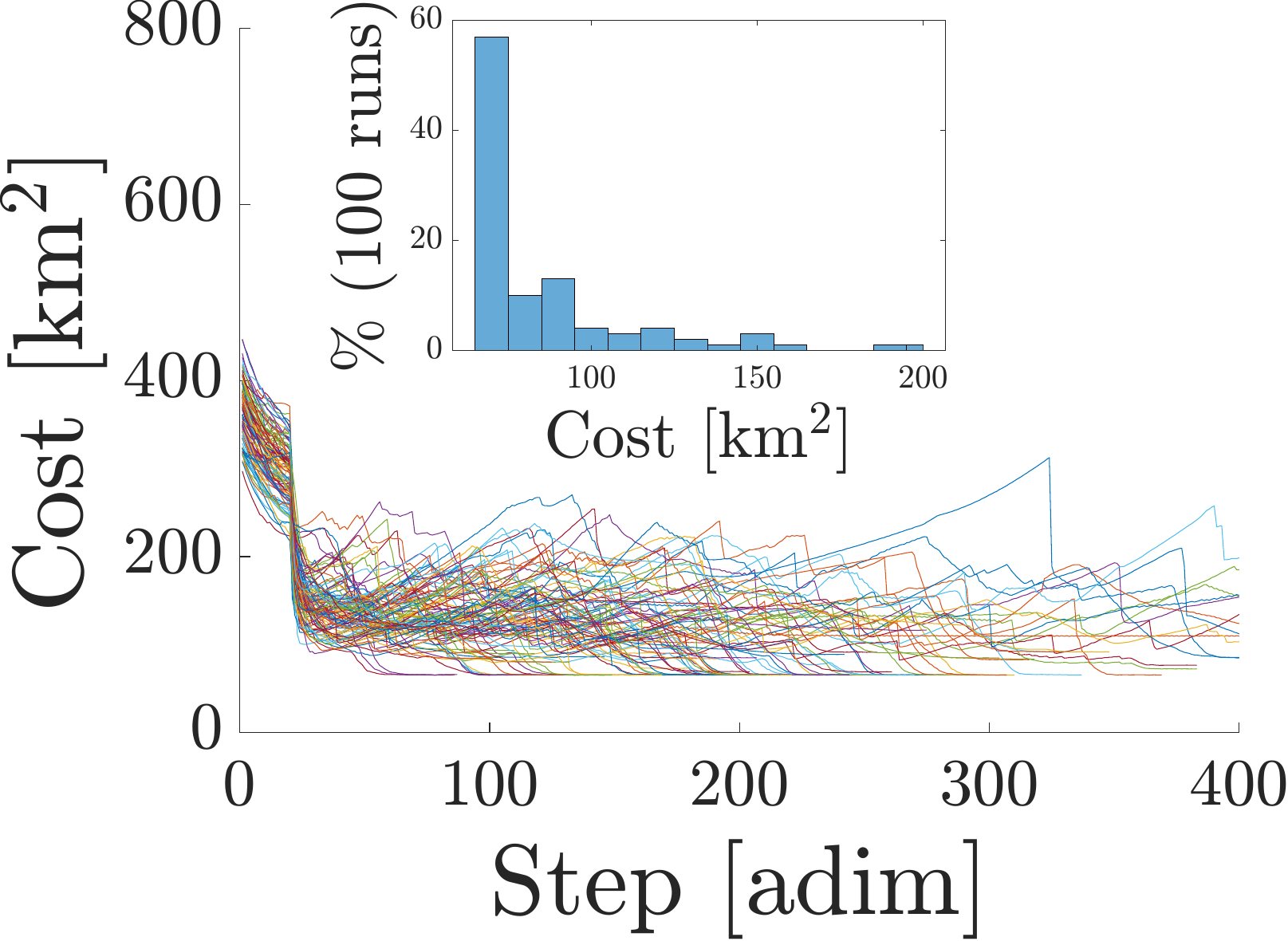}\put(90,70){b)}\end{overpic}

\caption{Cost trend of the $100$ simulations of  Fig.~\ref{fig:MSTnoobs} a with no obstacles and Fig.~\ref{fig:MSTobs40D}b with obstacles using  Algs.~\ref{alg:MSTPlusLeaf}--\ref{alg:advOpt}. In the case with no obstacles all simulations converged to the optimal configuration of Fig.~\ref{fig:MSTnoobs}d, up to a rotation of \SI{72}{\degree} (the terminals are the vertices of a regular pentagon). In the case with obstacles 53\% of the instances obtained the optimal solution of Fig.~\ref{fig:MSTobs40D}a, whereas the rest either converged to a sub-optimal configuration (21\%) similar to those of Fig.~\ref{fig:MSTobs40D}(b--d), or did not converge at all (26\%).
\label{fig:CostTrends}}
\vspace{-5mm}
\end{figure}  

\subsection[Solving for m terminals, n relays, and phi obstacles]{Solving for $m$ terminals, $n$ relays, and $\phi$  obstacles\label{sec:withobs}}
By combining the heuristics of Algs.~\ref{alg:MSTPlusLeaf}--\ref{alg:advOpt} with the no-overlap constraint Eq.~(\ref{eqn:TransmissionCostForNetwork}) detailed in Alg.~\ref{alg:nooverlap}, the relays can be steered away from a situation where they overlap obstacles. 

We refer to \cite{Bernardini2024CASE} for a detailed discussion of the results of the simulations in the case of obstacles. Here we only recall the main outcome of the discussion: the nature of the local minima generated by the obstacles makes Algs.~\ref{alg:MSTPlusLeaf}--\ref{alg:nooverlap} inefficient if paired with random initial conditions. This can be seen comparing the cost distributions of Fig.~\ref{fig:CostTrends}a-b, where the costs relative to the case with no obstacle are narrowly peaked around the optimal value, while in the case with obstacles the spread is much more significant. Some of the solutions obtained with the method are shown in Fig.~\ref{fig:MSTobs40D}. 
\begin{figure}[h]
\centering
\begin{overpic}[width=.24\columnwidth]{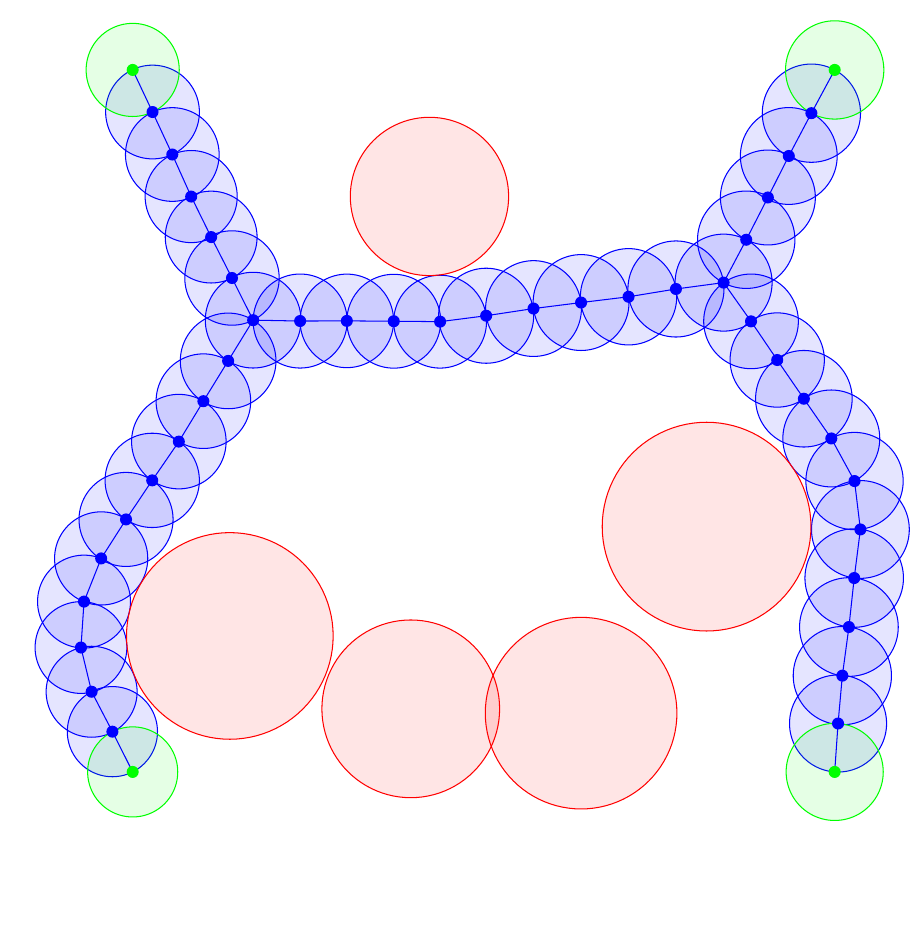} \put(45,40){a)}
\put(15,105){\scriptsize Cost 65.31km$^2$}
\end{overpic}\begin{overpic}[width=.24\columnwidth]{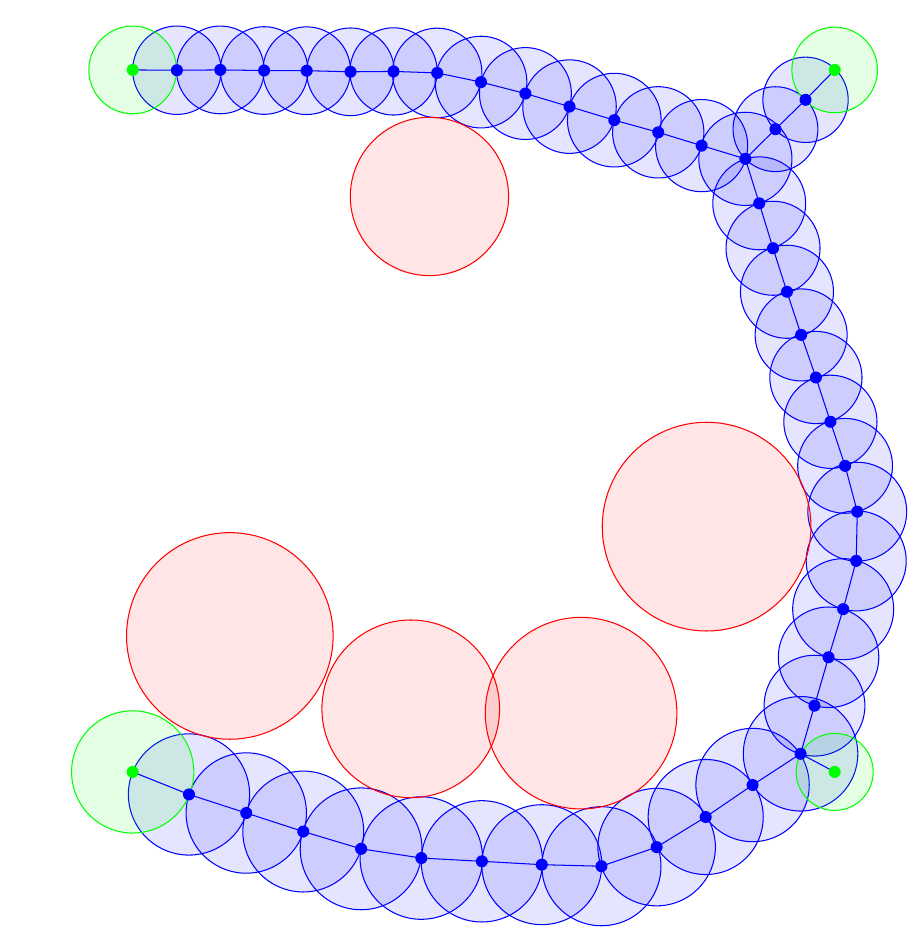} \put(45,40){b)}\put(10,105){\scriptsize Cost \SI{73.20}{\kilo\meter\squared}}
\end{overpic}\begin{overpic}[width=.24\columnwidth]{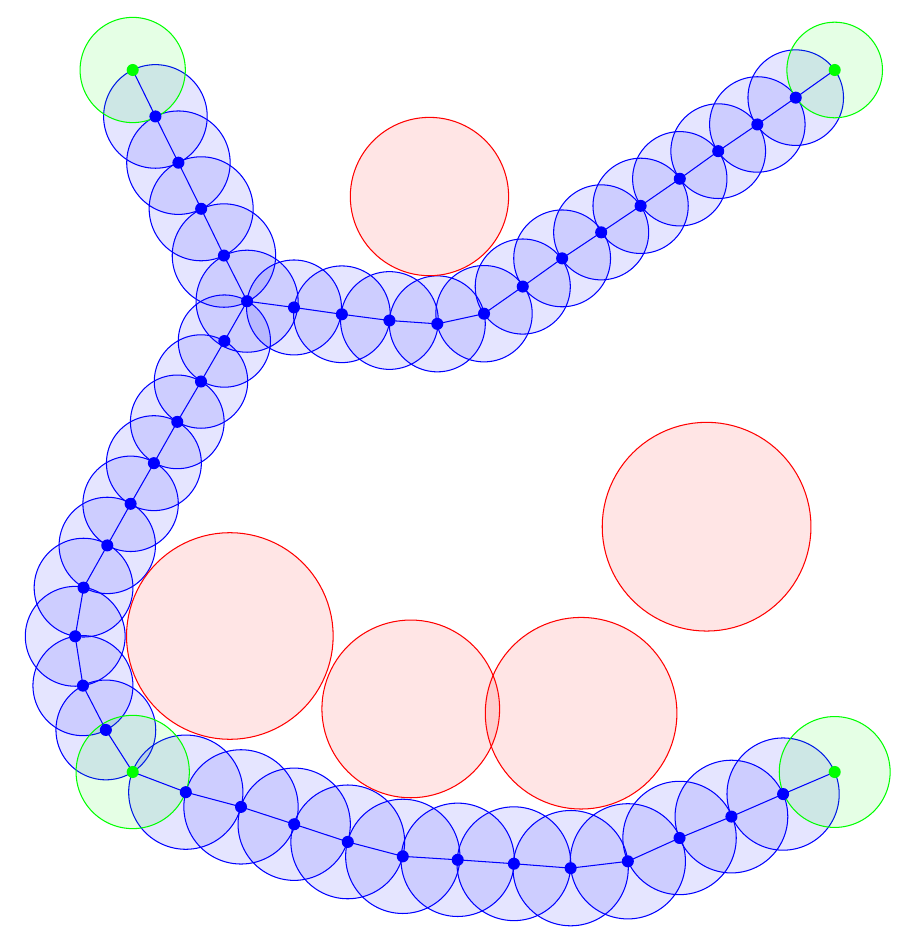} \put(45,40){c)}\put(15,105){\scriptsize Cost 76.61km$^2$}
\end{overpic}\begin{overpic}[width=.24\columnwidth]{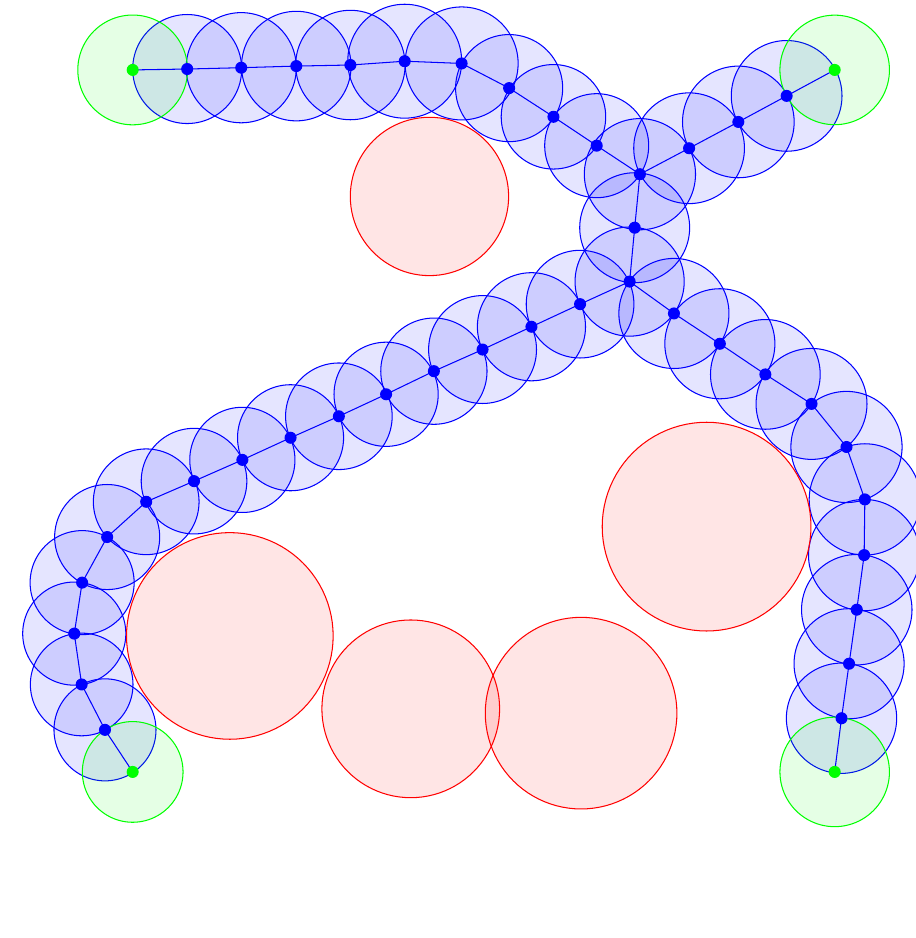} \put(45,40){d)}\put(15,105){\scriptsize Cost 84.69km$^2$}
\end{overpic}
\caption{Converged solutions of four topologies for the same problem set with $m=4$ terminals, $\phi = 5$ obstacles, and $n=40$ nodes. Although the solution with the least cost (a) resembles a full Steiner topology, this is not a sufficient condition, as shown in (d).
\label{fig:MSTobs40D}}
\end{figure}

Algorithms \ref{alg:MSTPlusLeaf}--\ref{alg:nooverlap}  are capable of dissolving loops, regardless of the complexity of the initial condition provided. An example with $\phi = 4$ and $m=5$ is provided in Fig.~\ref{fig:2Tex}: even though the initial placement of the relays (Fig.~\ref{fig:2Tex}a) contains loops, the network evolves towards a simpler structure (Fig.~\ref{fig:2Tex}b-d). 




\section{Pre-scan algorithm}\label{sec:PreScan}
The previous section started with random initial conditions, and may miss the optimal solution. 
This section presents a deterministic method to generate network homotopies, and a heuristics to discard homotopies with a low chance of success.   

The algorithm is described in Alg.~\ref{alg:pre-scan}: let $S_0$ be the optimal solution of the network in absence of obstacles. If obstacles are present, it is necessary to check if $S_0$ is compatible with their distribution. As in general this will not be the case, the resto of the algorithm can be subdivided in three actions: homotopy generation, convergence likelihood estimation, and evolution of the qualified initial conditions. The evolution was presented in Secs.~\ref{sec:mtermnoobs}--\ref{sec:withobs}. Homotopy generation is described in Sec.~\ref{sec:homgen} and convergence likelihood estimation in Sec.~\ref{sec:convlike}.

\begin{figure}[tb]
\vspace{-4mm}
\begin{algorithm}[H]
\small
\caption{\small\sc Pre-scan algorithm}\label{alg:pre-scan}
\begin{algorithmic}[1] 
\State Compute solution $S_0$ assuming no obstacles
\If{$S_0$ is compatible with obstacles}
    \State $S_0$ is the solution
    \Else
    \State Solve STPG and obtain $S$ 
    \State $\mathbf{S}_{\textrm{pre}}\leftarrow$ \textsc{HomotopyGeneration}($\{S\}$)\algorithmiccomment{Alg.~\ref{alg:homgen}} 
    \For{$S$ in $\mathbf{S}_{\textrm{pre}}$} 
        \If{$CL(S)>10\%$}\algorithmiccomment{See Sec.~\ref{sec:convlike}, \eqref{eq:CL}}
            \State Initialize Algs.~\ref{alg:MSTPlusLeaf}--\ref{alg:nooverlap}
        \Else
            \State Discard $S$
        \EndIf
        
    \EndFor
\EndIf
\end{algorithmic}
\end{algorithm}
\vspace{-6mm}
\end{figure}

\begin{figure}[h]
\centering
\begin{overpic}[width=0.24\columnwidth]{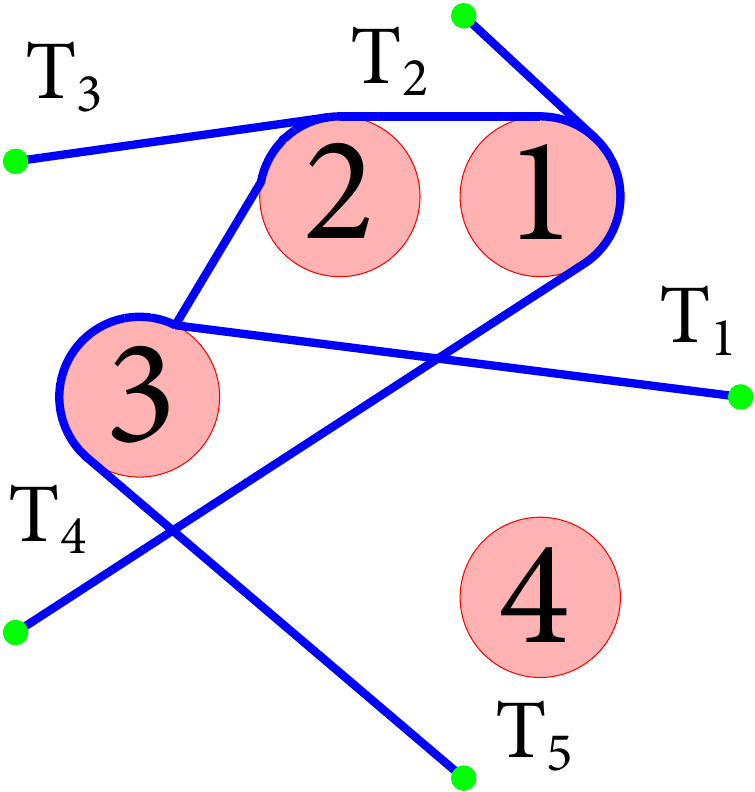}
\put(13,10){{\small\textcolor{black}{a)}}}
\end{overpic}\begin{overpic}[width=0.24\columnwidth]{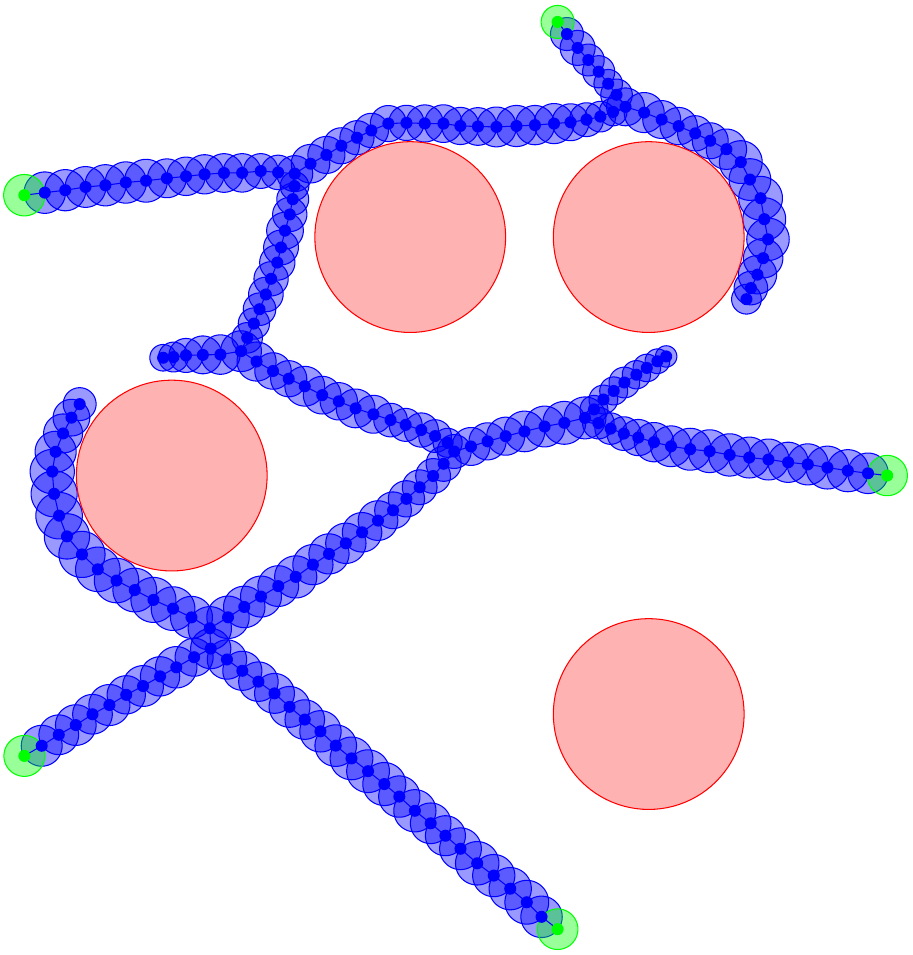}
\put(13,10){{\small\textcolor{black}{b)}}}
\end{overpic}\begin{overpic}[width=0.24\columnwidth]{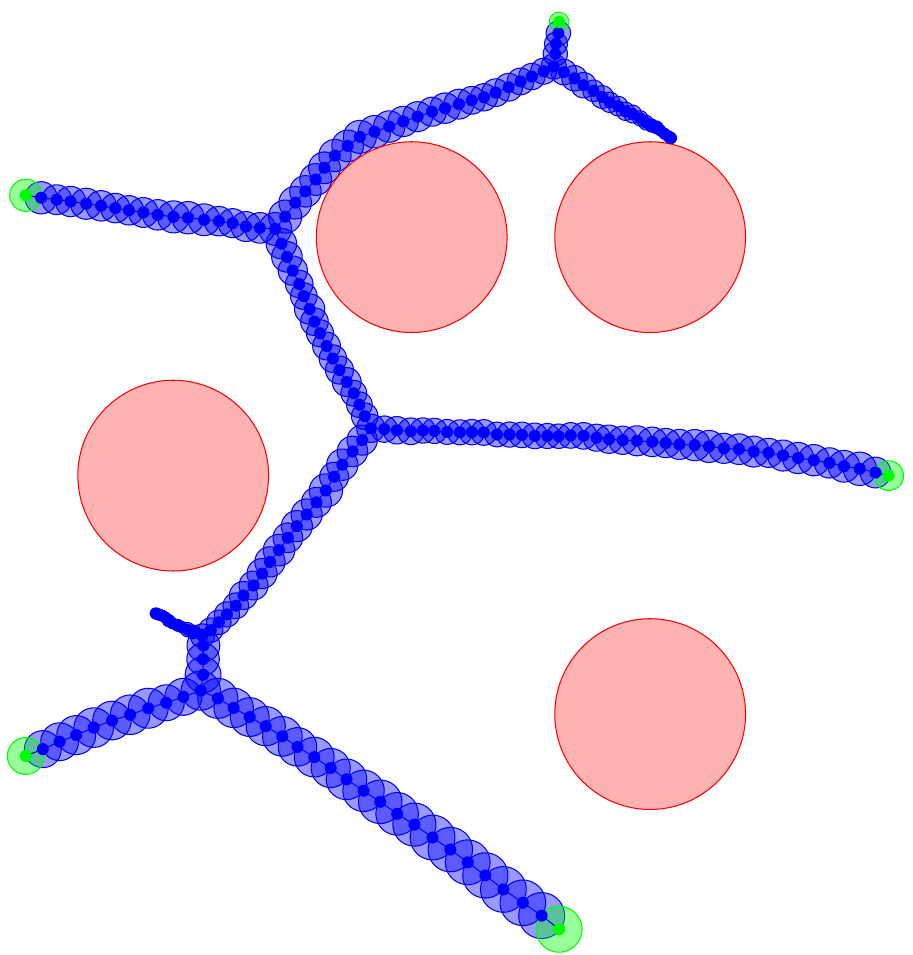}
\put(13,10){{\small\textcolor{black}{c)}}}
\end{overpic}\begin{overpic}[width=0.24\columnwidth]{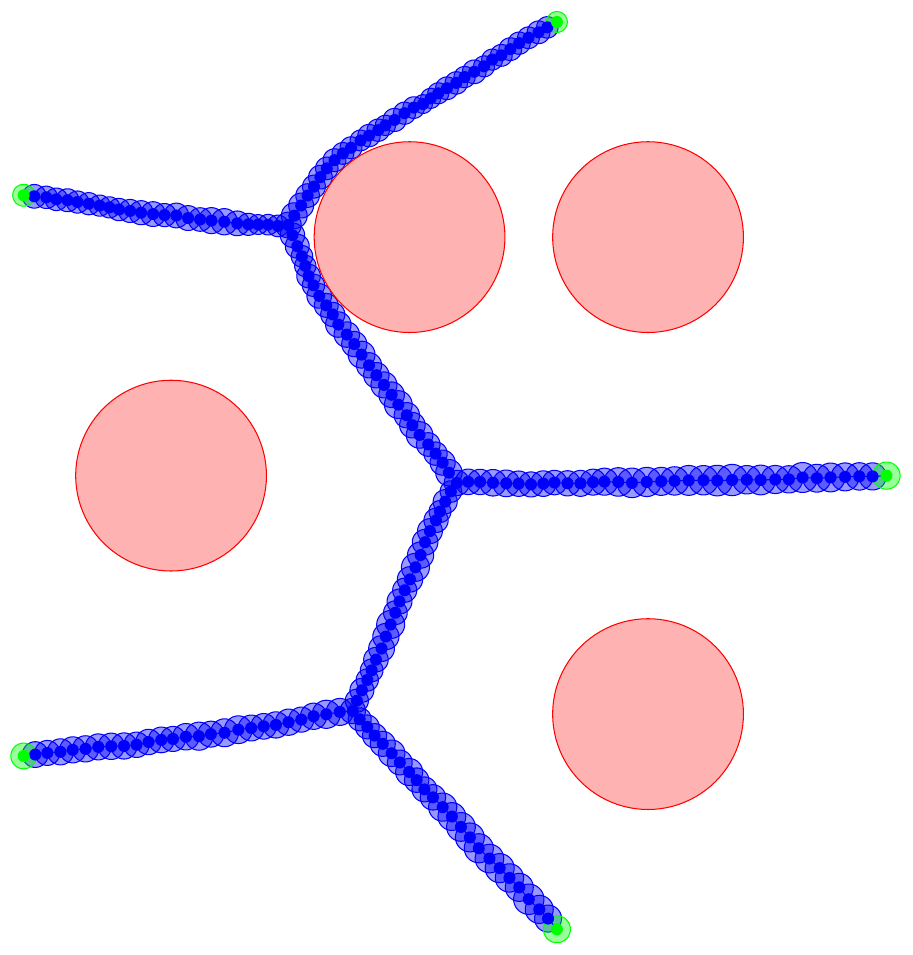}
\put(13,10){{\small\textcolor{black}{d)}}}
\end{overpic}
\vspace{-5mm}
\caption{Evolution from a homotopy class involving loops. The homotopy-generation procedure generates a 1-dimensional tree (a). Relays are placed along the tree, but their recombination via Algs.~\ref{alg:MSTPlusLeaf}--\ref{alg:nooverlap} breaks down the initial structure (b-c), until a simpler homotopy is obtained (d).
\label{fig:2Tex}}
\vspace{-5mm}
\end{figure}

\subsection{Homotopy generation algorithm (\textsc{HomGen})\label{sec:homgen}}
The homotopy generation algorithm (\textsc{HomGen}) builds upon the graph $G$ defined in Sec.~\ref{sec:bitangents}: this graph is augmented to a larger graph $\tilde G$ to include  also all the edges connecting pairs of nodes in line of sight\footnote{This has been shown in \cite{bernardini2024multi} to enable cheaper solutions to be found by the solver.}. The \textbf{Steiner tree problem for graphs}\cite{dreyfus1971steiner} (STPG) is relevant to our purposes: the solution of such problem is a 1-dimensional tree with shortest length on $\tilde G$ that connects all terminals. We denote with $S$ the homotopy class of this solution, and with $\mathbf{S}_{\textrm{pre}}$ the set of all homotopies. 
To solve the STPG we implemented the SCIP-Jack~\cite{RehfeldtKoch2023} solver into custom {\sc Matlab} routines.

Other homotopies can be generated through alterations of the underlying graph $\tilde G$. A simplified version of the method shown in \cite{bernardini2024multi} can be employed, focusing on terminals only. The procedure is shown graphically in Fig.~\ref{fig:Hmaxexample}: starting from the full graph $\tilde G$ (\ref{fig:Hmaxexample}a), the initial solution $S$ is found (\ref{fig:Hmaxexample}b). A terminal is selected  and one edge connecting it to the rest of the tree is removed, obtaining a new graph $\tilde G'$ (\ref{fig:Hmaxexample}c). 
Recomputing the STPG (\ref{fig:Hmaxexample}e) obtains a \emph{1\textsuperscript{st} generation} (1G) solution $S'_1$. $S'_1$ is classified using the method of Sec.~\ref{sec:homotopyclassif}: new homotopies are saved to $\mathbf{S}_{\textrm{pre}}$ as a pair $\{S'_1,\tilde G'\}$, otherwise they are discarded.

The procedure is repeated for other terminals (\ref{fig:Hmaxexample}d) obtaining corresponding solutions $S_1''$ (\ref{fig:Hmaxexample}f). When all terminals of $S$ are visited, we restart from the first available $\{S_1',\tilde G'\}$ saved in $\mathbf{S}_{\textrm{pre}}$ (\ref{fig:Hmaxexample}g), obtaining \emph{2\textsuperscript{nd} generation} (2G) solutions $S_2$ (\ref{fig:Hmaxexample}h). The procedure iterates until all solutions are scanned, or until a predefined generation limit is reached.

\begin{figure}[h]
\centering
\begin{overpic}[width=\columnwidth]{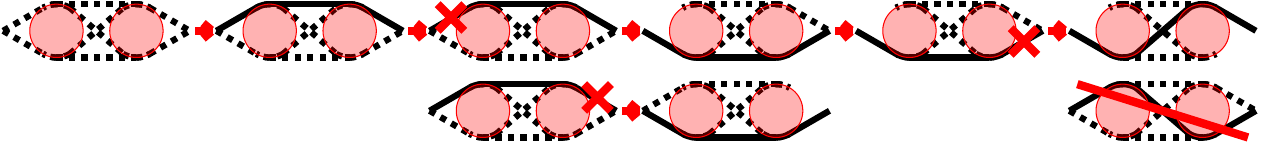}
\put(0,4.5){{\small\textcolor{black}{(a)}}}
\put(16,4.5){{\small\textcolor{black}{(b)}}}
\put(19,12){{\small\textcolor{black}{(initial)}}}
\put(32,4.5){{\small\textcolor{black}{(c)}}}
\put(32,-1){{\small\textcolor{black}{(d)}}}
\put(49,4.5){{\small\textcolor{black}{(e)}}}
\put(49,-1){{\small\textcolor{black}{(f)}}}
\put(55,12){{\small\textcolor{black}{(1G)}}}
\put(66,4.5){{\small\textcolor{black}{(g)}}}
\put(81,4.5){{\small\textcolor{black}{(h)}}}
\put(81,-1){{\small\textcolor{black}{(i)}}}
\put(89,12){{\small\textcolor{black}{(2G)}}}
\end{overpic}
\caption{Algorithm \ref{alg:homgen}: full graph $\tilde G$ (a). First STPG solution  (b). Edge removal from left (c) and right (d) terminal. 1\textsuperscript{st} generation solution found and saved (e), identical solution found and discarded (f). Proceed to obtain a 2\textsuperscript{nd} generation solution (h). Discarded solutions may prevent homotopies to be found (i). 
\label{fig:Hmaxexample}}
\end{figure}
This algorithm is not expected to produce all the $N_{\textrm{hom}}^{\textrm{max}}$ homotopies of \eqref{eq:Nmaxgen}, but only a limited subset: the removal of a terminal's edge can only affect locally the direction of a tree branch, but cannot control its curvature among pairs of nodes that are not terminals. If a dense distribution of obstacles is inside the convex hull of the terminals, the algorithm will unravel the homotopy among the outermost shell of obstacles, but will not partition   inner shells (shells not connected directly to the terminals). This limitation is intrinsic to the model and cannot be avoided.


Another limitation is that the current version of the algorithm discards redundant homotopies, regardless of the underlying graph that generated them. For example Fig.~\ref{fig:Hmaxexample}e and Fig.~\ref{fig:Hmaxexample}f are identical, but only Fig.~\ref{fig:Hmaxexample}e is saved, and can initialize another iteration of the algorithm with itself and its own graph as initial conditions. This causes only the homotopy of Fig.~\ref{fig:Hmaxexample}h to be found, but not its symmetrical one Fig.~\ref{fig:Hmaxexample}i. This limitation is only temporary and will be corrected in an imminent revision of the algorithm.
 Pseudocode for \textsc{HomGen} is listed in Alg.~\ref{alg:homgen}.
\begin{figure}[tb]
\vspace{-4mm}
\begin{algorithm}[H]
\small
\caption{\small\sc \textsc{HomGen} (\textbf{S})}\label{alg:homgen}
\begin{algorithmic}[1] 
\For{$S$ in \textbf{S}}
\For{Terminal $T$ \textbf{in} $S$}
\State Remove edge \& run STPG solver on reduced graph
\If{$S'$ exists}
    \If{$S'$ is new homotopy class}
        \State \textbf{S} $ \leftarrow \{\textbf{S},S'\}$
        \Else
        \State Discard $S'$
    \EndIf
\EndIf
\EndFor
\EndFor
\end{algorithmic}
\end{algorithm}
\vspace{-6mm}
\end{figure}

\vspace{-6mm}
\subsection{Convergence likelihood estimation\label{sec:convlike}}
When an homotopy from $\mathbf{S}_{\textrm{pre}}$ is provided as an initial condition and $n$ relays to Algs.~\ref{alg:MSTPlusLeaf}--\ref{alg:nooverlap}, the subsequent evolution will produce a 2D network that may or may not converge to the same homotopy. This depends on the number of available relays, which allows to define an average radius $R_{\textrm{avg}}\simeq L/n$ analogous to Eq.~\eqref{eq:avgrad}, where $L$ length of the STPG solution. 

If the solution passes between obstacle-obstacle or terminal-obstacle pairs whose distances
\begin{equation}
   \small D_{OO'} \equiv \norm{O-O'}-R_O-R_{O'}\,\,,\,\,  D_{TO}\equiv\norm{T-O}-R_O\,\,,
\end{equation}are smaller than $2R_{\textrm{avg}}$ ($D_{OO'}$) or $R_{\textrm{avg}}$ ($D_{TO}$), the solution is likely to either be suboptimal or not converge at all under Algs.~\ref{alg:MSTPlusLeaf}--\ref{alg:nooverlap}. Therefore we define CL as
\begin{equation}
CL\equiv    \min_{\{OO',TO\} \text{pairs}}\left\{ 1-\frac{2R_{\textrm{avg}}}{D_{OO'}},1-\frac{R_{\textrm{avg}}}{D_{TO}}\right\}\times 100\%
\label{eq:CL}
\end{equation} among all $OO'$ and $TO$ pairs intersected by a given network.
\vspace{-6mm}
\subsection{Additional simulations\label{sec:secondset}}
The pre-scan algorithm was tested with two  additional campaigns of simulations, with the same configuration of terminals as Sec.~\ref{sec:mtermnoobs} but with $\phi=4$ obstacles. The first campaign concluded naturally after all possible generations were scanned, while the second was stopped after the first.

As with $\mathbf{S}_{\textrm{pre}}$, we denote with $\mathbf{S_{\textrm{fin}}}$ the set of final homotopies of the 2D networks obtained evolving $\mathbf{S}_{\textrm{pre}}$ under Algs.~\ref{alg:MSTPlusLeaf}--\ref{alg:nooverlap}:
\begin{itemize}
    \item The first simulation produced 9560 trees, further classified in 141 independent classes $\mathbf{S_{pre}^{all}}$ and 7 generations. Several of these involved loops, so they will not survive to the final stage.
        
    \item The second simulation stopped at the first generation, producing 1650 trees distributed in 35 classes $\mathbf{S_{pre}^{1gen}}$.
\end{itemize}
\begin{table}
    \centering
    \caption{Simulation Results using Algs.~\ref{alg:pre-scan}--\ref{alg:homgen} and 1 or all generations.}
    \begin{tabular}{c|c|c|c|c|c|c|c|}
         $n$ & Gen. & $|\mathbf{S_{\textrm{pre}}}|$ & $CL>10\%$ & $|\mathbf{S_{\textrm{fin}}}|$ &  Min. [km$^2$]\\
          \hline\hline
        180&  $\begin{array}{c}
             \text{All} \\
              1
        \end{array}$ & $\begin{array}{c}
             141 \\
             35 
        \end{array}$ & $\begin{array}{c}
             141 \\
             35 
        \end{array}$ & $\begin{array}{c}
             59 \\
             29 
        \end{array}$ &$\begin{array}{c}
             12.3 \\
             12.3 
        \end{array}$ \\
        \hline
              60&  $\begin{array}{c}
             \text{All} \\
              1
        \end{array}$ & $\begin{array}{c}
             141 \\
             35 
        \end{array}$ & $\begin{array}{c}
             52 \\
             17 
        \end{array}$ & $\begin{array}{c}
             28 \\
             16 
        \end{array}$ & $\begin{array}{c}
             34.9 \\
             34.9 
        \end{array}$\\
        \hline
              30&  $\begin{array}{c}
             \text{All} \\
              1
        \end{array}$ & $\begin{array}{c}
             141 \\
             35 
        \end{array}$ & $\begin{array}{c}
             17 \\
             10 
        \end{array}$ & $\begin{array}{c}
             10 \\
              9
        \end{array}$ & $\begin{array}{c}
             70.8 \\
             70.8 
        \end{array}$\\
        \hline\hline
    \end{tabular}
    
    \label{tab:results}
    \vspace{-3mm}
\end{table}
$\mathbf{S_{pre}^{all}}$ and $\mathbf{S_{pre}^{1gen}}$ were passed as initial conditions to Algs.~\ref{alg:MSTPlusLeaf}--\ref{alg:nooverlap}, and evolved with $n=180$, $n=60$ and $n=30$ relays. The $CL$ threshold for acceptance of an initial condition  was to 10\%. The results are collected in Table \ref{tab:results}, and inspire some considerations. Over $N_{\textrm{hom}}^{\textrm{max}}=m^\phi=625$ possible homotopies only 59 were determined in the final stage.
This is mainly due to the discarded redundant homotopies mentioned in Sec.~\ref{sec:homgen}. 
 
\begin{figure}[h]
\centering
\begin{overpic}[width=0.6\columnwidth]{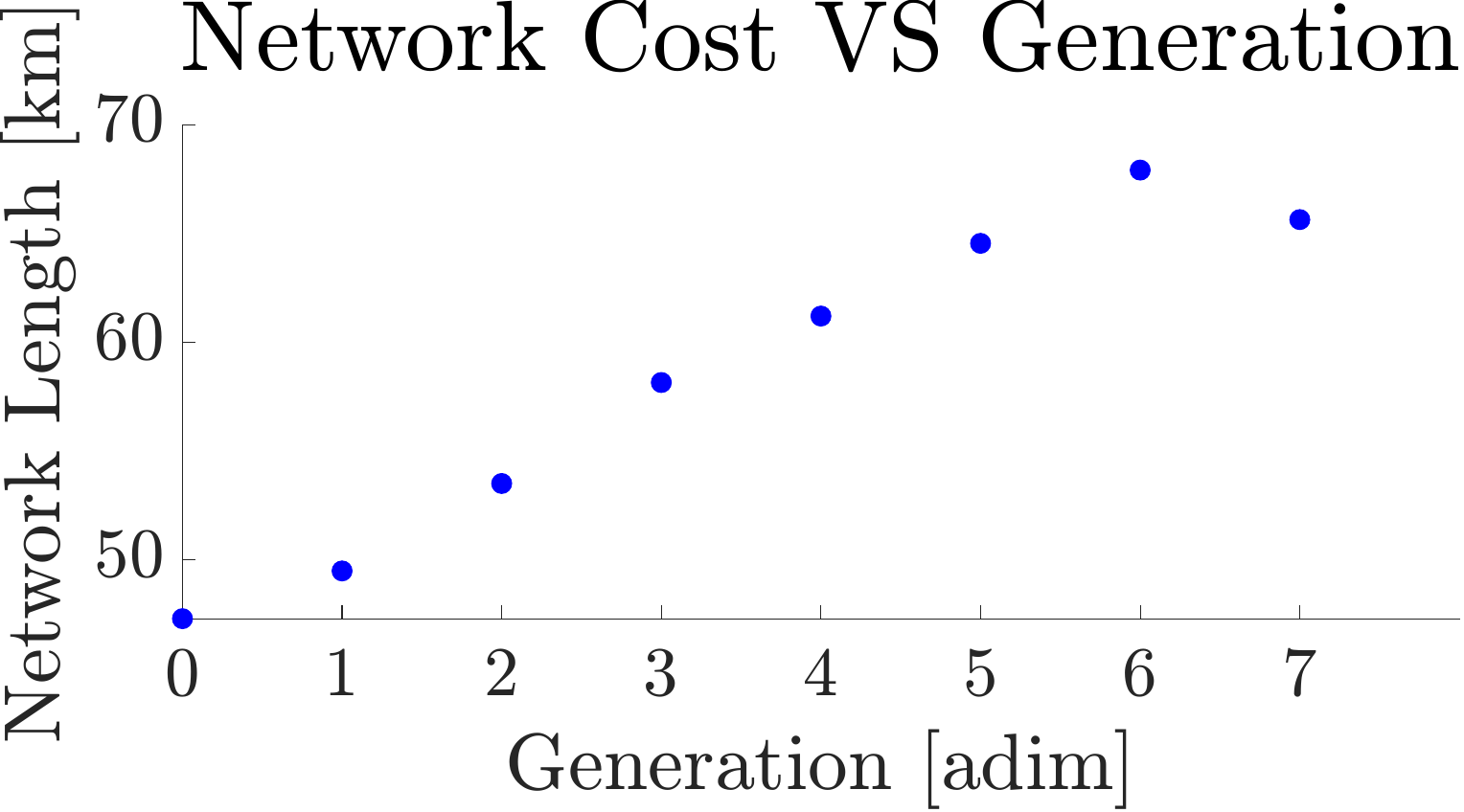}

\end{overpic}
\caption{Average length per generation of the solutions found by Alg.~\ref{alg:homgen}. Later generations appear to have larger lengths.
\label{fig:gencosts}}
\end{figure}However, discarded homotopies appear from the 2\textsuperscript{nd} generation on, and the optimal results in all cases belonged at most to the 1\textsuperscript{st} generation. The fact that on average, later generations have larger costs (as shown in Fig.~\ref{fig:gencosts}) makes the exclusion of the redundant homotopies less likely to have altered the result.
    
Fig.~\ref{fig:homcosts} shows the optimal homotopies for the three values of $n$ considered: as $n$ is inversely proportional to the average radius $R_{\textrm{avg}}$, larger radii ($n=60$ and $n=180$) allow the 2D network to extend through narrower obstacle pairs. In the present case, $n=60$ and $n=180$ allowed network to evolve to a solution very close to that of the case without obstacles (Fig.~\ref{fig:homcosts}a-b), while for $n=30$ this was not allowed and an alternate solution was chosen (Fig.~\ref{fig:homcosts}c).

\begin{figure}[tb]
\vspace{-2mm}
\centering
\begin{overpic}[width=0.32\columnwidth]{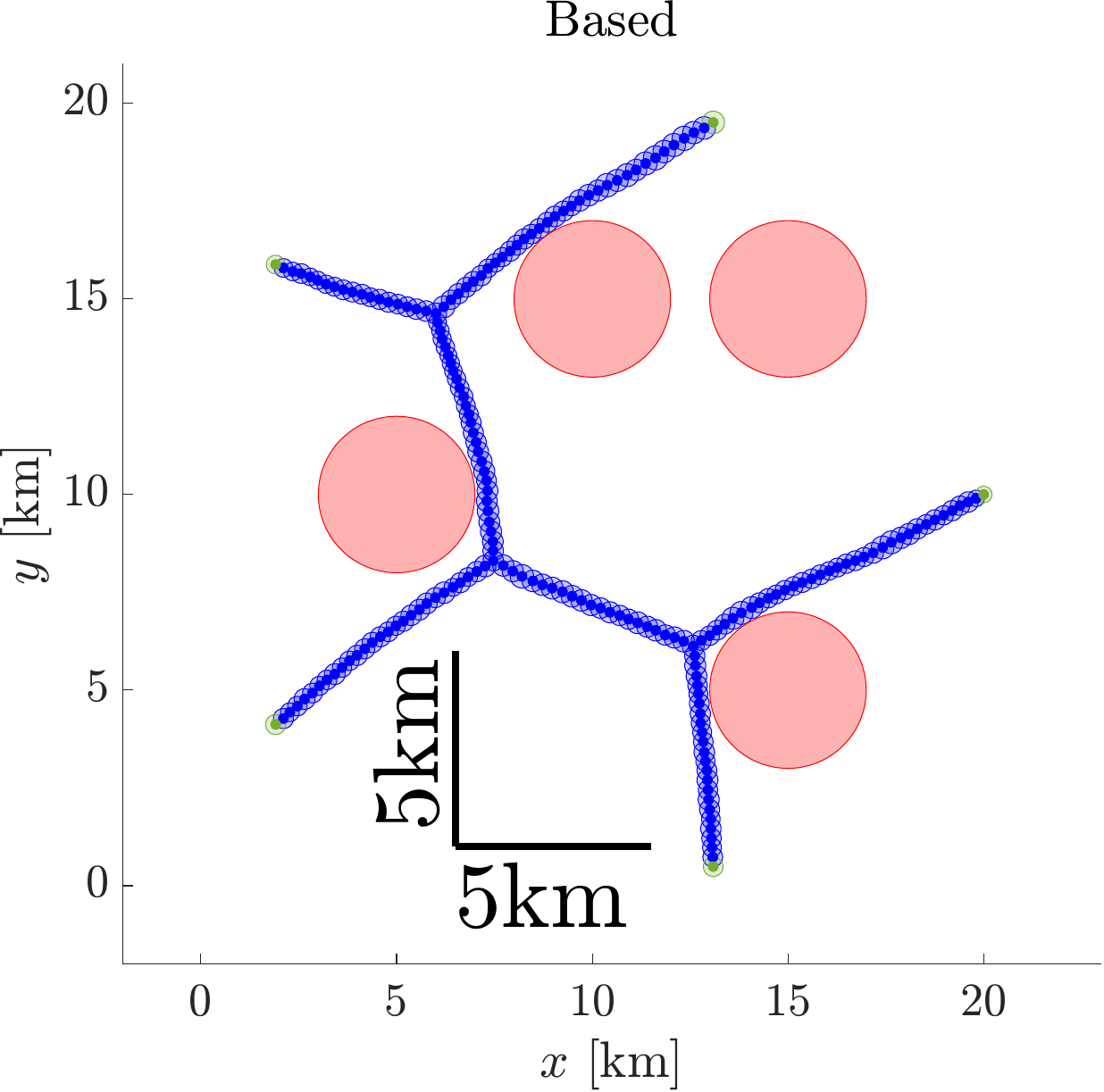}
\put(75,0){{\small\textcolor{black}{(a)}}}
\end{overpic}\begin{overpic}[width=0.32\columnwidth]{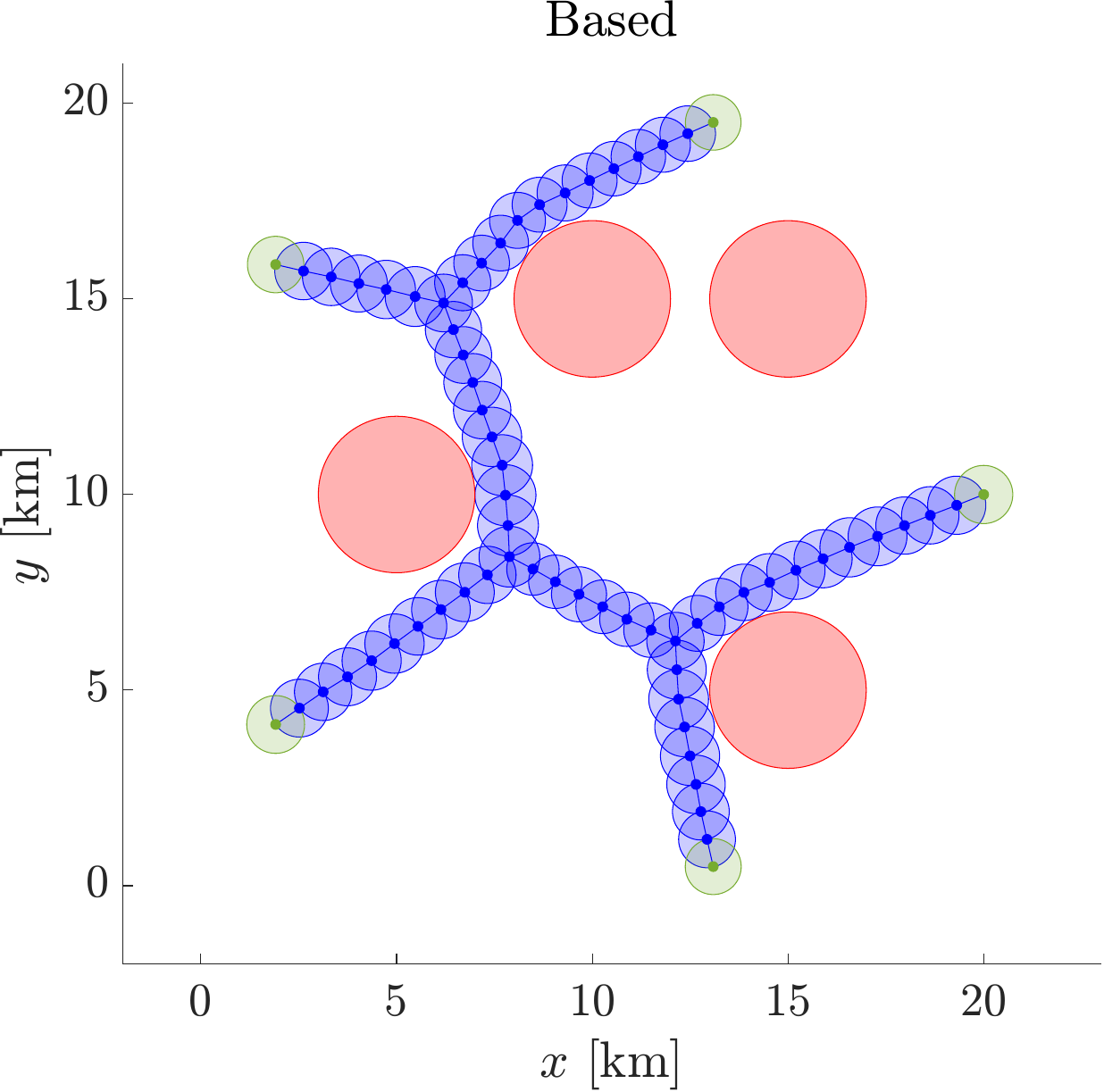}\put(75,0){{\small\textcolor{black}{(b)}}}
\end{overpic}\begin{overpic}[width=0.32\columnwidth]{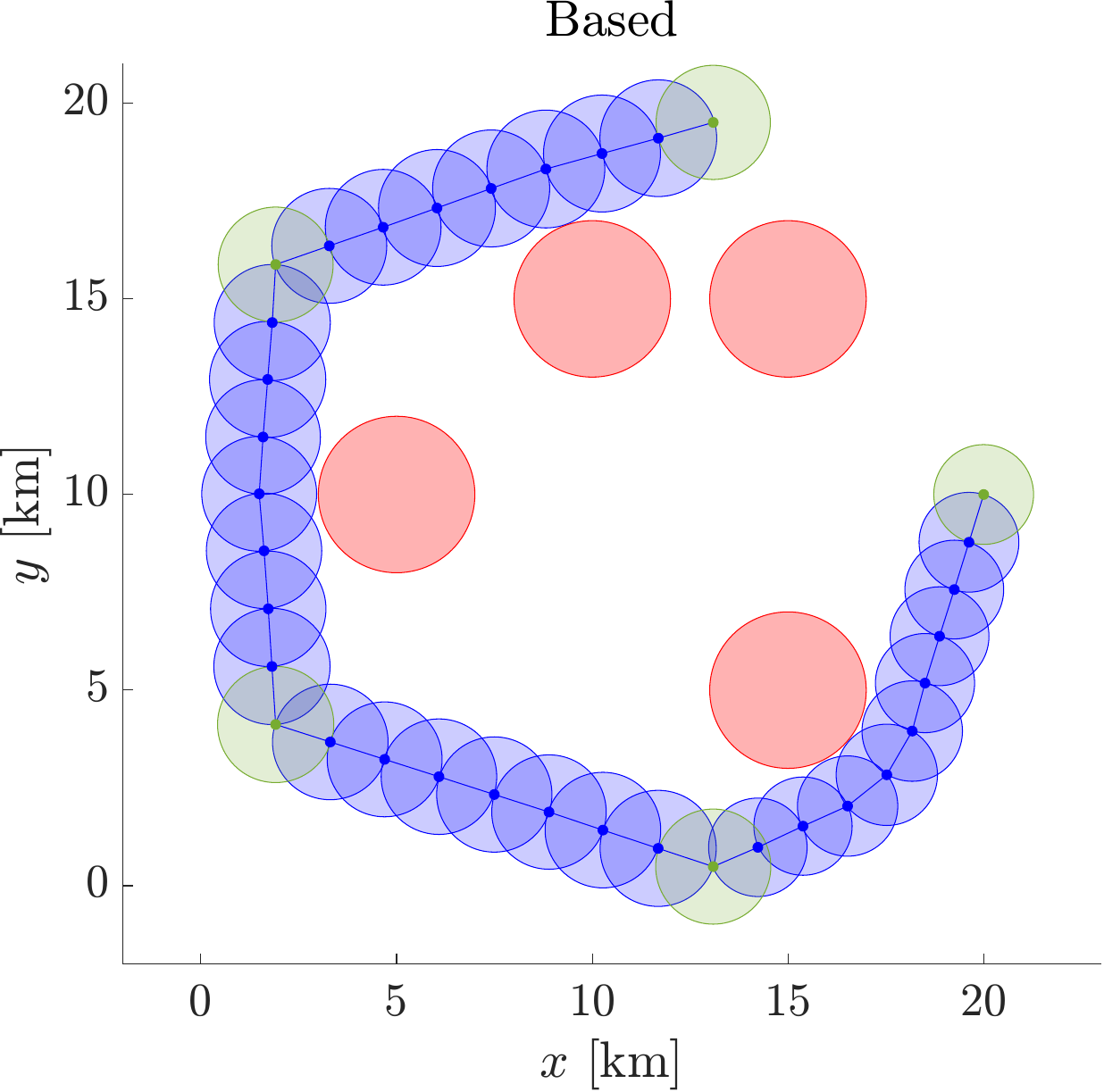}\put(75,0){{\small\textcolor{black}{(c)}}}
\end{overpic}
\caption{Optimal homotopies for $n=180$ (a), $n=60$ (b) and $n=30$ (c). 
\label{fig:homcosts}}
\vspace{-5mm}
\end{figure}
 

\vspace{-2mm}
\section{Conclusions and paths forward\label{sec:concl}}
We revised and extended the conclusions provided in \cite{Bernardini2024CASE} regarding a framework to solve the minimum range assignment problem amidst obstacles, modeled as circular regions inaccessible by terminals' and relays' transmission disks. We defined and implemented an automatic procedure to instantiate the pre-scan method of Alg.~\ref{alg:pre-scan}, and discussed its possible interpretation in terms of abstract distributions of distinct objects in distinct bins. The results obtained with the new procedure confirm the findings of \cite{Bernardini2024CASE}, for example that the homotopy of the solution of the 2D case depends on the number of available relays $n$, and may not coincide with the 1D solution, unless $n\rightarrow\infty$. 

The final number of homotopies determined by feeding the results of \textsc{HomGen} to Algs.~\ref{alg:MSTPlusLeaf}--\ref{alg:nooverlap} was $|\mathbf{S_{\textrm{fin}}}|=59$ (Fig.~\ref{fig:homotopylist}), at least one order of magnitude lower than $N_{\textrm{hom}}^{\textrm{max}}=625$. This is due to the intrinsic inability of \textsc{HomGen} of influencing the behavior of the network far from the terminals. Including the redundant homotopies could partially fix the problem for configurations where terminals come in contact with a high number of obstacles. However, 14 of the 15 expected Bell partitions of the obstacles' set has been found. The 15\textsuperscript{th} is complex enough to be outside the capabilities of \textsc{HomGen}, and a possible realization has been generated manually for completeness: however, evolution with Algs.~\ref{alg:MSTPlusLeaf}--\ref{alg:nooverlap} caused its decay to a different partition.

\textsc{HomGen}'s ratio of homotopy discovery per tree generated was $\sim 1-2\%$: to determine $N_{\textrm{hom}}^{\textrm{max}}=5^4=625$ homotopies it should generate at least 50-100 times as many trees. This is likely due to the complexity of the graph $\tilde G$: more than edge stemming from a terminal may lead to the same homotopy, but the algorithm still visits each of them. This suggests to direct efforts towards the study of more minimal graphs, that would still capture the essential features of an obstacle distribution, but with less edges and vertices. Further developments of these aspects are already under study and will be the topic of a future work.
Increasing the performances of \textsc{HomGen} would also enable its application to dynamic scenarios, where both obstacles and terminals are allowed to reconfigure.

Natural applications of these ideas are in distributed networks communicating via line of sight (1D) or coverage (2D) methods: for example, a network might be composed of unmanned aerial vehicles (UAVs), ground/underwater remotely operated vehicles (ROVs), sensors or satellites. 
\vspace{-4mm}

\bibliographystyle{IEEEtranS}
 \bibliography{IEEEabrv,biblio.bib}

\begin{figure*}
    \centering
    \begin{tabular}{m{0.08\textwidth}m{0.08\textwidth}m{0.08\textwidth}m{0.08\textwidth}m{0.08\textwidth}m{0.08\textwidth}m{0.08\textwidth}m{0.08\textwidth}m{0.08\textwidth}m{0.08\textwidth}}
        \boxed{\text{No obs.}} &\includegraphics[width=0.09\textwidth]{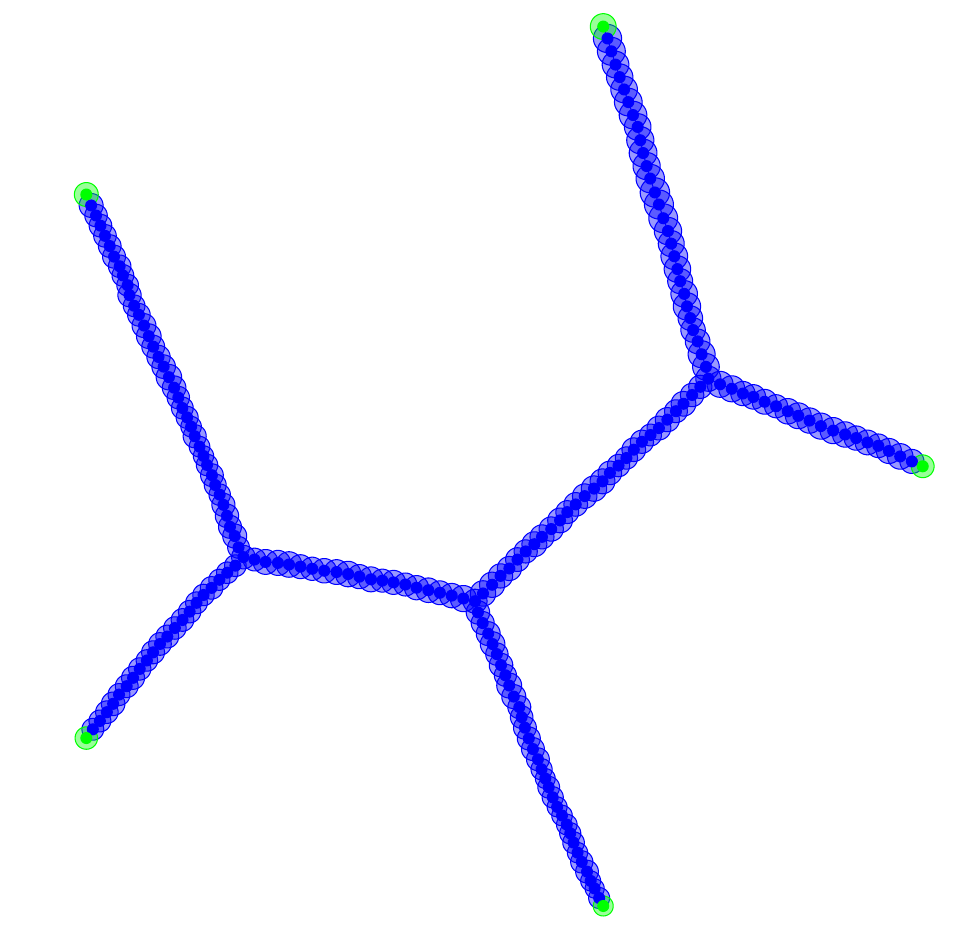}&\includegraphics[width=0.09\textwidth]{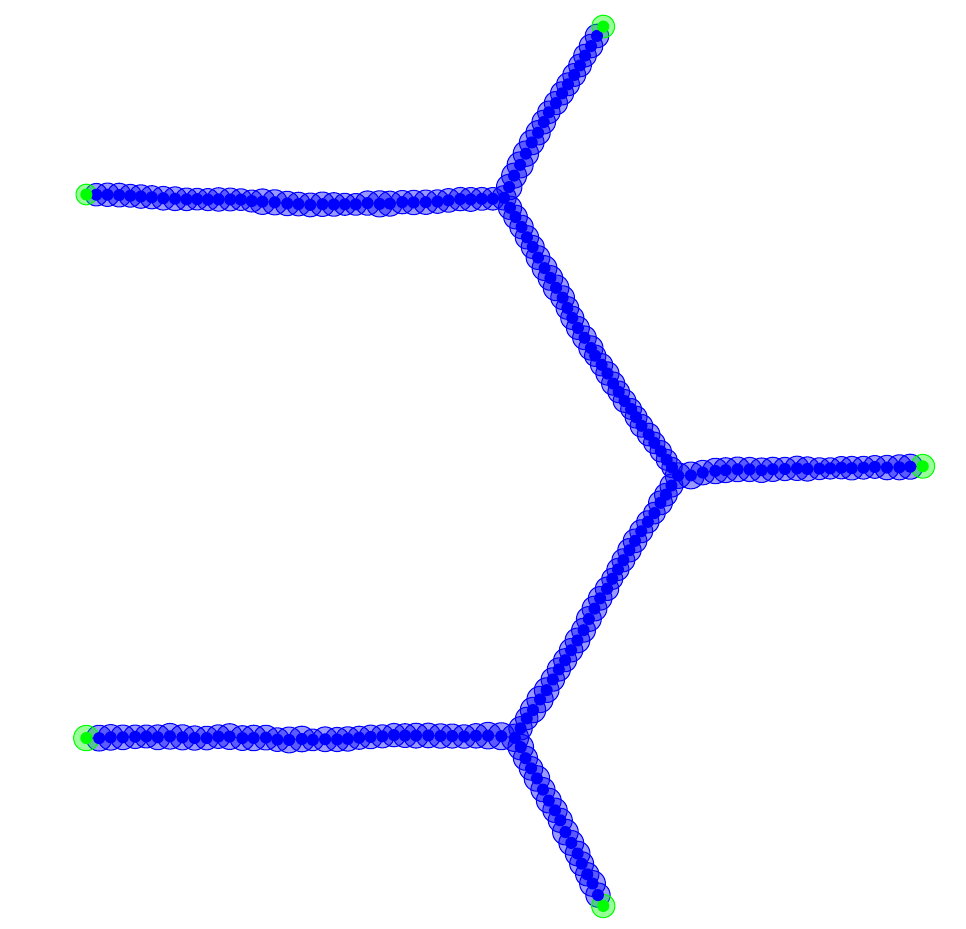}&\includegraphics[width=0.09\textwidth]{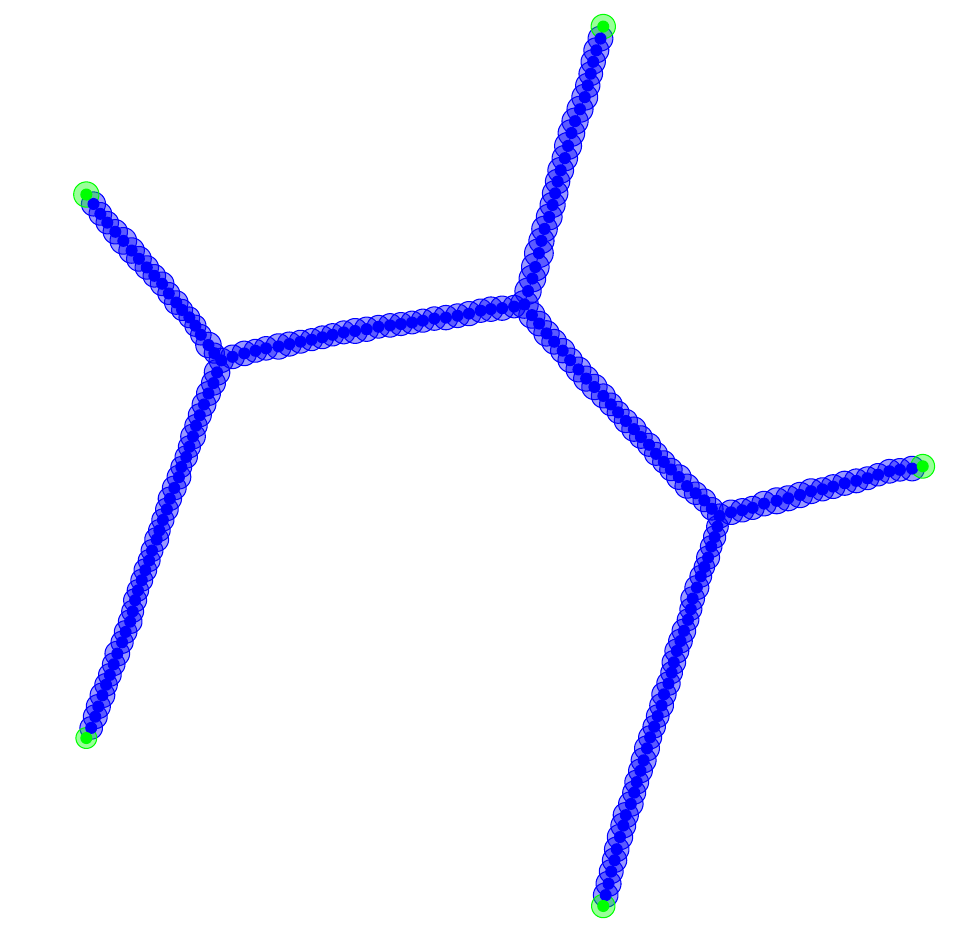}&\includegraphics[width=0.09\textwidth]{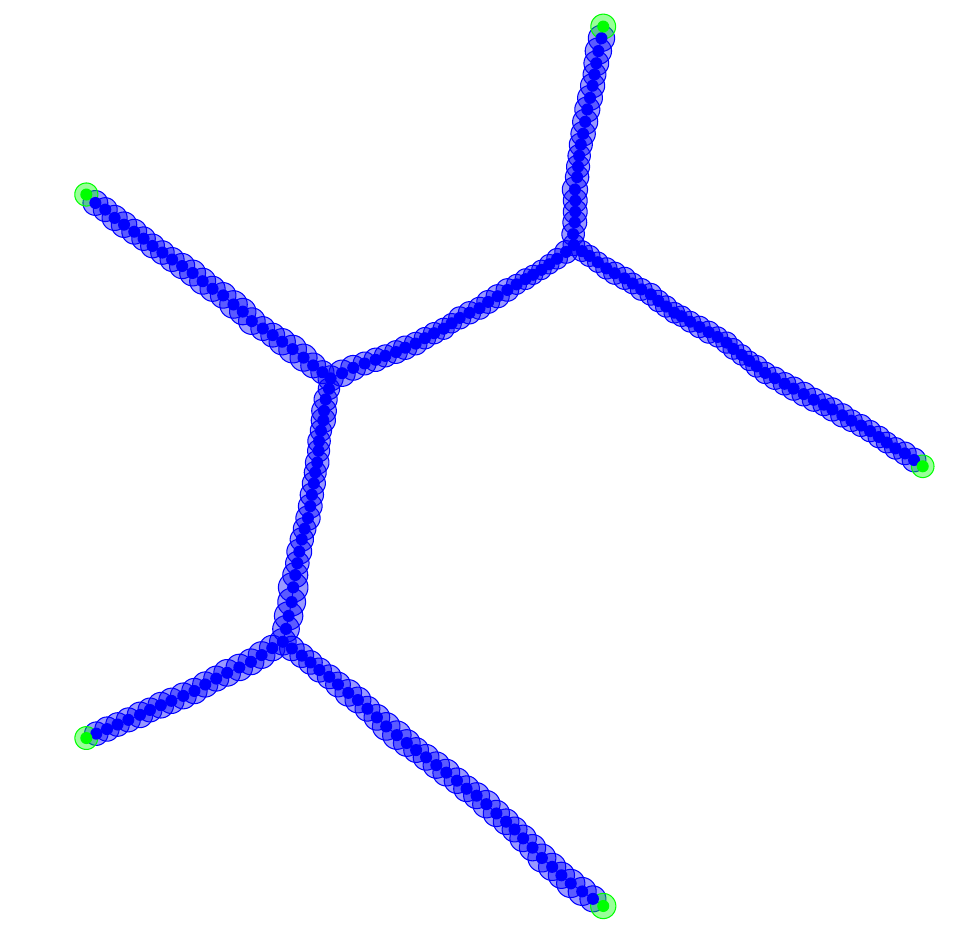}&\includegraphics[width=0.09\textwidth]{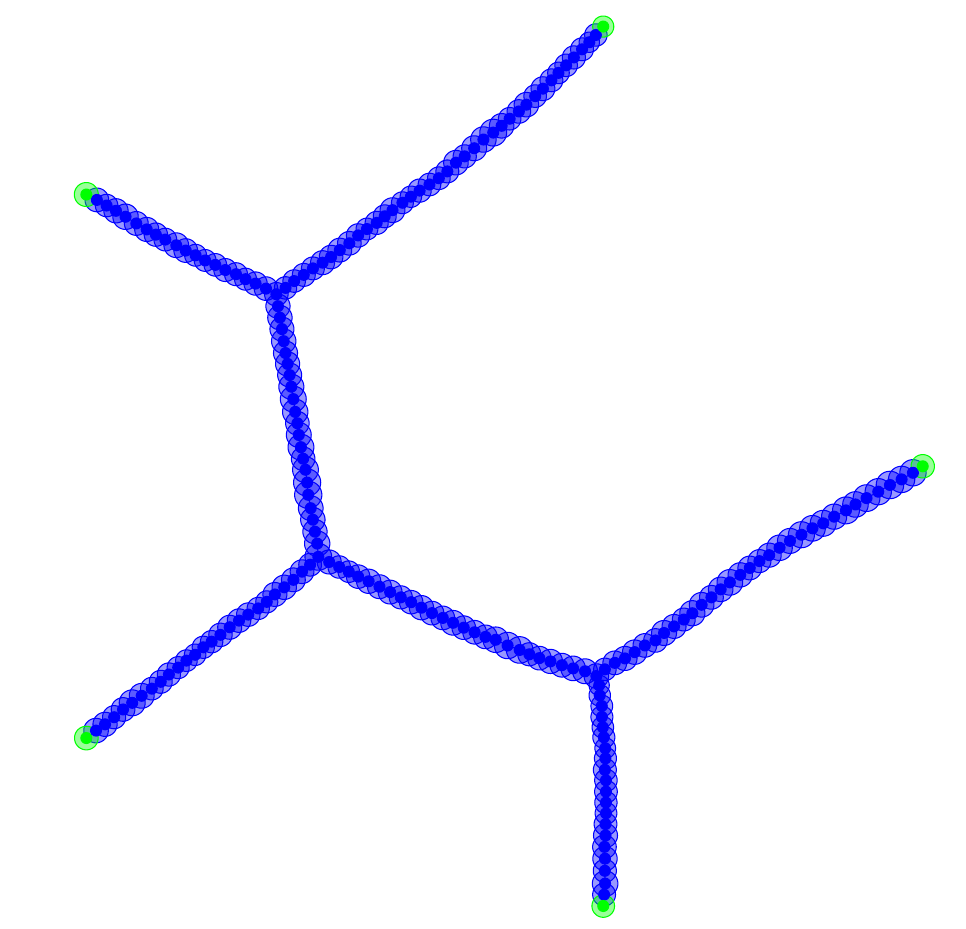}\\
        \boxed{$\{1234\}$}&\includegraphics[width=0.09\textwidth]{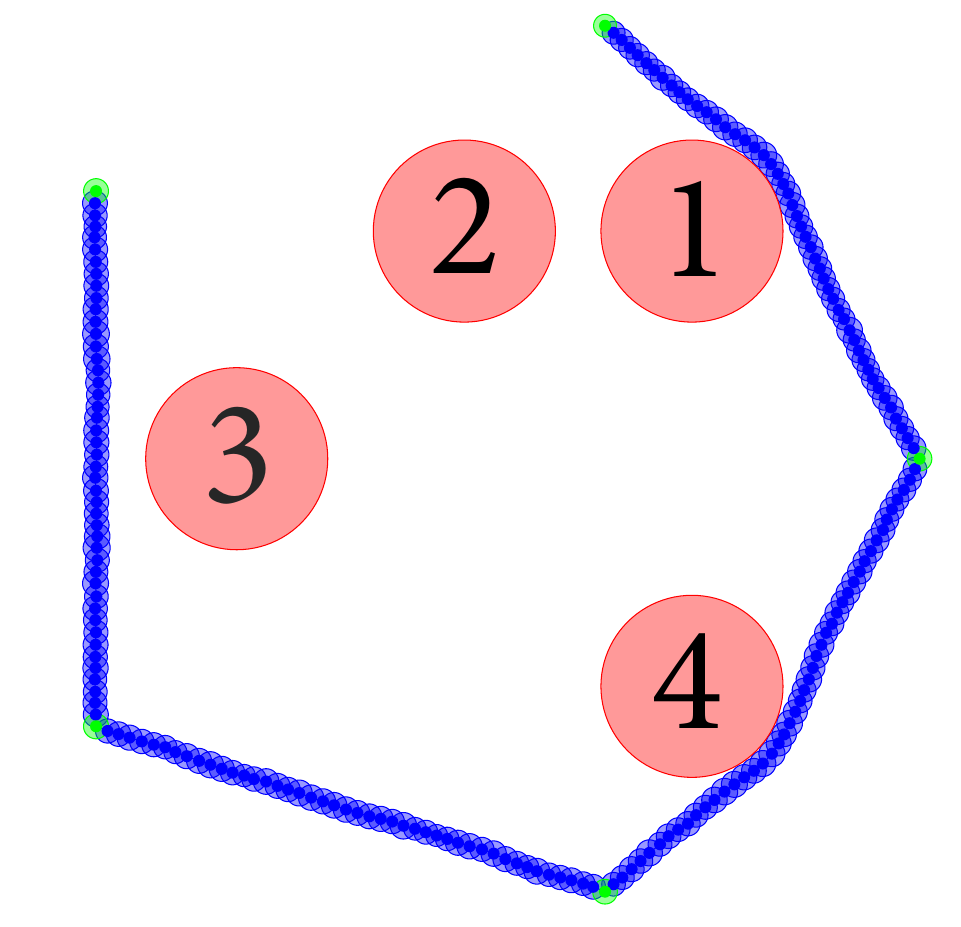}&\includegraphics[width=0.09\textwidth]{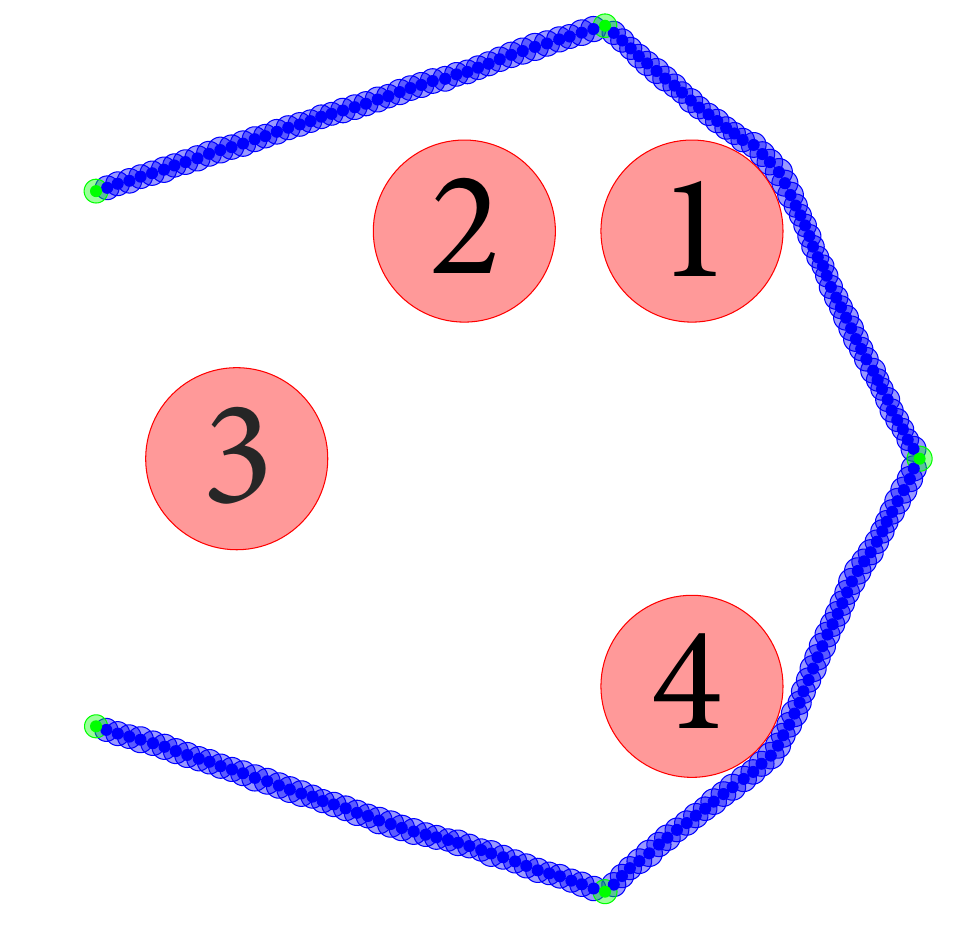}&\includegraphics[width=0.09\textwidth]{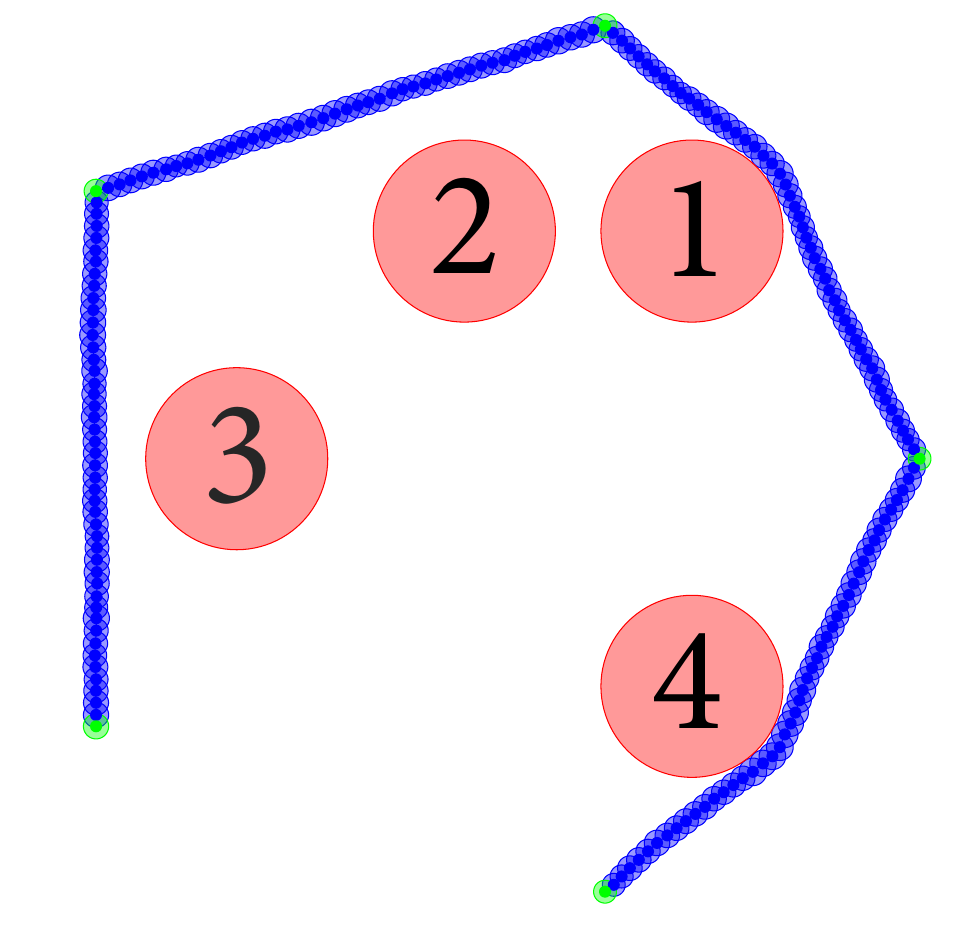}&\includegraphics[width=0.09\textwidth]{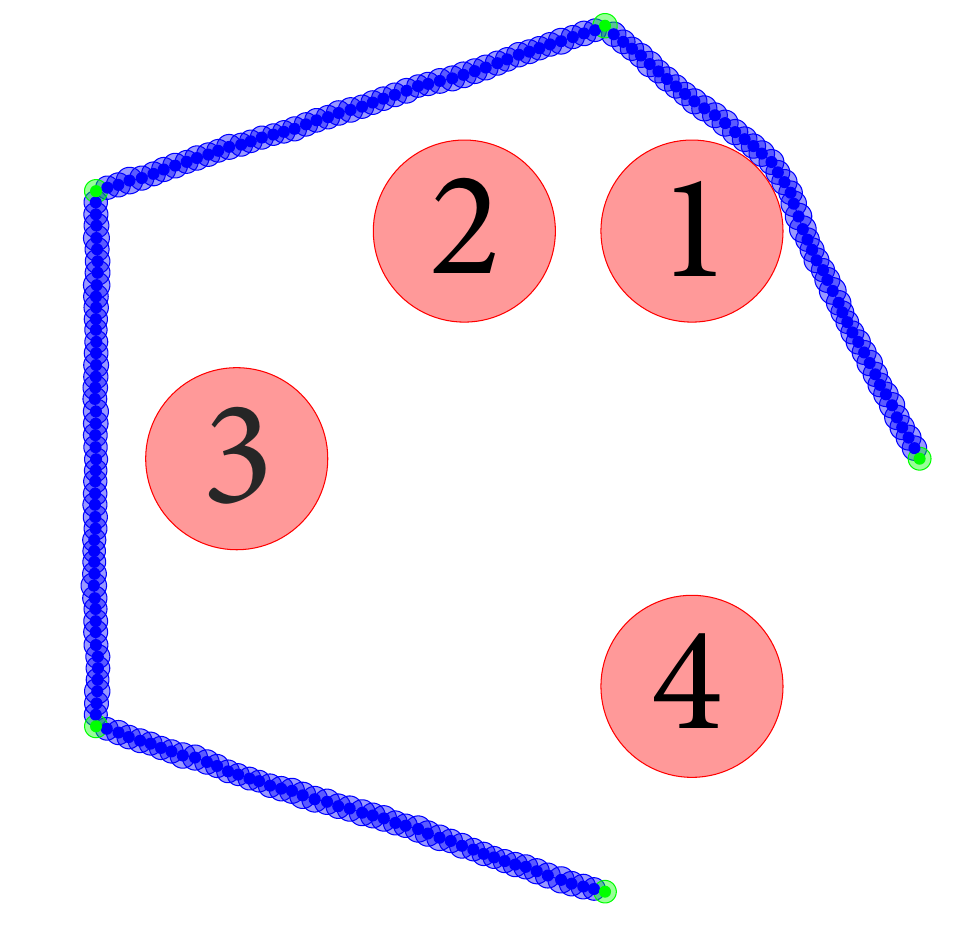}&\includegraphics[width=0.09\textwidth]{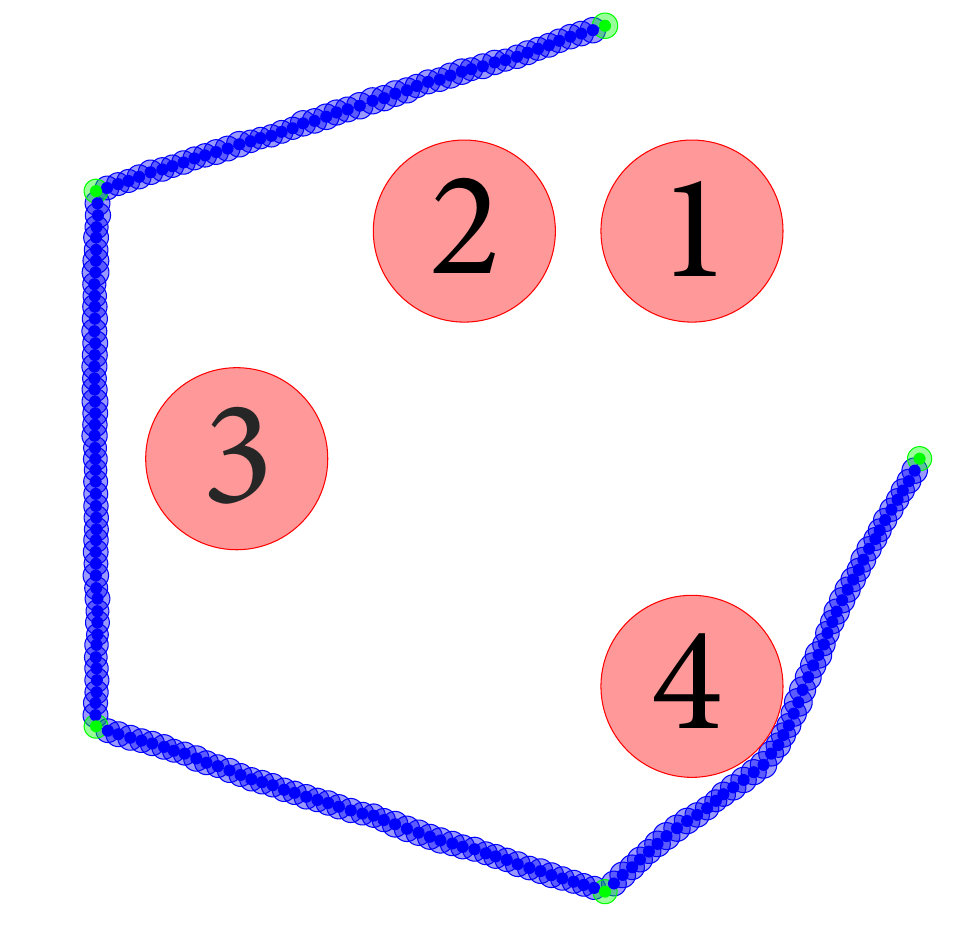}\\
        \boxed{$\{123;4\}$}&
                \includegraphics[width=0.09\textwidth]{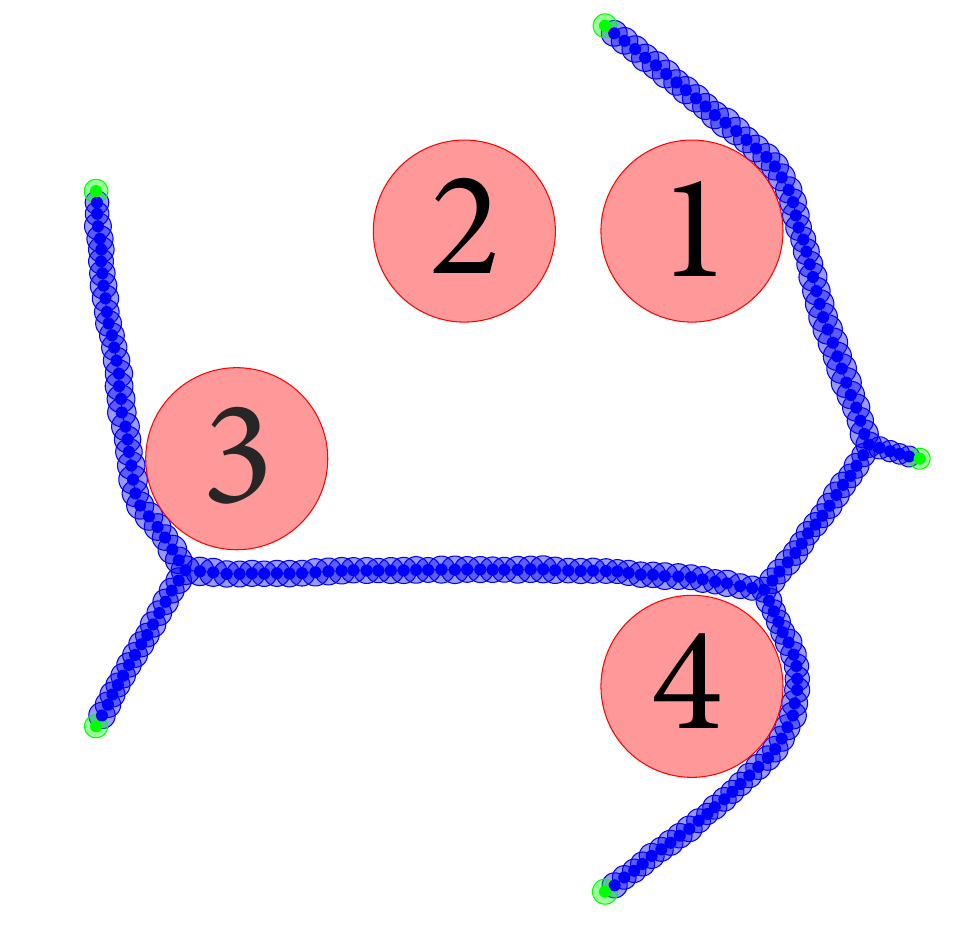}&
                \includegraphics[width=0.09\textwidth]{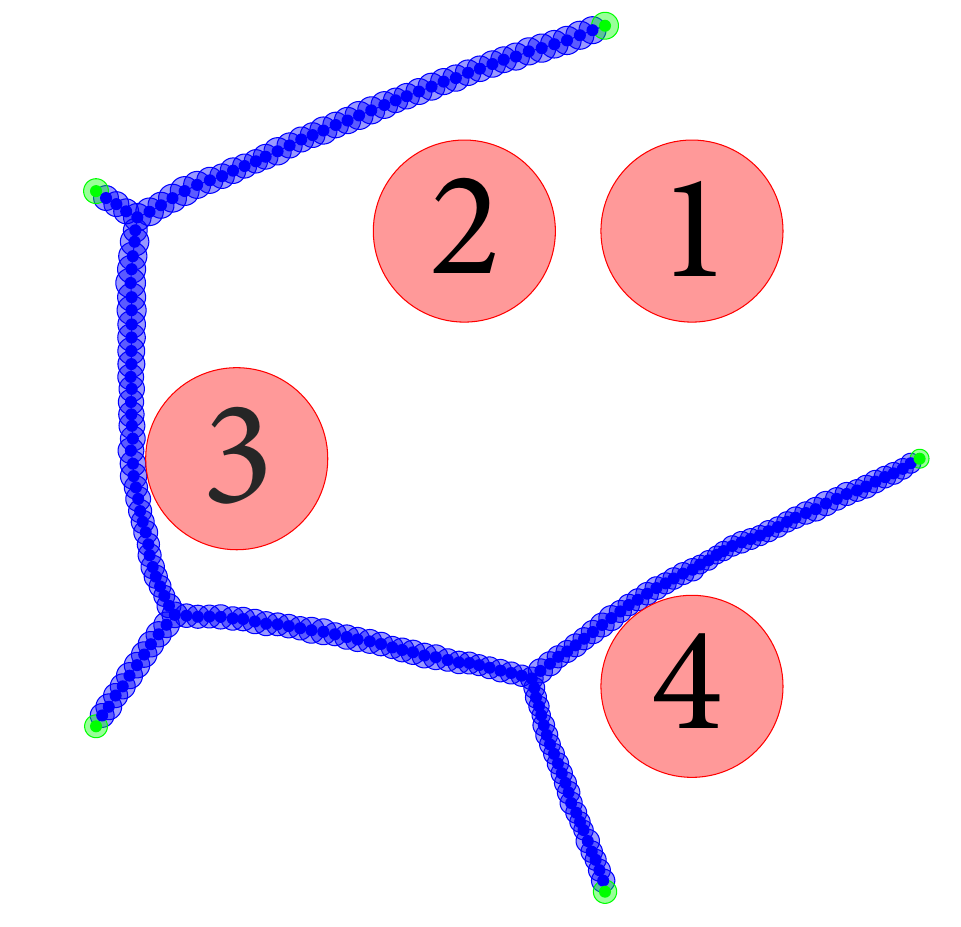}&
                \includegraphics[width=0.09\textwidth]{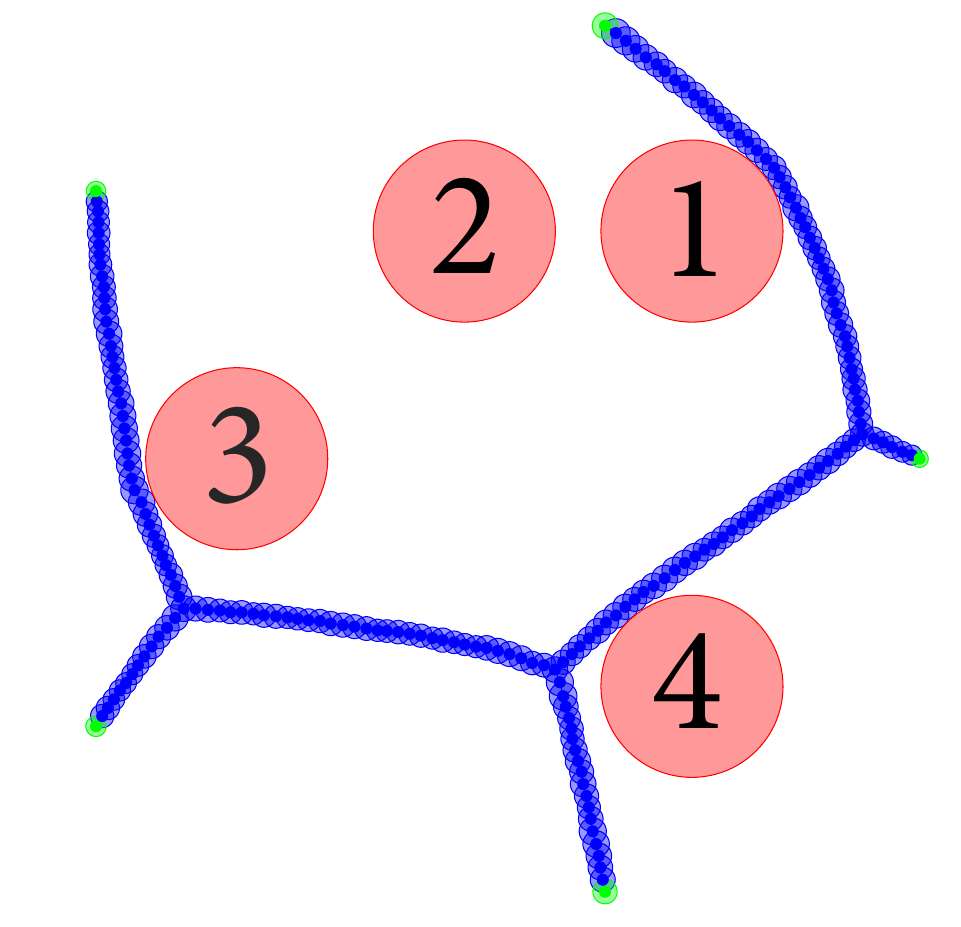}&
                \includegraphics[width=0.09\textwidth]{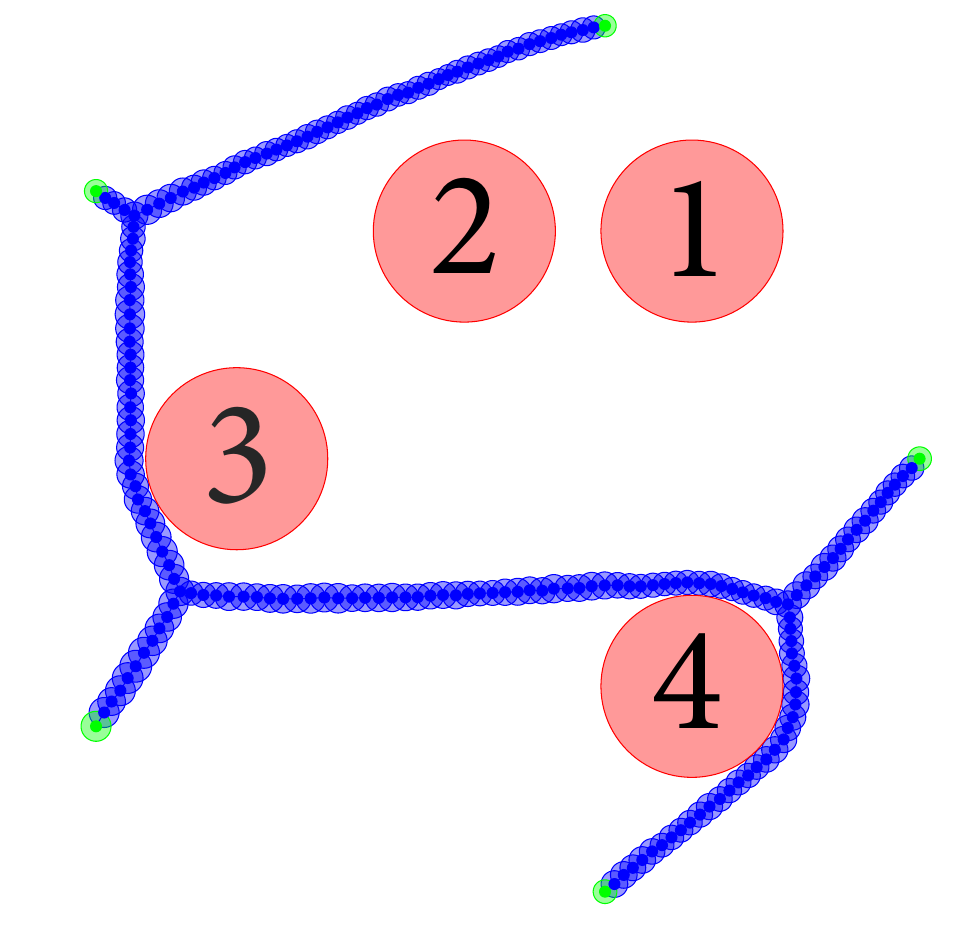}&\includegraphics[width=0.09\textwidth]{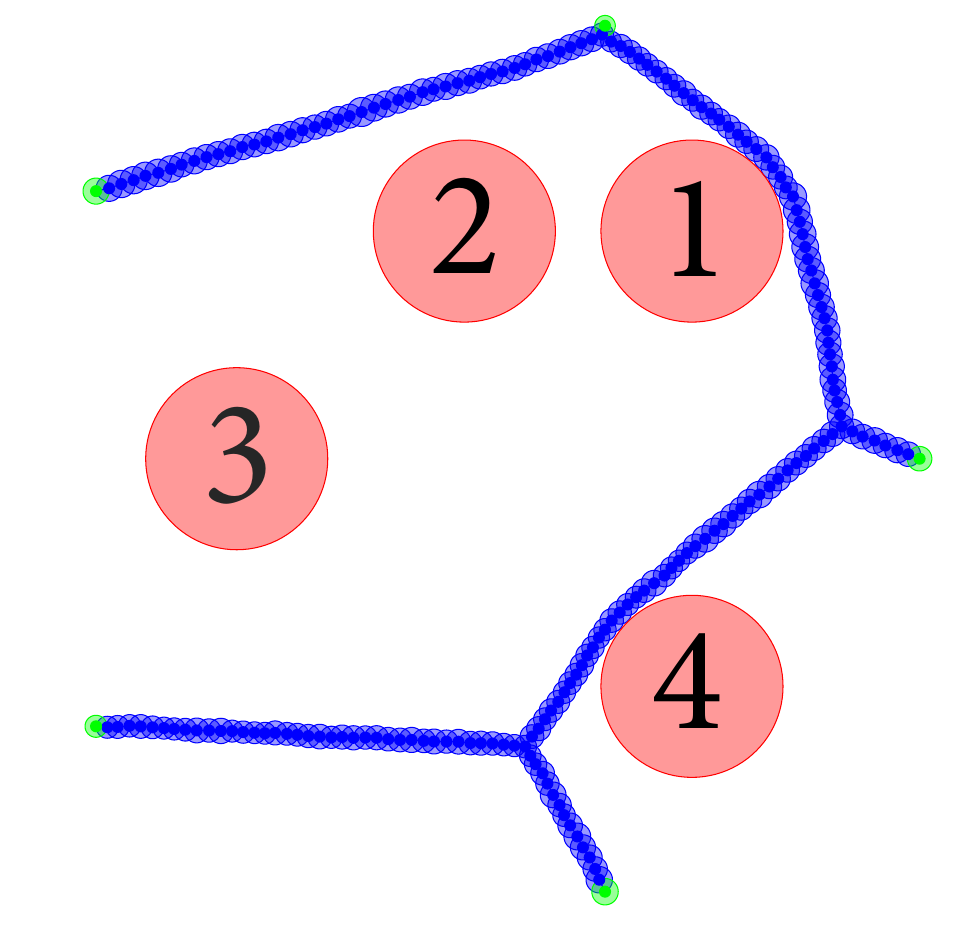}&\includegraphics[width=0.09\textwidth]{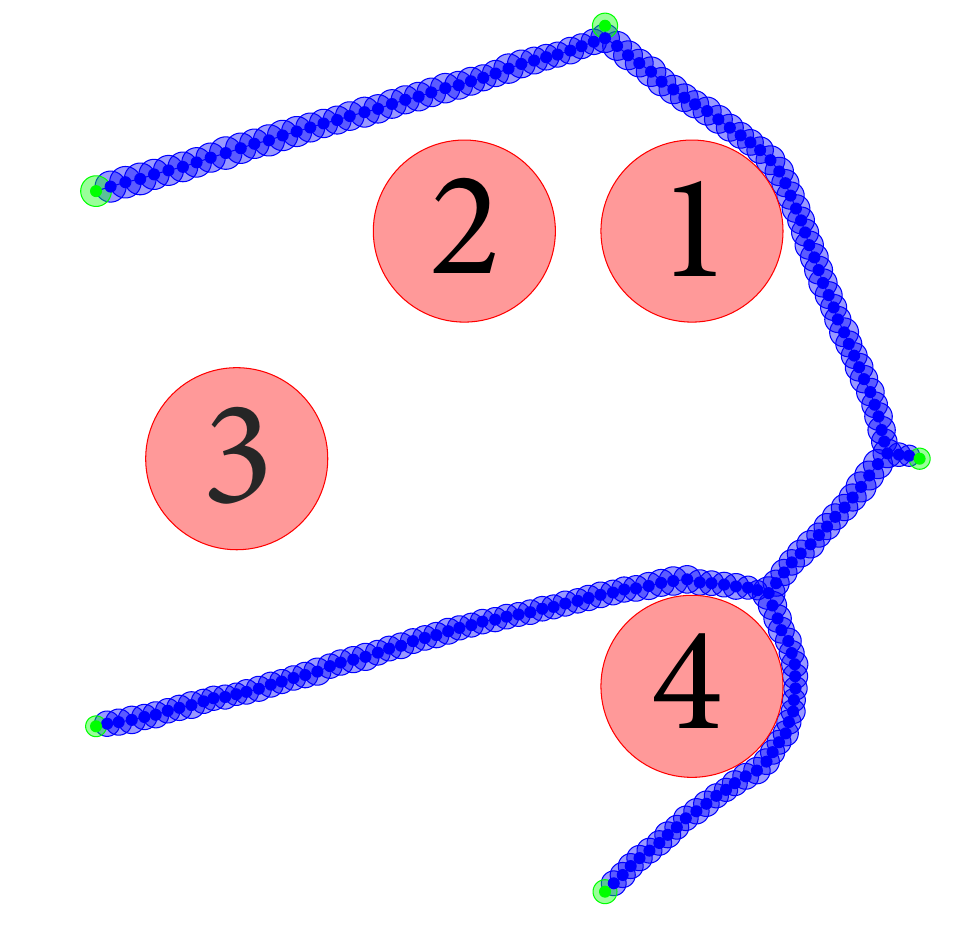}\\
               \boxed{$\{124;3\}$}&\includegraphics[width=0.09\textwidth]{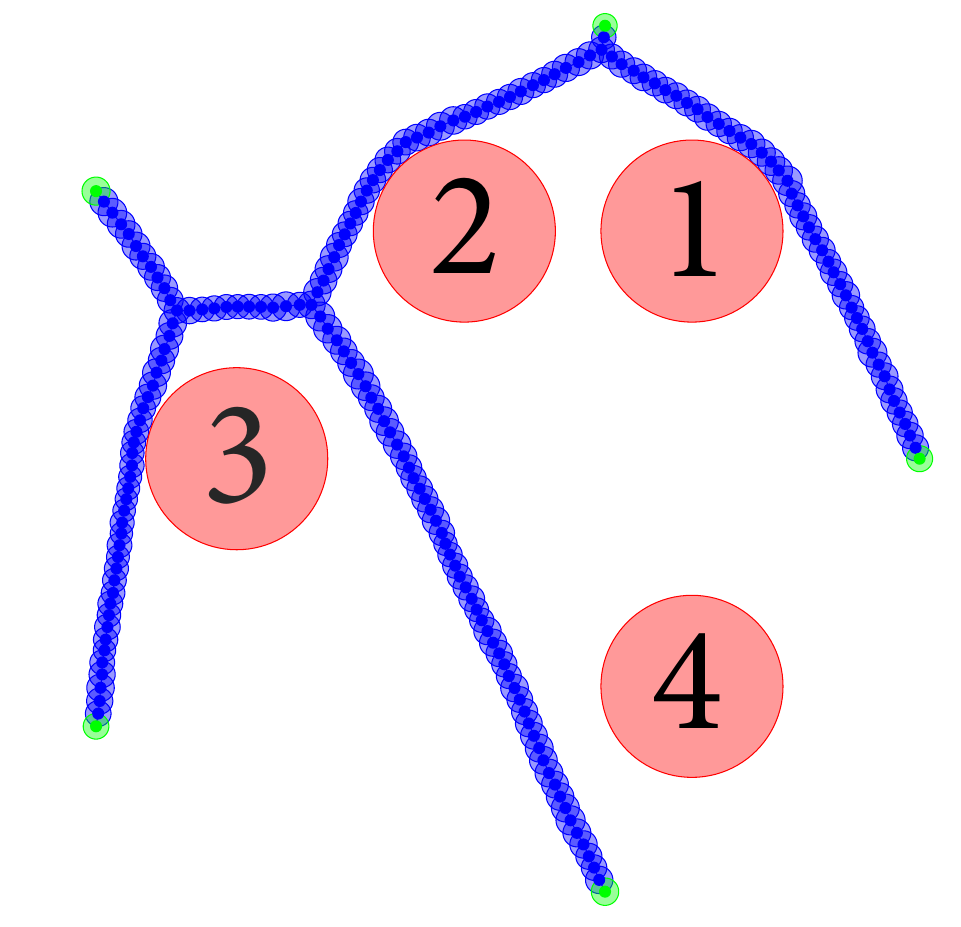}& \includegraphics[width=0.09\textwidth]{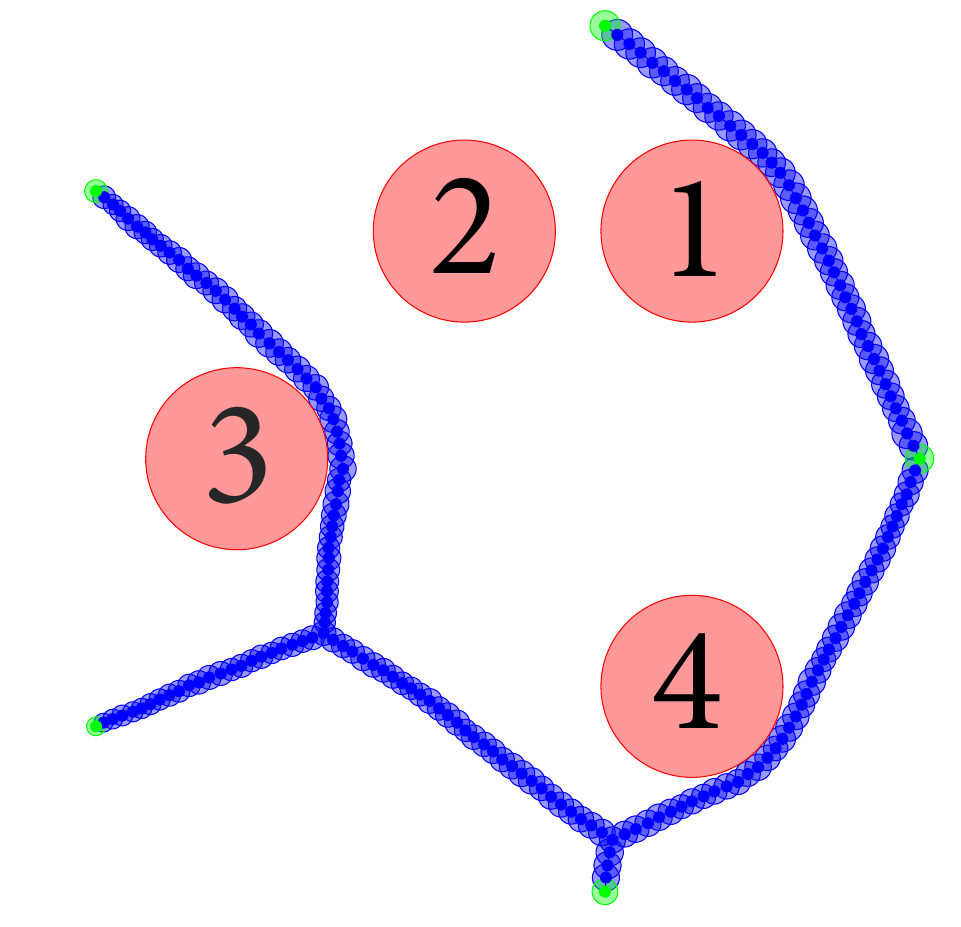}&\includegraphics[width=0.09\textwidth]{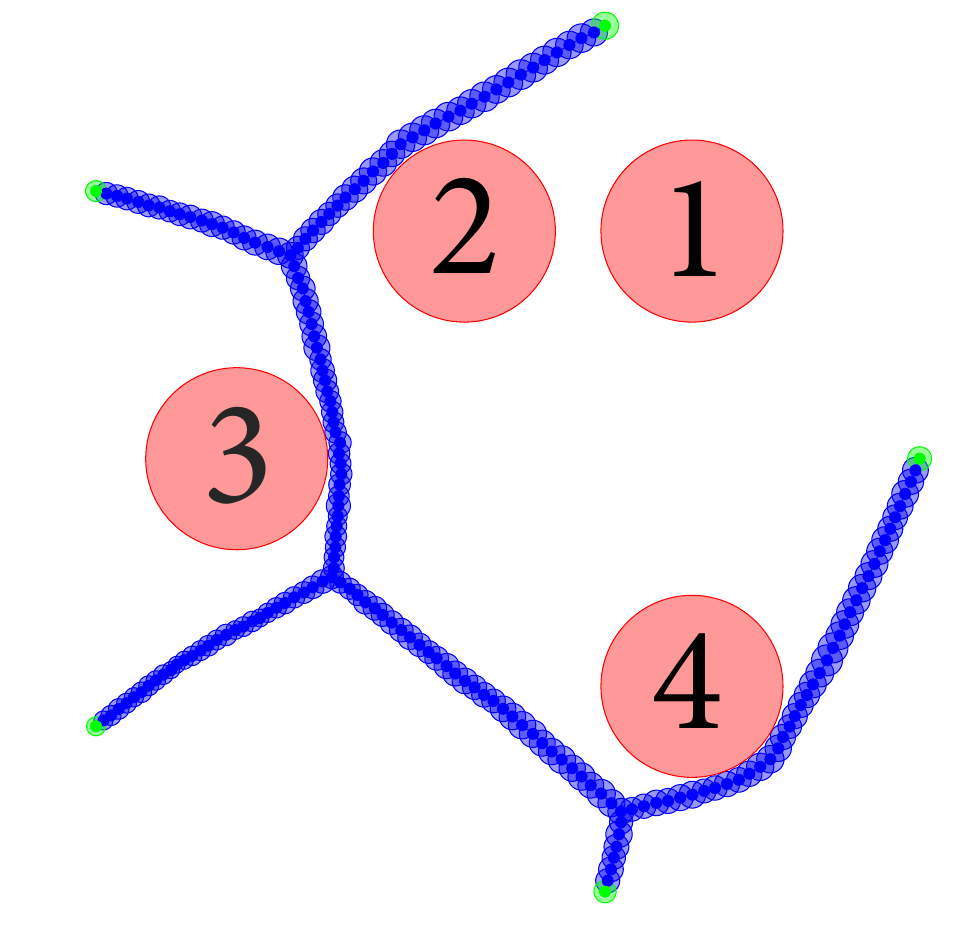}&\includegraphics[width=0.09\textwidth]{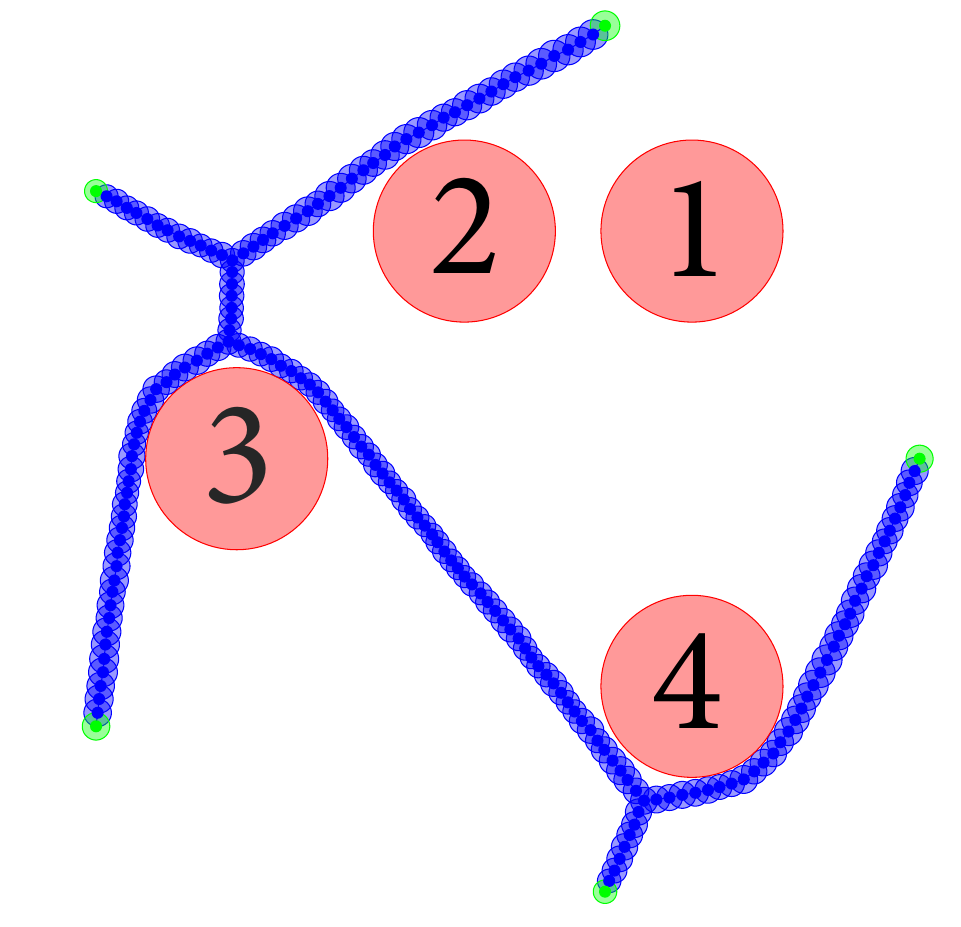}
                &\includegraphics[width=0.09\textwidth]{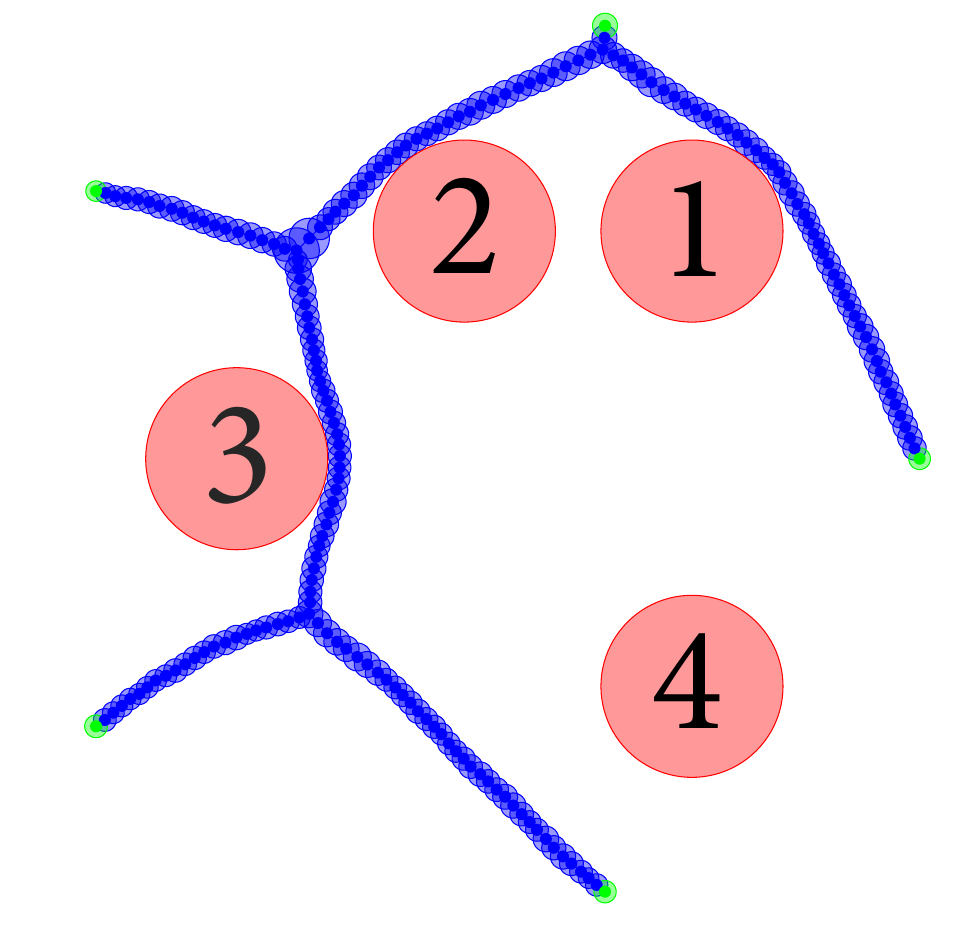}\\
                 
                 \boxed{$\{1;234\}$}&\includegraphics[width=0.09\textwidth]{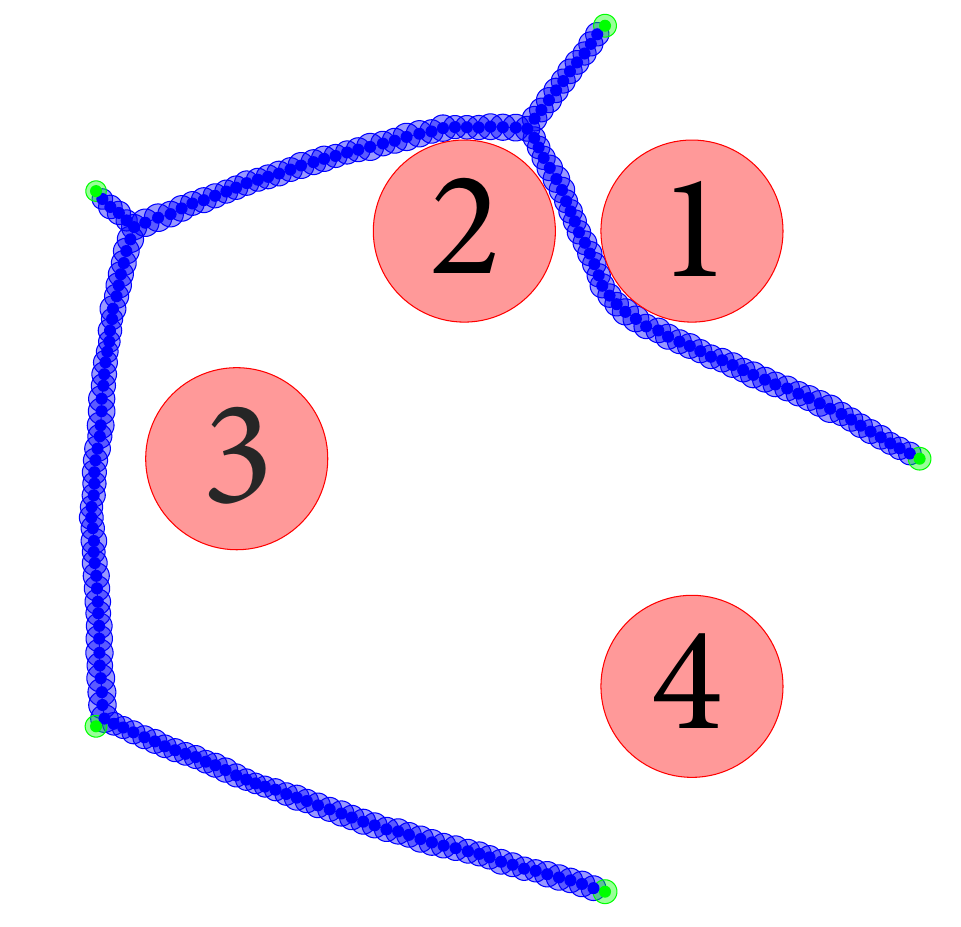}&\includegraphics[width=0.09\textwidth]{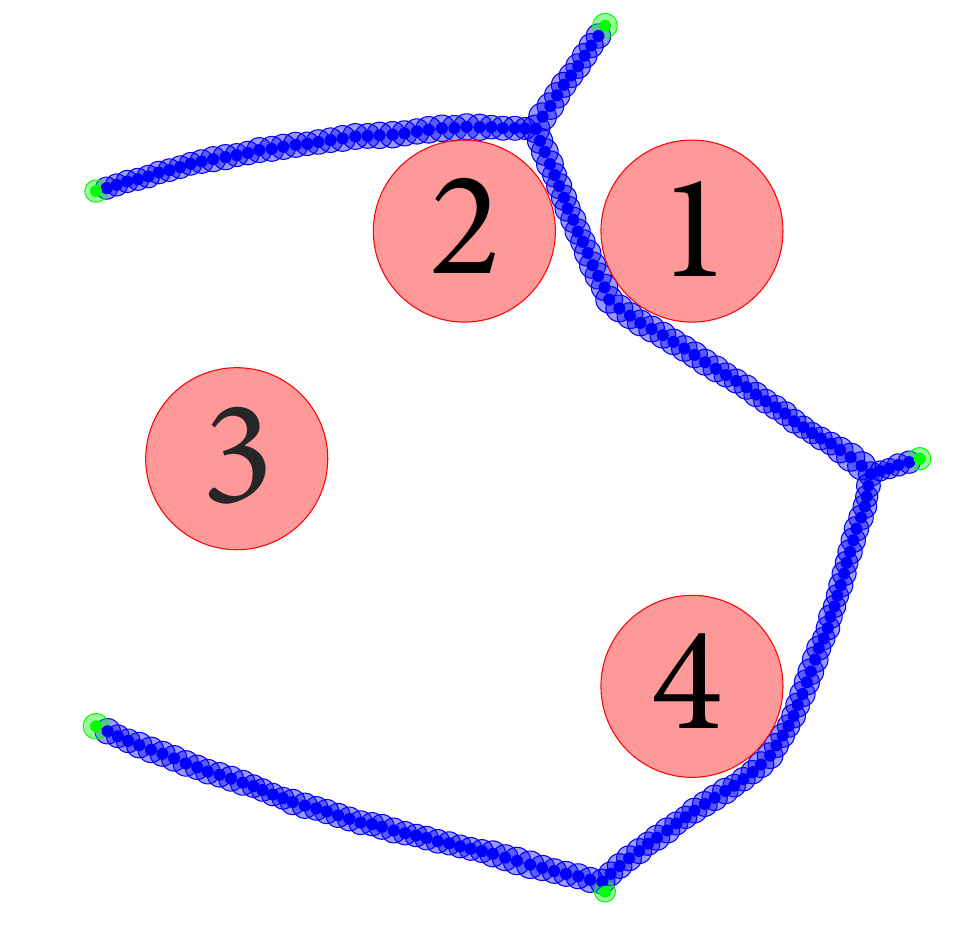}&\includegraphics[width=0.09\textwidth]{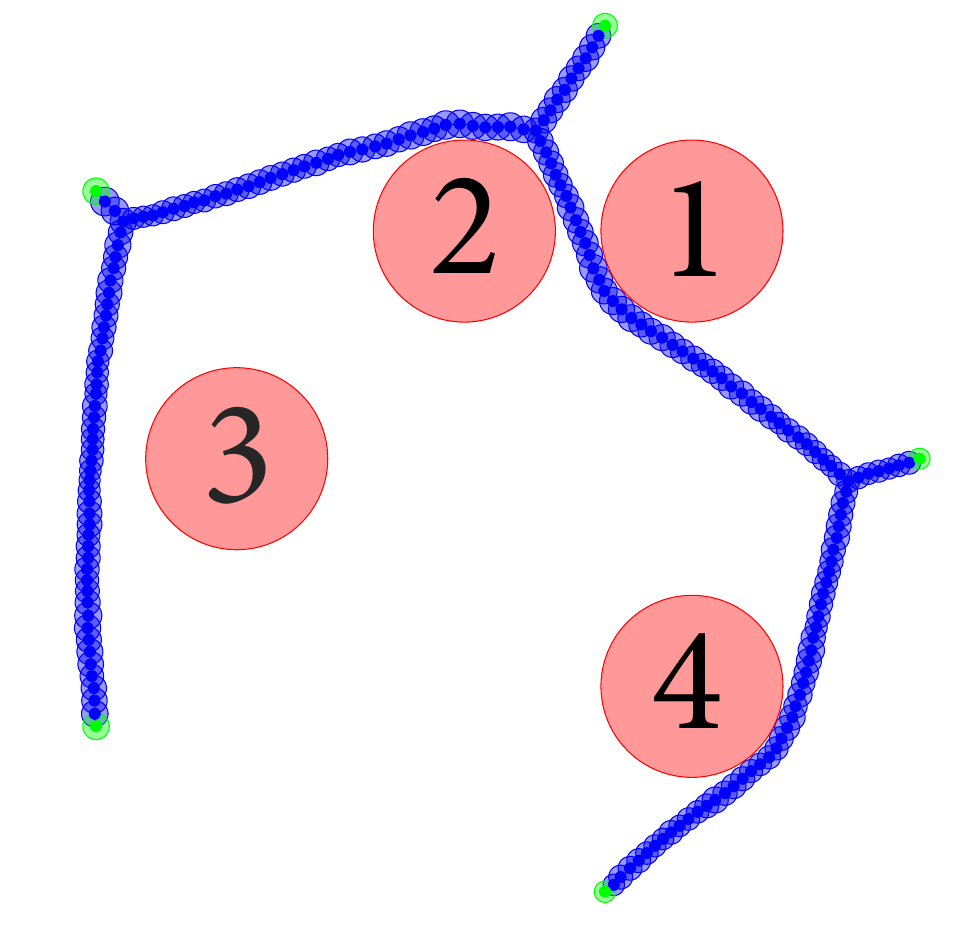}& &\boxed{$\{134;2\}$}&\includegraphics[width=0.09\textwidth]{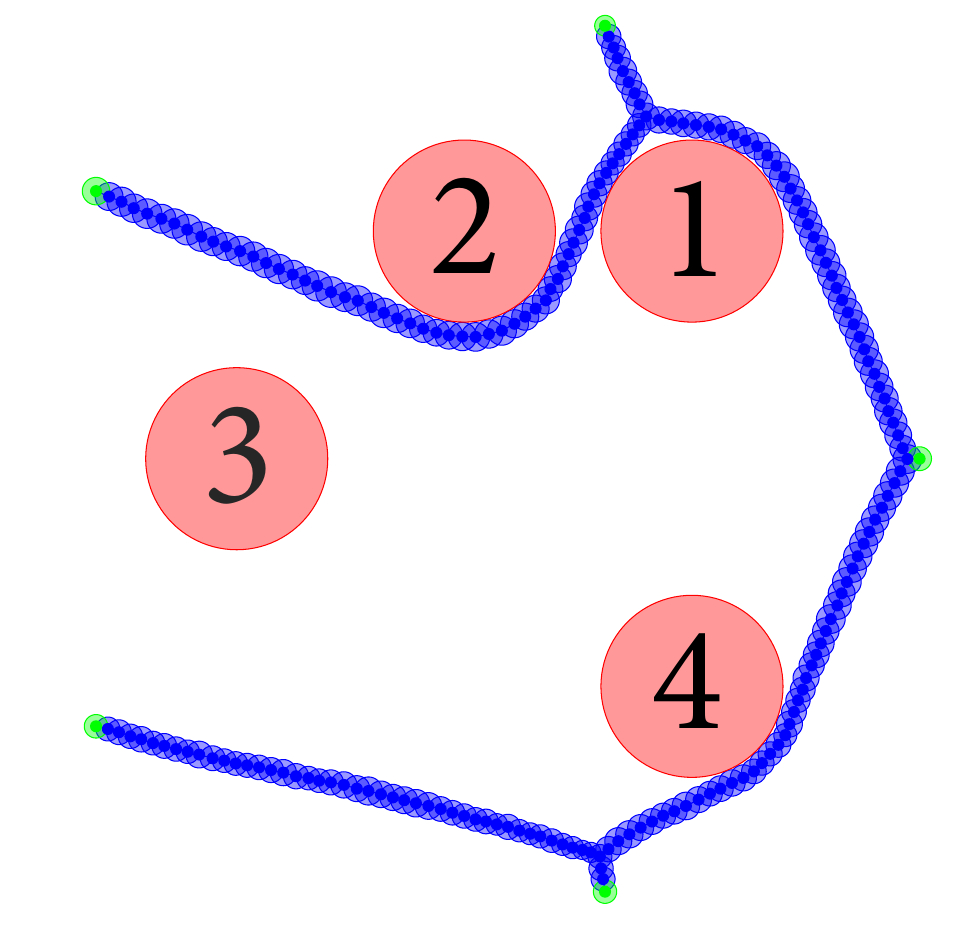}&\includegraphics[width=0.09\textwidth]{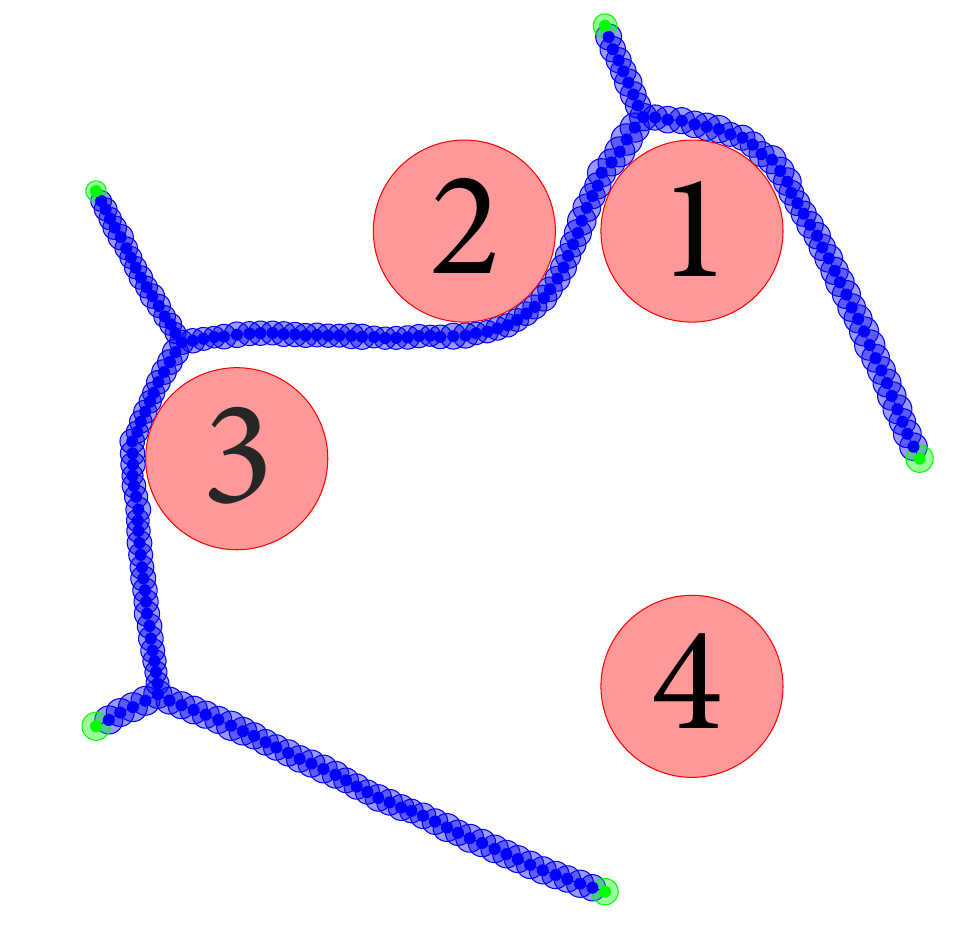}\\
                  \boxed{$\{12;34\}$}&\boxed{\begin{overpic}[width=0.09\textwidth]{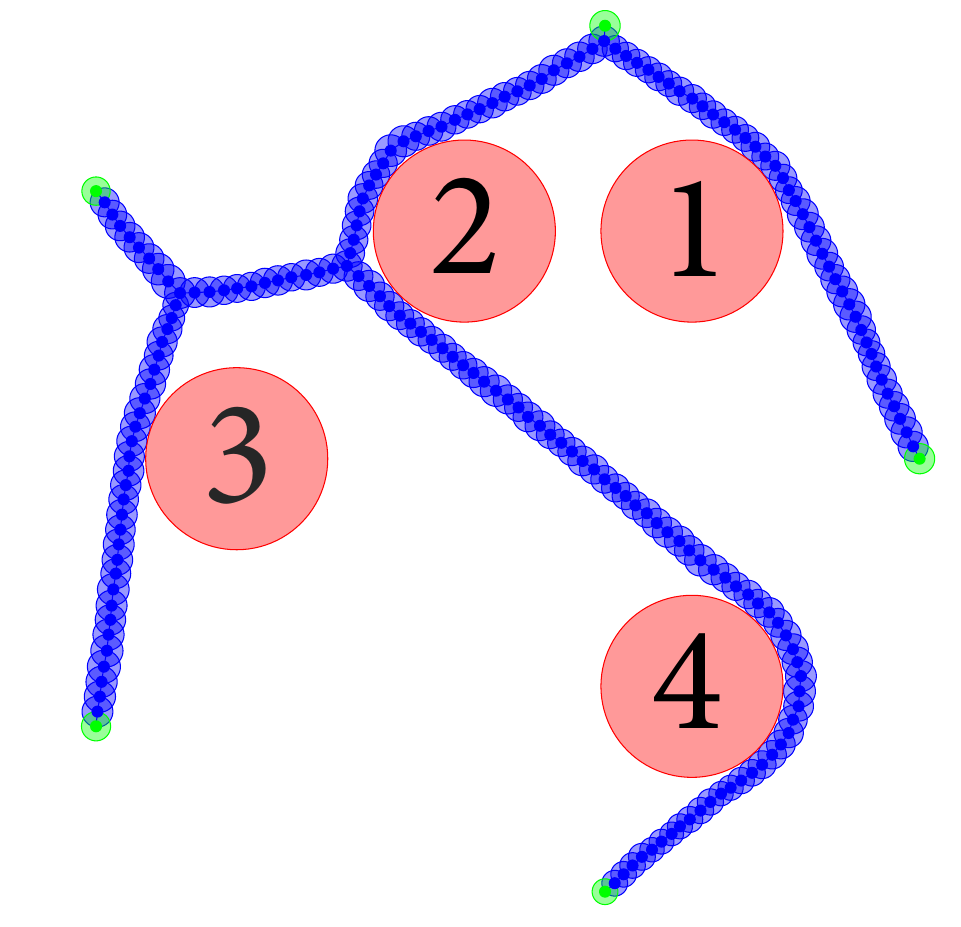}\put(-5,85){\textcolor{red}{Worst}}\end{overpic}}&\includegraphics[width=0.09\textwidth]{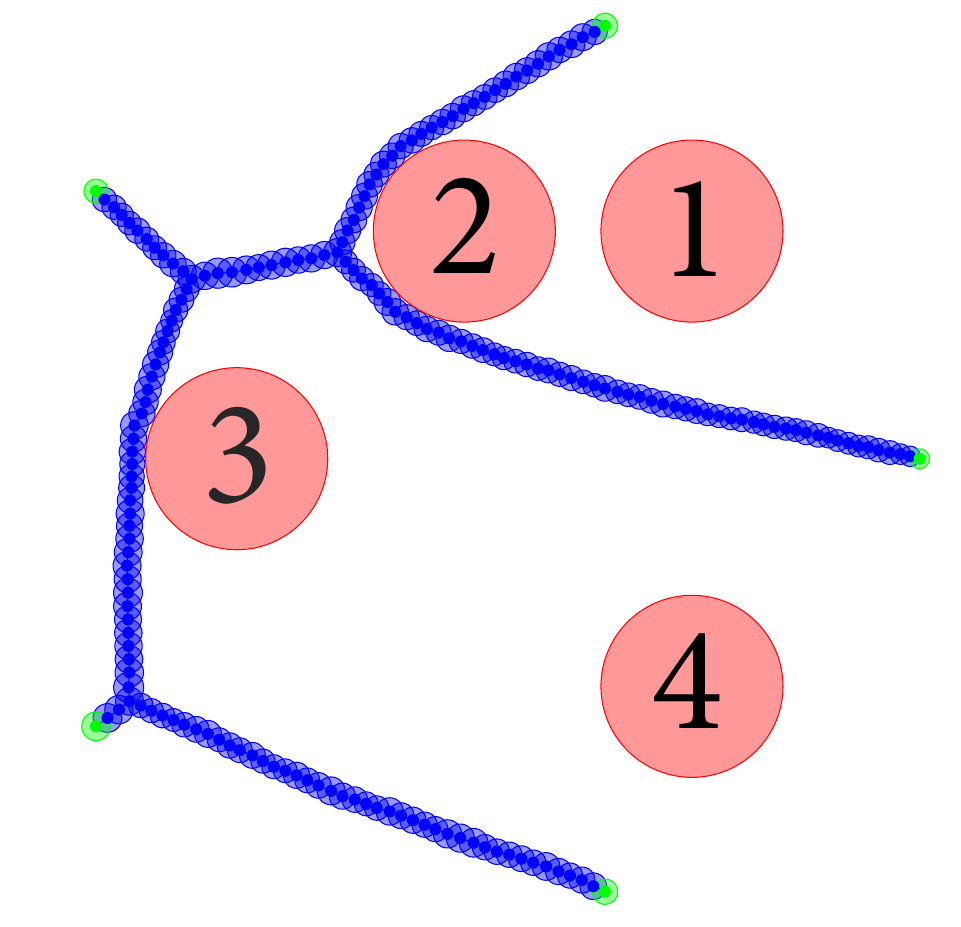}&\includegraphics[width=0.09\textwidth]{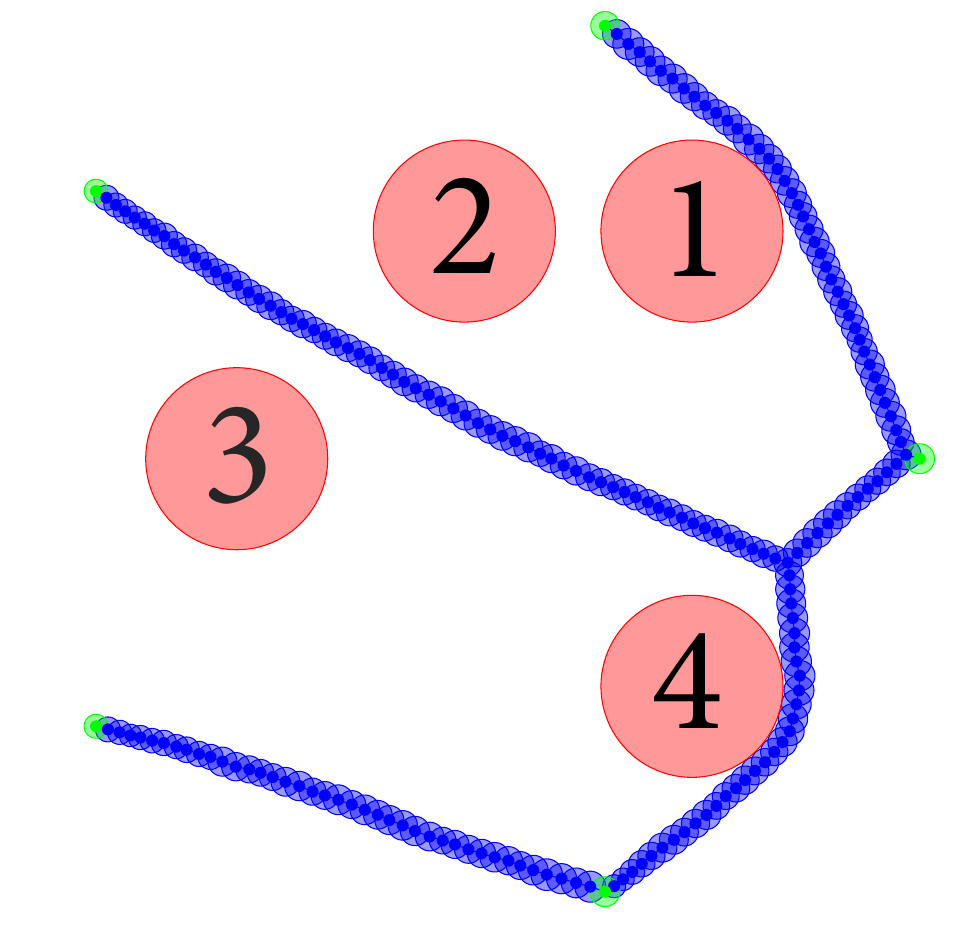}&\includegraphics[width=0.09\textwidth]{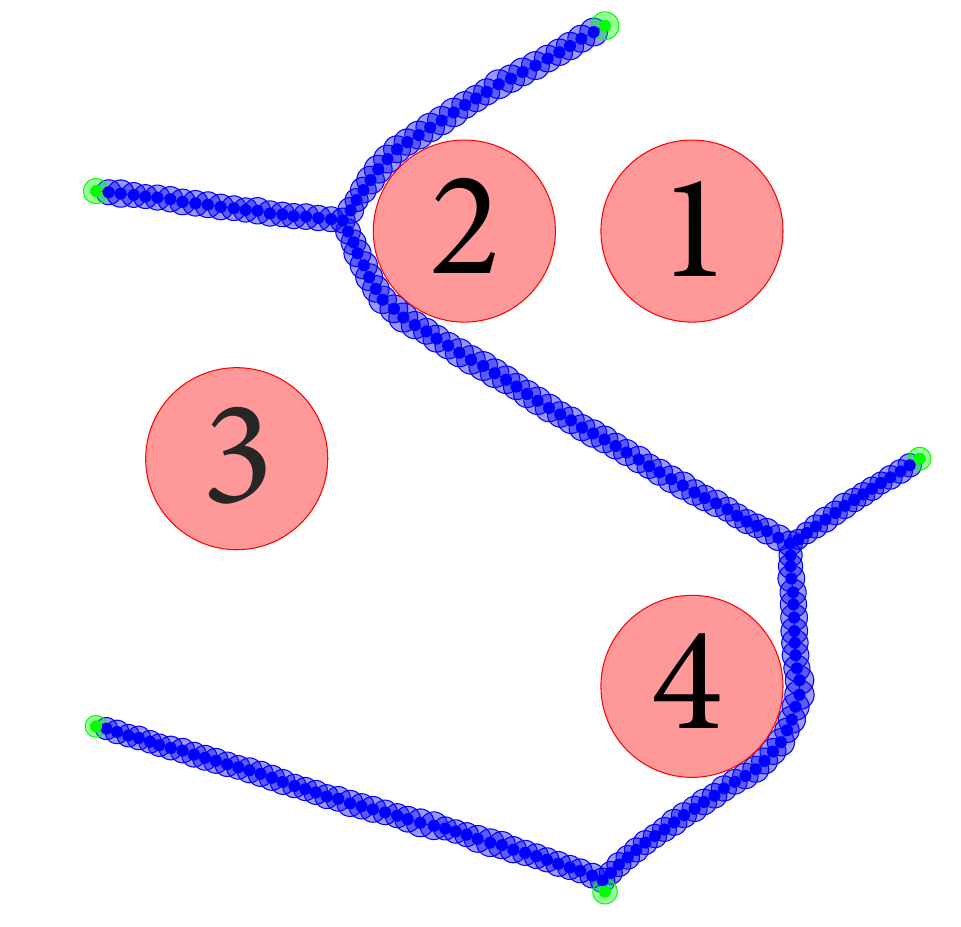}&  \includegraphics[width=0.09\textwidth]{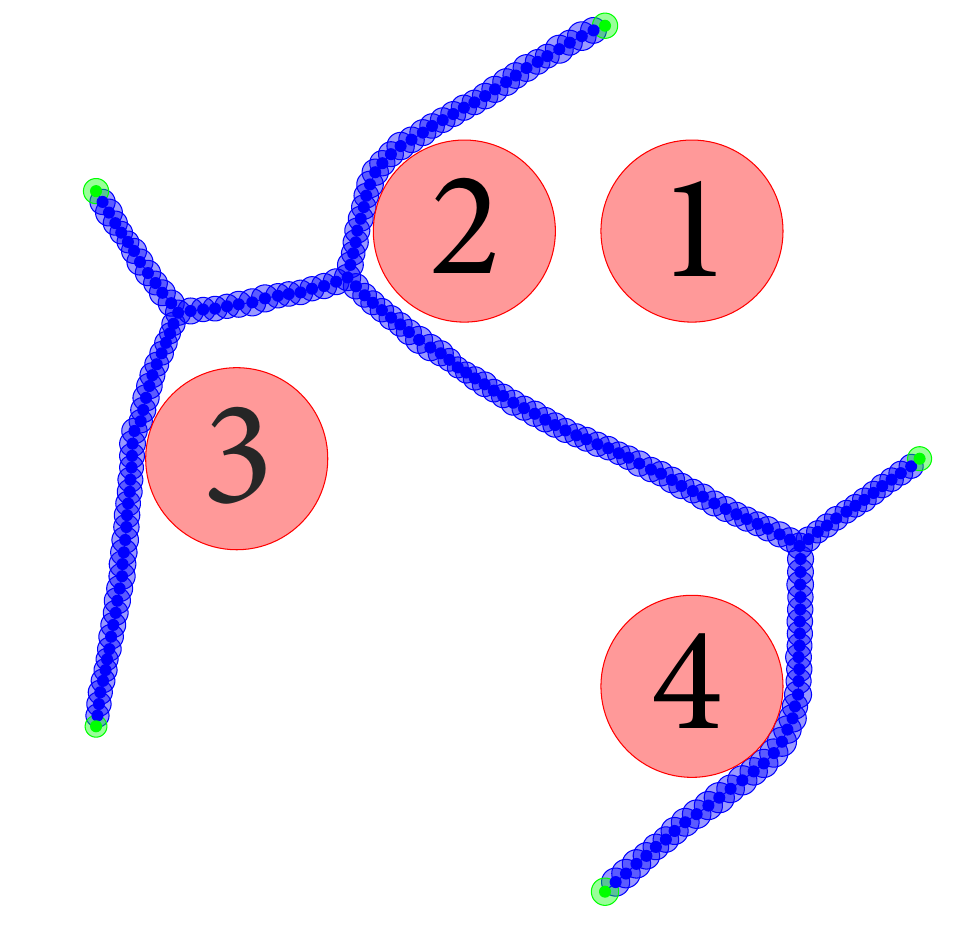}   &\includegraphics[width=0.09\textwidth]{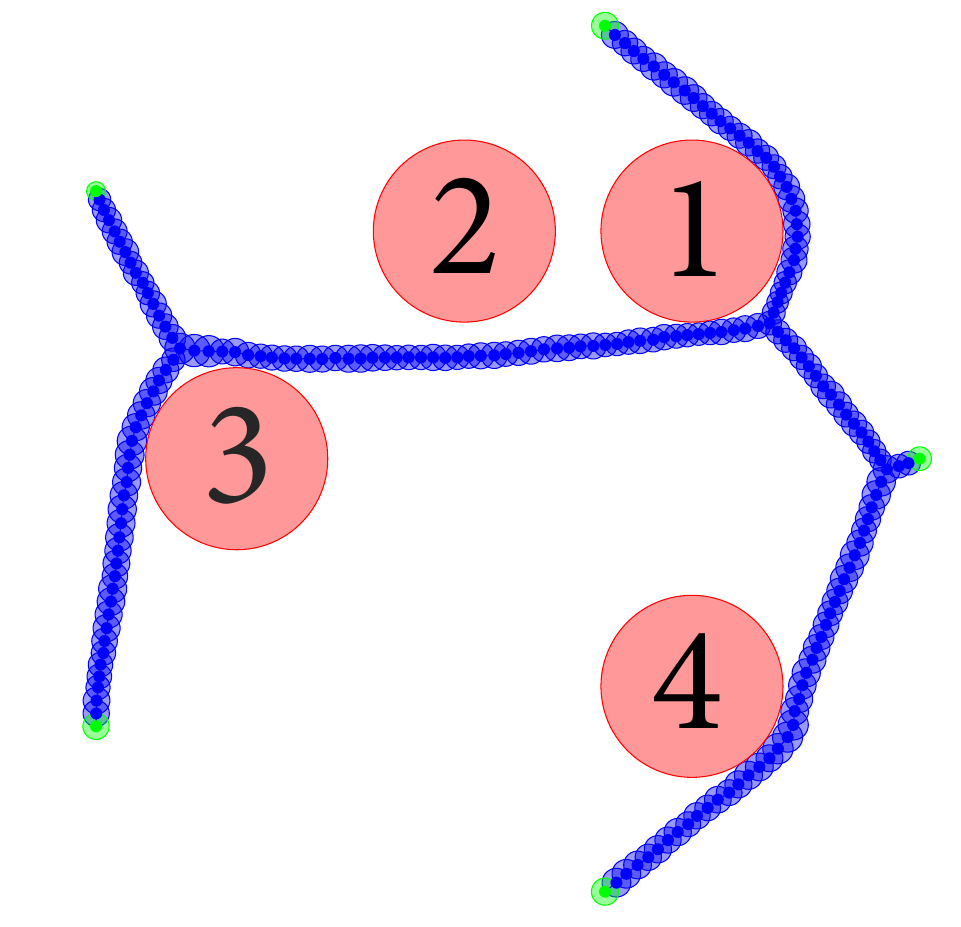}\\
                \boxed{$\{14;23\}$}&
                \includegraphics[width=0.09\textwidth]{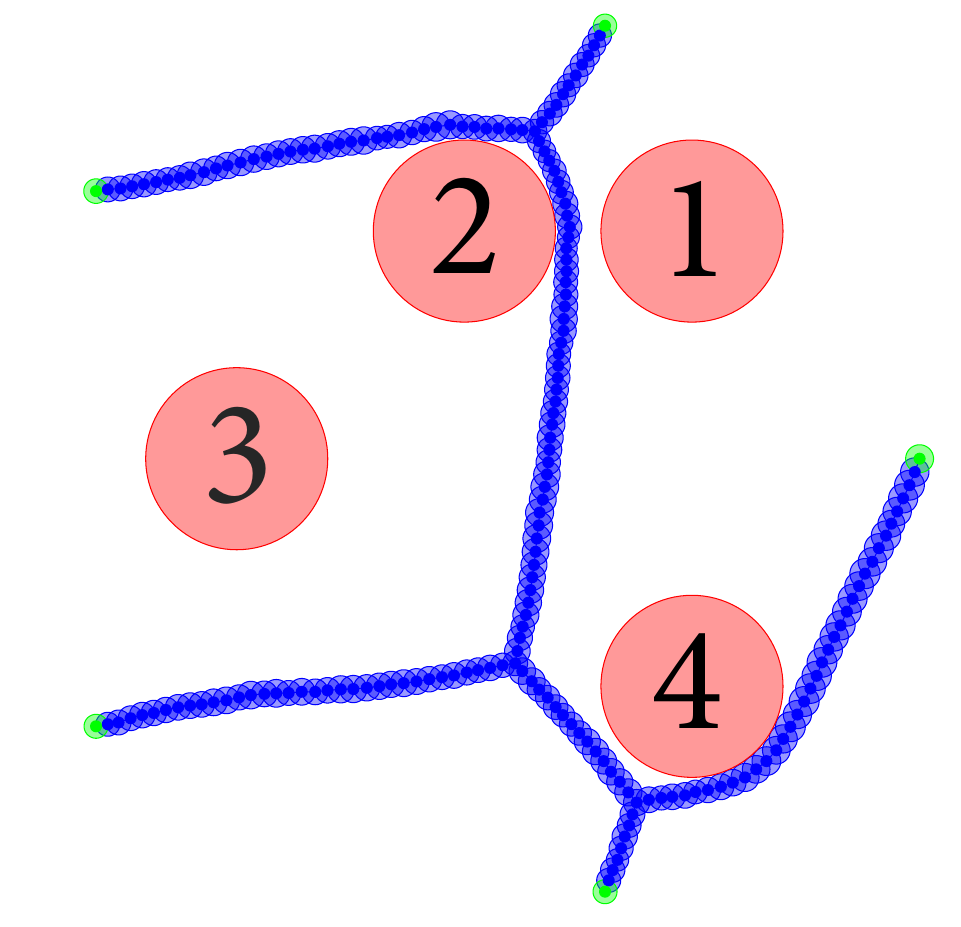}&\includegraphics[width=0.09\textwidth]{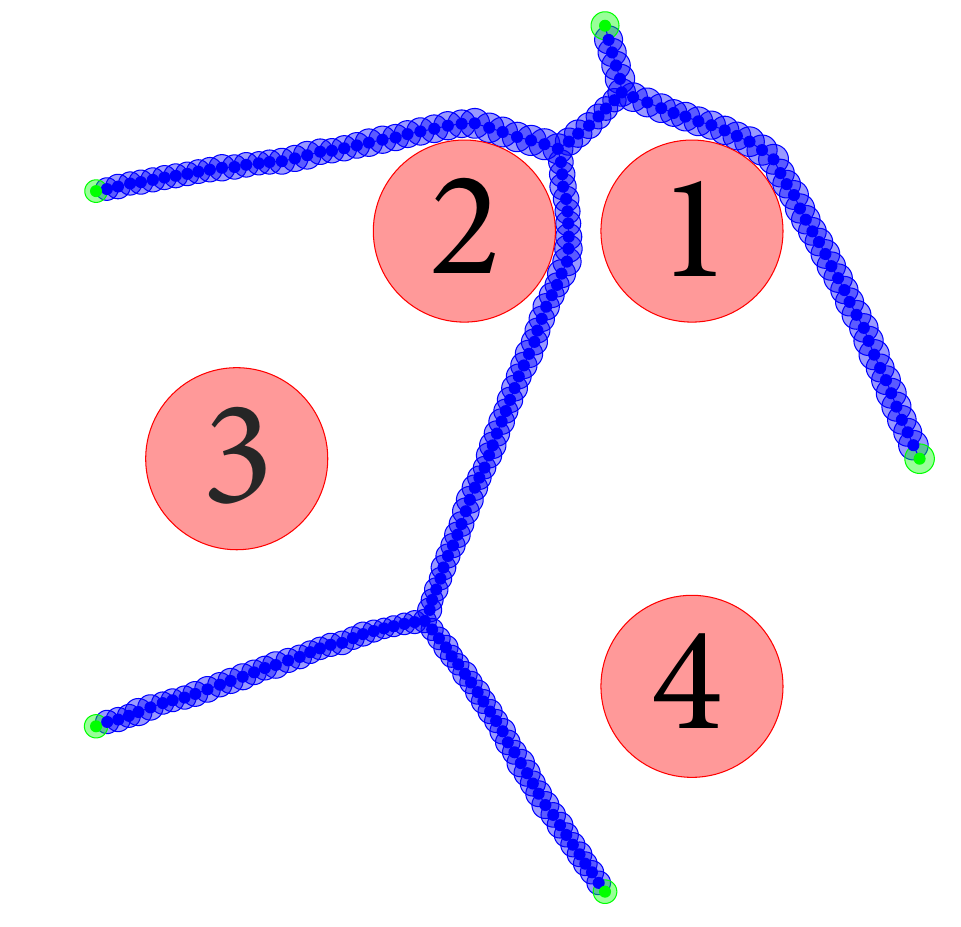}&\includegraphics[width=0.09\textwidth]{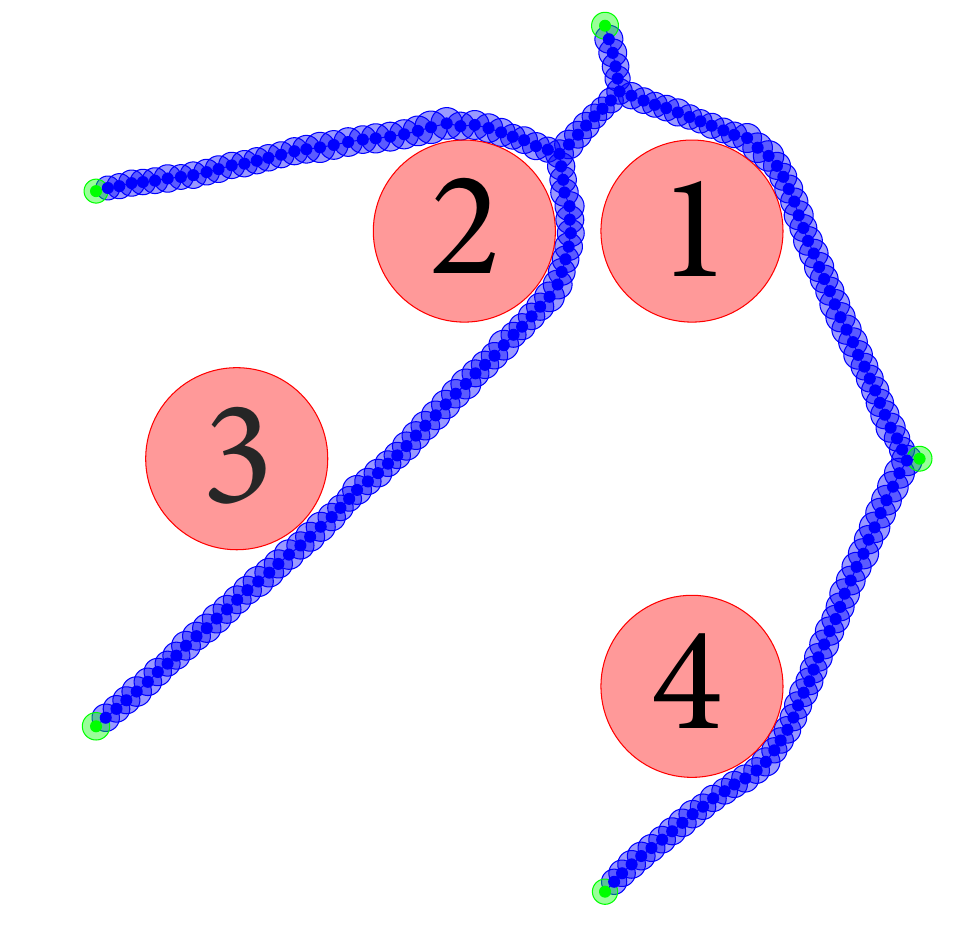}&\includegraphics[width=0.09\textwidth]{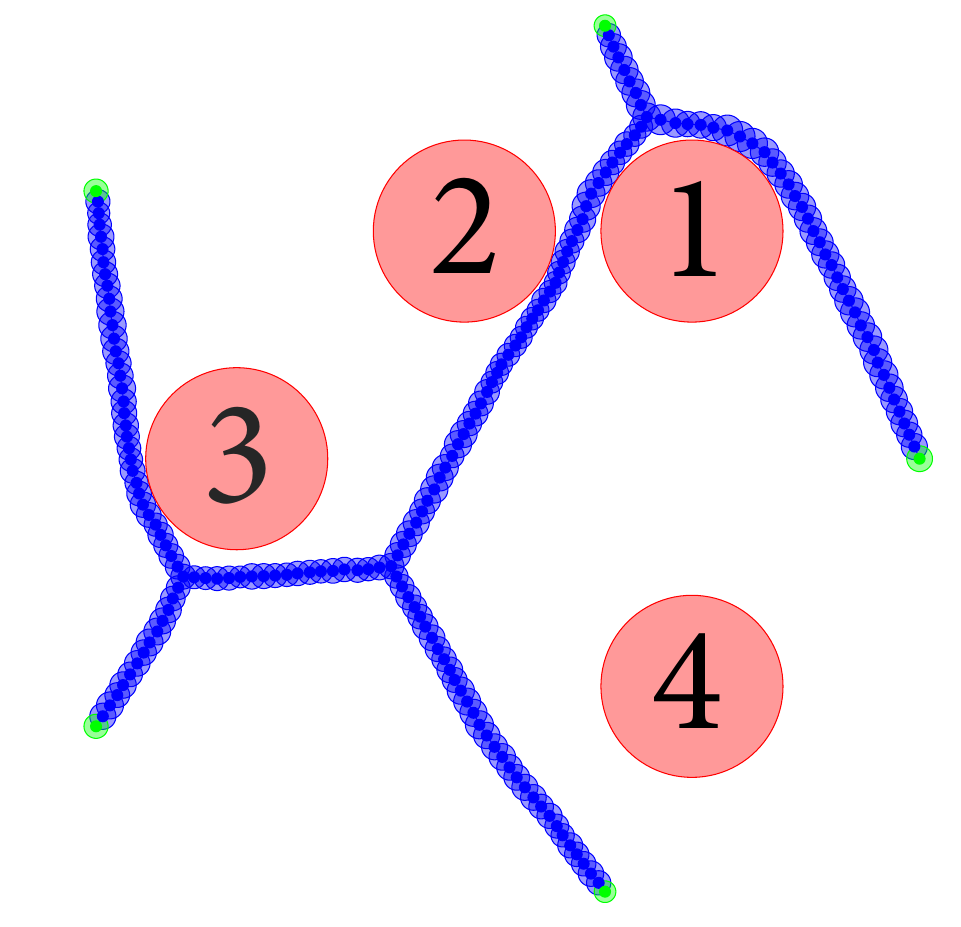}&\includegraphics[width=0.09\textwidth]{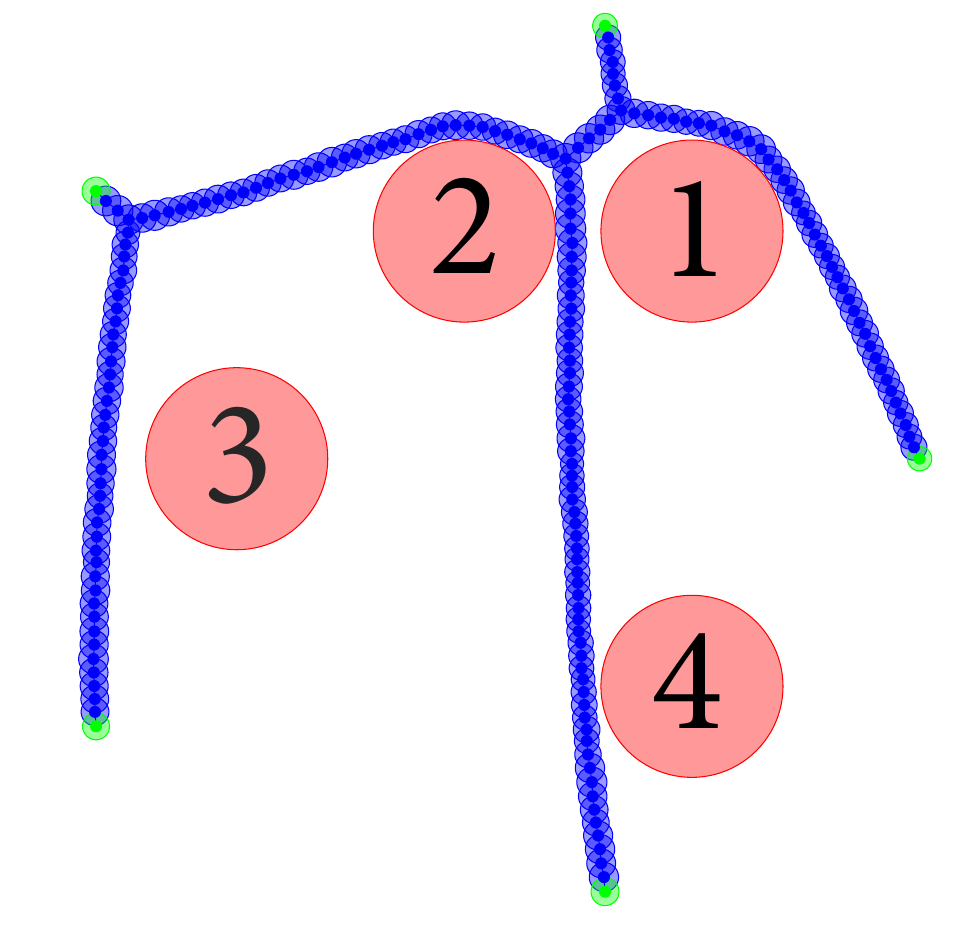}&\includegraphics[width=0.09\textwidth]{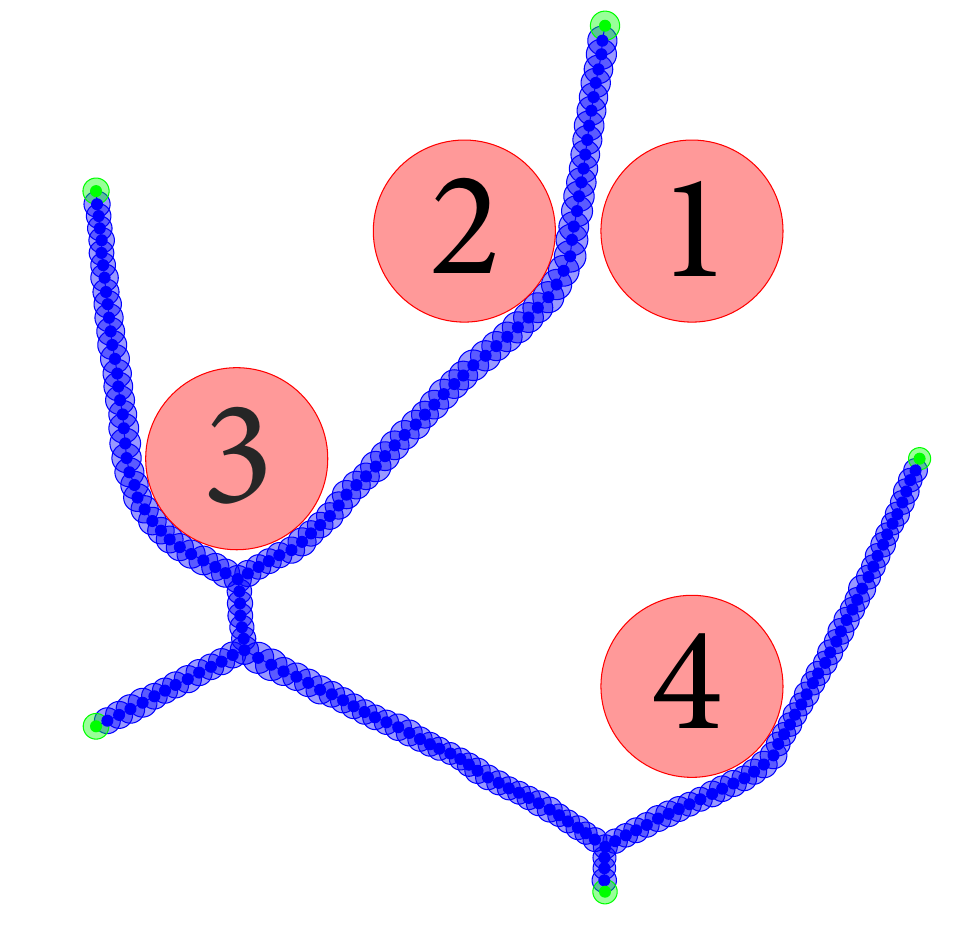}\\
                \boxed{$\{12;3;4\}$}&\boxed{\begin{overpic}[width=0.09\textwidth]{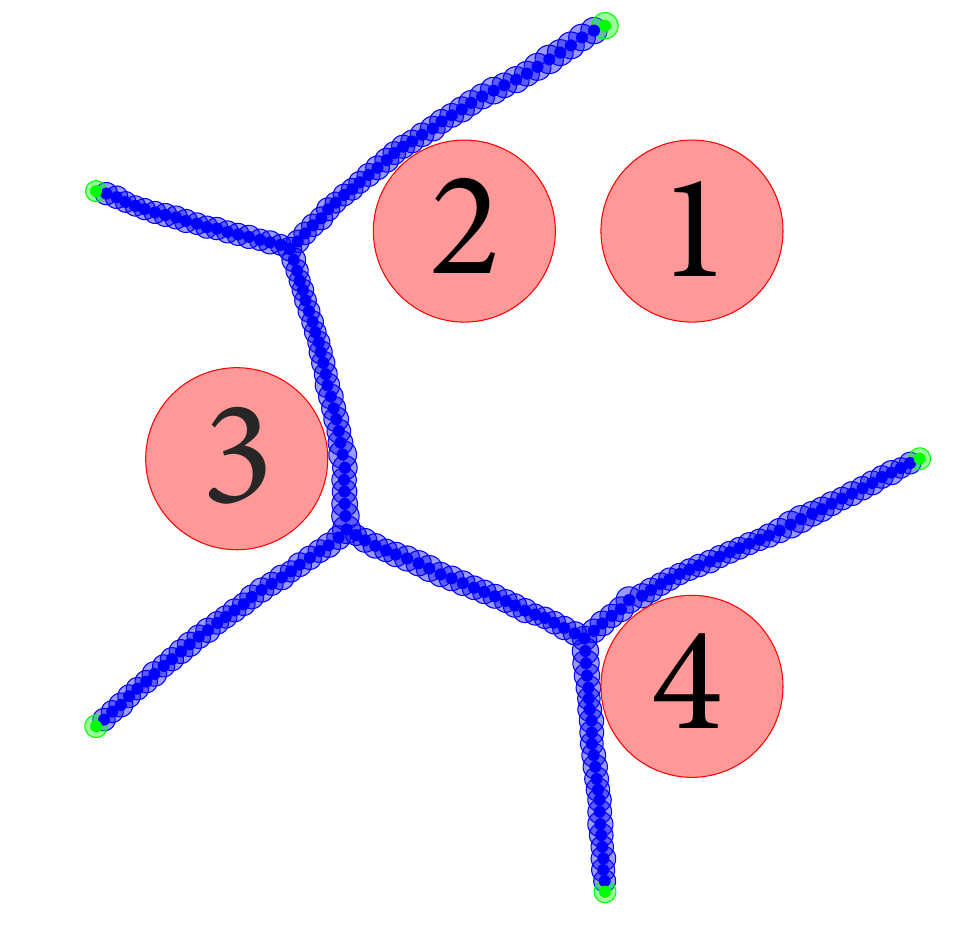}\put(0,83){\textcolor{black}{$1$\textsuperscript{st}}}\end{overpic}}&\includegraphics[width=0.09\textwidth]{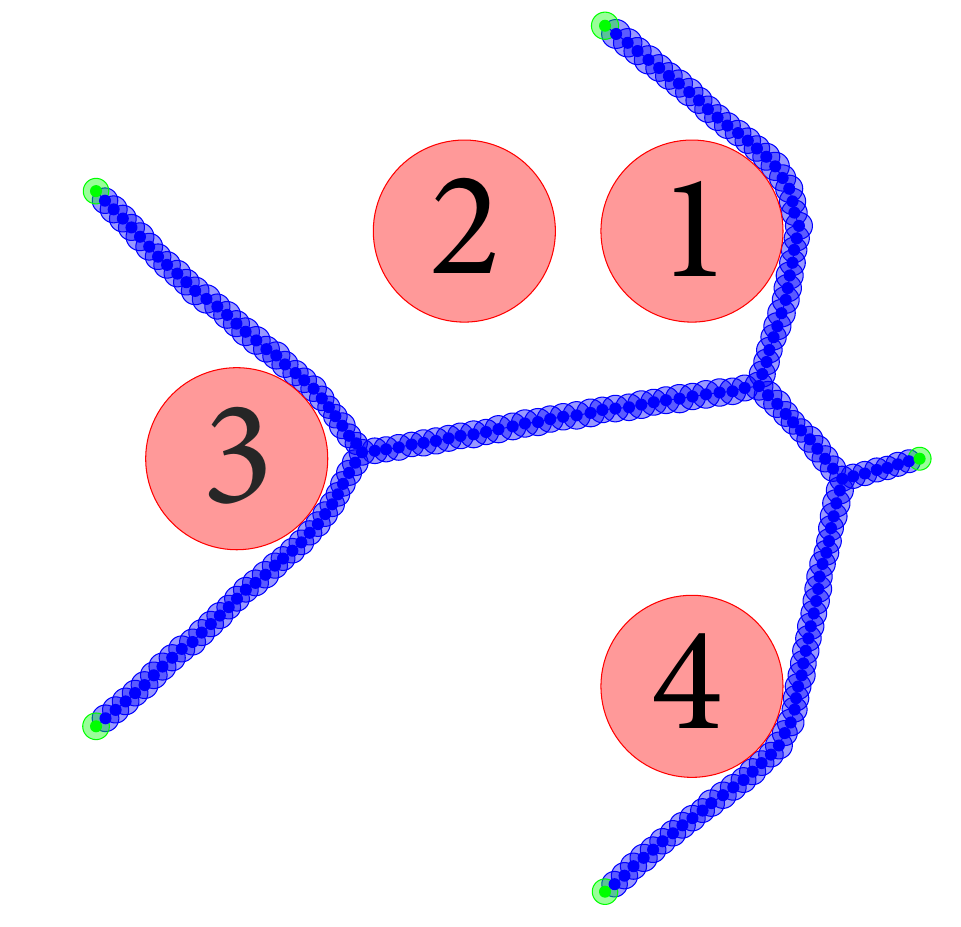}&\includegraphics[width=0.09\textwidth]{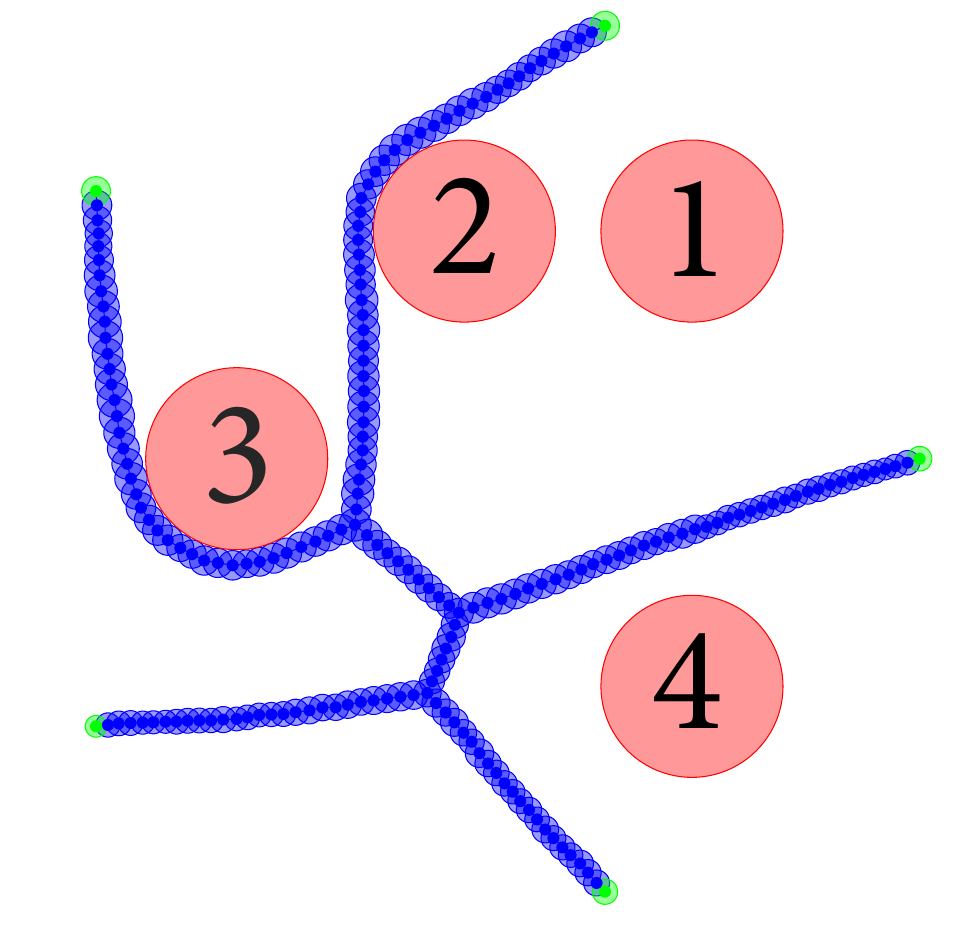}&\includegraphics[width=0.09\textwidth]{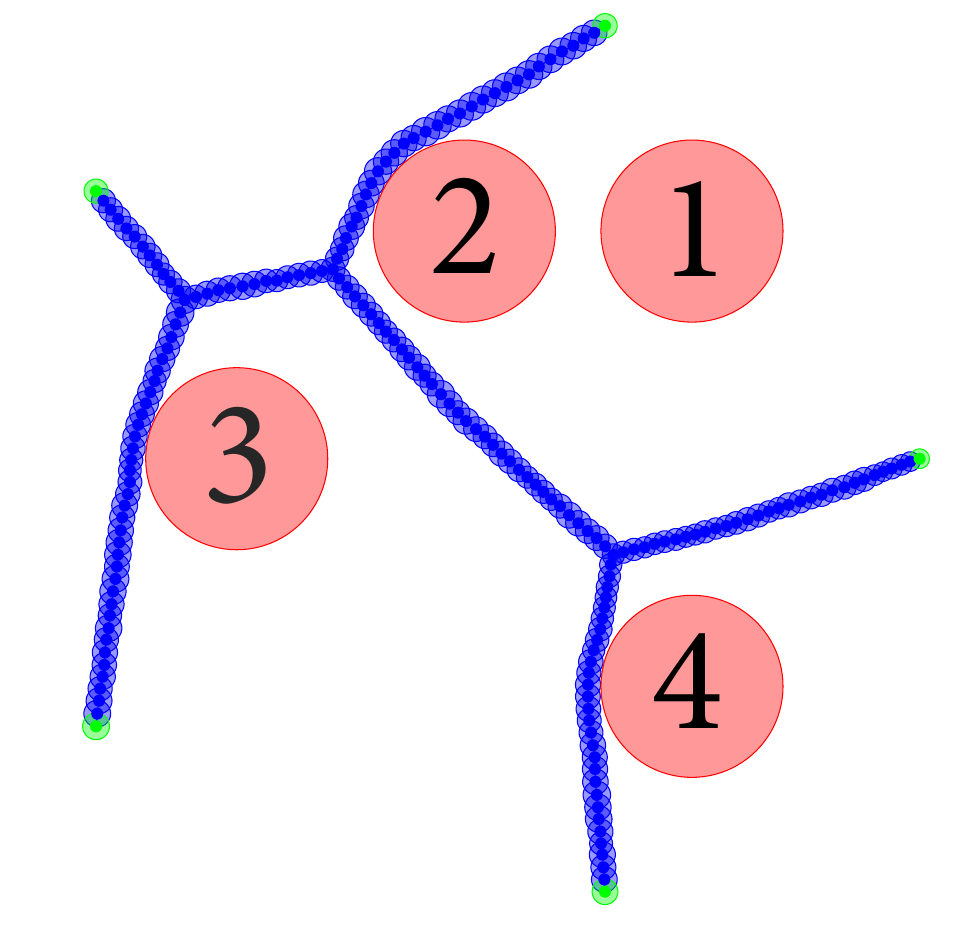}&\includegraphics[width=0.09\textwidth]{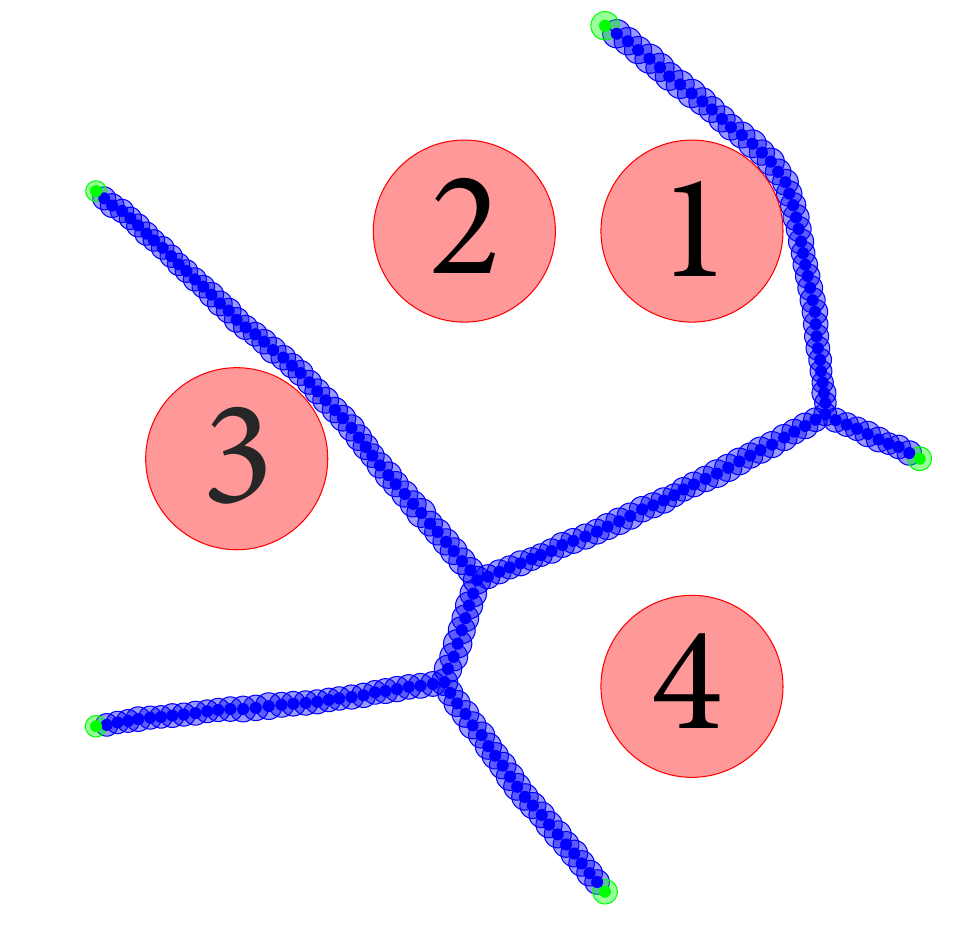}&\includegraphics[width=0.09\textwidth]{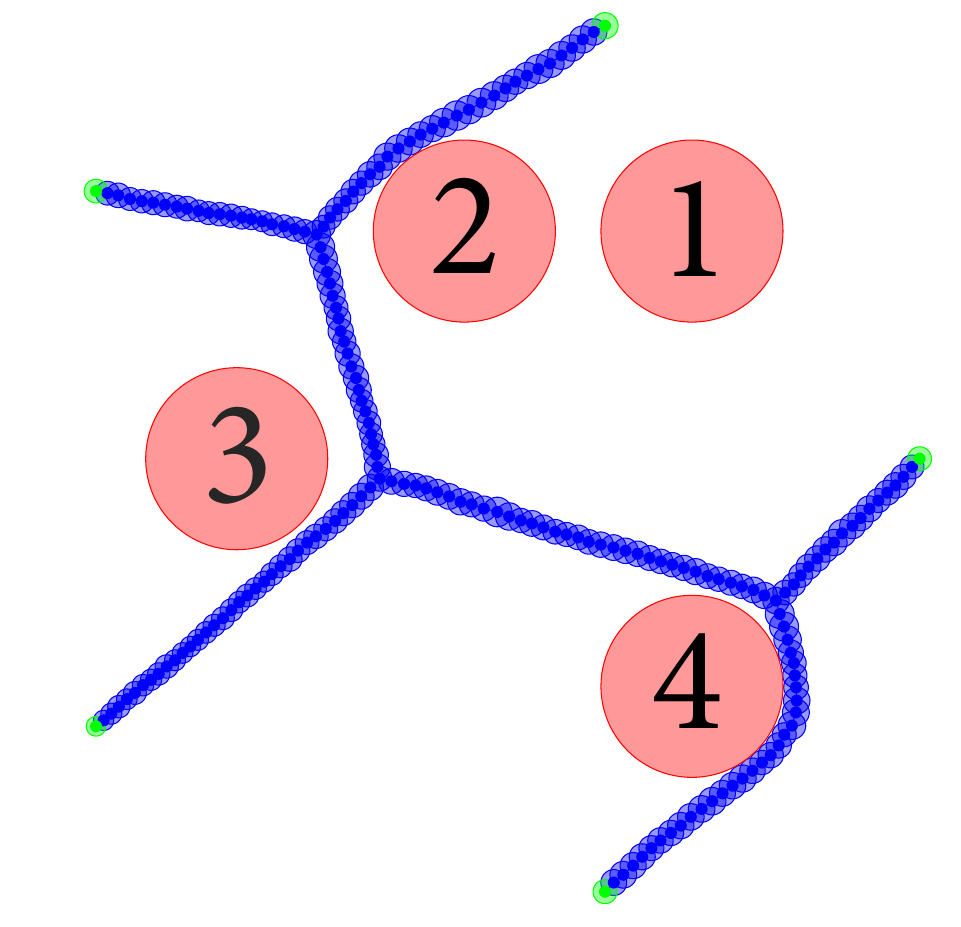}&\includegraphics[width=0.09\textwidth]{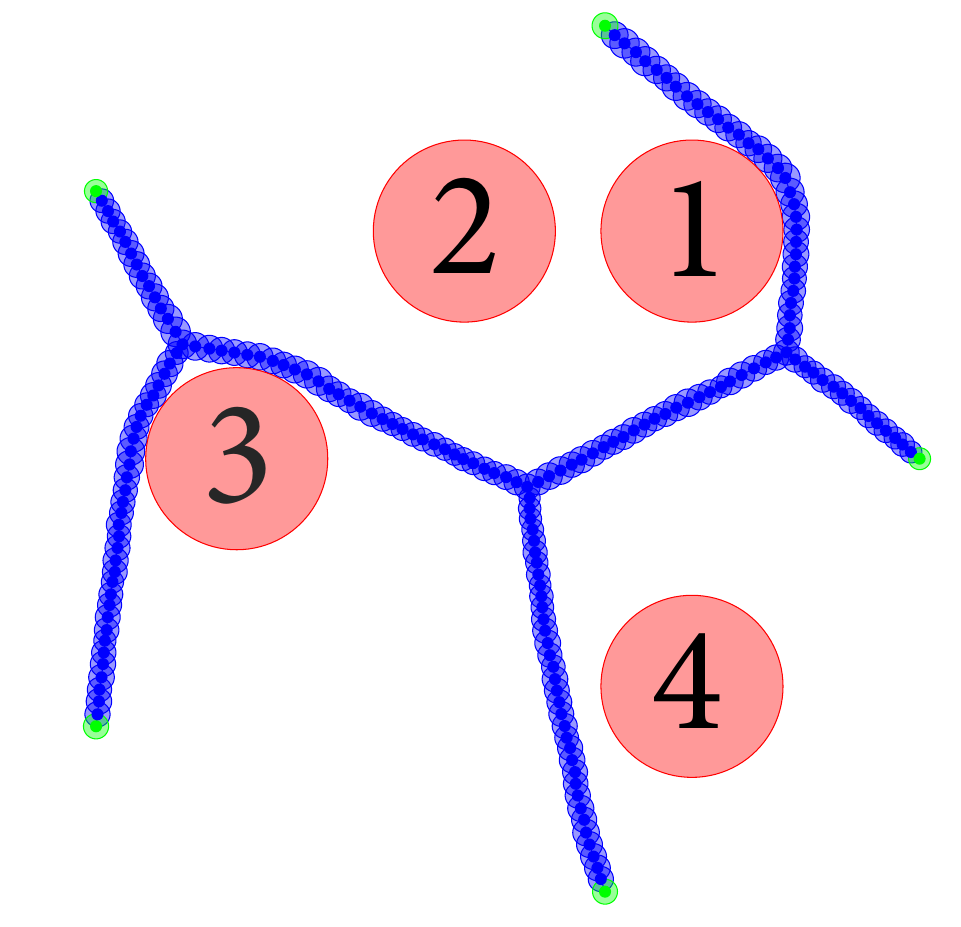}\\
               \boxed{$\{1;23;4\}$}&\boxed{\begin{overpic}[width=0.09\textwidth]{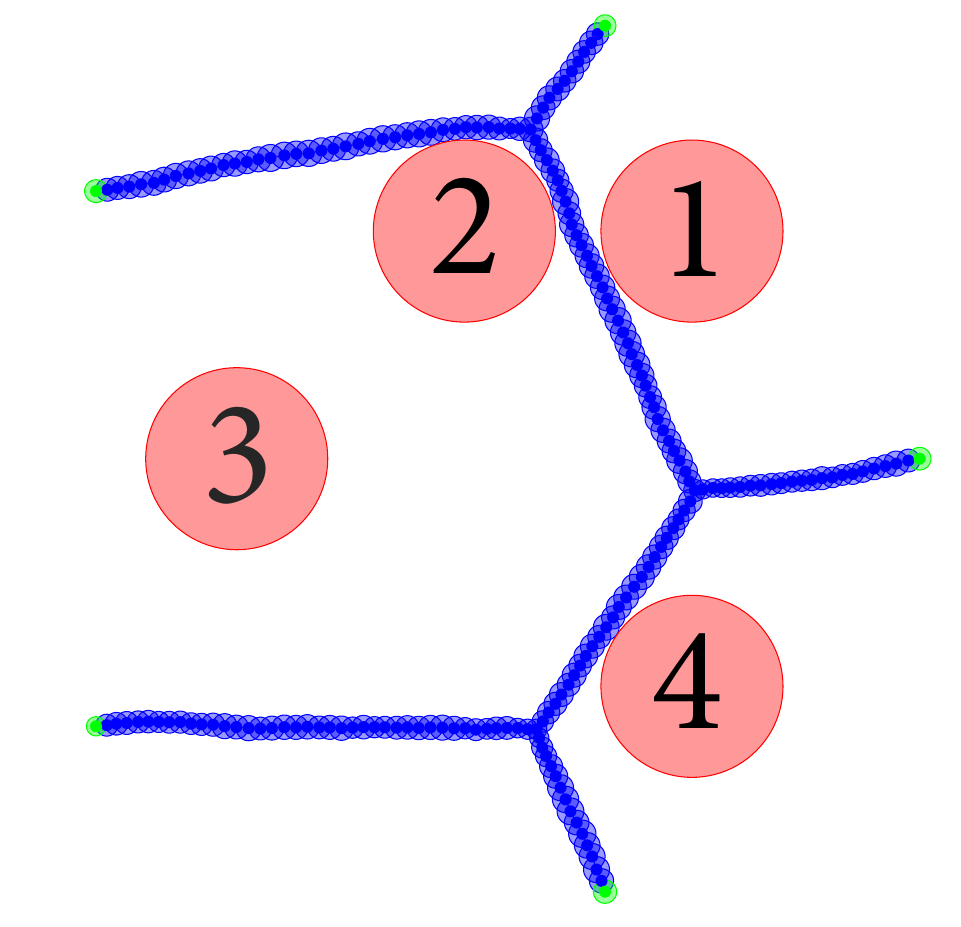}\put(0,83){\textcolor{black}{$2$\textsuperscript{nd}}}\end{overpic}}& 
                \includegraphics[width=0.09\textwidth]{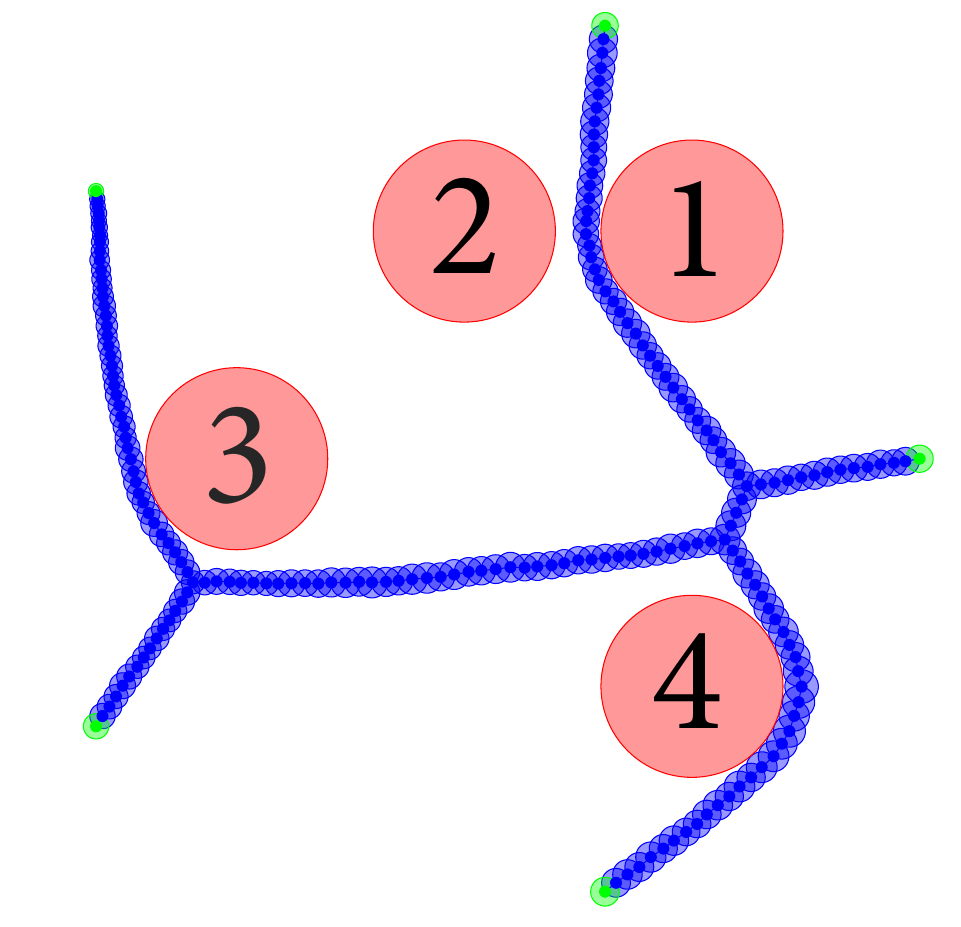}&
                \includegraphics[width=0.09\textwidth]{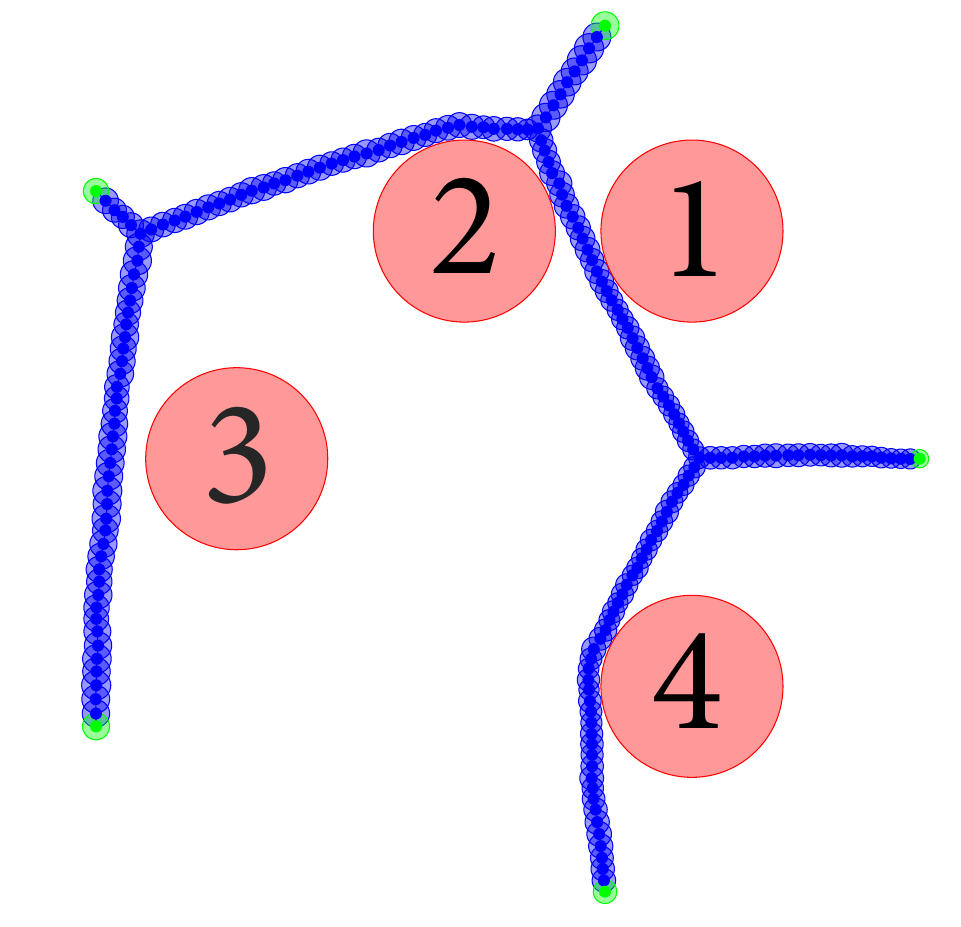}&
                \includegraphics[width=0.09\textwidth]{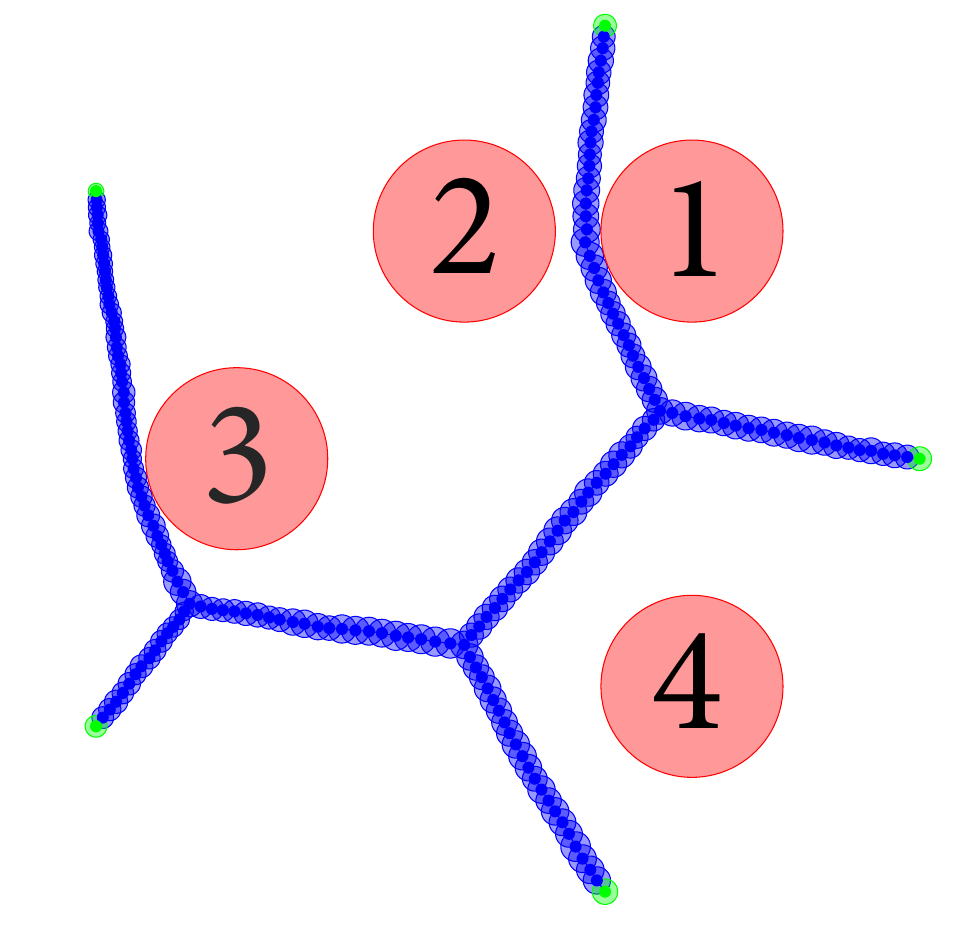}&
                \includegraphics[width=0.09\textwidth]{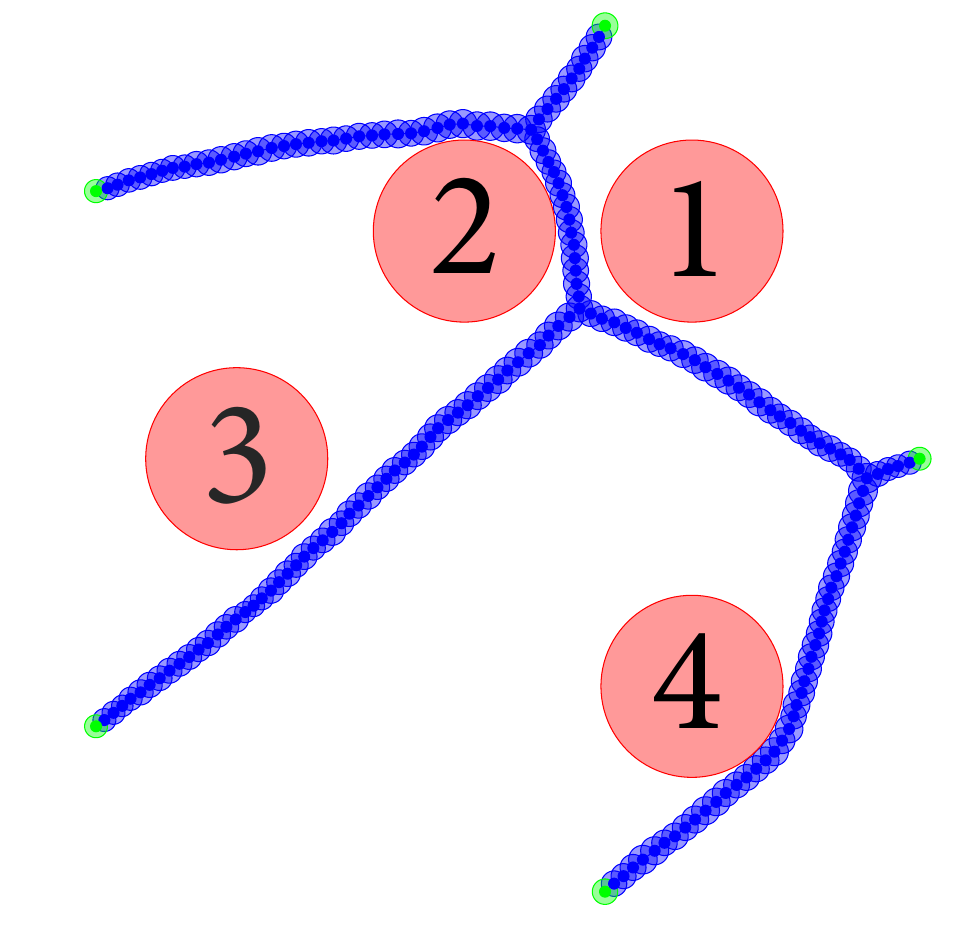}&&\boxed{$\{13;2;4\}$} &
                \includegraphics[width=0.09\textwidth]{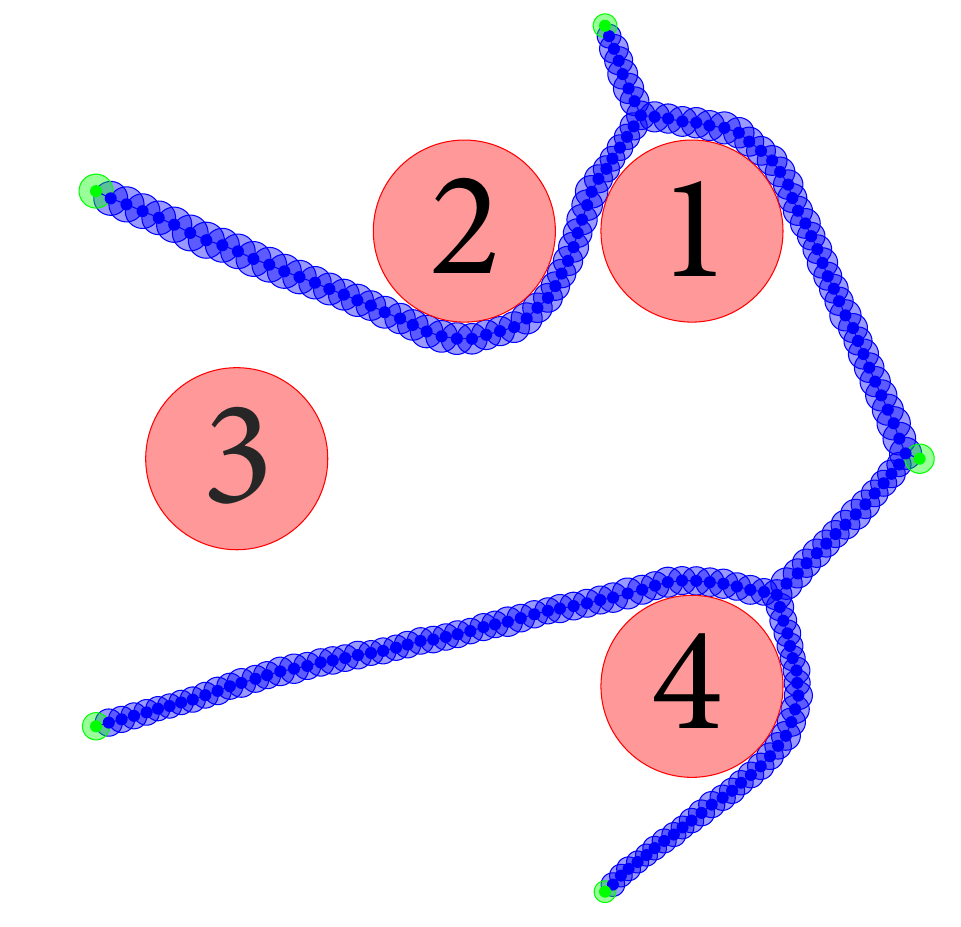}& \\
                
               \boxed{$\{1;2;34\}$}&\includegraphics[width=0.09\textwidth]{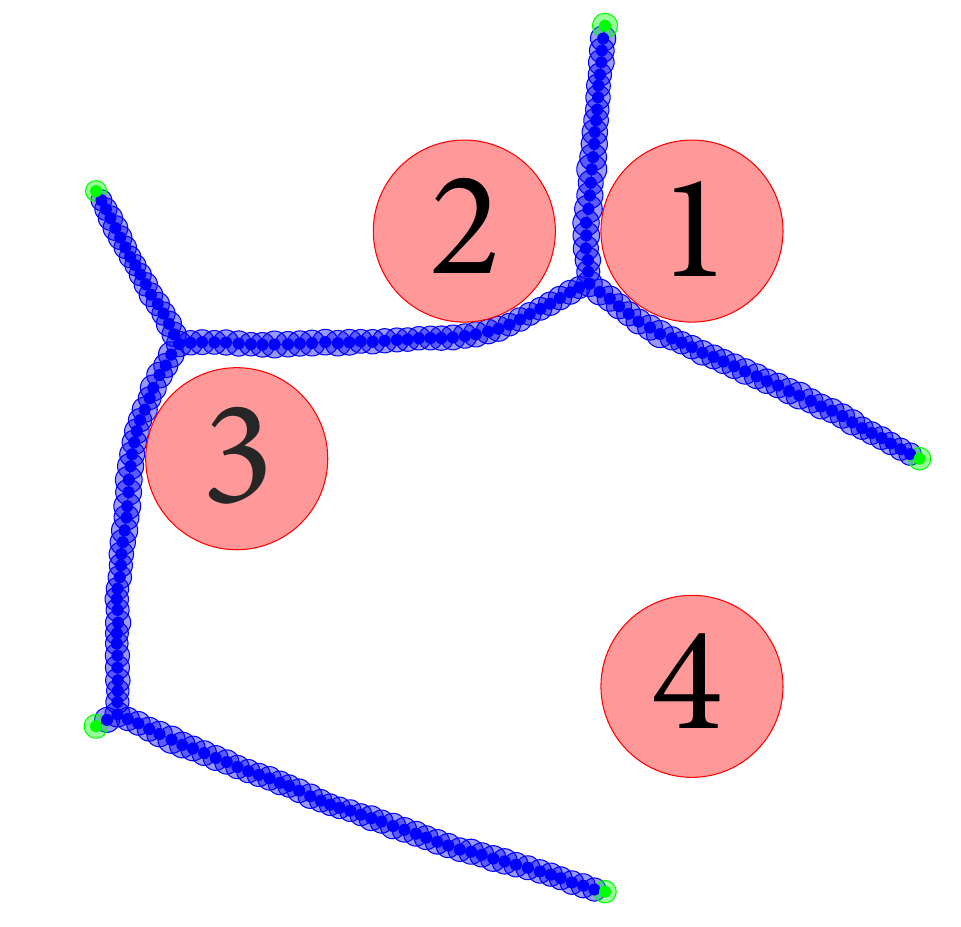}&\includegraphics[width=0.09\textwidth]{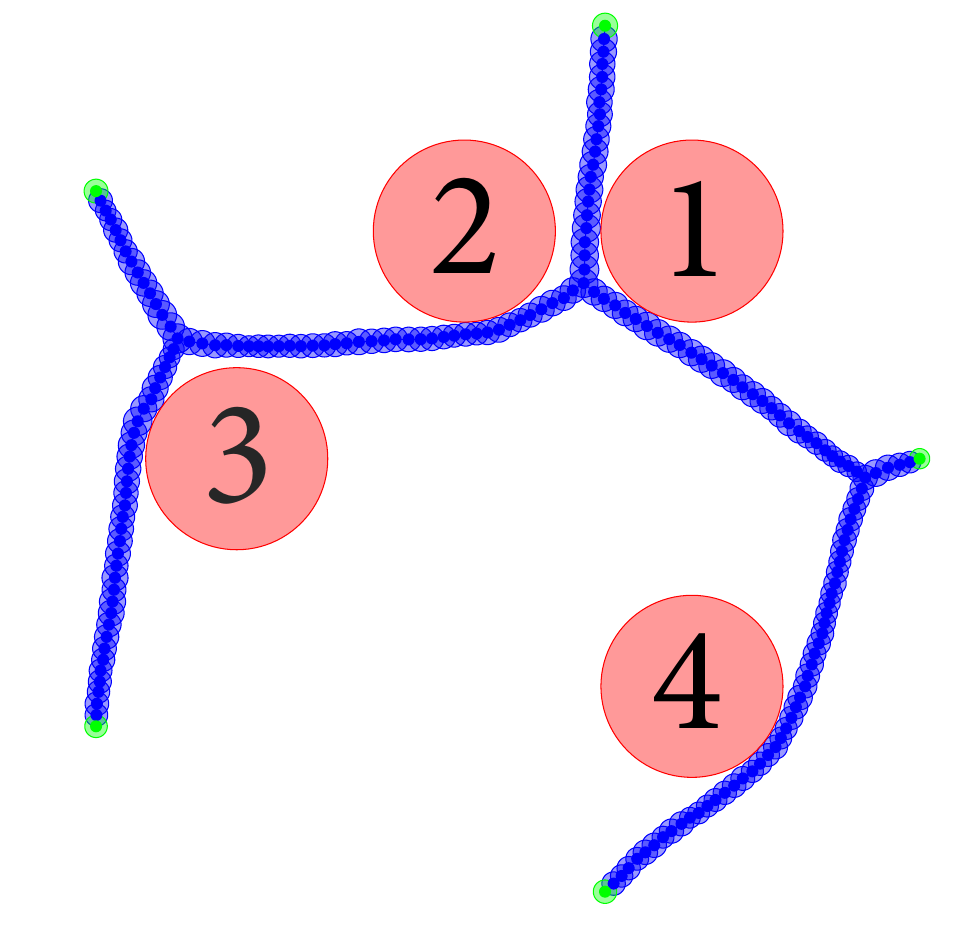}&
                \includegraphics[width=0.09\textwidth]{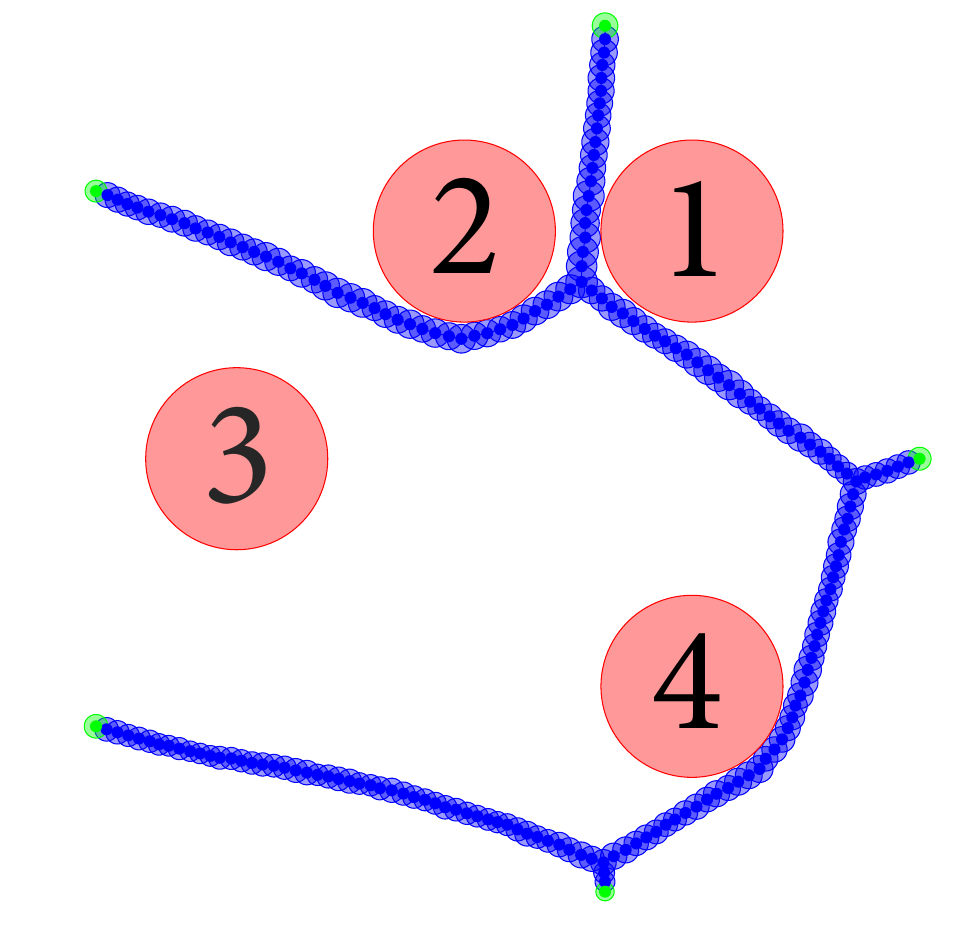}& &\color{red}{$\{13;24\}$} & \begin{overpic}[width=0.09\textwidth]{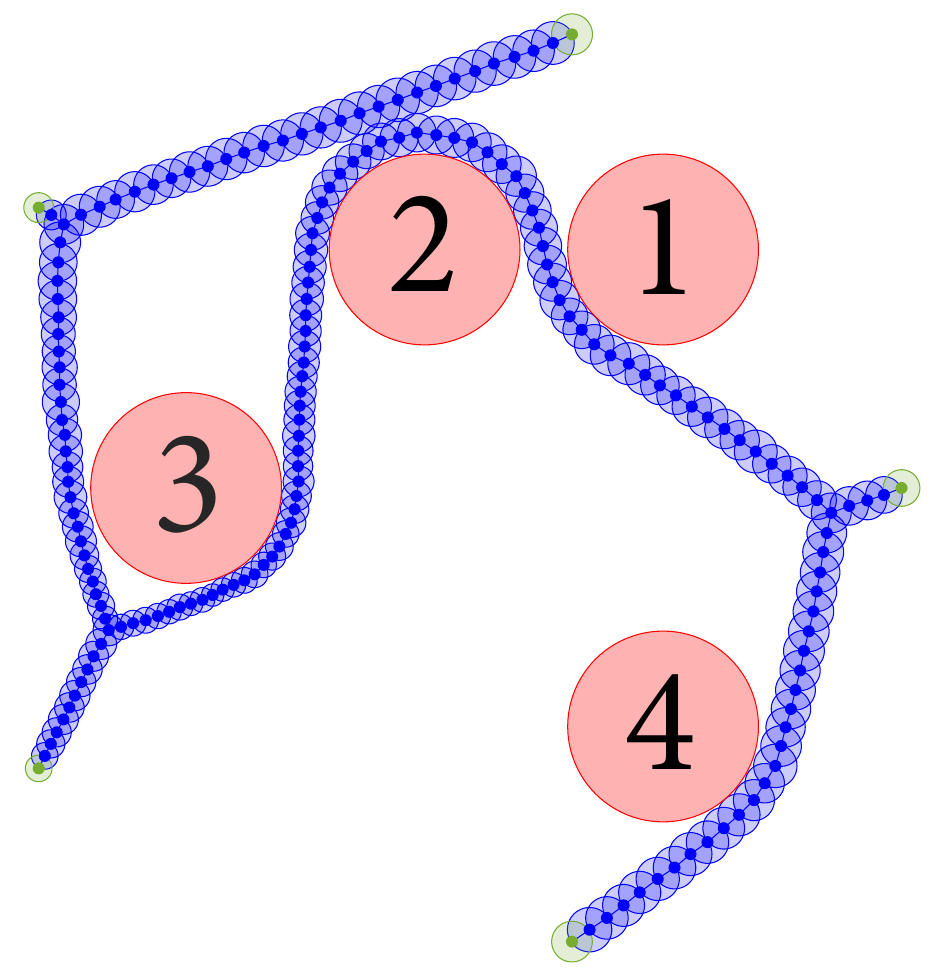}\put(90,50){\textcolor{black}{$\rightarrow$}}\end{overpic} &\begin{overpic}[width=0.09\textwidth]{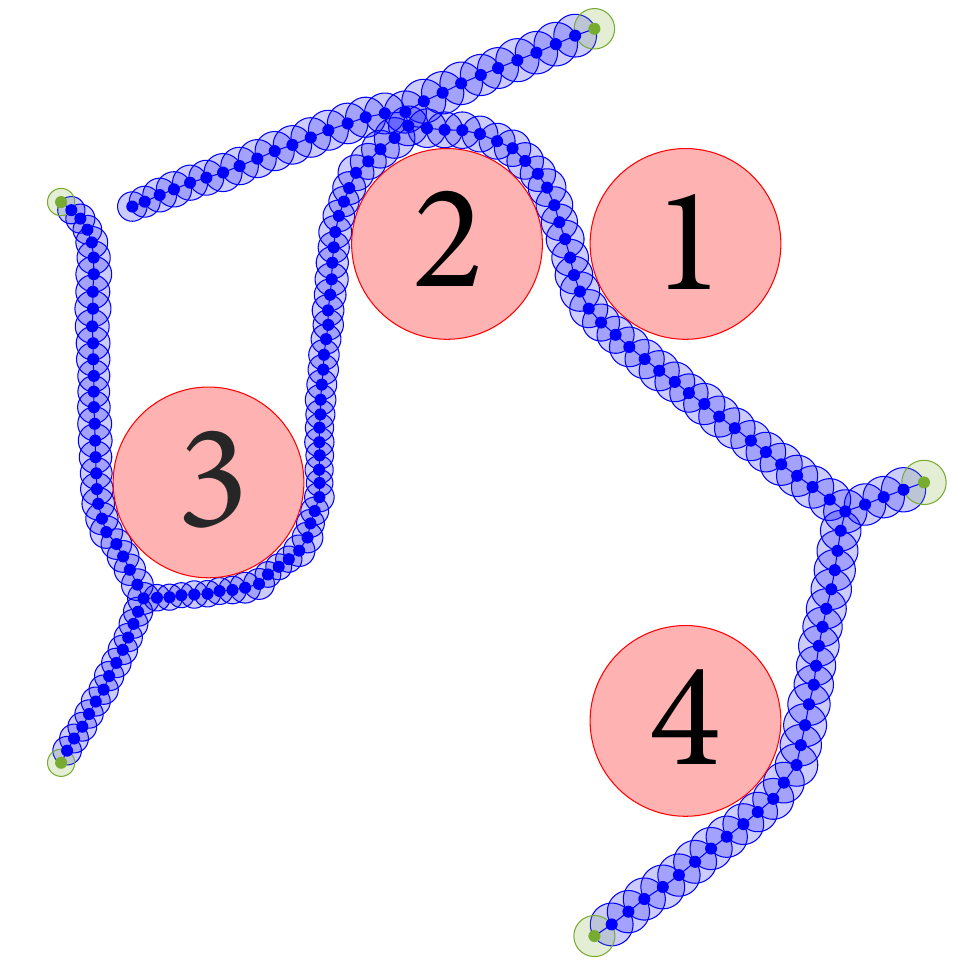}\put(90,50){\textcolor{black}{$\rightarrow$}}\end{overpic}  &\includegraphics[width=0.09\textwidth]{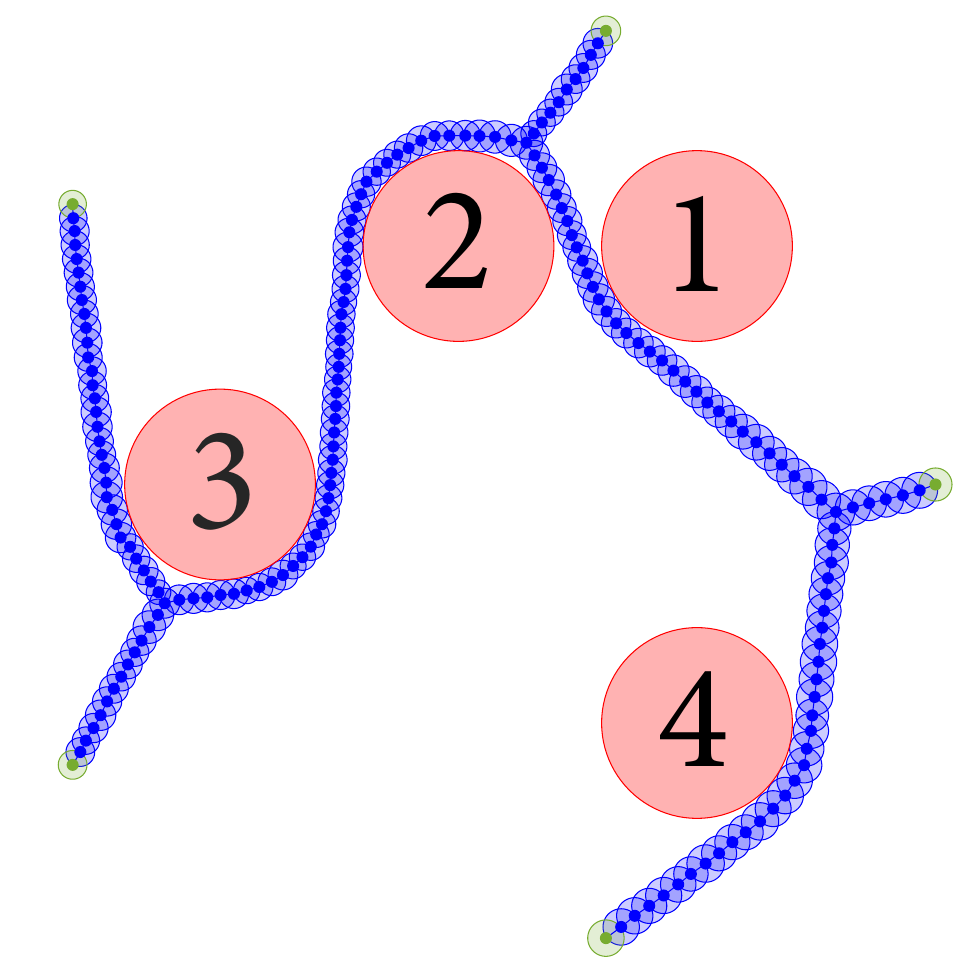}\\
                \boxed{$\{1;24;3\}$} & \includegraphics[width=0.09\textwidth]{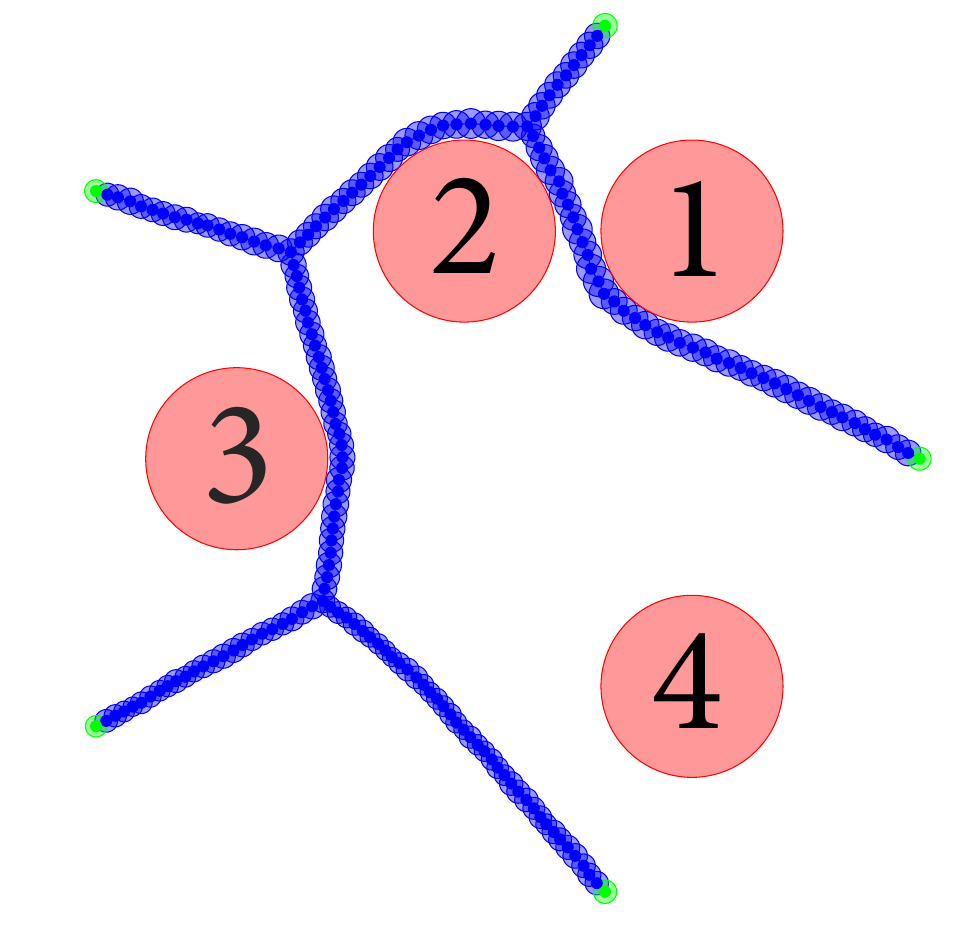}&
                \includegraphics[width=0.09\textwidth]{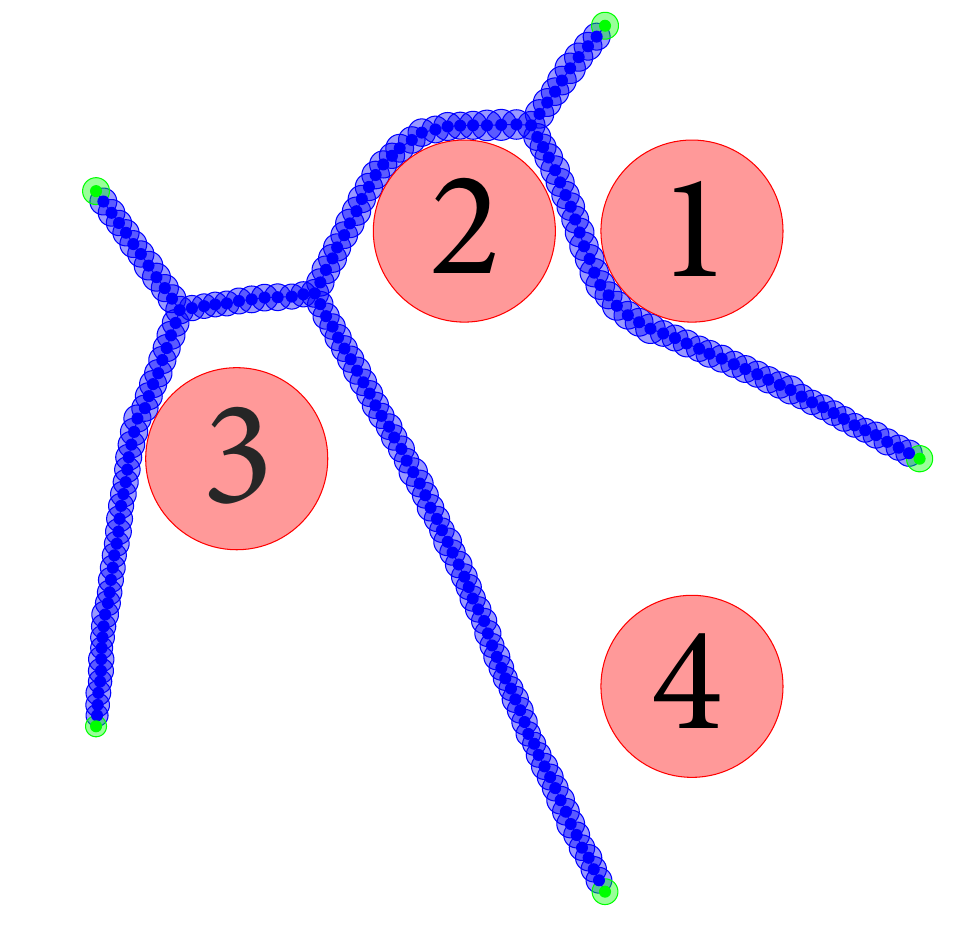} & &\boxed{$\{14;2;3\}$} & \includegraphics[width=0.09\textwidth]{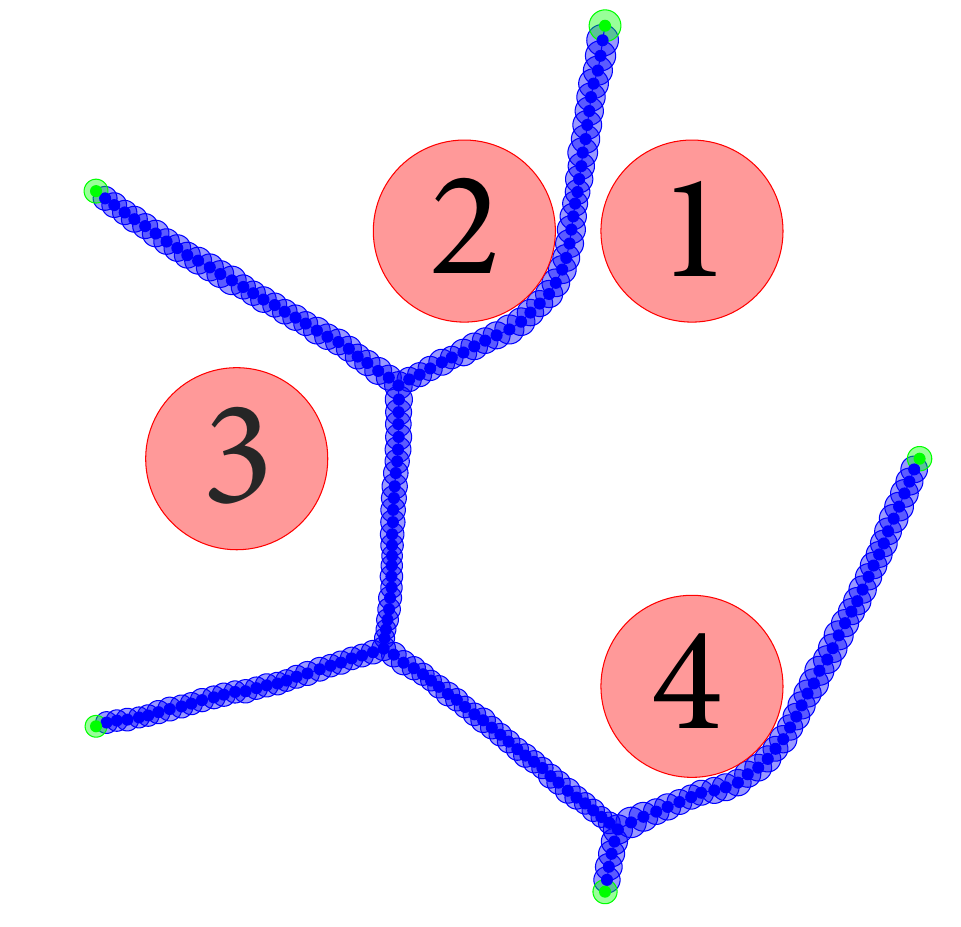}&\includegraphics[width=0.09\textwidth]{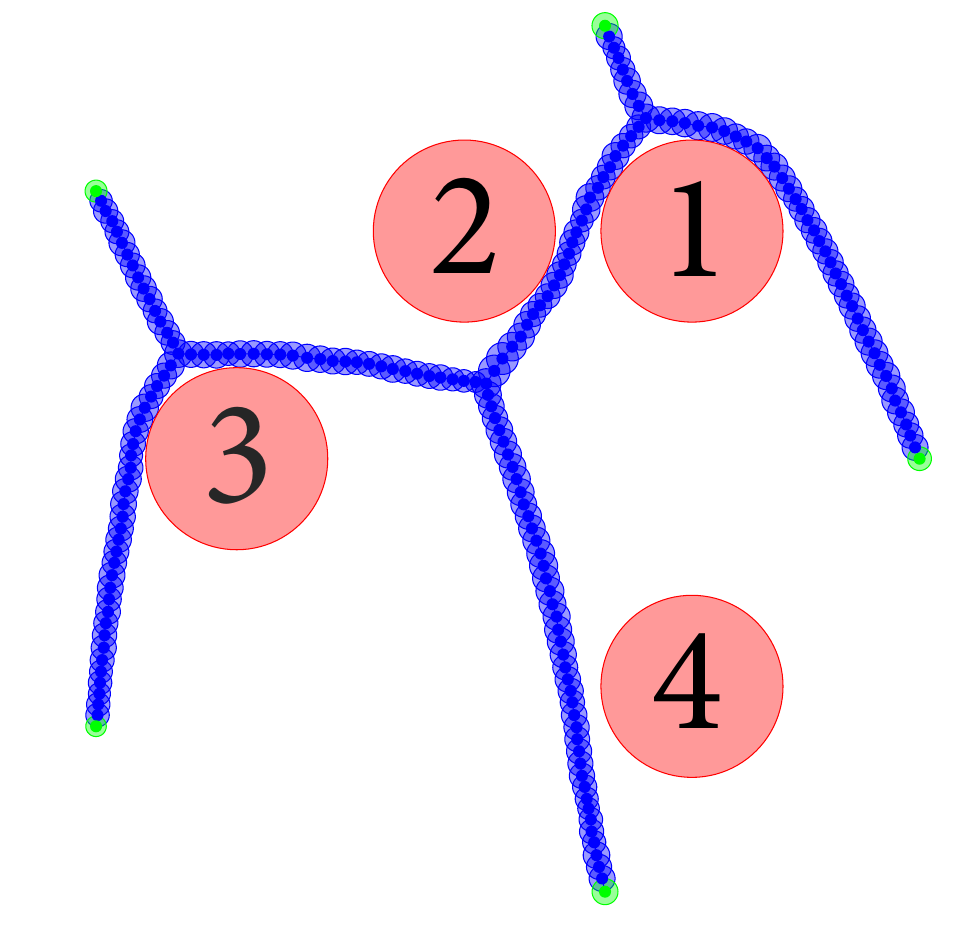}&\includegraphics[width=0.09\textwidth]{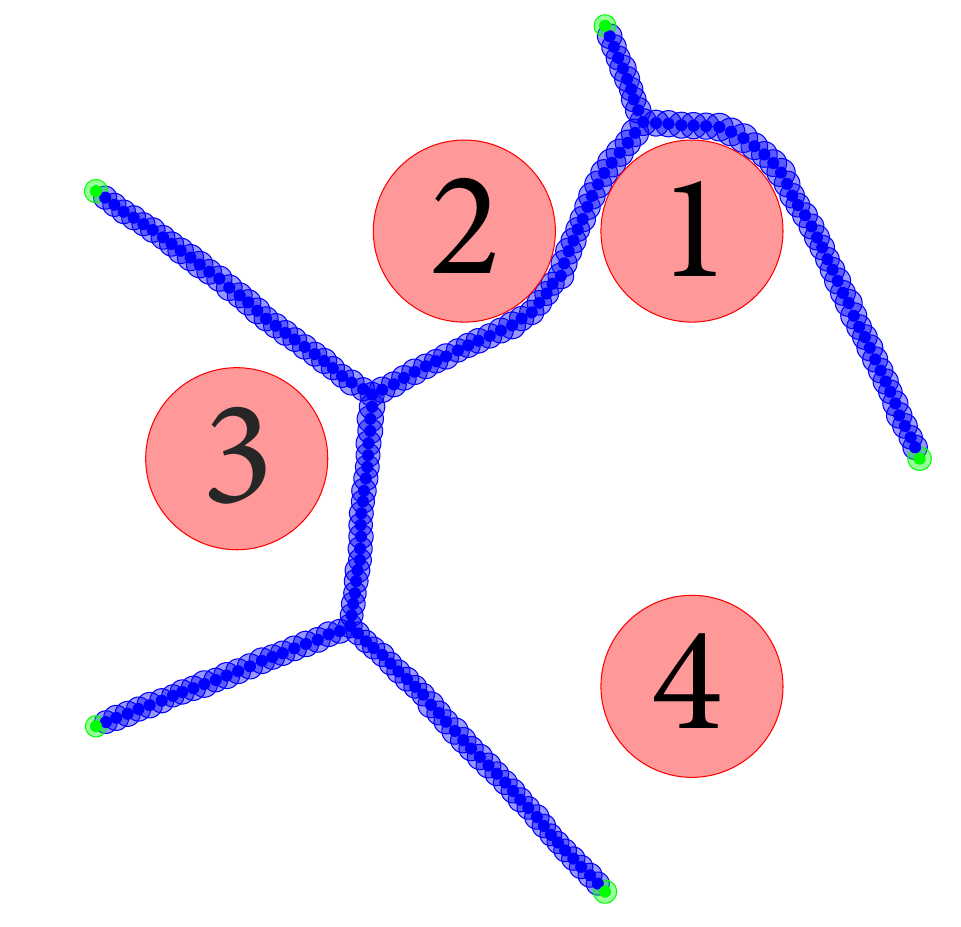}&\includegraphics[width=0.09\textwidth]{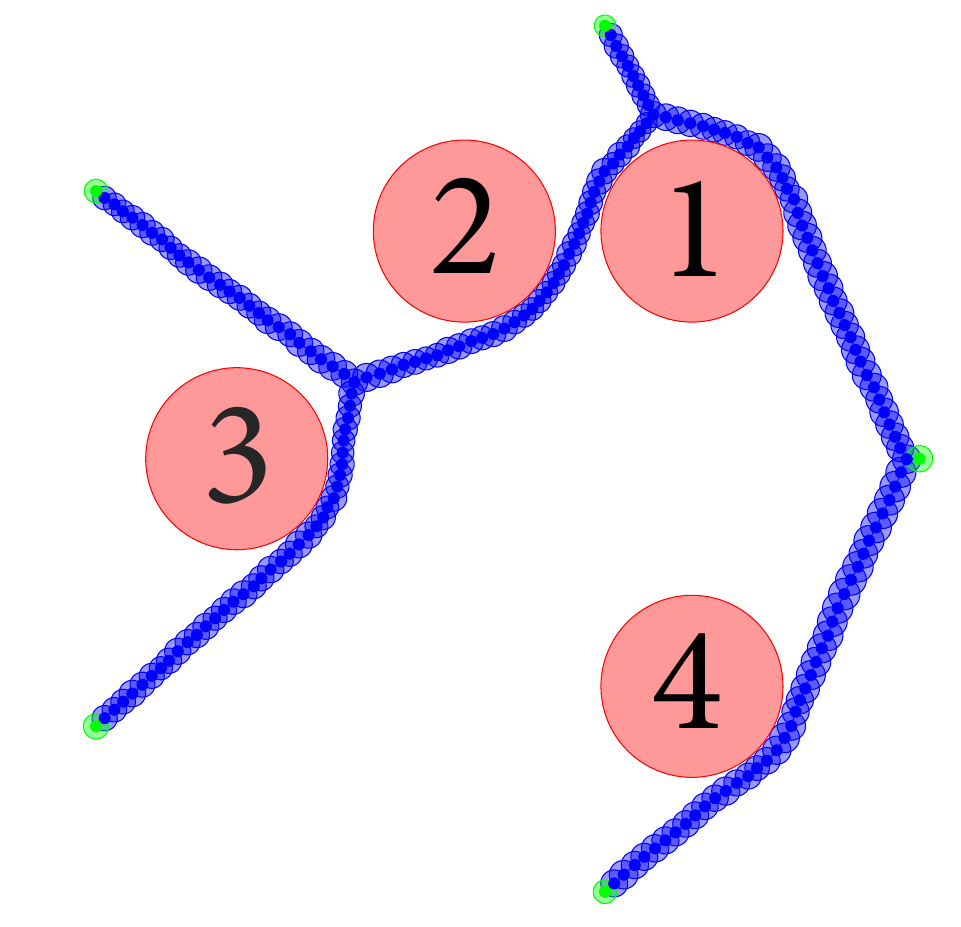} \\
                
                \boxed{$\{1;2;3;4\}$}&\boxed{\begin{overpic}[width=0.09\textwidth]{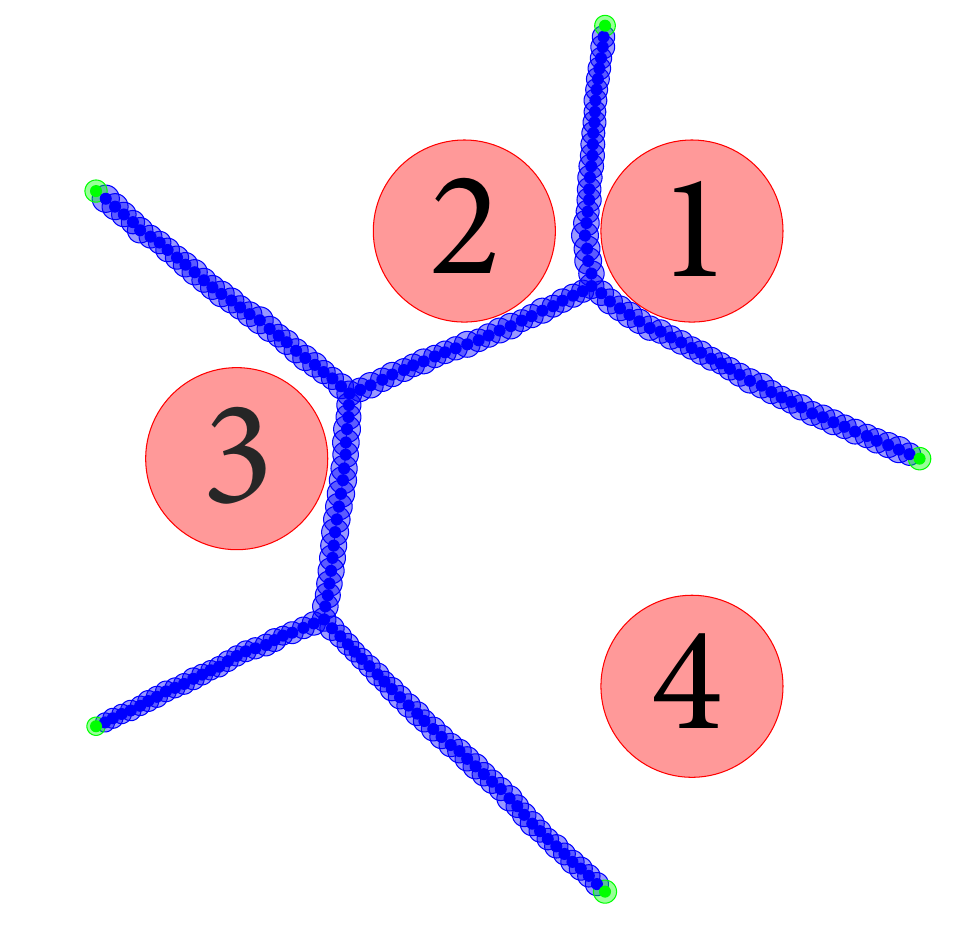}\put(0,83){\textcolor{black}{$3$\textsuperscript{rd}}}\end{overpic}} &\includegraphics[width=0.09\textwidth]{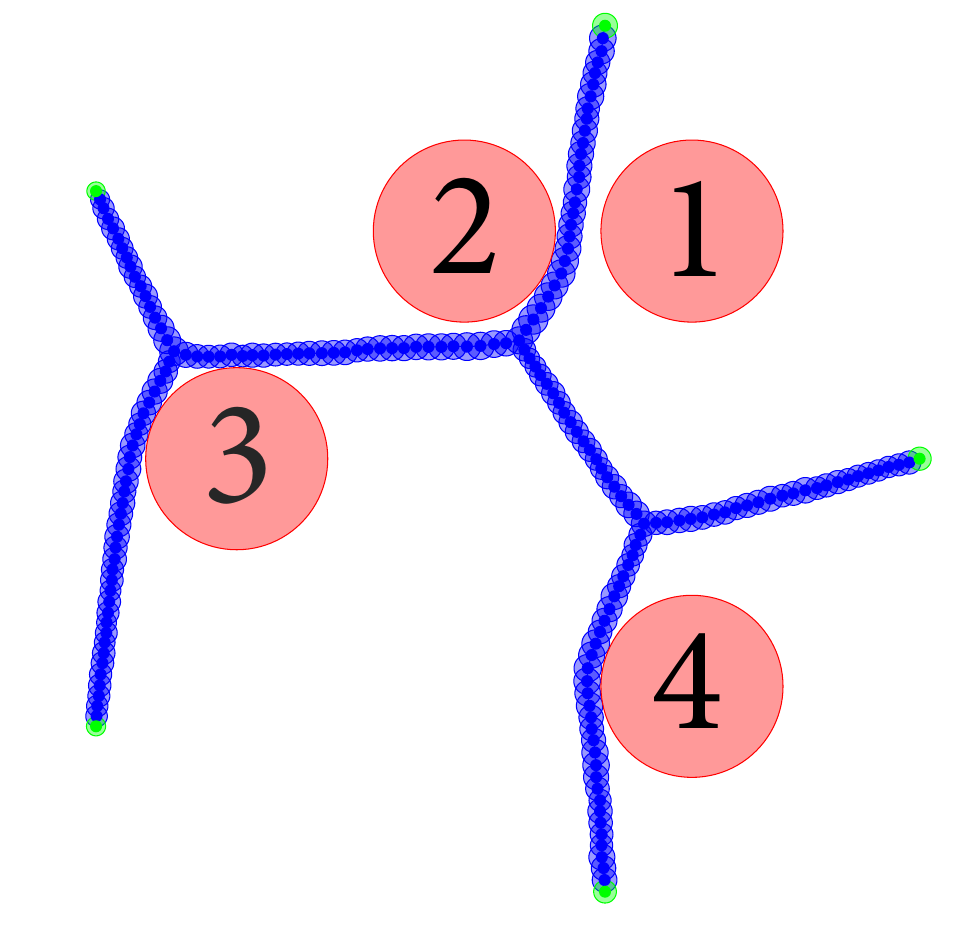}&\includegraphics[width=0.09\textwidth]{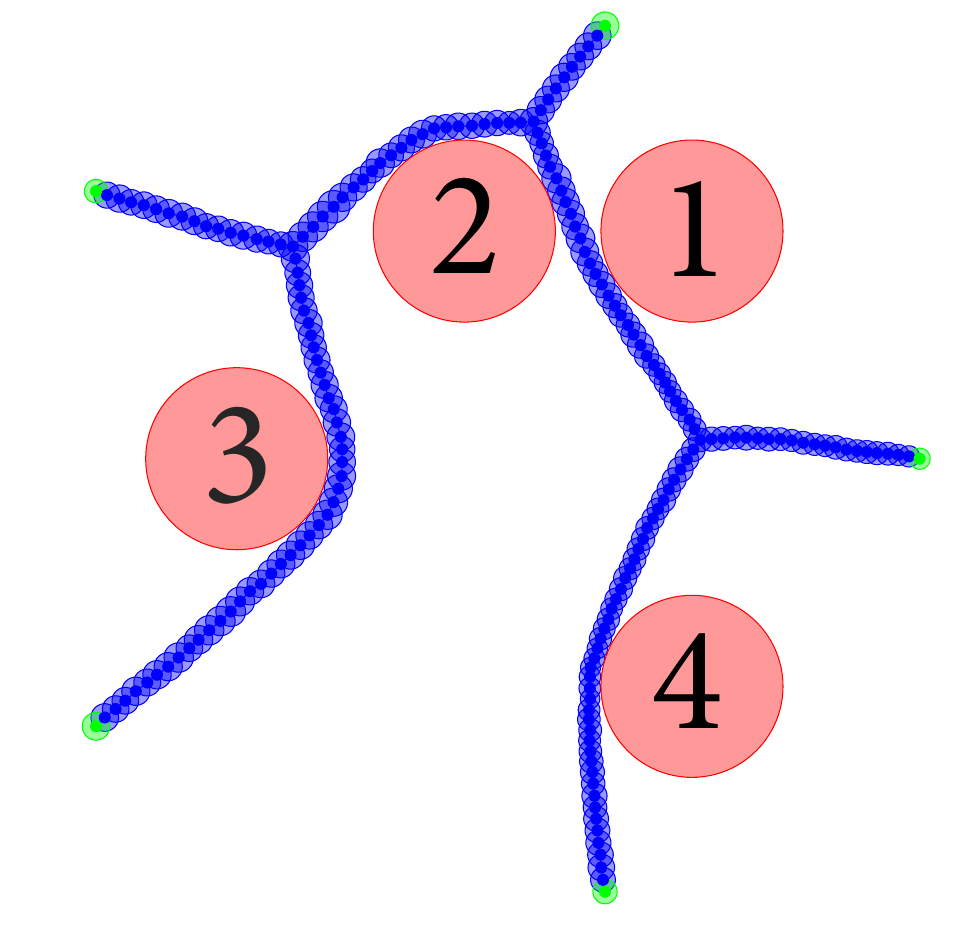}&\includegraphics[width=0.09\textwidth]{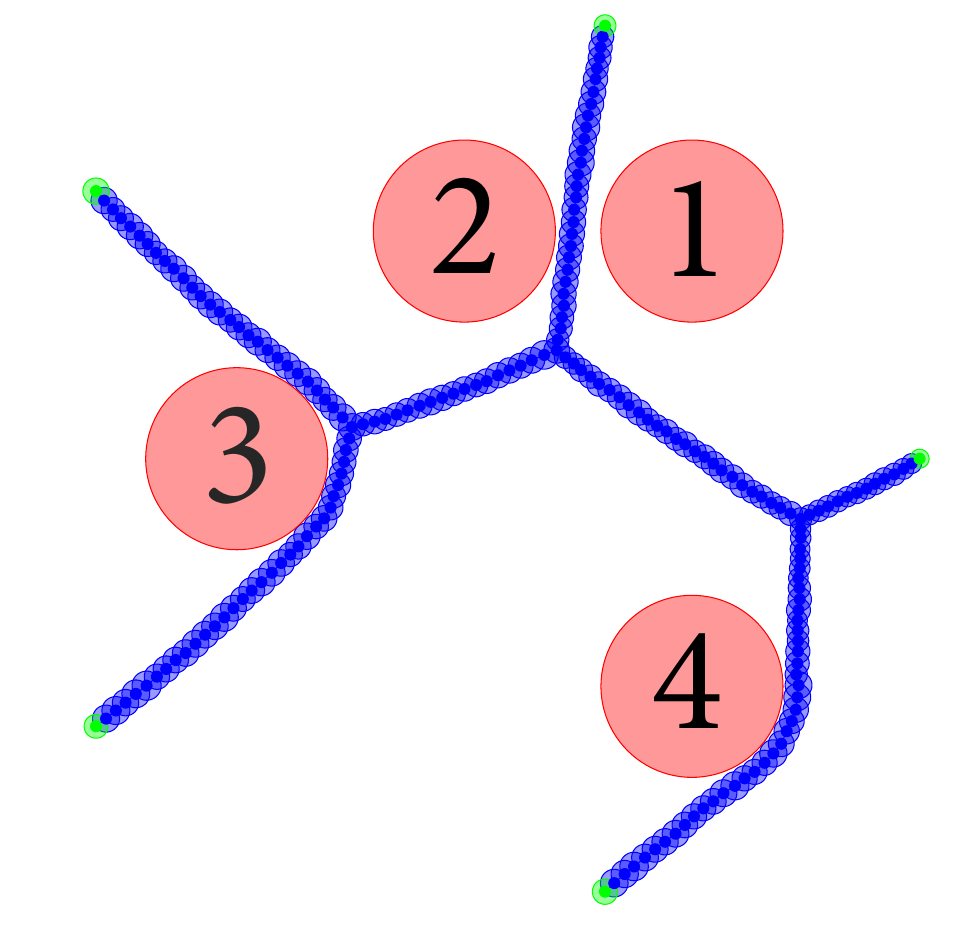}       
    \end{tabular}
    \caption{List of 59 different homotopy classes for $m=5$ terminals, $\phi=4$ obstacles and $n=180$ relays. The first line shows the five possible solutions of the relay placement problem without obstacles: the value of such solutions is considered as a lower bound for the problem with obstacles. 
    As described in Sec.~\ref{sec:homgen}, $m$ terminals define $m$ logical bins where the obstacles may be placed, so the problem can be stated as the placement of $\phi$ distinguishable objects in $m$ bins, which gives $m^\phi=625$ independent homotopies. Most of these homotopies involve longer networks or loops around the obstacles, so they have a low likelihood to be found by the STPG solver. Applying \textsc{HomGen}, \num{9560} trees were examined and classified into 141 independent classes. These classes were used as initial conditions for Algs.~\ref{alg:MSTPlusLeaf}--\ref{alg:nooverlap}, resulting in 59 independent homotopy classes. Of these, 53 were homotopies already found in the pre-scan process, while 6 were derived from the simplification of more complex homotopies, as described in Fig.~\ref{fig:2Tex}.
    For ease of visualization, the results are grouped by the partitions of the obstacle set they define, given by the Bell numbers (in this case $B_{4}=15$). All Bell partitions are found, except for $\{13;24\}$ that has been added manually: the distribution of obstacles forced the network to collapse to a different partition (\{1;24;3\}) under the effect of Algs.~\ref{alg:MSTPlusLeaf}--\ref{alg:nooverlap}. The three cheapest solutions, as well as the most expensive, have been highlighted in the table. 
\label{fig:homotopylist}}
\end{figure*}

\end{document}


\begin{document}

\title{A Sample Article Using IEEEtran.cls\\ for IEEE Journals and Transactions}

\author{IEEE Publication Technology,~\IEEEmembership{Staff,~IEEE,}
\thanks{This paper was produced by the IEEE Publication Technology Group. They are in Piscataway, NJ.}
\thanks{Manuscript received April 19, 2021; revised August 16, 2021.}}

\markboth{Journal of \LaTeX\ Class Files,~Vol.~14, No.~8, August~2021}%
{Shell \MakeLowercase{\textit{et al.}}: A Sample Article Using IEEEtran.cls for IEEE Journals}

\IEEEpubid{0000--0000/00\$00.00~\copyright~2021 IEEE}

\maketitle

\begin{abstract}
This document describes the most common article elements and how to use the IEEEtran class with \LaTeX \ to produce files that are suitable for submission to the IEEE.  IEEEtran can produce conference, journal, and technical note (correspondence) papers with a suitable choice of class options. 
\end{abstract}

\begin{IEEEkeywords}
Article submission, IEEE, IEEEtran, journal, \LaTeX, paper, template, typesetting.
\end{IEEEkeywords}

\section{Introduction}
\IEEEPARstart{T}{his} file is intended to serve as a ``sample article file''
for IEEE journal papers produced under \LaTeX\ using
IEEEtran.cls version 1.8b and later. The most common elements are covered in the simplified and updated instructions in ``New\_IEEEtran\_how-to.pdf''. For less common elements you can refer back to the original ``IEEEtran\_HOWTO.pdf''. It is assumed that the reader has a basic working knowledge of \LaTeX. Those who are new to \LaTeX \ are encouraged to read Tobias Oetiker's ``The Not So Short Introduction to \LaTeX ,'' available at: \url{http://tug.ctan.org/info/lshort/english/lshort.pdf} which provides an overview of working with \LaTeX.

\section{The Design, Intent, and \\ Limitations of the Templates}
The templates are intended to {\bf{approximate the final look and page length of the articles/papers}}. {\bf{They are NOT intended to be the final produced work that is displayed in print or on IEEEXplore\textsuperscript{\textregistered}}}. They will help to give the authors an approximation of the number of pages that will be in the final version. The structure of the \LaTeX\ files, as designed, enable easy conversion to XML for the composition systems used by the IEEE. The XML files are used to produce the final print/IEEEXplore pdf and then converted to HTML for IEEEXplore.

\section{Where to Get \LaTeX \ Help --- User Groups}
The following online groups are helpful to beginning and experienced \LaTeX\ users. A search through their archives can provide many answers to common questions.
\begin{list}{}{}
\item{\url{http://www.latex-community.org/}} 
\item{\url{https://tex.stackexchange.com/} }
\end{list}

\section{Other Resources}
See \cite{ref1,ref2,ref3,ref4,ref5} for resources on formatting math into text and additional help in working with \LaTeX .

\section{Text}
For some of the remainer of this sample we will use dummy text to fill out paragraphs rather than use live text that may violate a copyright.

Itam, que ipiti sum dem velit la sum et dionet quatibus apitet voloritet audam, qui aliciant voloreicid quaspe volorem ut maximusandit faccum conemporerum aut ellatur, nobis arcimus.
Fugit odi ut pliquia incitium latum que cusapere perit molupta eaquaeria quod ut optatem poreiur? Quiaerr ovitior suntiant litio bearciur?

Onseque sequaes rectur autate minullore nusae nestiberum, sum voluptatio. Et ratem sequiam quaspername nos rem repudandae volum consequis nos eium aut as molupta tectum ulparumquam ut maximillesti consequas quas inctia cum volectinusa porrum unt eius cusaest exeritatur? Nias es enist fugit pa vollum reium essusam nist et pa aceaqui quo elibusdandis deligendus que nullaci lloreri bla que sa coreriam explacc atiumquos simolorpore, non prehendunt lam que occum\cite{ref6} si aut aut maximus eliaeruntia dia sequiamenime natem sendae ipidemp orehend uciisi omnienetus most verum, ommolendi omnimus, est, veni aut ipsa volendelist mo conserum volores estisciis recessi nveles ut poressitatur sitiis ex endi diti volum dolupta aut aut odi as eatquo cullabo remquis toreptum et des accus dolende pores sequas dolores tinust quas expel moditae ne sum quiatis nis endipie nihilis etum fugiae audi dia quiasit quibus.
\IEEEpubidadjcol
Ibus el et quatemo luptatque doluptaest et pe volent rem ipidusa eribus utem venimolorae dera qui acea quam etur aceruptat.
Gias anis doluptaspic tem et aliquis alique inctiuntiur?

Sedigent, si aligend elibuscid ut et ium volo tem eictore pellore ritatus ut ut ullatus in con con pere nos ab ium di tem aliqui od magnit repta volectur suntio. Nam isquiante doluptis essit, ut eos suntionsecto debitiur sum ea ipitiis adipit, oditiore, a dolorerempos aut harum ius, atquat.

Rum rem ditinti sciendunti volupiciendi sequiae nonsect oreniatur, volores sition ressimil inus solut ea volum harumqui to see\eqref{deqn_ex1a} mint aut quat eos explis ad quodi debis deliqui aspel earcius.

\begin{equation}
\label{deqn_ex1a}
x = \sum_{i=0}^{n} 2{i} Q.
\end{equation}

Alis nime volorempera perferi sitio denim repudae pre ducilit atatet volecte ssimillorae dolore, ut pel ipsa nonsequiam in re nus maiost et que dolor sunt eturita tibusanis eatent a aut et dio blaudit reptibu scipitem liquia consequodi od unto ipsae. Et enitia vel et experferum quiat harum sa net faccae dolut voloria nem. Bus ut labo. Ita eum repraer rovitia samendit aut et volupta tecupti busant omni quiae porro que nossimodic temquis anto blacita conse nis am, que ereperum eumquam quaescil imenisci quae magnimos recus ilibeaque cum etum iliate prae parumquatemo blaceaquiam quundia dit apienditem rerit re eici quaes eos sinvers pelecabo. Namendignis as exerupit aut magnim ium illabor roratecte plic tem res apiscipsam et vernat untur a deliquaest que non cus eat ea dolupiducim fugiam volum hil ius dolo eaquis sitis aut landesto quo corerest et auditaquas ditae voloribus, qui optaspis exero cusa am, ut plibus.

\section{Some Common Elements}
\subsection{Sections and Subsections}
Enumeration of section headings is desirable, but not required. When numbered, please be consistent throughout the article, that is, all headings and all levels of section headings in the article should be enumerated. Primary headings are designated with Roman numerals, secondary with capital letters, tertiary with Arabic numbers; and quaternary with lowercase letters. Reference and Acknowledgment headings are unlike all other section headings in text. They are never enumerated. They are simply primary headings without labels, regardless of whether the other headings in the article are enumerated. 

\subsection{Citations to the Bibliography}
The coding for the citations is made with the \LaTeX\ $\backslash${\tt{cite}} command. 
This will display as: see \cite{ref1}.

For multiple citations code as follows: {\tt{$\backslash$cite\{ref1,ref2,ref3\}}}
 which will produce \cite{ref1,ref2,ref3}. For reference ranges that are not consecutive code as {\tt{$\backslash$cite\{ref1,ref2,ref3,ref9\}}} which will produce  \cite{ref1,ref2,ref3,ref9}

\subsection{Lists}
In this section, we will consider three types of lists: simple unnumbered, numbered, and bulleted. There have been many options added to IEEEtran to enhance the creation of lists. If your lists are more complex than those shown below, please refer to the original ``IEEEtran\_HOWTO.pdf'' for additional options.\\

\subsubsection*{\bf A plain  unnumbered list}
\begin{list}{}{}
\item{bare\_jrnl.tex}
\item{bare\_conf.tex}
\item{bare\_jrnl\_compsoc.tex}
\item{bare\_conf\_compsoc.tex}
\item{bare\_jrnl\_comsoc.tex}
\end{list}

\subsubsection*{\bf A simple numbered list}
\begin{enumerate}
\item{bare\_jrnl.tex}
\item{bare\_conf.tex}
\item{bare\_jrnl\_compsoc.tex}
\item{bare\_conf\_compsoc.tex}
\item{bare\_jrnl\_comsoc.tex}
\end{enumerate}

\subsubsection*{\bf A simple bulleted list}
\begin{itemize}
\item{bare\_jrnl.tex}
\item{bare\_conf.tex}
\item{bare\_jrnl\_compsoc.tex}
\item{bare\_conf\_compsoc.tex}
\item{bare\_jrnl\_comsoc.tex}
\end{itemize}

\subsection{Figures}
Fig. 1 is an example of a floating figure using the graphicx package.
 Note that $\backslash${\tt{label}} must occur AFTER (or within) $\backslash${\tt{caption}}.
 For figures, $\backslash${\tt{caption}} should occur after the $\backslash${\tt{includegraphics}}.

\begin{figure}[!t]
\centering
\includegraphics[width=2.5in]{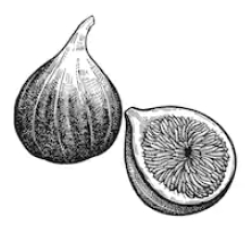}
\caption{Simulation results for the network.}
\label{fig_1}
\end{figure}

Fig. 2(a) and 2(b) is an example of a double column floating figure using two subfigures.
 (The subfig.sty package must be loaded for this to work.)
 The subfigure $\backslash${\tt{label}} commands are set within each subfloat command,
 and the $\backslash${\tt{label}} for the overall figure must come after $\backslash${\tt{caption}}.
 $\backslash${\tt{hfil}} is used as a separator to get equal spacing.
 The combined width of all the parts of the figure should do not exceed the text width or a line break will occur.
\begin{figure*}[!t]
\centering
\subfloat[]{\includegraphics[width=2.5in]{fig1}%
\label{fig_first_case}}
\hfil
\subfloat[]{\includegraphics[width=2.5in]{fig1}%
\label{fig_second_case}}
\caption{Dae. Ad quatur autat ut porepel itemoles dolor autem fuga. Bus quia con nessunti as remo di quatus non perum que nimus. (a) Case I. (b) Case II.}
\label{fig_sim}
\end{figure*}

Note that often IEEE papers with multi-part figures do not place the labels within the image itself (using the optional argument to $\backslash${\tt{subfloat}}[]), but instead will
 reference/describe all of them (a), (b), etc., within the main caption.
 Be aware that for subfig.sty to generate the (a), (b), etc., subfigure
 labels, the optional argument to $\backslash${\tt{subfloat}} must be present. If a
 subcaption is not desired, leave its contents blank,
 e.g.,$\backslash${\tt{subfloat}}[].

\section{Tables}
Note that, for IEEE-style tables, the
 $\backslash${\tt{caption}} command should come BEFORE the table. Table captions use title case. Articles (a, an, the), coordinating conjunctions (and, but, for, or, nor), and most short prepositions are lowercase unless they are the first or last word. Table text will default to $\backslash${\tt{footnotesize}} as
 the IEEE normally uses this smaller font for tables.
 The $\backslash${\tt{label}} must come after $\backslash${\tt{caption}} as always.
 
\begin{table}[!t]
\caption{An Example of a Table\label{tab:table1}}
\centering
\begin{tabular}{|c||c|}
\hline
One & Two\\
\hline
Three & Four\\
\hline
\end{tabular}
\end{table}

\section{Algorithms}
Algorithms should be numbered and include a short title. They are set off from the text with rules above and below the title and after the last line.

\begin{algorithm}[H]
\caption{Weighted Tanimoto ELM.}\label{alg:alg1}
\begin{algorithmic}
\STATE 
\STATE {\textsc{TRAIN}}$(\mathbf{X} \mathbf{T})$
\STATE \hspace{0.5cm}$ \textbf{select randomly } W \subset \mathbf{X}  $
\STATE \hspace{0.5cm}$ N_\mathbf{t} \gets | \{ i : \mathbf{t}_i = \mathbf{t} \} | $ \textbf{ for } $ \mathbf{t}= -1,+1 $
\STATE \hspace{0.5cm}$ B_i \gets \sqrt{ \textsc{max}(N_{-1},N_{+1}) / N_{\mathbf{t}_i} } $ \textbf{ for } $ i = 1,...,N $
\STATE \hspace{0.5cm}$ \hat{\mathbf{H}} \gets  B \cdot (\mathbf{X}^T\textbf{W})/( \mathbb{1}\mathbf{X} + \mathbb{1}\textbf{W} - \mathbf{X}^T\textbf{W} ) $
\STATE \hspace{0.5cm}$ \beta \gets \left ( I/C + \hat{\mathbf{H}}^T\hat{\mathbf{H}} \right )^{-1}(\hat{\mathbf{H}}^T B\cdot \mathbf{T})  $
\STATE \hspace{0.5cm}\textbf{return}  $\textbf{W},  \beta $
\STATE 
\STATE {\textsc{PREDICT}}$(\mathbf{X} )$
\STATE \hspace{0.5cm}$ \mathbf{H} \gets  (\mathbf{X}^T\textbf{W} )/( \mathbb{1}\mathbf{X}  + \mathbb{1}\textbf{W}- \mathbf{X}^T\textbf{W}  ) $
\STATE \hspace{0.5cm}\textbf{return}  $\textsc{sign}( \mathbf{H} \beta )$
\end{algorithmic}
\label{alg1}
\end{algorithm}

Que sunt eum lam eos si dic to estist, culluptium quid qui nestrum nobis reiumquiatur minimus minctem. Ro moluptat fuga. Itatquiam ut laborpo rersped exceres vollandi repudaerem. Ulparci sunt, qui doluptaquis sumquia ndestiu sapient iorepella sunti veribus. Ro moluptat fuga. Itatquiam ut laborpo rersped exceres vollandi repudaerem. 
\section{Mathematical Typography \\ and Why It Matters}

Typographical conventions for mathematical formulas have been developed to {\bf provide uniformity and clarity of presentation across mathematical texts}. This enables the readers of those texts to both understand the author's ideas and to grasp new concepts quickly. While software such as \LaTeX \ and MathType\textsuperscript{\textregistered} can produce aesthetically pleasing math when used properly, it is also very easy to misuse the software, potentially resulting in incorrect math display.

IEEE aims to provide authors with the proper guidance on mathematical typesetting style and assist them in writing the best possible article. As such, IEEE has assembled a set of examples of good and bad mathematical typesetting \cite{ref1,ref2,ref3,ref4,ref5}. 

Further examples can be found at \url{http://journals.ieeeauthorcenter.ieee.org/wp-content/uploads/sites/7/IEEE-Math-Typesetting-Guide-for-LaTeX-Users.pdf}

\subsection{Display Equations}
The simple display equation example shown below uses the ``equation'' environment. To number the equations, use the $\backslash${\tt{label}} macro to create an identifier for the equation. LaTeX will automatically number the equation for you.
\begin{equation}
\label{deqn_ex1}
x = \sum_{i=0}^{n} 2{i} Q.
\end{equation}

\noindent is coded as follows:
\begin{verbatim}
\begin{equation}
\label{deqn_ex1}
x = \sum_{i=0}^{n} 2{i} Q.
\end{equation}
\end{verbatim}

To reference this equation in the text use the $\backslash${\tt{ref}} macro. 
Please see (\ref{deqn_ex1})\\
\noindent is coded as follows:
\begin{verbatim}
Please see (\ref{deqn_ex1})\end{verbatim}

\subsection{Equation Numbering}
{\bf{Consecutive Numbering:}} Equations within an article are numbered consecutively from the beginning of the
article to the end, i.e., (1), (2), (3), (4), (5), etc. Do not use roman numerals or section numbers for equation numbering.

\noindent {\bf{Appendix Equations:}} The continuation of consecutively numbered equations is best in the Appendix, but numbering
 as (A1), (A2), etc., is permissible.\\

\noindent {\bf{Hyphens and Periods}}: Hyphens and periods should not be used in equation numbers, i.e., use (1a) rather than
(1-a) and (2a) rather than (2.a) for subequations. This should be consistent throughout the article.

\subsection{Multi-Line Equations and Alignment}
Here we show several examples of multi-line equations and proper alignments.

\noindent {\bf{A single equation that must break over multiple lines due to length with no specific alignment.}}
\begin{multline}
\text{The first line of this example}\\
\text{The second line of this example}\\
\text{The third line of this example}
\end{multline}

\noindent is coded as:
\begin{verbatim}
\begin{multline}
\text{The first line of this example}\\
\text{The second line of this example}\\
\text{The third line of this example}
\end{multline}
\end{verbatim}

\noindent {\bf{A single equation with multiple lines aligned at the = signs}}
\begin{align}
a &= c+d \\
b &= e+f
\end{align}
\noindent is coded as:
\begin{verbatim}
\begin{align}
a &= c+d \\
b &= e+f
\end{align}
\end{verbatim}

The {\tt{align}} environment can align on multiple  points as shown in the following example:
\begin{align}
x &= y & X & =Y & a &=bc\\
x' &= y' & X' &=Y' &a' &=bz
\end{align}
\noindent is coded as:
\begin{verbatim}
\begin{align}
x &= y & X & =Y & a &=bc\\
x' &= y' & X' &=Y' &a' &=bz
\end{align}
\end{verbatim}

\subsection{Subnumbering}
The amsmath package provides a {\tt{subequations}} environment to facilitate subnumbering. An example:

\begin{subequations}\label{eq:2}
\begin{align}
f&=g \label{eq:2A}\\
f' &=g' \label{eq:2B}\\
\mathcal{L}f &= \mathcal{L}g \label{eq:2c}
\end{align}
\end{subequations}

\noindent is coded as:
\begin{verbatim}
\begin{subequations}\label{eq:2}
\begin{align}
f&=g \label{eq:2A}\\
f' &=g' \label{eq:2B}\\
\mathcal{L}f &= \mathcal{L}g \label{eq:2c}
\end{align}
\end{subequations}

\end{verbatim}

\subsection{Matrices}
There are several useful matrix environments that can save you some keystrokes. See the example coding below and the output.

\noindent {\bf{A simple matrix:}}
\begin{equation}
\begin{matrix}  0 &  1 \\ 
1 &  0 \end{matrix}
\end{equation}
is coded as:
\begin{verbatim}
\begin{equation}
\begin{matrix}  0 &  1 \\ 
1 &  0 \end{matrix}
\end{equation}
\end{verbatim}

\noindent {\bf{A matrix with parenthesis}}
\begin{equation}
\begin{pmatrix} 0 & -i \\
 i &  0 \end{pmatrix}
\end{equation}
is coded as:
\begin{verbatim}
\begin{equation}
\begin{pmatrix} 0 & -i \\
 i &  0 \end{pmatrix}
\end{equation}
\end{verbatim}

\noindent {\bf{A matrix with square brackets}}
\begin{equation}
\begin{bmatrix} 0 & -1 \\ 
1 &  0 \end{bmatrix}
\end{equation}
is coded as:
\begin{verbatim}
\begin{equation}
\begin{bmatrix} 0 & -1 \\ 
1 &  0 \end{bmatrix}
\end{equation}
\end{verbatim}

\noindent {\bf{A matrix with curly braces}}
\begin{equation}
\begin{Bmatrix} 1 &  0 \\ 
0 & -1 \end{Bmatrix}
\end{equation}
is coded as:
\begin{verbatim}
\begin{equation}
\begin{Bmatrix} 1 &  0 \\ 
0 & -1 \end{Bmatrix}
\end{equation}\end{verbatim}

\noindent {\bf{A matrix with single verticals}}
\begin{equation}
\begin{vmatrix} a &  b \\ 
c &  d \end{vmatrix}
\end{equation}
is coded as:
\begin{verbatim}
\begin{equation}
\begin{vmatrix} a &  b \\ 
c &  d \end{vmatrix}
\end{equation}\end{verbatim}

\noindent {\bf{A matrix with double verticals}}
\begin{equation}
\begin{Vmatrix} i &  0 \\ 
0 & -i \end{Vmatrix}
\end{equation}
is coded as:
\begin{verbatim}
\begin{equation}
\begin{Vmatrix} i &  0 \\ 
0 & -i \end{Vmatrix}
\end{equation}\end{verbatim}

\subsection{Arrays}
The {\tt{array}} environment allows you some options for matrix-like equations. You will have to manually key the fences, but there are other options for alignment of the columns and for setting horizontal and vertical rules. The argument to {\tt{array}} controls alignment and placement of vertical rules.

A simple array
\begin{equation}
\left(
\begin{array}{cccc}
a+b+c & uv & x-y & 27\\
a+b & u+v & z & 134
\end{array}\right)
\end{equation}
is coded as:
\begin{verbatim}
\begin{equation}
\left(
\begin{array}{cccc}
a+b+c & uv & x-y & 27\\
a+b & u+v & z & 134
\end{array} \right)
\end{equation}
\end{verbatim}

A slight variation on this to better align the numbers in the last column
\begin{equation}
\left(
\begin{array}{cccr}
a+b+c & uv & x-y & 27\\
a+b & u+v & z & 134
\end{array}\right)
\end{equation}
is coded as:
\begin{verbatim}
\begin{equation}
\left(
\begin{array}{cccr}
a+b+c & uv & x-y & 27\\
a+b & u+v & z & 134
\end{array} \right)
\end{equation}
\end{verbatim}

An array with vertical and horizontal rules
\begin{equation}
\left( \begin{array}{c|c|c|r}
a+b+c & uv & x-y & 27\\ \hline
a+b & u+v & z & 134
\end{array}\right)
\end{equation}
is coded as:
\begin{verbatim}
\begin{equation}
\left(
\begin{array}{c|c|c|r}
a+b+c & uv & x-y & 27\\
a+b & u+v & z & 134
\end{array} \right)
\end{equation}
\end{verbatim}
Note the argument now has the pipe "$\vert$" included to indicate the placement of the vertical rules.

\subsection{Cases Structures}
Many times cases can be miscoded using the wrong environment, i.e., {\tt{array}}. Using the {\tt{cases}} environment will save keystrokes (from not having to type the $\backslash${\tt{left}}$\backslash${\tt{lbrace}}) and automatically provide the correct column alignment.
\begin{equation*}
{z_m(t)} = \begin{cases}
1,&{\text{if}}\ {\beta }_m(t) \\ 
{0,}&{\text{otherwise.}} 
\end{cases}
\end{equation*}
\noindent is coded as follows:
\begin{verbatim}
\begin{equation*}
{z_m(t)} = 
\begin{cases}
1,&{\text{if}}\ {\beta }_m(t),\\ 
{0,}&{\text{otherwise.}} 
\end{cases}
\end{equation*}
\end{verbatim}
\noindent Note that the ``\&'' is used to mark the tabular alignment. This is important to get  proper column alignment. Do not use $\backslash${\tt{quad}} or other fixed spaces to try and align the columns. Also, note the use of the $\backslash${\tt{text}} macro for text elements such as ``if'' and ``otherwise.''

\subsection{Function Formatting in Equations}
Often, there is an easy way to properly format most common functions. Use of the $\backslash$ in front of the function name will in most cases, provide the correct formatting. When this does not work, the following example provides a solution using the $\backslash${\tt{text}} macro:

\begin{equation*} 
  d_{R}^{KM} = \underset {d_{l}^{KM}} {\text{arg min}} \{ d_{1}^{KM},\ldots,d_{6}^{KM}\}.
\end{equation*}

\noindent is coded as follows:
\begin{verbatim}
\begin{equation*} 
 d_{R}^{KM} = \underset {d_{l}^{KM}} 
 {\text{arg min}} \{ d_{1}^{KM},
 \ldots,d_{6}^{KM}\}.
\end{equation*}
\end{verbatim}

\subsection{ Text Acronyms Inside Equations}
This example shows where the acronym ``MSE" is coded using $\backslash${\tt{text\{\}}} to match how it appears in the text.

\begin{equation*}
 \text{MSE} = \frac {1}{n}\sum _{i=1}^{n}(Y_{i} - \hat {Y_{i}})^{2}
\end{equation*}

\begin{verbatim}
\begin{equation*}
 \text{MSE} = \frac {1}{n}\sum _{i=1}^{n}
(Y_{i} - \hat {Y_{i}})^{2}
\end{equation*}
\end{verbatim}

\section{Conclusion}
The conclusion goes here.

\section*{Acknowledgments}
This should be a simple paragraph before the References to thank those individuals and institutions who have supported your work on this article.

{\appendix[Proof of the Zonklar Equations]
Use $\backslash${\tt{appendix}} if you have a single appendix:
Do not use $\backslash${\tt{section}} anymore after $\backslash${\tt{appendix}}, only $\backslash${\tt{section*}}.
If you have multiple appendixes use $\backslash${\tt{appendices}} then use $\backslash${\tt{section}} to start each appendix.
You must declare a $\backslash${\tt{section}} before using any $\backslash${\tt{subsection}} or using $\backslash${\tt{label}} ($\backslash${\tt{appendices}} by itself
 starts a section numbered zero.)}


\section{References Section}
You can use a bibliography generated by BibTeX as a .bbl file.
 BibTeX documentation can be easily obtained at:
 http://mirror.ctan.org/biblio/bibtex/contrib/doc/
 The IEEEtran BibTeX style support page is:
 http://www.michaelshell.org/tex/ieeetran/bibtex/
 
%
\section{Simple References}
You can manually copy in the resultant .bbl file and set second argument of $\backslash${\tt{begin}} to the number of references
 (used to reserve space for the reference number labels box).

\newpage

\section{Biography Section}
If you have an EPS/PDF photo (graphicx package needed), extra braces are
 needed around the contents of the optional argument to biography to prevent
 the LaTeX parser from getting confused when it sees the complicated
 $\backslash${\tt{includegraphics}} command within an optional argument. (You can create
 your own custom macro containing the $\backslash${\tt{includegraphics}} command to make things
 simpler here.)
 
\vspace{11pt}

\bf{If you include a photo:}\vspace{-33pt}
\begin{IEEEbiography}[{\includegraphics[width=1in,height=1.25in,clip,keepaspectratio]{fig1}}]{Michael Shell}
Use $\backslash${\tt{begin\{IEEEbiography\}}} and then for the 1st argument use $\backslash${\tt{includegraphics}} to declare and link the author photo.
Use the author name as the 3rd argument followed by the biography text.
\end{IEEEbiography}

\vspace{11pt}

\bf{If you will not include a photo:}\vspace{-33pt}
\begin{IEEEbiographynophoto}{John Doe}
Use $\backslash${\tt{begin\{IEEEbiographynophoto\}}} and the author name as the argument followed by the biography text.
\end{IEEEbiographynophoto}

\vfill

\end{document}